\documentclass[11pt]{article}

\usepackage[T1]{fontenc}
\usepackage[margin=1in]{geometry}
\usepackage{amsmath,amssymb,amsthm,mathtools,bm,mathrsfs}
\usepackage{algorithm,algpseudocode}
\usepackage{booktabs,tabularx,array,multirow}
\usepackage{enumitem}
\usepackage{microtype}
\usepackage{graphicx}
\usepackage{xcolor}

\usepackage[colorlinks=true,
            linkcolor=blue!55!black,
            citecolor=blue!55!black,
            urlcolor=blue!60!black]{hyperref}
\usepackage[nameinlink,capitalise,noabbrev]{cleveref}
\usepackage{tikz}
\usetikzlibrary{arrows.meta}
\makeatletter
\renewcommand{\theHALG@line}{\thealgorithm.\arabic{ALG@line}}
\makeatother

\newtheorem{theorem}{Theorem}[section]

\newtheorem{proposition}[theorem]{Proposition}
\newtheorem{lemma}[theorem]{Lemma}
\newtheorem{fact}[theorem]{Fact}
\newtheorem{corollary}[theorem]{Corollary}

\theoremstyle{definition}

\theoremstyle{remark}
\newtheorem{remark}[theorem]{Remark}

\crefname{fact}{Fact}{Facts}
\crefname{construction}{Construction}{Constructions}
\Crefname{construction}{Construction}{Constructions}

\AddToHook{env/proposition/begin}{\crefalias{theorem}{proposition}}
\AddToHook{env/lemma/begin}{\crefalias{theorem}{lemma}}
\AddToHook{env/fact/begin}{\crefalias{theorem}{fact}}
\AddToHook{env/corollary/begin}{\crefalias{theorem}{corollary}}
\AddToHook{env/construction/begin}{\crefalias{theorem}{construction}}
\AddToHook{env/definition/begin}{\crefalias{theorem}{definition}}
\AddToHook{env/remark/begin}{\crefalias{theorem}{remark}}
\AddToHook{env/example/begin}{\crefalias{theorem}{example}}

\newcommand{\R}{\mathbb R}
\newcommand{\Z}{\mathbb Z}
\newcommand{\N}{\mathbb N}
\newcommand{\E}{\mathbb E}
\newcommand{\Id}{\mathsf I}
\newcommand{\ctrl}{\operatorname{ctrl}}
\newcommand{\one}{\mathbf 1}
\newcommand{\sP}{\mathcal P}
\newcommand{\cG}{\mathcal G}
\newcommand{\cB}{\mathcal B}
\newcommand{\vol}{\operatorname{vol}}
\newcommand{\Cov}{\operatorname{Cov}}
\newcommand{\Var}{\operatorname{Var}}
\newcommand{\Span}{\operatorname{span}}
\newcommand{\poly}{\operatorname{poly}}
\newcommand{\dtv}{d_{\mathrm{TV}}}
\newcommand{\dtr}{d_{\mathrm{tr}}}
\newcommand{\mix}{\mathrm{mix}}
\newcommand{\op}{\mathrm{op}}
\newcommand{\ket}[1]{\lvert #1\rangle}
\newcommand{\bra}[1]{\langle #1\rvert}

\newcommand{\norm}[1]{\left\lVert #1\right\rVert}

\title{Improved Upper and Lower Bounds for Quantum Convex-Body Volume Estimation}
\author{Ruizhe Zhang\thanks{\texttt{rzzhang@purdue.edu}}\\Purdue University}
\date{}

\begin{document}

  \maketitle

\maketitle

\begin{abstract}
Estimating the volume of a high-dimensional convex body is a fundamental problem in theoretical computer science. Given a quantum membership oracle for a convex body $K\subset\R^d$, we show that estimating $\vol(K)$ to relative error $\varepsilon$ takes $\widetilde O(d^{5/2}+d^{3/2}/\varepsilon)$ queries. This improves the previous quantum upper bound $\widetilde O(d^3+d^{9/4}/\varepsilon)$ of Chakrabarti et al. (ACM TQC 2023) and Cornelissen--Hamoudi (SODA 2023), and achieves a larger quantum speedup over the currently best classical upper bound $\widetilde O(d^{7/2}+d^3/\varepsilon^2)$ of Jia et al. (JACM 2026). Our quantum algorithm combines two new ingredients: an input-dependent effective spectral-gap analysis of quantum walks based on uniform classical warm-start mixing bounds, and a single quantum estimator for products of normalizer ratios in simulated annealing based on classical bridge sampling. We also prove an $\Omega(d)$ quantum query lower bound for constant relative error, improving the previous $\Omega(\sqrt d)$ lower bound.
\end{abstract}

{\hypersetup{linkcolor=black}
\tableofcontents
}

\section{Introduction}\label{sec:introduction}

Estimating the volume of a convex body $K\subset\R^d$ is one of the most important problems in randomized computation and convex geometry. This problem has broad applications in high-dimensional integration~\cite{ApplegateKannan91,Simonovits03,LovaszVempala06Fast}, statistics~\cite{Genz92,MengWong96,GelmanMeng98}, convex optimization~\cite{BertsimasVempala04,KalaiVempala06}, combinatorial counting~\cite{Stanley86,BrightwellWinkler91,DyerKannanMount97}, and quantum state tomography~\cite{BlumeKohout10,ShangEtAl13,ShangEtAl15}. We work in the membership-oracle model: the algorithm can access $K$ only through an oracle that tells whether a queried point lies in $K$, and it is given concentric balls $B(x_0,r)\subseteq K\subseteq B(x_0,R)$ whose radius ratio $R/r$ is at most polynomial in $d$. Given $0<\varepsilon<1/2$, the task is to output an estimate $\widehat V$ such that with probability at least $2/3$,
\begin{align*}
    (1-\varepsilon)\vol(K)\le\widehat V\le(1+\varepsilon)\vol(K),
\end{align*}
where $\vol(K)$ denotes the volume of the convex body $K$, and the cost is the number of membership queries.

Computing the volume exactly is $\#\mathrm{P}$-hard even for explicitly given polytopes~\cite{DyerFrieze88}, and in the membership-oracle model, every deterministic algorithm that makes polynomially many queries can only approximate the volume by a factor that is exponential in $d$ in the worst case~\cite{BaranyFuredi87}. A major breakthrough came with the work of Dyer, Frieze, and Kannan~\cite{DFK91}, who gave the first fully polynomial-time randomized approximation scheme (FPRAS) for this problem. Their algorithm computes the volume as a telescoping product of volume ratios along a sequence of nested bodies, the innermost of which has known volume, and estimates each ratio from the samples of a random walk. All subsequent algorithms follow this simulated annealing framework and improve its three components, namely the random walk sampler, the annealing schedule, and the affine change of coordinates, called \emph{rounding}, which keeps the intermediate distributions well-conditioned~\cite {LovaszSimonovits93,KLS97,LovaszVempala06Annealing,CousinsVempala18,JiaEtAl26}. For a well-rounded body, the Gaussian cooling algorithm of Cousins and Vempala~\cite{CousinsVempala18} uses an annealing sequence consisting of Gaussian distributions restricted to the convex body with increasing variance, which can estimate the volume to relative error $\varepsilon$ with $\widetilde O(d^3/\varepsilon^2)$ membership queries.\footnote{Throughout this paper, $\widetilde O$ suppresses polylogarithmic factors in $d$, $R/r$, and $1/\varepsilon$.} Jia, Laddha, Lee, and Vempala~\cite{JiaEtAl26} showed how to round a general body with $\widetilde O(d^{7/2}\psi_d^{-2})$ queries, where $\psi_d^{-1}$ is the Kannan--Lov\'asz--Simonovits (KLS) constant~\cite{KLS95} defined below. The KLS conjecture, which remains open, asserts that $\psi_d^{-1}$ is bounded above by a universal constant. The current best upper bound is $\psi_d^{-1}=O(\log^{1/4}d)$ by Letwin~\cite{Letwin26}.\footnote{Two days before completing this paper, we learned that Song and Zhang~\cite{SongZhang26KLS} had further improved the bound to $\psi_d^{-1}=O(4^{\log^*(d+2)})$.} Therefore, the rounding algorithm of~\cite{JiaEtAl26} uses $\widetilde O(d^{7/2})$ queries, and combining with the Gaussian cooling of~\cite{CousinsVempala18} gives the state-of-the-art classical query complexity for estimating the volume of a general convex body
\begin{equation}\label{eq:intro-classical}
    \widetilde O\!\left(d^{7/2}+\frac{d^3}{\varepsilon^2}\right).
\end{equation}
Kook and Vempala~\cite{KookVempala25} recently obtained the same bound for the more general problem of integrating a log-concave function using a sampler based on algorithmic diffusion.

Given quantum access to the membership oracle, that is, an oracle that can be queried in superposition, quantum algorithms can speed up both the sampling and the estimation in the annealing framework. Starting from an accessible coherent encoding of a stationary distribution, called a qsample~\cite{aharonov2003adiabatic}, quantum simulated annealing along a slowly varying sequence of Markov chains prepares subsequent qsamples with inverse-square-root dependence on the spectral gap, a quadratic improvement over the classical relaxation time~\cite{Szegedy04,Richter07,WocjanAbeyesinghe08}. For nonnegative random variables with bounded relative variance, quantum mean estimation based on amplitude estimation achieves relative error $\varepsilon$ with $\widetilde O(1/\varepsilon)$ uses of coherent sampling and evaluation circuits and their inverses, compared with $O(1/\varepsilon^2)$ classical samples~\cite{BrassardEtAl02,Montanaro15}. Building on these two primitives, quantum speedups of Markov chain Monte Carlo have been studied for two decades, mostly for discrete-space Gibbs partition function estimation~\cite{WCNA09,HarrowWei20,ArunachalamEtAl22,CornelissenHamoudi23}. Chakrabarti, Childs, Hung, Li, Wang, and Wu~\cite{CCHLW23} gave the first quantum algorithm for convex-body volume estimation with quantum query complexity $\widetilde{O}(d^3+d^{5/2}/\varepsilon)$, which quantizes the simulated annealing algorithm of Lov\'asz and Vempala~\cite{LovaszVempala06Annealing}. Cornelissen and Hamoudi~\cite{CornelissenHamoudi23} improved its product-estimation step with a nearly unbiased quantum mean estimator and claimed an expected query complexity of\footnote{As pointed out in~\cite{CornelissenApersGribling26}, the $d^3$ terms in these claimed bounds may not hold due to an issue discussed in~\cref{subsec:comparison-previous}.}
\begin{align*}
    \widetilde O\left(d^3+\frac{d^{9/4}}{\varepsilon}\right).
\end{align*}
Both algorithms, however, rely on the hit-and-run walk and its $\widetilde O(d^3)$ warm-start mixing time from two decades ago~\cite{LovaszVempala06Corner,LovaszVempala07}. Since then, progress on the KLS conjecture has translated directly into faster classical algorithms~\cite{JiaEtAl26}. The quantum counterparts, in contrast, have not benefited from this progress. This leads to the following question:
\begin{quote}
\emph{Does the progress on the KLS conjecture also yield more efficient quantum algorithms for convex-body volume estimation?}
\end{quote}

The question was raised explicitly by Cornelissen, Apers, and Gribling~\cite[footnote~1]{CornelissenApersGribling25}, and it is not resolved by the most recent hit-and-run walk analyses: neither the warm-start mixing time from the companion paper~\cite{Zhang26HitRun} nor the spectral gap lower bound in the concurrent work of Kook and Vempala~\cite{kook2026spectral} can be directly applied to improve the quantum algorithms of~\cite{CCHLW23,CornelissenHamoudi23}. We explain this in detail in \cref{subsec:technical-overview}.

\subsection{Main results}\label{subsec:main-result}
In this paper, we answer the question affirmatively. Our main result is a quantum volume algorithm with
\[
 \widetilde O\!\left(d^{5/2}+\frac{d^{3/2}}{\varepsilon}\right)
\]
membership queries (formally defined in~\cref{subsec:membership-oracle-model}):
\begin{align*}
    O_K:\ket{x}\ket{b}
 \longmapsto
 \ket{x}\ket{b\oplus\one_K(x)}.
\end{align*}
This improves the previous quantum upper bounds~\cite{CCHLW23,CornelissenHamoudi23} in both terms. In addition, we prove an $\Omega(d)$ quantum query complexity lower bound for volume estimation with constant accuracy. \Cref{tab:intro-volume-comparison} compares the classical and quantum results.

\begin{table}[t]
\centering
\begin{tabularx}{0.85\linewidth}{@{}l>{\centering\arraybackslash}Xc@{}}
\toprule
 & \textbf{References} & \textbf{Membership queries} \\ \midrule
Classical algorithm & \cite{JiaEtAl26,KookVempala25} & $\widetilde O(d^{7/2}+d^3/\varepsilon^2)$ \\ \midrule
\multirow{3}{*}{Quantum algorithm} & \cite{CCHLW23} & $\widetilde O(d^3+d^{5/2}/\varepsilon)$ \\
 & \cite{CornelissenHamoudi23} & $\widetilde O(d^3+d^{9/4}/\varepsilon)$ \\
 & {\bf This work} (\cref{thm:intro-volume_informal}) & $\widetilde O(d^{5/2}+d^{3/2}/\varepsilon)$ \\ \midrule
Classical lower bound & \cite{rademacher2008dispersion,vempala2026nearly} & $\widetilde{\Omega}(d^2)$ \\ \midrule
\multirow{2}{*}{Quantum lower bound} & \cite{CCHLW23} & $\Omega(\sqrt d)$ \\
 & {\bf This work} (\cref{thm:intro-volume-lower_informal}) & $\Omega(d)$ \\ \bottomrule
\end{tabularx}
\caption{Classical and quantum query bounds as stated in the cited works.}
\label{tab:intro-volume-comparison}
\end{table}

\paragraph{The algorithm.}
We first define the Kannan--Lov\'asz--Simonovits (KLS) constant~\cite{KLS95}. For a probability measure $\lambda$ on $\R^d$, its Cheeger constant $\psi_{\sf ch}(\lambda)$ is
\begin{align*}
    \psi_{\sf ch}(\lambda)=
 \inf_{A\subseteq \R^d:0<\lambda(A)<1}
 \frac{\lambda^+(A)}{\min\{\lambda(A),1-\lambda(A)\}},
\end{align*}
where $\lambda^+(A):=\liminf_{r\rightarrow 0+} \frac{\lambda(A+B(0,r))-\lambda(A)}{r}$ is the boundary measure of $A$. The reciprocal of the KLS constant is defined as
\begin{equation}\label{eq:intro-kls}
 \psi_n:=\inf\{\psi_{\sf ch}(\lambda):
        \lambda\text{ is isotropic and log-concave on }\R^n\},
\end{equation}
where isotropic means $\E_\lambda[X]=0$ and $\Cov_\lambda(X)=\Id$. In 2021, a breakthrough result of Chen~\cite{chen21} proved that $\psi_d^{-1}=d^{o(1)}$. Subsequent improvements~\cite{klartag2022bourgain,jambulapati2022slightly,Klartag23} led to  $\psi_d^{-1}=O(\log^{1/4}d)$ by Letwin~\cite{Letwin26} and $\psi_d^{-1}=O(4^{\log^*(d+2)})$ by Song and Zhang~\cite{SongZhang26KLS}.

The following is an informal statement of our main theorem.
\begin{theorem}[Main theorem, informal version of~\cref{thm:intro-volume}]\label{thm:intro-volume_informal}
Let $K\subset\R^d$ be a convex body given by a quantum membership oracle, along with an initial point $x_0$ and $0<r<R$ such that $B(x_0,r)\subseteq K\subseteq B(x_0,R)$ and $R/r$ is at most polynomial in $d$. For every $0<\varepsilon<1/2$, there is a quantum algorithm that computes $\widehat{V}$ satisfying
\begin{align*}
    (1-\varepsilon)\vol(K)\leq \widehat{V}\leq (1+\varepsilon)\vol(K)
\end{align*}
with probability at least $2/3$ and query complexity
\begin{equation}\label{eq:intro-volume_informal}
 \left(\frac{d^{5/2}}{\psi_d}+\frac{d^{3/2}}{\psi_d\varepsilon}\right)
 \log^{O(1)}\!\left(\frac{dR}{\varepsilon r}\right)
\end{equation}
\end{theorem}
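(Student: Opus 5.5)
The plan is to follow the classical pipeline of \cite{JiaEtAl26} and \cite{CousinsVempala18}: round, then run Gaussian cooling. Each stage is quantized separately, and the rounding cost and the cooling cost are bounded independently.

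\textbf{Rounding.} We first bring $K$ into near-isotropic position, meaning $B(0,1)\subseteq K$ and $\E\|X\|^2=O(d)$ under the uniform measure. For this we run the annealing-based rounding of \cite{JiaEtAl26}, but replace each classical warm-start sampling phase by quantum simulated annealing along the same schedule of Gaussians (or truncated bodies). Classically, each sample costs the warm-start mixing time $\widetilde O(d^2\psi_d^{-2})$ of the ball walk or hit-and-run. Quantumly, we prepare a qsample at cost proportional to the square root of the relaxation time, so heuristically $\widetilde O(d/\psi_d)$ per coherent sample.

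Rounding needs roughly $\widetilde O(d)$ samples per phase to estimate covariance, over $\widetilde O(\sqrt d)$ or polylog effective phases. With these counts the total should come out to $\widetilde O(d^{5/2}/\psi_d)$. That gives the first term, compared with the classical $d^{7/2}\psi_d^{-2}$.

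\textbf{Cooling.} On the rounded body we use the Gaussian cooling schedule $\sigma_i^2$, which doubles at rate $1+1/\sqrt d$ for large variance, giving $m=\widetilde O(\sqrt d)$ phases. The volume is written as a telescoping product of normalizer ratios $Z_{i+1}/Z_i$, each with relative variance $O(1/\sqrt d)$. Rather than estimating each ratio separately to accuracy $\varepsilon/\sqrt m$, we use the paper's single bridge-sampling quantum estimator for the whole product. Its cost should be $\widetilde O(\sqrt{m}\cdot\sqrt{\text{total relative variance}}/\varepsilon)$ coherent qsample preparations, which is $\widetilde O(1/\varepsilon)$ up to polylogs times the $\sqrt m$ chain-walking overhead. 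Each preparation costs $\widetilde O(\sigma\sqrt d/\psi_d)$-type quantum walk steps, and summing over the schedule gives $\widetilde O(d^{3/2}/(\psi_d\varepsilon))$.

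\textbf{Main obstacle.} The hard step is justifying the $\sqrt{\text{mixing time}}$ cost for the quantum walk. The known KLS-based bounds are warm-start \emph{mixing} bounds (conductance of large sets only), not spectral gap bounds, and Szegedy-style walks need a spectral gap. The worst-case gap of ball walk/hit-and-run is poor because of corners. The plan is therefore to establish an \emph{effective} spectral gap restricted to the relevant input. Step one is to convert a uniform warm-start mixing bound (valid for every $M$-warm start) into an $L^2$ decay statement for functions in the span of the warm qsample's neighborhood. Step two is to show that the quantum walk's phase estimation / reflection, applied to the consecutive qsamples $\ket{\pi_i}$, behaves as if the gap were $1/t_{\mix}$, up to small error in the component on the bad low-conductance part. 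I would try this by truncating $K$ to a high-probability core where local conductance is bounded below and the restricted chain has a genuine gap, then controlling the leakage. The error per phase would be accumulated additively in trace distance. Finally, the error budgets (qsample infidelity, estimator bias, rounding failure) are combined by a union bound to reach success probability $2/3$.
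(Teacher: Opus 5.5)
\textbf{Gap 1: the rounding count.} Your tally of $\widetilde O(d)$ samples per phase $\times$ $\widetilde O(\sqrt d)$ phases $\times$ $\widetilde O(d/\psi_d)$ per sample charges each covariance sample one conversion, as in the classical algorithm. But $\widetilde O(d/\psi_d)$ is the price of one reflection, i.e.\ of one conversion $\ket{\pi_i}\mapsto\ket{\pi_{i+1}}$, not of a fresh qsample at stage $i$. To get a classical covariance, and hence a classical preconditioner for the next walk, you must measure the qsamples. That destroys them, and the next conversion needs the coherent state, not a classical point. So every copy you measure at stage $i$ must be carried from stage $0$. This costs $\sum_{i<m}Li=\Theta(Lm^2)=\widetilde O(d^2)$ reflections, i.e.\ $\widetilde O(d^3/\psi_d)$ queries, not $\widetilde O(d^{5/2}/\psi_d)$. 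A nondestructive multivariate mean estimator would rescue the count, but that is exactly the unresolved step in earlier coherent-rounding arguments.

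The idea you are missing is never to measure the covariance. Two blocks $\mathsf B_1,\mathsf B_2$ of $L$ qsamples ride along one schedule, the pencil schedule on $K_0$, which serves both for preconditioning and for the volume product. There is no separate rounding phase. The empirical covariance of one block is computed coherently and selects the preconditioned walk implementing $R_{\pi_i},R_{\pi_{i+1}}$ while the other block and all spare copies are converted. It is then uncomputed, and the blocks swap roles. This is sound only because every preconditioner branch has the \emph{same} stationary qsample. That is why the paper uses a Metropolis walk on one fixed lattice with only the proposals preconditioned (\cref{alg:finite-metropolis}). \Cref{lem:coherent-reflection-error} then bounds the coherently selected reflection's error on the reference product input by $\xi+2\sqrt{\varepsilon_{\Cov}}$, even though many different matrices occur. \Cref{lem:composition} propagates the actual, entangled state.

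\textbf{Gap 2: the spectral step.} Step one matches the paper: uniform warm-start mixing gives $\|\Pi_{\rm slow}\|_{L^\infty\to L^2}\lesssim\sqrt\eta$. Step two, however, is stated only for the qsamples $\ket{\pi_i}$. The reflections actually act on intermediate states of fixed-point amplitude amplification and of the QSVT Hamiltonian simulation, entangled with controls and ancillas, and an input-dependent guarantee must cover those. A genuine gap of order $1/t_{\rm mix}$ for a chain restricted to a high-mass core would remove the input dependence, but you give no argument for it. The available bounds control only $s$-conductance, with a $\log(1/s)$ loss, which is why the paper uses no spectral gap. Instead it proves $\|(\Pi_{\rm slow}\otimes\Id)\ket\Psi\|\lesssim\eta^{1/4}\|\ket\Psi\|_{4,\pi}$ for arbitrary joint states, via Riesz--Thorin interpolation and a random-sign argument on the auxiliary register. \Cref{lem:fourth-growth} shows each exact reflection raises this norm by $O(1)$, so every reference input has norm $\widetilde O(1/\varepsilon)$ and an inverse-polynomial $\eta$ suffices. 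You also do not address the finite state space and the $O(1)$-query block-encoding of the discriminant that a quantum walk needs.

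\textbf{The estimator count.} Your formula $\sqrt m\cdot\sqrt{\text{total relative variance}}/\varepsilon$ is not what \cref{lem:log-sum} gives. That lemma uses $\widetilde O(m/\varepsilon)$ reflections, plus $\widetilde O(m^2)$ conversions to bring its fresh qsamples to their stages. Multiplied by a \emph{uniform} $\widetilde O(d/\psi_d)$ per reflection, which holds only because every stage is preconditioned, this yields the $d^{3/2}/(\psi_d\varepsilon)$ term. Your $\sigma$-dependent per-preparation cost is unsupported.
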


The first term in \eqref{eq:intro-volume_informal} accounts for preparing the qsamples used for covariance estimation and rounding, and the second term is the cost of estimating the product of normalization-constant ratios. Using the aforementioned KLS bound $\psi_{d}^{-1}=O(\log^{1/4}d)$, the query complexity is $\widetilde O(d^{5/2}+d^{3/2}/\varepsilon)$. Compared with the classical bound~\eqref{eq:intro-classical}, this gives a quadratic improvement in the second term, and a $d$-factor improvement in the first term. These improvements are only for the number of membership oracle queries. The gate complexity is polynomial in the problem parameters and input bit length, but we do not optimize its exponent.

Our algorithm combines two main quantum ingredients. First, we develop an \emph{input-dependent spectral analysis} that constructs approximate stationary-state reflections from uniform classical warm-start mixing bounds (\cref{sec:tech_ov_reflections}). Under a fourth-moment condition on the input state, fast mixing implies that the state has little amplitude in the nonstationary eigenspaces with eigenvalues close to $1$. We can therefore leave these eigenspaces unresolved. If the walk mixes every $2$-warm start in $t$ steps to within total-variation distance chosen from the desired reflection tolerance and the input's fourth-moment norm, then the reflection uses $\widetilde O(\sqrt t)$ queries to the block-encoding of its discriminant. Our analysis also provides a corrected replacement for the effective spectral gap arguments in~\cite{CCHLW23}.

Second, we introduce a \emph{quantum estimator with streaming access to qsamples} for products of ratios of normalization constants in simulated annealing (\cref{sec:tech_ov_estimation}). It combines classical bridge sampling~\cite{Bennett76,MengWong96} and phase kickback to accumulate logarithmic contributions in a shared phase register, then estimates their signed sum using quantum phase estimation. For an annealing schedule with $m$ stages, it uses $\widetilde O(m/\varepsilon)$ qsample reflections, saving a factor of $\sqrt m$ over the product estimator of~\cite{CornelissenHamoudi23}. Streaming access means that our quantum estimator only uses one independent qsample at a time and then discards it. Given controlled qsample reflections and rotations for bounded observables whose expectations are bounded below by a positive constant, the estimator applies to general sequences of distributions, including those arising in other Gibbs partition-function problems.

In addition, we construct a finite-state-space Markov chain, the \emph{lattice Metropolis walk}, to implement these techniques for convex-body volume estimation (\cref{sec:tech_ov_walk}). The walk adapts the multiscale Metropolis kernel of the companion paper~\cite[Appendix~F]{Zhang26HitRun}. Intuitively, it can be viewed as a discrete ball walk whose step size is drawn at each step from a finite list of dyadic scales. When its preconditioner approximates the covariance within a constant factor, the lattice Metropolis walk mixes from every $2$-warm start to the discretized log-affine distributions in our annealing schedule in $\widetilde O(d^2\psi_d^{-2})$ steps. Unlike the discretized hit-and-run walk~\cite{CCHLW23}, this construction avoids the two-level discretization and the conditional sampling along a chord, making it particularly convenient for the quantum implementation. 

\paragraph{The lower bound.} Our second result shows that linearly many queries are necessary, even for ellipsoids with a known center.
\begin{theorem}[Linear quantum query lower bound, informal version of~\cref{thm:intro-volume-lower}]\label{thm:intro-volume-lower_informal}
Every quantum algorithm that estimates the volume of a convex body $K\subset\R^d$ to relative error $1/4$ with probability at least $2/3$ makes $\Omega(d)$ membership queries in the worst case, even when $K$ is promised to be an ellipsoid centered at the origin with $B(0,1/2)\subseteq K\subseteq B(0,O(d^2))$.
\end{theorem}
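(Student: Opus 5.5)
The plan is a reduction from a counting-type problem with a linear quantum lower bound, run through a hybrid/adversary argument rather than a direct bit-oracle simulation. A direct simulation cannot work, because a single membership query to an ellipsoid depends on all of its axes simultaneously.

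\emph{Hard instances.} Fix a hidden orthogonal matrix $Q$ and a hidden string $x\in\{0,1\}^d$. Let $K_{Q,x}=\{y:\sum_i s_i(x)^{-2}\langle q_i,y\rangle^2\le 1\}$ with $s_i(x)=c^{x_i}$ for a constant $c\ge 2$, rescaled so that $B(0,1/2)\subseteq K_{Q,x}$. Then $\vol(K_{Q,x})=c^{|x|}\vol(B(0,1))$, so an estimate within relative error $1/4$ determines $|x|$ exactly, since $c\ge 2>5/3$. In particular it decides whether $|x|=d/2$ or $|x|=d/2+1$. It therefore suffices to show that this exact-threshold decision needs $\Omega(d)$ membership queries when $Q$ is Haar-random and $x$ is uniform on the two Hamming slices. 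The slack up to $B(0,O(d^2))$ in the statement is reserved in case the constant $c$ has to be replaced by a $d$-dependent scale in the final adjustment.

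\emph{Adversary argument.} I would apply the (weighted) adversary method to the pair of slices, with the standard weight relation pairing $x$ and $x'$ that differ in one coordinate $j$, averaged over $Q$. The key quantity is the following: for a fixed query point $y$, the membership answers of $K_{Q,x}$ and $K_{Q,x'}$ differ only when $\langle q_j,y\rangle^2$ is a non-negligible fraction of the quadratic form. For Haar-random $Q$, the masses $\langle q_i,y\rangle^2$ are spread like a Dirichlet vector. So for a fixed $y$, the set of coordinates $j$ that actually influence the answer has expected ``weight'' $O(1)$ out of $d$, and the fraction of related pairs $(x,x')$ that $y$ can distinguish is $O(1/d)$. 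Because this holds for every $y$ uniformly, it transfers to superposition queries, and the adversary bound gives $\Omega(\sqrt{d\cdot d})=\Omega(d)$. This matches the product of the ``$d$ choices of $j$'' and the ``$O(1/d)$ per query'' factors, which is exactly how Beals et al.\ obtain the linear bound for exact threshold.

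\emph{Main obstacle.} The hardest step is the uniform-in-$y$ influence bound. Points $y$ lying near the boundary of the ellipsoid can be sensitive to small changes in many axes at once. The boundary is a threshold of a weighted sum, so a well-chosen $y$ can put its quadratic form almost exactly at $1$, where flipping any single $x_j$ matters. The first remedy I would try is to randomize the scales: draw $s_i(x)$ from a small continuous jitter around $c^{x_i}$, independently of $x$. Then for each $y$ the probability that the quadratic form lands within $\Theta(\langle q_j,y\rangle^2)$ of the threshold is controlled by an anticoncentration estimate, and summing over $j$ gives $O(1)$ total influence. If that estimate only yields $O(\log d)$ per query, I would instead use a larger scale and absorb the loss into the $O(d^2)$ outer radius, keeping the volume gap above the $5/3$ factor required by relative error $1/4$.
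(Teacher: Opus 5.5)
Your hard family is too easy, so no lower-bound technique can get $\Omega(d)$ from it. Write $M=\sum_i s_i(x)^{-2}q_iq_i^\top$. Then $K_{Q,x}=\{y:y^\top My\le1\}$ and $\operatorname{tr}M=d-(1-c^{-2})|x|$.

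Along a unit direction $u$, the boundary sits at radius $(u^\top Mu)^{-1/2}\in[1,c]$. A reversible binary search therefore computes $g(u)=u^\top Mu$ to precision $O(d^{-2})$ with $O(\log d)$ membership queries, coherently in $u$. For $u$ uniform on the sphere, $\E[g]=\operatorname{tr}M/d$ and $\Var(g)=\frac{2}{d(d+2)}\sum_i(\lambda_i-\bar\lambda)^2=O(1/d)$. The two hypotheses $|x|=d/2$ and $|x|=d/2+1$ shift $\E[g]$ by only $(1-c^{-2})/d$. Variance-aware quantum mean estimation resolves this shift with $\widetilde O(\sigma/\epsilon)=\widetilde O(\sqrt d)$ coherent uses of the sampler, hence $\widetilde O(\sqrt d)$ queries in total on every instance. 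So this construction can at best reproduce the previous $\Omega(\sqrt d)$ bound.

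The same scale mismatch breaks your influence estimate. For a fixed $y$ tuned to the threshold, the quadratic form fluctuates by $\Theta(d^{-1/2})$ over Haar $Q$, while a single flip moves it by $\Theta(1/d)$. The expected number of influential coordinates is therefore $\Theta(\sqrt d)$, not $O(1)$; even your heuristic then yields only $\sqrt d$. Independent per-axis jitter lives at the same $d^{-1/2}$ scale, and jitter large enough to change this would swamp the $\log c$ gap in $\log\vol(K)$.

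Separately, the adversary method does not accept bounds ``averaged over $Q$'' for a relation that keeps $Q$ fixed. For each fixed $Q$, one query at $y=\alpha\sum_jq_j$ with suitable $\alpha$ decides $|x|$, since all $\langle q_j,y\rangle^2$ are equal. Hence all neighbours of $(Q,x)$ differ at this $y$, and any adversary matrix that is block diagonal in $Q$ certifies only $O(1)$.

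What is missing is an \emph{exact} invariance for every set of fewer than $\Theta(d)$ query points. The paper obtains it by hiding $\det O$ for Haar $O\in\mathrm O(n)$. Membership in $K_O=\{u:\|(\Id-aO)u\|_2\le1\}$ depends only on $u^\top Ou$. A reflection $\Id-2ww^\top$, with $w$ orthogonal to fewer than $n$ query points, swaps $\mu_+$ and $\mu_-$ while preserving every answer. So all monomials of degree below $n$ have equal expectation, and the polynomial method makes the two output distributions identical when $2q<n$.

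The volume gap then comes from $\E_{\mu_-}[\det(\Id-aO)]\le n(1-a)$ and $\E_{\mu_+}[\det(\Id-O)^{-1/4}]\le2n^{1/4}$, with $1-a=\Theta(d^{-2})$. That choice is exactly where the $O(d^2)$ outer radius comes from; it is not slack.
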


The previous lower bound for constant accuracy is $\Omega(\sqrt d)$~\cite{CCHLW23}. They also prove an $\Omega(1/\varepsilon)$ lower bound in the regime $1/d\le\varepsilon\le1/3$, so the factor $1/\varepsilon$ in the second term of our upper bound \eqref{eq:intro-volume_informal} is necessary in that regime. The restriction on $\varepsilon$ cannot be dropped: for fixed $d$ and $\varepsilon\rightarrow 0$, the quantum query complexity is $\widetilde\Theta(\varepsilon^{-(d-1)/(d+1)})$~\cite{CornelissenApersGribling25}.


\subsection{Technical overview for the algorithm}\label{subsec:technical-overview}

We first provide a high-level overview of our algorithm in~\cref{sec:tech_ov_algorithm}. We then analyze the reflections (\cref{sec:tech_ov_reflections}), the lattice Metropolis walk (\cref{sec:tech_ov_walk}), the product estimator (\cref{sec:tech_ov_estimation}), and the preparation of qsamples (\cref{sec:tech_ov_alter}). Finally, we combine the costs of these components and derive the total query complexity of our algorithm in~\cref{sec:tech_ov_total_cost}. In addition, we provide a detailed comparison of our techniques with the earlier and concurrent work  in~\cref{subsec:comparison-previous}.

\subsubsection{A bird's-eye view of our quantum algorithm}\label{sec:tech_ov_algorithm}

After rescaling, we may assume that $B(0,1)\subseteq K\subseteq B(0,\kappa)$. Our algorithm also uses the classical pencil construction~\cite{LovaszVempala06Annealing} for the annealing schedule, as in~\cite{CCHLW23}. It adds a height coordinate and defines
\[
 K_0=\{(x_0,x'):0\le x_0\le2\kappa,\ x'\in K,\ \|x'\|_2\le x_0\}\subset\R^{d+1}.
\]
The horizontal slice at height $x_0=t$ is $K\cap B(0,t)$. It is a ball near the tip and is exactly $K$ for $\kappa\le t\le2\kappa$ (\cref{fig:pencil}). The annealing distributions have density $\propto e^{-a_ix_0}{\bf 1}_{K_0}$ with decreasing parameters $a_0>\cdots>a_m>0$: a large $a_0$ concentrates mass near the cone at the bottom, while a small $a_m$ gives constant probability to the slab whose cross-section is $K$.

\begin{figure}
 \centering
 \includegraphics[width=0.95\linewidth]{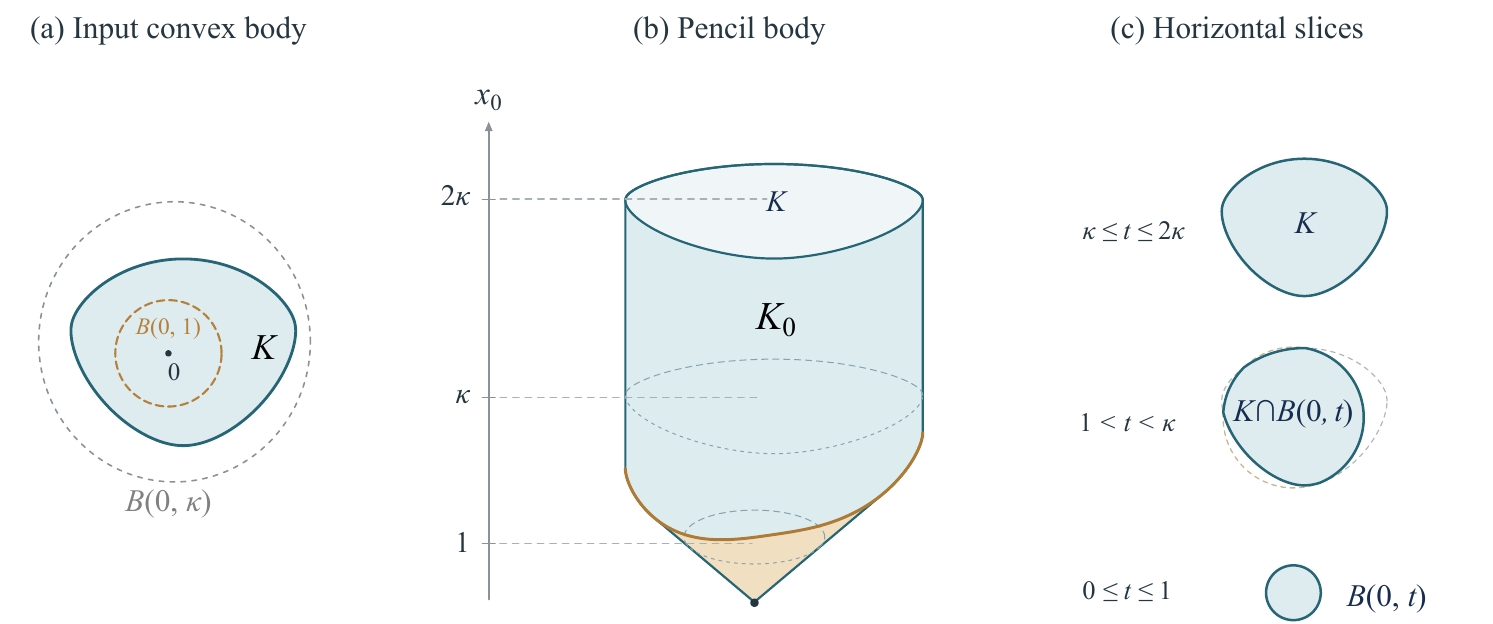}
 \caption{The pencil construction. (a) The input body $K\subset\R^2$ lies between concentric balls of radii $1$ and $\kappa$. (b) The pencil body $K_0\subset\R^3$. (c) Its section at height $t$ is $K\cap B(0,t)$, which equals $B(0,t)$ for $0\le t\le1$ and $K$ for $\kappa\le t\le2\kappa$.}
 \label{fig:pencil}
\end{figure}

The quantum circuit uses the corresponding finite distributions $\pi_i$ on the fixed lattice space $\Omega_\Lambda=K_0\cap h\Z^{d+1}$. Writing $Z_\Lambda(a)=h^{d+1}\sum_{z\in\Omega_\Lambda}e^{-az_0}$ for their normalization constants, we have
\begin{align}\label{eq:tech_ov_telescoping}
 \frac{Z_\Lambda(a_m)}{Z_\Lambda(a_0)}
 =\prod_{i=0}^{m-1}\mu_i,
 \qquad
 \mu_i=\E_{\pi_i}[X_i],
 \qquad
 X_i(z)=e^{(a_i-a_{i+1})z_0}.
\end{align}
This ratio product, together with two endpoint probabilities estimated by the algorithm and explicit normalization factors, determines the volume up to lattice-discretization error (\cref{subsec:pencil-geometry}).

We use a schedule with $m=\widetilde O(\sqrt d)$ stages such that the density ratios between adjacent distributions $\pi_i$ and $\pi_{i+1}$ have bounded second moments in both directions. These conditions give bounded relative variance for the ratio observables and constant overlap between adjacent qsamples (\cref{lem:cooling-moments}).

For a finite distribution $\pi_i$, define its qsample and the reflection about it by
\[
 \ket{\pi_i}=\sum_{z\in\Omega_\Lambda}\sqrt{\pi_i(z)}\ket z,
 \qquad R_{\pi_i}=2\ket{\pi_i}\!\bra{\pi_i}-\Id.
\]
Measuring the qsample produces a classical sample from $\pi_i$, and the quantum state collapses. We use these reflections to convert qsamples from $\ket{\pi_i}$ to $\ket{\pi_{i+1}}$ and to perform quantum estimation. We discuss below how to prepare and use them.

\paragraph{Preparing and advancing the qsamples.}
All copies of qsamples that will be consumed by the algorithm during the cooling schedule are prepared at the beginning in $\ket{\pi_0}$. We obtain them from the explicit cone distribution by amplitude amplification using the membership oracle for the pencil body $K_0$. At each transition, fixed-point state conversion advances the unused copies from $\ket{\pi_i}$ to $\ket{\pi_{i+1}}$ using reflections about the two states~\cite{YoderLowChuang14}. Since their overlap is constant, each qsample conversion needs only $\widetilde{O}(1)$ reflections. Two additional blocks of qsamples remain unmeasured throughout the whole process. They are used to coherently estimate the covariance, thereby preconditioning the random walk to implement these reflections. The two blocks are alternately evolved from $\ket{\pi_0}$ to $\ket{\pi_m}$, each serving as the covariance preconditioner for the other.

\paragraph{Processing the estimates at each stage.}
At stage $i$, we measure a reserved batch of copies of $\ket{\pi_i}$ and use classical sample averages to obtain a constant-factor estimate $c_i$ of $\mu_i$. These measured copies are discarded. Classical bridge sampling uses $c_i$ to express $\mu_i$ through two probabilities bounded below by a positive constant, one for $\pi_i$ and one for $\pi_{i+1}$. Using qsample reflections, we can block-encode a Hamiltonian whose eigenvalues approximate the logarithms of these probabilities. Their controlled evolutions use phase kickback to add the signed logarithmic contributions to a shared quantum register $\mathsf F$. The contribution for $\pi_i$ is processed at stage $i$ and the one for $\pi_{i+1}$ at the next stage, using the stored value $c_i$. Each contribution uses a separate qsample, which is discarded afterward. 

\paragraph{Final readout and query cost.}
After all stages, $\mathsf F$ contains the accumulated phase, including the contributions from the two endpoints. An inverse quantum Fourier transform and measurement estimate the combined logarithmic sum. The known values $c_i$ and the explicit endpoint normalization factors then give the volume estimate by classical computations. \Cref{fig:algorithm-workflow} shows how the quantum algorithm uses the qsamples.

\begin{figure}[!ht]
 \centering
 \begin{tikzpicture}[
 x=1cm,y=1cm,
 every node/.style={font=\normalsize,align=center},
 arrow/.style={-{Latex[length=1.8mm]},line width=.65pt},
 sample/.style={arrow,draw=black!60},
 wire/.style={draw=black!60,line width=.65pt},
 state/.style={circle,draw=black!60,fill=white,inner sep=0pt,minimum size=4pt,line width=.65pt},
 block/.style={rounded corners=1.5pt,line width=.65pt,minimum width=2.5cm,minimum height=.82cm,inner sep=4pt},
 consume/.style={block,draw=blue!45!black,fill=blue!3},
 measure/.style={block,draw=orange!65!black,fill=orange!5},
 classical/.style={arrow,densely dashed,draw=orange!65!black}]

 \node[anchor=west,font=\bfseries] at (0,-.0) {Purpose};
 \foreach \xx/\ii in {3.2/0,7.1/i,10.3/{i+1},14.3/m}{
  \node at (\xx,.8) {Stage $\ii$};
  \node[text=black!65] at (\xx,.24) {$\ket{\pi_{\ii}}$};
  \draw[black!15,densely dotted] (\xx,-.22)--(\xx,-6.92);
 }
 \node at (4.85,.8) {$\cdots$};
 \node at (12.25,.8) {$\cdots$};

 \node[anchor=west,align=left,text width=2.55cm] at (0,-1.1)
   {Estimate $c_i$\\$P$ copies};
 \node[anchor=west,align=left,text width=2.55cm] at (0,-2.35)
   {Estimate $p_i^+$\\$1$ copy};
 \node[anchor=west,align=left,text width=2.55cm] at (0,-3.6)
   {Estimate $p_i^-$\\$1$ copy};
 \node[anchor=west,align=left,text width=2.55cm] at (0,-4.85)
   {Estimate $q_{\rm out}$\\$1$ copy};

 \node[measure] (pilot) at (7.1,-1.1) {Measure batch};
 \node (ci) at (9.1,-1.1) {$c_i$};
 \draw[classical] (pilot.east)--(ci.west);
 \node[consume] (plus) at (7.1,-2.35) {Add phase\\[-1pt]{\small then discard}};
 \node[consume] (minus) at (10.3,-3.6) {Add phase\\[-1pt]{\small then discard}};
 \node[consume] (out) at (14.3,-4.85) {Add phase\\[-1pt]{\small then discard}};

 \foreach \yy in {-1.1,-2.35,-3.6,-4.85}{
  \node[state] at (3.2,\yy) {};
  \draw[wire] (3.28,\yy)--(4.32,\yy);
  \node[text=black!60] at (4.85,\yy) {$\cdots$};
 }
 \draw[sample] (5.38,-1.1)--(pilot.west);
 \draw[sample] (5.38,-2.35)--(plus.west);
 \foreach \yy in {-3.6,-4.85}{
  \draw[sample] (5.38,\yy)--(7.02,\yy);
  \node[state] at (7.1,\yy) {};
 }
 \draw[sample] (7.18,-3.6)--(minus.west);
 \draw[sample] (7.18,-4.85)--(10.22,-4.85);
 \node[state] at (10.3,-4.85) {};
 \draw[wire] (10.38,-4.85)--(11.72,-4.85);
 \node[text=black!60] at (12.25,-4.85) {$\cdots$};
 \draw[sample] (12.78,-4.85)--(out.west);

 \node[anchor=west,align=left,text width=3.5cm] at (-0.5,-6.3)
   {Estimate covariance\\~~~~$2\times L$ copies};
 \foreach \yy in {-6.1,-6.55}{
  \foreach \xx in {3.2,7.1,10.3,14.3}{
   \node[state,draw=green!35!black] at (\xx,\yy) {};
  }
  \draw[wire,draw=green!35!black] (3.28,\yy)--(4.32,\yy);
  \node[text=green!35!black] at (4.85,\yy) {$\cdots$};
  \draw[arrow,draw=green!35!black] (5.38,\yy)--(7.02,\yy);
  \draw[arrow,draw=green!35!black] (7.18,\yy)--(10.22,\yy);
  \draw[wire,draw=green!35!black] (10.38,\yy)--(11.72,\yy);
  \node[text=green!35!black] at (12.25,\yy) {$\cdots$};
  \draw[arrow,draw=green!35!black] (12.78,\yy)--(14.22,\yy);
 }
 \node[font=\small,text=green!35!black] at (8.75,-7.03)
   {Both blocks supply reflections and continue through every stage.};

 \draw[black!15] (0,-7.45)--(15.62,-7.45);
 \node[anchor=west,align=left,text width=2.55cm] at (0,-8.35)
   {Shared $\mathsf F$};
 \node[consume,minimum width=1.55cm] (init) at (3.2,-8.35) {Initialize};
 \node[consume,minimum width=2.5cm] (readout) at (14.3,-8.35)
   {Inverse QFT\\and measurement};
 \draw[arrow,draw=blue!45!black,line width=.85pt] (init.east)--(readout.west);
 \node[font=\small,text=blue!45!black] at (8.55,-7.96)
   {All \textsc{Add} calls act on this register};
 \end{tikzpicture}
 \caption{Qsample consumption along the annealing schedule. The first three rows are repeated for every stage $0\le i<m$: $P=O(\log(m+1))$ copies are measured for $c_i$, and two separate copies for the quantum product estimator. An \textsc{Add} call accumulates the corresponding signed logarithm in $\mathsf F$ and then discards the qsample. The covariance blocks are reserved for preparing qsamples and implementing reflections. The initial endpoint uses a separate cone qsample $\ket{\nu_\Lambda}$, omitted here.}
 \label{fig:algorithm-workflow}
\end{figure}

Each qsample reflection is implemented through the lattice Metropolis walk, and the dominant membership query cost is
\[
\begin{array}{c@{\qquad}c@{\qquad}c}
 \text{Number of qsample reflections}
 & \times
 & \text{Cost of one reflection}
 \\[6pt]
 \displaystyle \widetilde O\!\left(
 d^{3/2}+\frac{\sqrt d}{\varepsilon}\right)
 & 
 & \displaystyle \widetilde O\!\left(\frac{d}{\psi_d}\right)
\end{array}
\]

\subsubsection{Reflections from warm-start mixing}\label{sec:tech_ov_reflections}

A standard intuition about quantum walks~\cite{Szegedy04,WocjanAbeyesinghe08} is that a classical mixing time of $t$ suggests a qsample reflection cost of order $\sqrt t$. However, the worst-case mixing time is governed by the Markov chain's global spectral gap, which can be small even when the relevant inputs have little weight in the slow eigenspaces. Our new analysis uses warm-start mixing to bound that weight and resolves the rest of the spectrum to the required accuracy.

One challenge is the mismatch between classical distributions and qsamples. Classical evolution acts on the density $r=d\nu/d\pi$, whereas the qsample has relative amplitude $\sqrt r$. Taking the square root can remove cancellations against slow eigenfunctions, so a fast mixing bound for one specific initial distribution gives insufficient control of its qsample.\footnote{\Cref{app:counterexample} gives a counterexample to the general single-start effective spectral gap used in~\cite{CCHLW23}.} We instead assume mixing uniformly from every \emph{$2$-warm} law, meaning $\nu(z)\le2\pi(z)$ for every $z$.

The resulting reflection guarantee depends on how concentrated the quantum input is relative to $\pi$. For a normalized joint state on a sample register ${\sf S}$ and an arbitrary auxiliary register ${\sf A}$:
\begin{align*}
    \ket{\Psi}_{\sf SA}=\sum_{x}\ket{x}_{\sf S}\otimes \ket{\psi_x}_{\sf A},
\end{align*}
let $\mu_\Psi$ be the law obtained by measuring its sample register ${\sf S}$. The $L^4(\pi)$ norm of its relative amplitudes satisfies
\[
 \|\ket{\Psi}_{\sf SA}\|_{4,\pi}^4
 =\sum_z\frac{\mu_\Psi(z)^2}{\pi(z)}
 =1+\chi^2(\mu_\Psi\|\pi).
\]
Thus, this fourth-moment norm measures concentration through the relative second moment, and it applies equally to states entangled with auxiliary registers. The cooling condition gives a constant initial fourth-moment norm when converting between adjacent qsamples. The pointwise $2$-warm condition is a hypothesis on the Markov chain's initial distribution, while this fourth-moment bound is a condition on the quantum input. 

Uniform mixing connects the two conditions. Because the mixing time bound holds for every $2$-warm start, we can apply it to any bounded perturbation of $\pi$. Take a real function $g$ with $\E_\pi[g]=0$ and $|g(z)|\le1$. Then $\nu(z)=\pi(z)(1+g(z))$ is a $2$-warm probability distribution. Writing $\sP$ for the reversible Markov operator, the density of $\nu\sP^t$ relative to $\pi$ is $1+\sP^t g$. Thus, if the Markov chain starting from $\nu$ mixes within total-variation distance $\eta$ in $t$ steps, then $\|\sP^t g\|_1\le2\eta$, where all function norms below are taken with respect to $\pi$. 

Call an eigenfunction \emph{slow} if its eigenvalue lies in $[1-1/(2t),1)$, and let $g_{\rm slow}$ be the projection of $g$ onto these eigenfunctions. Each slow component shrinks by at most a factor of two in $t$ steps. Then, we have
\begin{align*}
     \frac12\|g_{\rm slow}\|_2^2
 \le\langle g,\sP^t g\rangle_\pi
 \le\|g\|_\infty\|\sP^t g\|_1
 \le2\eta.
\end{align*}
Hence, every such perturbation has a slow component of norm $O(\sqrt\eta)$. Subtracting the mean removes only the stationary component. After rescaling, the estimate becomes $\|f_{\rm slow}\|_2\leq O(\sqrt{\eta}\,\|f\|_\infty)$ for any bounded function $f$. For a state $\ket\Psi=\sum_z\sqrt{\pi(z)}f(z)\ket z$, this $f$ is its relative-amplitude function, and $\|\ket\Psi\|_{4,\pi}=\|f\|_4$.

We now use the fourth-moment condition to control large amplitudes. For a threshold $B>0$, split $f$ into the part where $|f|\le B$ and the remaining tail. The bounded part contributes at most $O(\sqrt\eta\,B)$ to the slow component. The tail has $L^2$ norm at most $\|f\|_4^2/B$, and projection cannot increase that norm. Adding the two contributions gives
\[
 \|f_{\rm slow}\|_2
 \leq O\left(\sqrt\eta\,B+\|f\|_4^2/B\right)\leq O(\eta^{1/4}\|f\|_4)
\]
by choosing $B=\eta^{-1/4}\|f\|_4$ to balance the two terms.

The same estimate extends to vector-valued relative amplitudes, allowing the sample register to be entangled with auxiliary registers. More precisely, let $D_\pi$ be the discriminant matrix of an ergodic reversible Markov chain with nonnegative spectrum. This symmetric matrix has the chain's spectrum and stationary eigenvector $\ket\pi$. If every $2$-warm start mixes to error $\eta$ in $t$ steps, our key estimate is
\begin{equation}\label{eq:intro-warm-projector}
 \|(\Pi_{\rm slow}\otimes\Id)\ket{\Psi}_{\sf SA}\|
 \leq O\left(\eta^{1/4}\|\ket{\Psi}_{\sf SA}\|_{4,\pi}\right),
 \qquad
 \Pi_{\rm slow}=\one_{[1-1/(2t),1)}(D_\pi).
\end{equation}

To implement the reflection, suppose we have access to a $1$-block encoding of the discriminant $D_\pi$. Quantum singular value transformation (QSVT)~\cite{GSLW19} gives a polynomial filter that fixes $\ket\pi$ and approximately negates eigenvectors with eigenvalues at most $1-1/(2t)$. It uses 
\begin{align}\label{eq:tech_ov_reflection_cost}
    O(\sqrt t\log(1/\zeta))
\end{align}
queries to the block-encoding and its inverse. Equation~\eqref{eq:intro-warm-projector} bounds the error from the remaining eigenspaces, giving total error
\[
 O\!\left(\eta^{1/4}\|\ket{\Psi}_{\sf SA}\|_{4,\pi}+\zeta\right)
\]
on the joint state, including fresh implementation workspace (\cref{thm:effective-reflection}). Thus, it suffices to ensure that the input states have polynomially bounded fourth-moment norm. The implementation cost for each reflection depends on the $2$-warm-start mixing time of the Markov chain and the cost of block encoding the discriminant. We next construct a walk with both rapid uniform warm-start mixing and constant query cost for its block-encoding.

\subsubsection{Preconditioning the walk on a fixed lattice}\label{sec:tech_ov_walk}

We use a lattice Metropolis walk on the same finite state space throughout the cooling schedule. Let $\Lambda=h\Z^{d+1}$. At stage $i$, the state space and target distribution are
\[
 \Omega_\Lambda=K_0\cap\Lambda,
 \qquad
 \pi_i(z)\propto e^{-a_i z_0}
 \qquad \forall z\in\Omega_\Lambda.
\]
The main idea is to precondition the proposed displacements while keeping this state space and target distribution fixed. In continuous space, affine preconditioning changes coordinates to make the target distribution well-conditioned. Discretizing each transformed body on a new cubic grid would generally produce a different finite target for each covariance estimate. Our construction instead guarantees that every covariance estimate leads to the same stationary qsample, which is essential when the estimate is computed coherently.

Let $A$ be a positive-definite covariance estimate, and let $T_A$ be an invertible approximation to $A^{1/2}$, as constructed in \cref{prop:finite-affine-branches}. From its current position $x\in\Omega_\Lambda$, the walk chooses a proposal through one of three branches:
\begin{itemize}
 \item \emph{Holding:} With probability $1/2$, set $y=x$.
 \item \emph{Local proposal:} With probability $3/8$, choose a radius $\delta$ from a list of $J$ radii spaced by factors of two, each with probability $\Theta(1/J)$. Draw $U$ uniformly from $B(0,\delta)$, apply $T_A$, and round $T_AU$ coordinatewise to a lattice displacement $v$. Set $y=x+\sigma v$, where $\sigma\in\{-1,1\}$ is an independent uniform sign.
 \item \emph{Connector proposal:} With probability $1/8$, choose a lattice displacement $v$ uniformly from a fixed box containing every difference between two points of $\Omega_\Lambda$. Set $y=x+\sigma v$, again using an independent uniform sign.
\end{itemize}
The independent sign makes the implemented displacement law exactly symmetric. Since this law is also independent of the current position, the Metropolis acceptance probability is
\[
 \alpha(x,y)=\one_{K_0}(y)\min\{1,e^{-a_i(y_0-x_0)}\}.
\]
The walk moves to $y$ if the proposal is accepted and stays at $x$ otherwise. This rule preserves $\pi_i$ for every allowed covariance estimate. The holding branch makes the chain lazy, and the connector branch makes it irreducible. \Cref{fig:lattice-metropolis} illustrates the one-step update of this random walk.

\begin{figure}[!ht]
 \centering
 \begin{tikzpicture}[
 x=1cm,y=1cm,
 every node/.style={font=\normalsize,align=center},
 proposal/.style={-{Latex[length=2mm]},draw=blue!55!black,line width=.9pt,shorten <=3pt,shorten >=2pt},
 rejected/.style={-{Latex[length=2mm]},draw=red!65!black,densely dashed,line width=.9pt,shorten <=3pt,shorten >=3pt},
 scale contour/.style={draw=teal!65!black,line width=.7pt},
 point label/.style={fill=white,inner sep=1.5pt},
 current/.style={circle,draw=white,line width=1pt,fill=blue!55!black,inner sep=0pt,minimum size=7pt},
 destination/.style={circle,draw=blue!55!black,fill=white,line width=.85pt,inner sep=0pt,minimum size=4.5pt}]

 \node[anchor=west,font=\bfseries] at (0,6.05) {(a) Possible moves in each step};
 \def\latticebody{(.55,2.25)--(1.35,4.05)--(3.6,4.95)--(5.85,4.5)--(6.75,2.7)--(5.85,.9)--(3.6,.45)--(1.35,.9)--cycle}
 \path[fill=teal!5,rounded corners=14pt] \latticebody;
 \draw[step=.45cm,black!12,line width=.2pt] (0,0) grid (7.2,5.4);
 \foreach \ix in {0,...,16}{
  \foreach \iy in {0,...,12}{
   \fill[black!18] (.45*\ix,.45*\iy) circle (.7pt);
  }
 }
 \begin{scope}
  \clip[rounded corners=14pt] \latticebody;
  \foreach \ix in {0,...,16}{
   \foreach \iy in {0,...,12}{
    \fill[black!55] (.45*\ix,.45*\iy) circle (1pt);
   }
  }
 \end{scope}
 \draw[draw=teal!55!black,line width=.9pt,rounded corners=14pt] \latticebody;
 \node[point label,text=teal!55!black] at (5.85,3.15) {$K_0$};

 \coordinate (x) at (2.25,2.25);
 \coordinate (near) at (3.15,2.7);
 \coordinate (far) at (4.05,4.05);
 \coordinate (low) at (5.4,1.8);
 \coordinate (outside) at (1.35,4.95);
 \draw[proposal] (x)--(near);
 \draw[proposal] (x)--(far);
 \draw[proposal] (x)--(low);
 \draw[rejected] (x)--(outside);
 \draw[-{Latex[length=1.8mm]},black!70,line width=.75pt,shorten <=3pt,shorten >=3pt]
   (x) .. controls (1.0,2.4) and (1.0,1.05) .. (x);
 \node[point label,font=\small,text=black!75] at (1.65,1.35) {stay};
 \node[point label,font=\small,text=red!65!black,rotate=18] at (1.95,3.25) {reject};
 \node[current] at (x) {};
 \foreach \name in {near,far,low}{\node[destination] at (\name) {};}
 \node[point label,anchor=north west,text=blue!55!black] at (2.28,2.09) {$x$};
 \node[point label,anchor=south,text=blue!55!black] at (4.05,4.17) {$y$};
 \draw[red!65!black,line width=1pt] (1.28,4.88)--(1.42,5.02) (1.28,5.02)--(1.42,4.88);
 \node[point label,anchor=south,text=red!65!black] at (1.35,5.09) {$y_{\rm out}\notin K_0$};


 \draw[black!50,line width=.45pt] (.45,-.06)--(.45,-.26) (.9,-.06)--(.9,-.26);
 \draw[{Latex[length=1mm]}-{Latex[length=1mm]},black!65,line width=.45pt] (.45,-.19)--(.9,-.19);
 \node[font=\small] at (.675,-.49) {$h$};
 \node at (4.0,-.63) {$\Omega_\Lambda=K_0\cap h\mathbb Z^{d+1}$};

 \begin{scope}[shift={(8.5,0)}]
  \node[anchor=west,font=\bfseries] at (0,6.05) {(b) Local proposals at several scales};
  \draw[step=.45cm,black!12,line width=.2pt] (0,0) grid (7.2,5.4);
  \foreach \ix in {0,...,16}{
   \foreach \iy in {0,...,12}{
    \fill[black!30] (.45*\ix,.45*\iy) circle (.8pt);
   }
  }
  \foreach \rr in {.675,1.35,2.7}{
   \draw[scale contour,rotate around={28:(3.15,2.7)}]
     (3.15,2.7) ellipse[x radius=\rr cm,y radius=.5*\rr cm];
  }
  \node[point label,text=teal!65!black] at (2.55,2.05) {$\delta$};
  \node[point label,text=teal!65!black] at (1.78,1.7) {$2\delta$};
  \node[point label,text=teal!65!black] at (.58,1.05) {$4\delta$};

  \coordinate (xb) at (3.15,2.7);
  \foreach \xx/\yy in {3.6/2.7,3.15/2.25,4.95/4.05}{
   \draw[proposal] (xb)--(\xx,\yy);
   \node[destination] at (\xx,\yy) {};
  }
  \node[current] at (xb) {};
  \node[point label,anchor=west,text=blue!55!black] at (5.1,3.65) {$y$};
  \node[font=\small] at (3.6,-.63) {Preconditioned balls};
 \end{scope}

 \end{tikzpicture}
 \caption{A two-dimensional illustration of the lattice Metropolis walk. (a) Dark grid points are feasible points in $\Omega_\Lambda$, and $x$ is the current position. Blue arrows show alternative feasible proposals. The dashed proposal leaves $K_0$ and is rejected. The loop represents both the holding step and rejected proposals. (b) Three consecutive scales of the local proposal centered at $x$, after preconditioning by $T_A$. The complete update is given in \cref{alg:finite-metropolis}.}
 \label{fig:lattice-metropolis}
\end{figure}

For the mixing analysis, let $\Sigma_i$ be the covariance of the continuous density proportional to $e^{-a_i z_0}$ on $K_0$, and fix an estimate $A\asymp\Sigma_i$. Write $W_A=T_A^{-1}$. In the coordinates $u=W_Ax$, the target has comparable variance in every direction:
\[
 W_A\Sigma_iW_A^\top\asymp\Id.
\]
The same change of coordinates maps the original lattice $\Lambda$ to $hW_A\Z^{d+1}$. For sufficiently close points outside a set of small stationary mass, the continuous analysis~\cite{Lovasz99,LovaszVempala06Corner,Zhang26HitRun} shows that their transition laws have substantial overlap and acceptance probabilities. The key step is to preserve these properties when rounding proposals onto the fixed lattice. Let $\mathcal R_h$ denote coordinatewise rounding to $\Lambda$. Since translation by a lattice point commutes with rounding, we have
\[
 x+\mathcal R_h(T_AU)
 =\mathcal R_h\!\left(T_A(W_Ax+U)\right),
 \qquad \forall x\in\Omega_\Lambda.
\]
Thus, the ideal rounded proposals from $x$ and $y$ are obtained by applying the same map $\mathcal R_h\circ T_A$ to uniform points in $B(W_Ax,\delta)$ and $B(W_Ay,\delta)$, respectively. Their common probability mass is sent to the same lattice destinations, so rounding cannot decrease their overlap. 

We also control the effect of rounding on Metropolis acceptance. Write $e=\mathcal R_h(T_AU)-T_AU$. Except for a small set of lattice points near the boundary, a sufficiently small lattice spacing $h$ ensures that $x+2e\in K_0$. And there exists a suitable proposal scale $\delta$ such that for most $U$, $x+2T_AU\in K_0$ and has density weight at least a constant fraction of the weight at $x$. Thus, by the convexity of $K_0$, we get that
\[
 x+\mathcal R_h(T_AU)
 =\tfrac12(x+2T_AU)+\tfrac12(x+2e)\in K_0.
\]
For a feasible proposal $y=x+\mathcal R_h(T_AU)$, the target-weight ratio factors as
\begin{align*}
    e^{-a_i(y_0-x_0)}=e^{-a_i(T_A U)_0} \cdot e^{-a_ie_0},
\end{align*}
where the chosen scale bounds the first factor from below by a constant for most proposals, and a sufficiently small $h$ keeps the second factor close to $1$. For feasible $y$, the acceptance probability is $\min\{1,e^{-a_i(y_0-x_0)}\}$. Applying these estimates at both starting points shows that a constant fraction of their common proposal mass is preserved after the Metropolis step.

For a requested mixing error $\eta$, we choose one sufficiently fine lattice that works for every cooling stage and every covariance estimate accurate to a constant factor. The remaining argument adapts the local-overlap and separator analysis of~\cite[Appendix~F]{Zhang26HitRun} to this lattice, giving a mixing time of
\[
 t_{\rm mix}=O\!\left(
 \frac{J^2d^2}{\psi_d^2}\log^3\frac4\eta
 \right)
 =\widetilde O\left(\frac{d^2}{\psi_d^2}\right)
\]
steps to reach total-variation error $\eta$ from every $2$-warm start, provided $A$ approximates the covariance within a constant factor. The formal guarantees and proof appear in \cref{thm:finite-metropolis,app:finite-state}.

Finally, we can implement a $1$-block-encoding of the discriminant using $O(1)$ membership queries. Together with the qsample-reflection cost~\eqref{eq:tech_ov_reflection_cost}, this yields $\widetilde O(d/\psi_d)$ queries per reflection (\cref{cor:finite-reflection-cost-main}). 

\subsubsection{Estimating the product by phase accumulation}\label{sec:tech_ov_estimation}
We now discuss our quantum estimator for the telescoping product of ratios in~\eqref{eq:tech_ov_telescoping}.
Suppose the required independent qsamples, controlled reflections, and rotations for bounded observables~\eqref{eq:tech_ov_obs_rotation} are supplied. Our estimator uses $\widetilde O(m/\varepsilon)$ qsample-reflections, improving the expected $\widetilde O(m^{3/2}/\varepsilon)$ bound of Cornelissen--Hamoudi~\cite{CornelissenHamoudi23}. The improvement comes from estimating the combined logarithm, rather than estimating each ratio separately to high precision.

For $\mu_i=\E_{\pi_i}[X_i]$ and $\pi_{i+1}=X_i\pi_i/\mu_i$, classical bridge sampling~\cite{Bennett76,MengWong96} gives
\[
 p_i^+=\E_{\pi_i}\!\left[\frac{X_i}{c_i+X_i}\right],
 \qquad
 p_i^-=\E_{\pi_{i+1}}\!\left[\frac{c_i}{c_i+X_i}\right],
 \qquad
 \mu_i=c_i\frac{p_i^+}{p_i^-}.
\]
The identity holds for every $c_i>0$. We take $c_i$ to be an $O(1)$-approximation to $\mu_i$ estimated using  $\widetilde{O}(1)$ classical samples, which are obtained by measuring $\widetilde{O}(1)$ copies of the qsample $\ket{\pi_i}$. Then, $c_i=\Theta(\mu_i)$ ensures $p_i^\pm=\Omega(1)$. Therefore,
\begin{equation}\label{eq:tech_ov_L}
 \log\prod_i\mu_i=\sum_i\log c_i-\mathsf L,
 \qquad
 \mathsf L=\sum_i(-\log p_i^++\log p_i^-).
\end{equation}
Only $\mathsf L$ requires quantum estimation, and our estimator will encode each logarithm as an eigenvalue of the corresponding Hamiltonian. 

Consider a random variable $Y\in [0,1]$ with $p=\E_\pi[Y]=\Omega(1)$. We can apply the controlled-rotation gate
\begin{equation}\label{eq:tech_ov_obs_rotation}
 V_Y\ket{x}_{\mathsf S}\ket0_{\mathsf Q}
 =\ket{x}_{\mathsf S}
 \left(\sqrt{1-Y(x)}\ket0_{\mathsf Q}
       +\sqrt{Y(x)}\ket1_{\mathsf Q}\right)
\end{equation}
to obtain $\ket\phi=V_Y(\ket\pi_{\mathsf S}\ket0_{\mathsf Q})$. Let $Q_\phi=\ket\phi\!\bra\phi$ and $\Pi=\Id_{\mathsf S}\otimes\ket1\!\bra1_{\mathsf Q}$. The identities
\[
 G_\phi:=Q_\phi(\Id-\Pi)Q_\phi=(1-p)Q_\phi,
 \qquad
 H_k:=\sum_{\ell=1}^k\frac{G_\phi^\ell}{\ell}\approx -\log (\Id-G_\phi)
\]
give
\[
 H_k\ket\phi
 =\bigl(-\log p+O(\varepsilon/m)\bigr)\ket\phi
\]
for $k=O(\log(m/\varepsilon))$. Using quantum singular value transformation (QSVT)~\cite{GSLW19}, we can block-encode $H_k$ with $\widetilde{O}(1)$ qsample-reflections and observable rotations, and the Hamiltonian simulation implements the evolution for time $\tau$ using $\widetilde O(1+|\tau|)$ reflections. 

For each of the $2m$ terms in $\mathsf L$, let $H_j\ket{\phi_j}=h_j\ket{\phi_j}$ and choose a sign $\sigma_j\in\{\pm1\}$ so that $\sum_j\sigma_jh_j=\mathsf L+O(\varepsilon)$. The phase register $\mathsf F$ has basis labels $u=0,\ldots,M-1$, as in the standard QPE, where $M=\Theta(m/\varepsilon)$ is a power of two. Each \textsc{Add} uses controlled evolution to implement
\[
 \ket u_{\mathsf F}\ket{\phi_j}
 \longmapsto
 e^{iu\tau_0\sigma_jh_j}\ket u_{\mathsf F}\ket{\phi_j},
 \qquad \tau_0=\Theta(1/m).
\]
Phase kickback adds the contribution to the shared register $\mathsf F$. In the ideal evolution, $\ket{\phi_j}$ remains unchanged and unentangled with $\mathsf F$, so it can be discarded before processing the next contribution.

After all contributions, as in the standard QPE, we apply an inverse Fourier transform and measure the ${\sf F}$ register to obtain an estimate of their signed sum. The total Hamiltonian evolution time is $(M-1)\tau_0=O(1/\varepsilon)$ per contribution, giving $\widetilde O(m/\varepsilon)$ reflections overall (\cref{lem:log-sum}). An illustration of the quantum circuits is shown in~\cref{fig:algorithm-components}. We note that our quantum product estimator needs only \emph{streaming access} to the qsamples: each copy is used for one contribution and then discarded, while $\mathsf F$ carries the accumulated phase through the annealing schedule.

\begin{figure}
 \centering
 \begin{tikzpicture}[
 x=1cm,y=1cm,
 every node/.style={font=\normalsize,align=center},
 gate/.style={draw=blue!40!black,fill=blue!3,rounded corners=2pt,minimum height=.76cm,minimum width=1.8cm},
 wire/.style={line width=.65pt},
 arrow/.style={-{Latex[length=2mm]},line width=.65pt}]

 \node[anchor=west,font=\bfseries] at (0,.8) {(a) One contribution: Add $j$};
 \node[anchor=east] at (14.9,.8)
   {$U_j=e^{i\tau_0\sigma_jH_j},\qquad M=2^r$};

 \draw[wire] (.32,.22)--(.17,.22)--(.17,-2.27)--(.32,-2.27);
 \node at (-.2,-1.02) {$\mathsf F$};
 \node[anchor=east] at (1.7,0) {$\ket{u_{r-1}}$};
 \node at (1.3,-.62) {$\vdots$};
 \node[anchor=east] at (1.7,-1.25) {$\ket{u_1}$};
 \node[anchor=east] at (1.7,-2.05) {$\ket{u_0}$};
 \node[anchor=east] at (1.7,-3.15) {$\ket{\phi_j}$};

 \foreach \yy in {0,-1.25,-2.05,-3.15}{
  \draw[wire] (1.9,\yy)--(7.75,\yy);
  \node at (8.35,\yy) {$\cdots$};
  \draw[wire] (8.95,\yy)--(13.2,\yy);
 }
 \node[gate] (u0) at (3.35,-3.15) {$U_j^{2^0}$};
 \node[gate] (u1) at (6.15,-3.15) {$U_j^{2^1}$};
 \node[gate,minimum width=2.15cm] (ur) at (10.9,-3.15) {$U_j^{2^{r-1}}$};
 \draw[wire] (3.35,-2.05)--(u0.north);
 \draw[wire] (6.15,-1.25)--(u1.north);
 \draw[wire] (10.9,0)--(ur.north);
 \fill[blue!40!black] (3.35,-2.05) circle (2pt);
 \fill[blue!40!black] (6.15,-1.25) circle (2pt);
 \fill[blue!40!black] (10.9,0) circle (2pt);
 \node at (3.35,-3.88) {$t=\tau_0$};
 \node at (6.15,-3.88) {$t=2\tau_0$};
 \node at (10.9,-3.88) {$t=2^{r-1}\tau_0$};
 \node[anchor=west] at (13.35,-3.15) {$\ket{\phi_j}$};

 \node[anchor=west,font=\bfseries] at (0,-4.8) {(b) Accumulation and readout};
 \node (initial) at (.7,-5.6) {$\ket{0}_{\mathsf F}^{\otimes r}$};
 \node[gate,minimum width=1.05cm,fill=white] (had) at (2.45,-5.6) {$H^{\otimes r}$};
 \node[gate,minimum width=1.4cm] (a1) at (4.45,-5.6) {Add $1$};
 \node[gate,minimum width=1.4cm] (a2) at (6.6,-5.6) {Add $2$};
 \node (dots) at (8.5,-5.6) {$\cdots$};
 \node[gate,minimum width=1.65cm] (an) at (10.35,-5.6) {Add $2m$};
 \node[gate,minimum width=1.65cm,fill=white] (qft) at (12.55,-5.6) {$\mathrm{QFT}^{-1}$};
 \node (read) at (14.7,-5.6) {$\widehat{\mathsf L}$};
 \draw[arrow] (initial.east)--(had.west);
 \draw[arrow] (had.east)--(a1.west);
 \draw[arrow] (a1.east)--(a2.west);
 \draw[arrow] (a2.east)--(dots.west);
 \draw[arrow] (dots.east)--(an.west);
 \draw[arrow] (an.east)--(qft.west);
 \draw[arrow] (qft.east)--(read.west);
 \foreach \xc/\j in {4.45/1,6.6/2,10.35/{2m}}{
  \node at (\xc,-6.65) {$\ket{\phi_{\j}}$};
  \draw[arrow] (\xc,-6.38)--(\xc,-6.02);
 }
 \end{tikzpicture}
 \caption{Phase accumulation in the ideal estimator. (a) The procedure \textsc{Add} applies all binary-controlled powers of $U_j=e^{i\tau_0\sigma_jH_j}$ to the supplied eigenstate $\ket{\phi_j}$. It multiplies the amplitude of $\ket u_{\mathsf F}$ by $e^{iu\tau_0\sigma_jh_j}$ while leaving $\ket{\phi_j}$ unchanged. (b) All contributions use the same register $\mathsf F$, with a separate qsample for each \textsc{Add}. The pseudocode is in~\cref{alg:add-log}.}
 \label{fig:algorithm-components}
\end{figure}

\subsubsection{Alternating qsample conversions and  error analysis}\label{sec:tech_ov_alter}

We reserve two blocks $\mathsf B_1,\mathsf B_2$, each containing
$L=\widetilde O(d)$ qsamples, to supply covariance estimates throughout
the cooling schedule. Following the interlacing idea
of~\cite{CCHLW23}, one block supplies the preconditioner used to convert
the other by fixed-point amplitude
amplification~\cite{YoderLowChuang14}, and then the blocks exchange roles:
\[
\begin{aligned}
 \ket{\pi_i}_{\mathsf B_1}^{\otimes L}
 \ket{\pi_i}_{\mathsf B_2}^{\otimes L}
 &\xmapsto{\;\mathsf B_1\text{ controls the conversion of }\mathsf B_2\;}
 \ket{\pi_i}_{\mathsf B_1}^{\otimes L}
 \ket{\pi_{i+1}}_{\mathsf B_2}^{\otimes L}\\
 &\xmapsto{\;\mathsf B_2\text{ controls the conversion of }\mathsf B_1\;}
 \ket{\pi_{i+1}}_{\mathsf B_1}^{\otimes L}
 \ket{\pi_{i+1}}_{\mathsf B_2}^{\otimes L}.
\end{aligned}
\]
These are the intended states before and after the qsample conversions. Adjacent distributions have comparable covariances, so an accurate estimate from either stage preconditions both lattice Metropolis walks. It therefore suffices to implement both $R_{\pi_i}$ and
$R_{\pi_{i+1}}$ by~\cref{cor:finite-reflection-cost-main}.
Each block is converted one qsample at a time.

To use a control block, we compute its empirical covariance \emph{coherently}, use the estimate to select the reflections in the conversion circuit, and uncompute the covariance workspace. In the resulting state, the control block could be entangled with the converted qsample. Fortunately, our lattice Metropolis walk has a fixed stationary distribution that is independent of the covariance estimate. Therefore,  every accurate covariance estimate gives an approximation to the same intended qsample-reflection. We use a \emph{two-level hybrid argument} to show that these errors remain small throughout the algorithm (see~\cref{fig:qsample-error-analysis} for an illustration).

\begin{figure}
\centering
\definecolor{errink}{RGB}{38,47,61}
\definecolor{errblue}{RGB}{41,85,130}
\definecolor{errgray}{RGB}{100,108,119}
\begin{tikzpicture}[
 x=1cm,y=1cm,line width=.8pt,
 every node/.style={
  align=center,
  text=errink
},
 heading/.style={anchor=west,font=\bfseries},
 gate/.style={draw=errblue,fill=errblue!5,rounded corners=1.5pt,
   minimum width=1cm,minimum height=.8cm,inner sep=3pt},
 othergate/.style={gate,draw=errgray!75,fill=errgray!4},
 state/.style={draw=errgray!75,fill=errgray!3,rounded corners=2pt,
   text width=3.05cm,minimum height=1.15cm,inner sep=5pt},
 reference/.style={state,draw=errblue,fill=errblue!4},
 wire/.style={draw=errink},
 operation/.style={-{Latex[length=2mm]},draw=errink},
 comparison/.style={{Latex[length=2mm]}-{Latex[length=2mm]},
   draw=errblue,densely dashed},
 cut/.style={draw=errblue,densely dashed}]

\node[heading] at (0,0)
 {(a) Outer hybrid: one primitive};
\node[state] (actualin) at (1.80,-1.35)
 {actual input\\[3pt]$\widetilde\rho_k$};
\node[reference] (refin) at (8.10,-1.35)
 {reference input\\[3pt]$\rho_k$};
\node[state,minimum height=1.4cm] (actualout) at (1.80,-3.85)
 {actual output\\[4pt]$\widetilde\rho_{k+1}$};
\node[state,minimum height=1.4cm] (implref) at (8.10,-3.85)
 {output from $\rho_k$\\[4pt]
  $\widetilde{\mathcal E}_k(\rho_k)$};
\node[reference,minimum height=1.4cm] (nextref) at (14.40,-3.85)
 {next reference\\[4pt]$\rho_{k+1}$};

\draw[operation] (actualin.south)--node[right=5pt]
 {$\widetilde{\mathcal E}_k$}(actualout.north);
\draw[operation] (refin.south)--node[right=5pt]
 {$\widetilde{\mathcal E}_k$}(implref.north);
\draw[operation] (refin.east)--node[above=5pt]
 {intended primitive}(14.40,-1.35)--(nextref.north);

\draw[comparison] (actualin.east)--node[above=5pt,text=errblue]
 {$E_k$}(refin.west);
\draw[comparison] (actualout.east)--node[above=5pt,text=errblue]
 {$\le E_k$}(implref.west);
\draw[comparison] (implref.east)--node[above=5pt,text=errblue]
 {$e_k$}(nextref.west);
\node[text=errblue] at (4.95,-4.3) {contractivity};
\node[text=errblue] at (11.25,-4.3) {primitive error};

\node[heading] at (0,-5.3)
 {(b) Inner hybrid: reflections within one conversion};
\begin{scope}[yshift=1.1cm]
\node[anchor=east] at (1.30,-7.45) {$\ket{\pi_i}^{\otimes L}$};
\draw[wire] (1.45,-7.45)--(14.05,-7.45);
\node[fill=white,inner sep=4pt,text=errblue] at (8.10,-7.1)
 {control block unchanged};
\node[anchor=west] at (14.15,-7.45) {$\ket{\pi_i}^{\otimes L}$};

\node[anchor=east] at (1.30,-9.05) {$\ket{\pi_i}$};
\draw[wire] (1.45,-9.05)--(14.05,-9.05);
\node[othergate,minimum width=.85cm] at (2.05,-9.05) {$V_0$};
\node[gate] at (3.60,-9.05) {$R_1$};
\node[fill=white,inner sep=4pt] at (4.70,-9.05) {$\cdots$};
\node[othergate] at (5.95,-9.05) {$V_{j-1}$};
\node[gate] at (7.60,-9.05) {$R_j$};
\node[fill=white,inner sep=4pt] at (8.70,-9.05) {$\cdots$};
\node[othergate] at (10.15,-9.05) {$V_{q-1}$};
\node[gate] at (11.85,-9.05) {$R_q$};
\node[othergate,minimum width=.85cm] at (13.10,-9.05) {$V_q$};
\node[anchor=west] at (14.15,-9.05) {$U\ket{\pi_i}$};
\foreach \xx/\jj in {2.90/1,6.90/j,11.15/q}{
 \draw[cut] (\xx,-8.42)--(\xx,-9.68);
 \node[text=errblue,anchor=south] at (\xx,-8.39)
  {$\ket{\psi_{\jj}}$};
}
\end{scope}
\end{tikzpicture}
\caption{Two-level hybrid argument for error analysis. (a) Solid arrows apply operations, and dashed double arrows compare joint states in trace distance. The outer hybrid argument gives $E_{k+1}\leq E_k +e_k$. (b) For one qsample conversion, the outer reference input $\rho_k$ contains $\ket{\pi_i}^{\otimes L}\otimes\ket{\pi_i}$. $\ket{\psi_j}$ are the inner reference states. Each reflection has error at most $\xi+2\sqrt{\varepsilon_{\Cov}}$, and the inner hybrid argument gives $e_k\le\varepsilon_{\rm conv}+q(\xi+2\sqrt{\varepsilon_{\Cov}})$}
\label{fig:qsample-error-analysis}
\end{figure}

\paragraph{Outer level: comparing primitives.}
View the algorithm as a sequence of quantum primitives, such as state conversion and Hamiltonian simulation. Let $\rho_k$ denote the state before the $k$-th primitive in the reference computation, where each primitive performs its intended operation without error. For example, a reference conversion replaces $\ket{\pi_i}$ by $\ket{\pi_{i+1}}$ and leaves the control block unchanged.

Let $\widetilde\rho_k$ be the corresponding implemented state and $\widetilde{\mathcal E}_k$ the implemented primitive. Define
\[
 E_k=\dtr(\widetilde\rho_k,\rho_k),
 \qquad
 e_k=\dtr\!\left(
   \widetilde{\mathcal E}_k(\rho_k),\rho_{k+1}
 \right).
\]
Since applying the same quantum channel to two states cannot increase their trace distance, we prove in~\cref{lem:composition} that
\[
 E_{k+1}=\dtr\!\left(\widetilde{\mathcal E}_k(\widetilde\rho_k),\rho_{k+1}\right)\leq \dtr(\widetilde{\mathcal E}_k(\widetilde\rho_k),\widetilde{\mathcal E}_k(\rho_k))+\dtr(\widetilde{\mathcal E}_k(\rho_k),\rho_{k+1})
 \le E_k+e_k.
\]
Thus, we can bound each primitive's error $e_k$ on the reference input $\rho_k$ and add these bounds. 

\paragraph{Inner level: comparing individual reflections.}
We illustrate how to bound $e_k$ for one qsample conversion. Write
\[
 \ket{b_i}_{\mathsf B_1}
 :=\ket{\pi_i}_{\mathsf B_1}^{\otimes L},
 \qquad
 \pi_i^{\otimes L}(\mathbf x)
 :=\prod_{\ell=1}^{L}\pi_i(x_\ell).
\]
For this primitive, the outer reference input contains $\ket{b_i}_{\mathsf B_1}\otimes\ket{\pi_i}_{\mathsf S}$, where $\mathsf S$ is the qsample being converted. We ignore the auxiliary registers for simplicity.

Let $U=U[R_{\pi_i},R_{\pi_{i+1}}]$ be the finite conversion circuit of~\cref{lem:transport} with exact qsample-reflections, and let $\widetilde{U}_{\mathbf x}=U[\widetilde{R}_{\pi_i}({\bf x}),\widetilde{R}_{\pi_{i+1}}({\bf x})]$ be the same circuit with the qsample-reflections replaced by their approximate implementations using the covariance estimate computed from $\mathbf x$. After uncomputing the covariance
workspace, the implemented operation has the form
\begin{equation}\label{eq:tech_ov_approx_convert}
 \ket{\pi_i}^{\otimes L}_{\mathsf B_1}\otimes\ket{\pi_i}_{\mathsf S}
 \longmapsto
 \sum_{\mathbf x\in\Omega_\Lambda^{\times L}}
 \sqrt{\pi_i^{\otimes L}(\mathbf x)}
 \ket{\mathbf x}_{\mathsf B_1}
 \otimes \widetilde U_{\mathbf x}\ket{\pi_i}_{\mathsf S}.
\end{equation}

The inner reference computation runs $U$ on this same outer reference input $\rho_k$. Let $\ket{\psi_j}$ be its working state immediately before the $j$-th reflection, denoted by $R_j$, where $R_j$ is either $R_{\pi_i}$ or $R_{\pi_{i+1}}$. Because the exact reflections do not depend on the covariance estimate, the reference input to this call is $\ket{\pi_i}^{\otimes L}\otimes\ket{\psi_j}$.

Suppose inaccurate covariance estimates have total probability at most $\varepsilon_{\Cov}$, and every accurate estimate implements $R_j$ with vector-norm error at most $\xi$ on $\ket{\psi_j}$. Let $\widetilde R_j$ denote the coherently selected implementation that acts jointly on the control block and the working register. Orthogonality of $\{\ket{\bf x}\}$ gives
\[
 \left\|
 \left(\widetilde R_j-I_{\mathsf B_1}\otimes R_j\right)
 \left(\ket{b_i}\otimes\ket{\psi_j}\right)
 \right\|
 \le \sqrt{\xi^2+4\varepsilon_{\Cov}}
 \le \xi+2\sqrt{\varepsilon_{\Cov}},
\]
as formalized in~\cref{lem:coherent-reflection-error}.
This bounds the error of one reflection on the full joint state,
including its effect on the control block.

A hybrid argument over the $q$ reflection calls adds these errors. The finite conversion circuit $U$ itself has conversion error at most $\varepsilon_{\rm conv}$, even with exact reflections (\cref{lem:transport}).
Consequently,
\[
 e_k\le
 \varepsilon_{\rm conv}
 +q\bigl(\xi+2\sqrt{\varepsilon_{\Cov}}\bigr).
\]
The same comparison applies when the blocks exchange roles: covariance accuracy is evaluated on the reference product states, while the outer trace-distance bound accounts for the actual state even when its control and working registers are entangled.

Recall that the error guarantee for an approximate qsample reflection depends on its input. To obtain the required reflection accuracy $\xi$, we also bound the fourth-moment norm of each inner reference state $\ket{\psi_j}$. Along the exact conversion circuit, sample-controlled gates 
\begin{align*}
    \sum_{x\in \Omega_\Lambda}\ket{x}\bra{x}\otimes G_{x}
\end{align*}
preserve this norm, and each reflection adds at most a constant to this norm. Thus, \cref{lem:fourth-growth} gives a bound of $O(1+j)$ before the $j$-th reflection. A similar analysis holds along the reference estimation circuits. 

\subsubsection{Total query complexity}\label{sec:tech_ov_total_cost}
Advancing the two covariance blocks through $m$ transitions uses $\widetilde O(Lm)$ qsample-reflections. The $\widetilde O(m)$ additional copies used for classical and quantum estimation are prepared at stage zero, advanced to their designated stages, and discarded after use. Each undergoes at most $m$ conversions, giving $\widetilde O(m^2)$ additional reflections. Including product estimation and the endpoint terms in the telescoping ratio product~\eqref{eq:tech_ov_telescoping}, the total number of qsample-reflections is
\begin{equation}\label{eq:intro-reflection-count}
 N=\widetilde O\!\left(
 \underbrace{Lm}_{\text{covariance blocks}}
 +\underbrace{m^2}_{\text{other sample conversions}}
 +\underbrace{m/\varepsilon}_{\text{estimation}}
 \right)
 =\widetilde O\!\left(
 d^{3/2}+\frac{\sqrt d}{\varepsilon}
 \right),
\end{equation}
where $L=\widetilde O(d)$ and $m=\widetilde O(\sqrt d)$. The polynomial bounds on the number of calls and the fourth-moment norms allow inverse-polynomial error tolerances, with only polylogarithmic overhead. Multiplying by the $\widetilde O(d/\psi_d)$ membership queries per reflection gives~\cref{thm:intro-volume_informal}. All circuit lengths are fixed in advance, so the query complexity holds on every execution. \cref{sec:assembly,app:parameters} bound the initialization cost and the remaining numerical errors, and complete the success-probability analysis.

\subsubsection{Comparison with previous works}\label{subsec:comparison-previous}

\paragraph{Rounding and nondestructive covariance estimation.}
Chakrabarti et al.~\cite{CCHLW23} already propose coherent interlaced rounding. The unresolved point in their analysis is the guarantee that the qsamples remain close to their intended joint state after the covariance information is used. Their argument appeals to concentration on a correct affine transformation, whereas ordinary covariance concentration only guarantees that the estimate belongs to a set of accurate matrices. It need not concentrate on a single encoded matrix. Therefore, neither statistical accuracy nor uncomputing the matrix register alone guarantees small disturbance to the qsample block. Cornelissen, Apers, and Gribling~\cite[Section~1.2.1, footnote~1]{CornelissenApersGribling26} identify this gap in the multivariate nondestructive mean-estimation argument, and the univariate guarantees of~\cite{CornelissenHamoudi23} do not directly resolve it.

Our construction uses the interlacing idea but does not require a classical covariance estimate as output. The joint-state analysis in~\cref{sec:tech_ov_alter} justifies using the covariance coherently to select an approximation to a fixed reflection, even when many different matrix values occur. Thus our rounding bound does not assume a nondestructive multivariate mean estimator. This distinction is separate from the single-start effective spectral gap issue in \cref{app:counterexample}.

To avoid the unresolved quantum-rounding step, \cite[Section~1.2.1, footnotes~1--2]{CornelissenApersGribling26} uses classical rounding and shows the corrected query complexities
\[
 \begin{array}{ll}
  \widetilde O(d^{7/2}+d^{5/2}/\varepsilon)
   &\text{\cite{CCHLW23}},\\
  \widetilde O(d^{7/2}+d^{9/4}/\varepsilon)
   &\text{\cite{CornelissenHamoudi23}}.
 \end{array}
\]
It changes the $d^3$ quantum rounding costs in \cref{tab:intro-volume-comparison} to $d^{7/2}$, which is the cost of replacing quantum rounding by classical preprocessing. Our smaller rounding term follows from the independently justified coherent construction described above.

\paragraph{The concurrent transducer approach.}
Concurrently and independently, Cornelissen, Apers, and Gribling~\cite{CornelissenApersGribling26} use transducers to amortize quantum-walk costs and quantize the speedy-walk approach of Cousins and Vempala~\cite{CousinsVempala18}. Their algorithm uses classical rounding and has membership-query complexity
\[
 \widetilde O\!\left(d^{7/2}+\frac{d^{7/4}}{\varepsilon}\right).
\]
Their amortized analysis addresses the state-dependent cost of implementing speedy-walk updates. Our lattice Metropolis walk introduced in~\cref{sec:tech_ov_walk} uses $O(1)$ membership queries per update and mixes from every $2$-warm start in $\widetilde O(d^2\psi_d^{-2})$ steps under the preconditioning and discretization assumptions above. This matches the nearly quadratic dimension dependence of the speedy-walk analysis under the respective assumptions, while avoiding the amortized analysis.

\paragraph{Relation to hit-and-run analyses.}
The discretization analysis of~\cite[Section~5]{CCHLW23} already compares continuous and discrete stationary flows. For uniform isotropic targets, Chen and Eldan~\cite[Theorem~2]{ChenEldan26} first proved a lazy hit-and-run mixing time that depends nearly quadratically on $d$. However, their mixing time also has an $\eta^{-11}$ dependence on the accuracy, which is too costly for our quantum application. Combining the discretization with sharper continuous hit-and-run bounds having polylogarithmic dependence on mixing accuracy remains a possible alternative to our lattice construction. The global spectral-gap theorem of Kook and Vempala~\cite{kook2026spectral} cannot be applied directly because their result is for the uniform distribution over the convex body, rather than for the log-affine distributions in our pencil schedule. The companion paper~\cite[Appendix F]{Zhang26HitRun} proves a warm-start mixing time for the log-affine distributions. However, this result does not give a global spectral gap lower bound, and the existing effective spectral gap argument~\cite[Proposition~4.2]{CCHLW23} is disproved by our counterexample in \cref{app:counterexample}.

\subsection{Technical overview for the lower bound}
Our construction is inspired by Childs~\cite{Childs26Matrix}, who proved that determining the sign of the determinant of a real $d\times d$ matrix requires $\Omega(d)$ quantum matrix-vector queries. We use determinant sign to define two distributions of ellipsoids whose volumes lie in ranges separated by a constant factor with high probability. At the same time, for any fixed set of fewer than $d$ query points, the product of their membership bits has the same expectation under the two distributions. The quantum polynomial method then gives an $\Omega(d)$ membership-query lower bound.

An ellipsoid's volume depends on the magnitude of the determinant of its defining linear map. The main geometric step is therefore to convert the hidden determinant sign into a gap in these magnitudes, while keeping every ellipsoid between known concentric balls whose radii have a polynomial ratio. We establish this gap using quantitative estimates of determinants.

In even dimension, let $\mu_\pm$ be the Haar measure on the orthogonal group conditioned on the determinant equal to $\pm 1$, respectively. Let $O$ be sampled from $\mu_\pm$, and define an ellipsoid
\[
 K_O=\{x:\|(\Id-aO)x\|_2\le1\},
 \qquad 0<a<1,\quad 1-a=\Theta(d^{-2}).
\]
We can show that each $K_O$ satisfies
\begin{align*}
    B(0,1/2)\subseteq K_O\subseteq B(0,\Theta(d^2))
\end{align*}
The membership query at $x$ depends only on the quadratic form $x^\top Ox$, since
\[
 \|(\Id-aO)x\|_2^2=(1+a^2)\|x\|_2^2-2a\,x^\top Ox.
\]
For any fixed tuple of fewer than $d$ query points, choose an orthogonal reflection $H$ that fixes their span. The map $O\mapsto HO$ flips the sign of $\det O$ and preserves every membership answer in the tuple. On the other hand, if $O$ is sampled from $\mu_+$, then the distribution of $HO$ is $\mu_-$, and vice versa. Thus every product of fewer than $d$ membership bits has the same expectation under the two distributions.

To separate the volumes, we show that the defining determinant is
typically smaller when $\det O=-1$ than when $\det O=1$. Write
\[
 \Delta_a(O)=\det(\Id-aO)>0,
 \qquad
 \vol(K_O)=\frac{\vol(B(0,1))}{\Delta_a(O)}.
\]
Thus, a smaller determinant gives a larger volume.

When $\det O=-1$, we can make the determinant small on average:
\[
 \E_{\mu_-}[\Delta_a(O)]=1-a^d\le d(1-a).
\]
Since $\Delta_a(O)$
is positive, Markov's inequality then bounds the probability that
it exceeds any chosen threshold.

For $\det O=1$, we use the recursive construction of Haar
measure~\cite{BourgadeEtAl07} to bound the expectation of its inverse
fourth root. A very small determinant makes this inverse power large,
so the bound limits how often such small values can occur. In
particular, for a sufficiently small universal constant $c>0$,
\[
 \Pr_{\mu_+}\!\left[\Delta_a(O)\ge\frac{c}{d}\right]\ge\frac78.
\]
We now choose $1-a$ to be a sufficiently small constant multiple of
$d^{-2}$. The expectation bound for $\det O=-1$ then gives
\[
 \Pr_{\mu_-}\!\left[\Delta_a(O)\le\frac{c}{4d}\right]\ge\frac78.
\]
These two determinant bounds give volume ranges separated by a
factor of four, while every outer radius remains $O(d^2)$.
Both probability bounds hold under the original Haar distributions,
so the membership symmetry established above continues to apply.

By the quantum polynomial method~\cite{BealsEtAl01}, a $q$-query algorithm's output probabilities have degree at most $2q$ in the membership bits. Each monomial involves a fixed tuple, so the symmetry makes the two averaged output distributions identical when $2q<d$. A volume estimate with relative error $1/4$ would distinguish the two volume ranges, giving an $\Omega(d)$ query lower bound. 

\subsection{Related work}

\paragraph{Classical volume estimation.}
Dyer, Frieze, and Kannan~\cite{DFK91} gave the first randomized polynomial-time algorithm, which uses $O^*(d^{23})$ membership queries.\footnote{Here $O^*(\cdot)$ hides polylogarithmic factors in $d$ and the dependence on the accuracy.} Improvements in the random walks, the cooling schedules, and the isoperimetric inequalities reduced the query complexity to $O^*(d^4)$~\cite{LovaszSimonovits90,ApplegateKannan91,DyerFrieze91,LovaszSimonovits93,KLS97,LovaszVempala06Annealing}; see~\cite{Simonovits03} for a survey of this period. Cousins and Vempala~\cite{CousinsVempala18} introduced the Gaussian cooling and obtained $O^*(d^3)$ for well-rounded bodies. Jia, Laddha, Lee, and Vempala~\cite{JiaEtAl26} bounded the rounding cost in terms of the KLS constant and combined it with the Cousins--Vempala algorithm, yielding $O^*(d^{3.5})$ for general convex bodies. Another line of work uses algorithmic diffusion. Kook, Vempala, and Zhang~\cite{KookVempalaZhang26} introduced the In-and-Out sampler for convex bodies and obtained guarantees in R\'enyi divergence. Kook and Vempala~\cite{KookVempala25} extended this approach to general log-concave functions through a reduction to exponential sampling in one higher dimension, improving the complexity of sampling, rounding, and integration. For explicitly given polytopes, Riemannian Hamiltonian Monte Carlo (RHMC) leads to faster volume algorithms~\cite{LeeVempala18,pmlr-v247-gatmiry24a}. On the lower-bound side, randomized volume estimation to constant precision requires $\widetilde\Omega(d^2)$ membership queries~\cite{rademacher2008dispersion}, and a nearly quadratic lower bound also holds for linear optimization over convex bodies~\cite{vempala2026nearly}.
  
\paragraph{Quantum algorithms for discrete partition functions.}
Quantum algorithms for partition-function estimation have applications in both statistical physics and approximate counting. Examples include the ferromagnetic Ising model, the monomer--dimer model for matchings, the hard-core model for independent sets, and counting proper graph colorings~\cite{Montanaro15,HarrowWei20,ArunachalamEtAl22,CornelissenHamoudi23}. These algorithms use sampling Markov chains that mix rapidly along an annealing schedule. Representative regimes with such guarantees include sufficiently high temperatures for the Ising model, sufficiently many colors relative to the maximum degree for graph colorings, and graphs of maximum degree at most five for counting independent sets~\cite{CornelissenHamoudi23}. Given coherent implementations of the sampling chains and access to an initial qsample, quantum simulated annealing and mean estimation provide nearly quadratic improvements in the dependence on the inverse spectral gap and the relative-error parameter~\cite{WCNA09,Montanaro15,HarrowWei20,ArunachalamEtAl22}. Cornelissen and Hamoudi~\cite{CornelissenHamoudi23} further improved the estimation step by using nearly unbiased quantum mean estimates to approximate the product of ratios along the schedule.

\paragraph{Quantum algorithms for continuous distributions.}
Beyond the quantum volume algorithms discussed above~\cite{CCHLW23,CornelissenApersGribling25,CornelissenApersGribling26}, Childs, Li, Liu, Wang, and Zhang~\cite{ChildsEtAl22} gave quantum speedups for sampling log-concave distributions and estimating their normalizing constants using evaluation and gradient oracles. Liu, Wang, and Ji~\cite{LiuWangJi26LocalSampling} obtained quantum speedups for log-concave distributions whose potentials decompose into local terms, using coherent evaluation and gradient queries to those terms. Li and Zhang~\cite{LiZhang22} proposed quantum algorithms for approximately log-concave sampling, with applications to approximately convex optimization and the quantum stochastic convex bandit problem. 

For non-log-concave sampling, Ozgul, Li, Mahdavi, and Wang~\cite{OzgulEtAl24} studied quantum sampling and partition-function estimation with a stochastic-gradient oracle; their subsequent work~\cite{OzgulEtAl26Stochastic} accelerated stochastic Langevin and Hamiltonian Monte Carlo under stochastic-gradient or noisy-evaluation access. Motamedi and Ronagh~\cite{MotamediRonagh24} used quantum differential-equation solvers to sample Gibbs distributions of periodic, possibly non-convex potentials and accelerate mean estimation. Given an initial quantum state with constant overlap with the target qsample, Leng, Ding, Chen, and Lin~\cite{LengEtAl26} used a factorization of the Witten Laplacian and singular-value filtering to accelerate Langevin sampling. Their method gives up to quartic improvements in the relevant functional-inequality parameters compared with existing classical Langevin-based bounds. Building on this approach, Olivucci et al.~\cite{OlivucciEtAl26} proved a worst-case quantum--classical query separation for sampling from a class of Gevrey-smooth Gibbs distributions on the torus at constant total-variation accuracy.

\paragraph{Roadmap.}
\Cref{sec:preliminaries} introduces the notation and preliminaries. \Cref{sec:warm-reflection} constructs qsample reflections from uniform warm-start mixing and bounds their accumulated implementation error. \Cref{sec:coherent-product} develops the quantum product estimator with streaming access to qsamples. \Cref{sec:finite-walk} presents the lattice Metropolis walk, coherent preconditioning, and alternating qsample conversions. \Cref{sec:assembly} combines these ingredients to prove the volume-estimation upper bound, and \cref{sec:volume-lower-bound} proves the linear query lower bound. \Cref{app:counterexample} shows why mixing from a single initial distribution is insufficient for the effective spectral gap argument. \Cref{app:finite-state} proves the lattice discretization and mixing guarantees, and \cref{app:implementation} analyzes the bit-complexity and gives the  parameter choices.

\section{Preliminaries}\label{sec:preliminaries}

\paragraph{Notations.} We write $d$ for the input dimension and $D=d+1$ for the dimension of the pencil construction.  A convex body is a compact convex set with nonempty interior.  The Euclidean ball of radius $r$ about $x$ is $B(x,r)$. All logarithms are natural unless a base is specified.  Constants in $\lesssim$, $\gtrsim$, and $\asymp$ are universal.  

Quantum registers are denoted by subscripts on a state, such as $\ket{x}_{\sf S}$. We write $\mathsf{S}$ for the sample register, $\mathsf{A}$ for the existing auxiliary registers, and $\mathsf{E}$ for fresh implementation workspace. Thus $\ket{\Psi}_{\mathsf{SA}}$ is a joint vector on the sample and auxiliary registers. Register subscripts are omitted only when the acting spaces are clear. In a proof involving a distinguished sample, $\mathsf{A}$ collects all other existing registers, including workspace allocated by an outer circuit. The label $\mathsf{E}$ refers only to fresh workspace for the current subroutine. 

A unitary $U$ is an \emph{$(\alpha,a,\varepsilon)$-block-encoding} of $A$ when
\[
 \bigl\|A-\alpha(\bra{0^a}\otimes\Id)U(\ket{0^a}\otimes\Id)\bigr\|_{\op}
 \le\varepsilon.
\]
We also use $\alpha$-block-encoding when the ancilla count and error are specified separately. A block-encoding query is one use of the encoding unitary or its inverse, with any required controls. The notation $\ctrl(U)$ denotes a controlled use of $U$.

\subsection{Norms}
For $1\le p<\infty$, we use
\[
 \|f\|_{L^p(\pi)}=\left(\int|f|^p\,d\pi\right)^{1/p},
 \qquad \langle f,g\rangle_{\pi}=\int\overline f g\,d\pi.
\]
The $L^\infty$ norm is the essential supremum.  When $\pi$ is fixed, we abbreviate the function norms by $\|f\|_p$.  Vector norms without a subscript are Hilbert-space norms, and $\|T\|_{\op}$ is the induced operator norm.  For density operators, we write
\[
 \dtr(\rho,\sigma)=\tfrac12\|\rho-\sigma\|_1
\]
for their trace distance.  We use $\pi(x)$ for either a probability mass or a density, and $dx$ for counting or Lebesgue measure as appropriate.  All quantum circuits act on finite registers.  Only the classical geometric estimates and the spectral lemma are stated more generally.

We will use the following special case of Riesz--Thorin interpolation~\cite[Theorem~21]{Tao09Interpolation}. If a linear operator $T$ is bounded from both $L^2(\pi)$ and $L^\infty(\pi)$ to $L^2(\pi)$, then
\begin{equation}\label{eq:riesz-thorin-prelim}
 \|Tf\|_2\le
 \|T\|_{L^2\to L^2}^{1/2}
 \|T\|_{L^\infty\to L^2}^{1/2}\|f\|_4
 \qquad \forall f\in L^4(\pi).
\end{equation}

\subsection{Convex geometry and covariance comparison}
A full-dimensional probability measure on $\R^D$ is log-concave if its density is proportional to $e^{-V}$ for a convex extended-real-valued function $V$.  It is \emph{isotropic} when its mean is zero and its covariance is $\Id$.  It is \emph{$C_0$-near-isotropic} when
\[
 C_0^{-1}\Id\preceq\Cov(X)\preceq C_0\Id.
\]
Centering is understood when needed in the geometric analysis.  The displacement sampler does not need to know the mean.

For a measure $\lambda$, let
\[
 \lambda^+(A)=\liminf_{r\downarrow0}
 \frac{\lambda\{x:\operatorname{dist}(x,A)<r\}-\lambda(A)}r,
 \qquad
 \psi_{\sf ch}(\lambda)=
 \inf_{0<\lambda(A)<1}
 \frac{\lambda^+(A)}{\min\{\lambda(A),1-\lambda(A)\}}.
\]
The constant $\psi_D$ in \eqref{eq:intro-kls} is the infimum of this quantity over isotropic log-concave measures.  An affine change to isotropic coordinates gives
\[
 \psi_{\sf ch}(\lambda)\ge
 \frac{\psi_D}{\sqrt{\|\Cov_\lambda(X)\|_{\op}}}.
\]
To reuse a preconditioner at the next cooling phase, we need to compare adjacent covariances.  The following standard bound provides that comparison from relative second moments.

\begin{lemma}[Covariance comparison {\cite[Lemma~4.3]{KalaiVempala06}}]
\label{lem:adjacent-covariance-prelim}
Let $\mu,\nu$ be full-dimensional log-concave laws satisfying
\[
 \int\left(\frac{d\nu}{d\mu}\right)^2d\mu\le B,
 \qquad
 \int\left(\frac{d\mu}{d\nu}\right)^2d\nu\le B.
\]
Then
\[
 (16B)^{-1}\Cov_\mu(X)\preceq\Cov_\nu(X)
 \preceq16B\Cov_\mu(X).
\]
\end{lemma}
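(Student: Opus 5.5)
The plan is to reduce the comparison to the classical fact that for an isotropic-type log-concave law, the second moment in any direction controls how much mass can be pushed around by a density with bounded relative second moment. Fix a unit direction $\theta$. It suffices to show
\[
\Var_\nu(\langle\theta,X\rangle)\le 16B\,\Var_\mu(\langle\theta,X\rangle),
\]
since the reverse inequality follows by swapping the roles of $\mu$ and $\nu$ (the hypothesis is symmetric). After centering $\mu$ in direction $\theta$, write $Y=\langle\theta,X-\E_\mu X\rangle$ and $\sigma^2=\E_\mu Y^2$. Then
\[
\Var_\nu(\langle\theta,X\rangle)\le\E_\nu Y^2=\E_\mu\!\left[\tfrac{d\nu}{d\mu}Y^2\right]\le\Bigl(\E_\mu\bigl(\tfrac{d\nu}{d\mu}\bigr)^2\Bigr)^{1/2}\bigl(\E_\mu Y^4\bigr)^{1/2}
\]
by Cauchy--Schwarz.

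The key step is to bound the fourth moment by the second moment. The one-dimensional marginal of a log-concave law is log-concave, and for log-concave real variables the reverse H\"older (Khinchine-type) inequality gives $\E Y^4\le C(\E Y^2)^2$ with an absolute constant. The sharp form comes from comparison with the exponential/two-sided exponential extremizers, whose normalized fourth moment is $6$. Hence $\E_\nu Y^2\le\sqrt{B}\sqrt{C}\,\sigma^2$, which is at most $16B\sigma^2$ once $\sqrt{C}\le16$ and $B\ge1$. The latter holds automatically, since $\int(d\nu/d\mu)^2d\mu\ge(\int d\nu)^2=1$.

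The main obstacle is pinning down an explicit constant in the moment comparison. I would cite Borell's lemma or the standard log-concave moment bound $(\E|Y|^p)^{1/p}\le \frac{p}{q}\,\Gamma$-type estimates, e.g.\ $(\E Y^4)^{1/4}\le 2(\E Y^2)^{1/2}$ (Lov\'asz--Vempala), so $C\le16$ and $\sqrt C\le4\le16$.

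Taking the supremum over $\theta$ gives the matrix inequality $\Cov_\nu\preceq16B\Cov_\mu$ in the Loewner order. This is valid because the quadratic-form comparison holds for every direction. Applying the same argument with $\mu$ and $\nu$ exchanged gives the lower bound.
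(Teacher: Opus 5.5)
The paper gives no proof of this lemma; it imports it from Kalai--Vempala, so there is no in-paper argument to compare against. Your argument is a correct, self-contained alternative. You project onto a direction $\theta$, bound $\Var_\nu(\langle\theta,X\rangle)\le\E_\nu Y^2$, apply Cauchy--Schwarz against $d\nu/d\mu$, and control $\E_\mu Y^4$ by $(\E_\mu Y^2)^2$ for the log-concave marginal of $\mu$. This gives $\Cov_\nu\preceq\sqrt{CB}\,\Cov_\mu$, and $B\ge1$ turns it into the stated $16B$.

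Compared with the cited bound, which is linear in $B$, your route gives square-root dependence on $B$. Each direction also uses only one of the two moment hypotheses and log-concavity of only the reference law. The paper applies the lemma with $B\le2$ and needs only $\Sigma_i\asymp\Sigma_{i+1}$, so either form suffices there.

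The step you should tighten is the constant $C$. The explicit moment comparison usually quoted from Lov\'asz--Vempala is $(\E|Y|^k)^{1/k}\le 2k\,\E|Y|$. At $k=4$ this gives only $C\le 8^4$, so $\sqrt C\le64$, and then $\sqrt{CB}\le16B$ fails for $B<16$. That citation therefore does not deliver $(\E Y^4)^{1/4}\le2(\E Y^2)^{1/2}$ as you claim.

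Also, $6$ is the normalized fourth moment of the symmetric two-sided exponential only. The centered one-sided exponential $E-1$ already has normalized fourth moment $9$, so $6$ is not the relevant extremal value for centered, non-symmetric log-concave marginals.

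A clean fix needs no sharp constant. With $\sigma^2=\E_\mu Y^2$, the log-concave tail bound $\Pr(|Y|>t\sigma)\le e^{1-t}$ for $t>1$ gives what you need. This is the one-dimensional case of the Lov\'asz--Vempala tail lemma that the paper itself invokes in its covariance-concentration proof. Integrating it,
\[
\E_\mu Y^4=\sigma^4\int_0^\infty 4t^3\Pr(|Y|>t\sigma)\,dt\le\sigma^4\Bigl(\int_0^1 4t^3\,dt+4e\int_1^\infty t^3e^{-t}\,dt\Bigr)=65\,\sigma^4 .
\]
Hence $\sqrt C\le\sqrt{65}<16$, which is all your argument requires. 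With this substitution the proof is complete.
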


\subsection{Warm starts, conductance, and the discriminant}
For a Markov kernel $\sP$ with stationary law $\pi$, its operator on functions is
\[
 (\sP f)(x)=\int f(y)\sP(x,dy).
\]
We use the same symbol for the kernel and this operator. The kernel is reversible for $\pi$ when $\pi(dx)\sP(x,dy)$ is symmetric.  Its operator on $L^2(\pi)$ is then self-adjoint.  It is positive when its spectrum lies in $[0,1]$, and ergodic when $\ker(\Id-\sP)=\Span\{\one\}$.  Laziness, $\sP(x,\{x\})\ge1/2$, implies positivity.  We write $E_\sP(I)$ for the spectral projection onto an interval $I$.  This notation also covers continuous-state operators with no eigenbasis.

For a distribution $\pi$, its qsample and the corresponding reflection are
\[
 \ket{\pi}=\int\sqrt{\pi(x)}\ket{x}\,dx,
 \qquad R_\pi=2\ket{\pi}\!\bra{\pi}-\Id.
\]
On finite state-spaces, the integral is a sum. The unitary identification $J_\pi f=\sqrt\pi f$ from $L^2(\pi)$ to the supported position space defines the discriminant
\begin{equation}\label{eq:discriminant}
 D_\pi=J_\pi\sP J_\pi^\dagger.
\end{equation}
It has the same spectrum as $\sP$, with stationary vector $\ket{\pi}$. 

Total variation distance is
\begin{align*}
    \dtv(\mu,\nu)=\sup_A|\mu(A)-\nu(A)|.
\end{align*}
A start $\nu$ is \emph{$M$-warm} for $\pi$ when $d\nu/d\pi\le M$. The mixing time of ${\cal P}$ starting from a distribution $\nu$ is defined as
\begin{align*}
    \tau_{\mix}(\eta,\nu,\pi;\sP):=\inf\left\{t\in \N:\dtv\left(\nu\sP^t,\pi\right)\le\eta\right\}.
\end{align*}
We also define the $2$-warm-start mixing time
\[
 \tau_{\mix,2}(\eta,\pi;\sP)
 =\sup_{\nu:\,d\nu/d\pi\le2}\tau_{\mix}(\eta,\nu,\pi;\sP).
\]

The stationary flow and $s$-conductance of a Markov kernel ${\cal P}$ are defined as
\[
 Q_\sP(A,B)=\int_A\sP(x,B)\,d\pi(x),\qquad
 \Phi_s(\sP)=\inf_{s<\pi(A)\le1/2}
 \frac{Q_\sP(A,A^c)}{\pi(A)-s}.
\]
The cutoff in the $s$-conductance permits a small exceptional set with little flow.  Its probability under an $M$-warm start is at most $M$ times its stationary mass. Therefore, we have the following connection between $s$-conductance and warm-start mixing time.

\begin{lemma}[Warm-start mixing from conductance {\cite{LovaszSimonovits93}}]
\label{lem:warm-mixing-from-conductance}
For a lazy reversible kernel and an $M$-warm start, $0<s<1/4$ implies
\[
 \dtv(\nu\sP^t,\pi)
 \le Ms+M\exp\bigl(-c\Phi_s(\sP)^2t\bigr).
\]
Choosing $s=\eta/(2M)$ makes the first term at most $\eta/2$. Making the exponential term at most $\eta/2$ gives
\[
 \tau_{\mix}(\eta,\nu,\pi;\sP)
 \lesssim\Phi_{\eta/(2M)}(\sP)^{-2}\log\frac{2M}{\eta}.
\]
In particular, the bound is uniform over two-warm starts when $s=\eta/4$.
\end{lemma}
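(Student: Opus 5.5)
The plan is to follow the classical Lovász--Simonovits argument. It tracks, for the distribution $\nu_t=\nu\sP^t$, the function
\[
 h_t(x)=\sup\Big\{\int g\,d\nu_t-x : 0\le g\le 1,\ \int g\,d\pi=x\Big\},
\]
defined for $x\in[0,1]$. Since $g$ ranges over $[0,1]$-valued functions, $h_t(x)$ is the largest excess $\nu_t$-mass over $\pi$-mass over fractional sets of $\pi$-measure $x$. Three basic properties will be used. First, $h_t$ is concave with $h_t(0)=h_t(1)=0$. Second, $\dtv(\nu_t,\pi)=\sup_x h_t(x)$. Third, for an $M$-warm start, $h_0(x)\le \min\{(M-1)x,\,1-x\}\le Mx$, because $d\nu/d\pi\le M$.

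The core step is the one-step contraction lemma. Laziness lets us write the $\nu_{t+1}$-mass of a set $A$ as an average of $\nu_t$ over two fractional sets: $\nu_{t+1}(A)=\tfrac12(\int g_1\,d\nu_t+\int g_2\,d\nu_t)$. Their $\pi$-masses are $\pi(A)\pm 2Q_\sP(A,A^c)$, reversibility being what gives this symmetric form. Taking the supremum over $A$ then yields
\[
 h_{t+1}(x)\le \tfrac12\big(h_t(x-2\Phi_s\hat x)+h_t(x+2\Phi_s\hat x)\big),
 \qquad \hat x=\min\{x,1-x\}-s,
\]
valid for $s\le x\le 1-s$. It uses only the $s$-conductance lower bound $Q_\sP(A,A^c)\ge \Phi_s(\pi(A)-s)$, applied to extremal level sets, which are true sets up to a tie-breaking fractional layer.

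Next comes induction against an explicit majorant. For $x\in[s,1-s]$, claim
\[
 h_t(x)\le Ms+M\sqrt{\hat x}\,\big(1-\Phi_s^2/2\big)^t .
\]
Outside $[s,1-s]$, the bound $h_t(x)\le Ms$ holds trivially, since $h_t(x)\le Mx$ and $h_t$ is concave with $h_t(1)=0$; warmth bounds the initial $\nu$-mass of any set of $\pi$-measure $s$ by $Ms$. The base case follows from $h_0(x)\le Mx\le M\sqrt{x}$. The inductive step plugs the majorant into the contraction lemma. It then uses the elementary inequality $\tfrac12(\sqrt{y(1-2\Phi)}+\sqrt{y(1+2\Phi)})\le \sqrt y\,(1-\Phi^2/2)$. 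Taking the supremum over $x$ and using $1-\Phi_s^2/2\le e^{-\Phi_s^2/2}$ gives $\dtv(\nu\sP^t,\pi)\le Ms+M e^{-c\Phi_s^2 t}$, as claimed.

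The stated corollary is then immediate. Set $s=\eta/(2M)$, so that $Ms=\eta/2$, and require $Me^{-c\Phi_s^2t}\le\eta/2$; this holds for $t\gtrsim \Phi_s^{-2}\log(2M/\eta)$. Uniformity over $2$-warm starts holds because the bound depends on $\nu$ only through $M=2$, giving $s=\eta/4$.

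The main obstacle is the contraction lemma, namely the correct handling of fractional sets and of the boundary region $x<s$ or $x>1-s$, where the conductance bound gives nothing. I would treat this by working with the piecewise-linear concave envelope and proving the inequality only at breakpoints. Continuous state spaces, where atoms are absent, can be handled by the same argument with measurable fractional indicators.
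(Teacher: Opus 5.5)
Your outline is the Lovász--Simonovits argument that the paper cites; the paper itself gives no proof. The concave profile $h_t$, the lazy two-function split, and induction against $Ms+M\sqrt{\hat x}\,(1-\Phi^2/2)^t$ are the right ingredients, but three steps need repair as written.

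\emph{Base case.} The chain $h_0(x)\le Mx\le M\sqrt x$ does not give $h_0(x)\le Ms+M\sqrt{\hat x}$. The bound $M\sqrt x$ exceeds the majorant near $x=s$. On $[1/2,1-s]$ the bound $Mx$ is too weak, since the majorant tends to $Ms$ as $x\to1-s$. Use the property you already listed: $h_0(x)\le Mx=Ms+M(x-s)\le Ms+M\sqrt{x-s}$ on $[s,1/2]$, and $h_0(x)\le 1-x=s+(1-x-s)\le Ms+M\sqrt{1-x-s}$ on $[1/2,1-s]$.

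\emph{Capping the conductance.} For $s>0$, the $s$-conductance of a lazy chain need not be at most $1/2$. If it is larger, $\sqrt{1-2\Phi}$ is undefined and $x\pm2\Phi\hat x$ can leave $[s,1-s]$, where your majorant lives. Run the induction with $\phi=\min\{\Phi_s,1/2\}$; any positive lower bound $\phi\le\min\{\Phi_s,1/2\}$ works. This gives the bound with $\min\{\Phi_s,1/2\}^2$ in the exponent. That suffices for the mixing-time corollary as it is used later, where it is applied with a conductance lower bound far below $1/2$.

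\emph{Interpolation nodes.} In the finite case, $Ms+M\sqrt{\hat x}\,r^{t+1}$ is concave on $[s,1-s]$ but not across $s$ or $1-s$: the extension $\sqrt{(\min\{x,1-x\}-s)_+}$ has a convex kink there. So interpolating from breakpoints alone fails on a linear piece of $h_{t+1}$ that straddles $s$. Add $s$ and $1-s$ as interpolation nodes. There, $h_{t+1}(s)\le(M-1)s\le Ms$ and $h_{t+1}(1-s)\le s\le Ms$ hold for every $t$, because $\nu\sP^{t+1}\le M\pi$. At breakpoints $\pi(A)\in(s,1-s)$ the supremum is attained by a true level set, so your contraction applies together with the induction hypothesis. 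Linearity of $h_{t+1}$ between consecutive nodes and concavity of the majorant on $[s,1-s]$ then finish the step.

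A minor point: the symmetric masses $\pi(A)\pm2Q_\sP(A,A^c)$ need only stationarity, $Q_\sP(A,A^c)=Q_\sP(A^c,A)$, not reversibility. With these fixes the argument is complete and yields the stated corollary with $s=\eta/(2M)$, and $s=\eta/4$ for two-warm starts.
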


\subsection{Strong coherent membership oracle and computational model}
\label{subsec:membership-oracle-model}

\paragraph{Strong coherent membership.}
For exactly represented dyadic coordinates $x$, the oracle acts as
\[
 O_K:\ket{x}_{\mathsf S}\ket{b}_{\mathsf Q}
 \longmapsto
 \ket{x}_{\mathsf S}\ket{b\oplus\one_K(x)}_{\mathsf Q}.
\]
Membership is answered exactly, including on the boundary, at any finite coordinate precision chosen by the algorithm. Each call counts as one query, independently of precision. A controlled query uses two ordinary calls: compute the membership bit into a fresh qubit, use it under the control, and uncompute it.

\paragraph{Numerical precision.}
We do not assume exact real arithmetic. Elementary functions and matrix functions are evaluated to finite accuracy, and rotations are synthesized over the fixed gate set. The total gate count and bit complexity of our finite-precision implementation are analyzed in~\cref{app:implementation}.
\paragraph{Dyadic rescaling.}
Given $B(x_c,r)\subseteq K\subseteq B(x_c,R)$, choose a dyadic center
$\widetilde x_c$ with $\|\widetilde x_c-x_c\|_2\le r/8$ and dyadic radii
\[
 \widetilde r\in[r/4,r/2],
 \qquad
 \widetilde R\in[R+r/8,\,2(R+r/8)].
\]
Then, we have
\[
 B(\widetilde x_c,\widetilde r)\subseteq K
 \subseteq B(\widetilde x_c,\widetilde R),
 \qquad
 \widetilde R/\widetilde r\le9R/r.
\]
Relabel these data as $x_c,r,R$. A lattice point $hk$ in the rescaled
body, with dyadic $h$ and $k\in\Z^d$, corresponds to the exactly dyadic
query point $x_c+rhk$ in the original coordinates.

\section{Stationary-state reflections from warm-start mixing}
\label{sec:warm-reflection}

A standard reflection circuit uses a spectral gap to distinguish the stationary eigenvalue from every nonstationary eigenvalue. Here, we only need to distinguish eigenvalues that are sufficiently far below $1$. Uniform warm-start mixing bounds the input's remaining slow component. We first prove this bound, then construct the reflection and control its use in the full algorithm.

\subsection{Warm-start mixing bounds the slow component}
For a finite target law $\pi$ that is positive on the sample space, write a joint vector as
\[
 \ket{\Psi}_{\mathsf{SA}}=\sum_x\ket{x}_{\mathsf{S}}\otimes\ket{\psi_x}_{\mathsf{A}}.
\]
The auxiliary vectors need not be normalized. We measure the concentration of the sample amplitudes relative to $\pi$ by the fourth-moment norm
\begin{equation}\label{eq:fourth-moment-norm}
 \|\ket{\Psi}_{\mathsf{SA}}\|_{4,\pi}
 =\left(\sum_x\frac{\|\ket{\psi_x}_{\mathsf{A}}\|^4}{\pi(x)}\right)^{1/4}.
\end{equation}
For a normalized input, measuring the sample register $\mathsf{S}$ gives the law $\mu_\Psi(x)=\|\ket{\psi_x}_{\mathsf{A}}\|^2$. Thus
\[
 \|\ket{\Psi}_{\mathsf{SA}}\|_{4,\pi}^4
 =\sum_x\frac{\mu_\Psi(x)^2}{\pi(x)}
 =1+\chi^2(\mu_\Psi\|\pi),\qquad
 \chi^2(\mu\|\pi):=\sum_x\frac{(\mu(x)-\pi(x))^2}{\pi(x)}.
\]
A bounded fourth-moment norm therefore means that the measured sample law has bounded relative second moment.

Equivalently, write $\ket{\psi_x}_{\mathsf{A}}=\sqrt{\pi(x)}\ket{h(x)}_{\mathsf{A}}$. Then
\[
 \|\ket{\Psi}_{\mathsf{SA}}\|^2=\int\|\ket{h(x)}_{\mathsf{A}}\|^2d\pi(x),
 \qquad \|\ket{\Psi}_{\mathsf{SA}}\|_{4,\pi}^4=\int\|\ket{h(x)}_{\mathsf{A}}\|^4d\pi(x).
\]
This is a homogeneous vector-valued $L^4(\pi)$ norm. In particular, $\|\ket{\pi}_{\mathsf{S}}\otimes\ket{w}_{\mathsf{A}}\|_{4,\pi}=\|\ket{w}_{\mathsf{A}}\|$. For the general measure-space statement below, the same definition is $(\int(d\mu_\Psi/d\pi)^2d\pi)^{1/4}$, where $\mu_\Psi$ has total mass $\|\ket{\Psi}\|^2$; the value is infinite when $\mu_\Psi$ is not absolutely continuous with respect to $\pi$.

\begin{theorem}[Slow-spectrum bound from uniform warm-start mixing]\label{thm:warm-low-spectrum}
Let $\sP$ be a positive, ergodic, reversible Markov operator with stationary law $\pi$, and suppose $\tau_{\mix,2}(\eta,\pi;\sP)\le t$, where $0<\eta<1/2$ and $t\in\N_+$. Let
\[
 \Pi_{\rm slow}=E_\sP\bigl([1-1/(2t),1)\bigr)
\]
be the spectral projection on $L^2(\pi)$, excluding the stationary eigenvalue. For every joint vector $\ket{\Psi}_{\mathsf{SA}}$ with a finite-dimensional or separable auxiliary register $\mathsf{A}$,
\begin{equation}\label{eq:warm-low-spectrum}
 \bigl\|(J_\pi\Pi_{\rm slow}J_\pi^\dagger\otimes\Id_{\mathsf{A}})
                          \ket{\Psi}_{\mathsf{SA}}\bigr\|
 \lesssim\eta^{1/4}\|\ket{\Psi}_{\mathsf{SA}}\|_{4,\pi}.
\end{equation}
\end{theorem}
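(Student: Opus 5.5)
We follow the scalar argument sketched in \cref{sec:tech_ov_reflections} and then lift it to vector-valued amplitudes.

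\textbf{Step 1: a bound for bounded real functions.} Let $g$ be real with $\E_\pi[g]=0$ and $\|g\|_\infty\le1$. Then $\nu=\pi(1+g)$ is a $2$-warm probability law, and by reversibility $d(\nu\sP^t)/d\pi=1+\sP^tg$. The hypothesis $\tau_{\mix,2}(\eta,\pi;\sP)\le t$ therefore gives $\|\sP^tg\|_1\le2\eta$. On the range of $\Pi_{\rm slow}$ the spectrum lies in $[1-1/(2t),1)$, so $\lambda^t\ge(1-1/(2t))^t\ge1/2$ there. Positivity of $\sP$ makes $\langle g,\sP^tg\rangle_\pi$ a nonnegative spectral integral, which dominates its slow part. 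Hence
\[
 \tfrac12\|\Pi_{\rm slow}g\|_2^2\le\langle g,\sP^tg\rangle_\pi\le\|g\|_\infty\|\sP^tg\|_1\le2\eta .
\]
For a general complex bounded $f$, we split $f$ into real and imaginary parts and subtract means; $\Pi_{\rm slow}$ annihilates constants because the stationary eigenvalue is excluded. Each centered part has sup norm at most $2\|f\|_\infty$, so
\[
 \|\Pi_{\rm slow}f\|_2\lesssim\sqrt\eta\,\|f\|_\infty,
 \qquad\text{that is,}\qquad
 \|\Pi_{\rm slow}\|_{L^\infty\to L^2}\lesssim\sqrt\eta .
\]

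\textbf{Step 2: passing to $L^4$.} Since $\|\Pi_{\rm slow}\|_{L^2\to L^2}\le1$, the Riesz--Thorin estimate \eqref{eq:riesz-thorin-prelim} gives $\|\Pi_{\rm slow}f\|_2\lesssim\eta^{1/4}\|f\|_4$ for scalar $f$. Alternatively, truncate at $B=\eta^{-1/4}\|f\|_4$: the bounded part contributes at most $\sqrt\eta B$, and the tail has $L^2$ norm at most $\|f\|_4^2/B$.

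\textbf{Step 3: vector-valued amplitudes.} This is the main obstacle, since scalar Riesz--Thorin does not immediately apply to $\ket{h(x)}\in\mathsf A$ with an entangled auxiliary register. Write $\ket{\psi_x}=\sqrt{\pi(x)}\ket{h(x)}$, so the quantity to bound is $\|(\Pi_{\rm slow}\otimes\Id)h\|_{L^2(\pi;\mathsf A)}$. We use the truncation argument in vector form.

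Split $h=h\one_{\{\|h\|\le B\}}+h\one_{\{\|h\|>B\}}$. The tail has $L^2$ norm at most $\|h\|_{4}^2/B$, and $\Pi_{\rm slow}\otimes\Id$ is a contraction, so the tail contributes at most $\|h\|_4^2/B$.

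For the bounded part $h_B$, expand in an orthonormal basis $\{e_k\}$ of $\mathsf A$ (which is separable), writing $h_B=\sum_k f_ke_k$ with scalar coefficients $f_k$. Then
\[
 \|(\Pi_{\rm slow}\otimes\Id)h_B\|^2=\sum_k\|\Pi_{\rm slow}f_k\|_2^2 .
\]
Bounding each $f_k$ only by its sup norm loses dimension factors, so instead we use the quadratic form directly:
\[
 \sum_k\langle f_k,\sP^tf_k\rangle_\pi=\iint\langle h_B(x),h_B(y)\rangle\,\pi(dx)\sP^t(x,dy).
\]
To control this, we randomize over phases. For a random unit vector $\omega\in\mathsf A$ (or a random Gaussian vector), set $g_\omega(x)=\mathrm{Re}\langle\omega,h_B(x)\rangle$. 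Then $\E_\omega[g_\omega(x)g_\omega(y)]$ is proportional to $\mathrm{Re}\langle h_B(x),h_B(y)\rangle$, but $\|g_\omega\|_\infty$ could exceed $B$. A cleaner route is to apply the scalar bound in kernel form. Positivity of $\sP^t$ gives the symmetric bilinear bound
\[
 \sum_k\langle f_k,\sP^tf_k\rangle\le\Bigl(\sup_{x}\|h_B(x)\|\Bigr)\int\Bigl\|\sum_kf_k(y)(\sP^t\cdots)\Bigr\|,
\]
which is unwieldy. We therefore fall back on an $\varepsilon$-net or Gaussian-projection argument with a polylogarithmic loss if necessary. Gaussian projections give $\E\|g_\omega\|_\infty$ issues on large state spaces; the dimension-free alternative is to note that $\|\Pi_{\rm slow}\otimes\Id\|_{L^\infty(\mathsf A)\to L^2(\mathsf A)}$ is bounded by Grothendieck-type or direct reasoning. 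Concretely, we would prove
\[
 \langle h,(\sP^t\otimes\Id)h\rangle\le\|h\|_\infty\,\|(\sP^t\otimes\Id)h\|_{L^1(\mathsf A)}
\]
together with a vector $2$-warm-mixing bound via the coordinate $\nu_\pm=\pi(1\pm\mathrm{Re}\langle e,h\rangle/\|h\|_\infty)$ for a measurable choice of unit vector $e(y)$ aligned with $(\sP^th)(y)$. This is the step we expect to need the most care; it would yield $\|(\Pi_{\rm slow}\otimes\Id)h_B\|\lesssim\sqrt\eta\,B$.

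\textbf{Step 4: conclusion.} Choosing $B=\eta^{-1/4}\|h\|_4$ balances the two contributions and yields \eqref{eq:warm-low-spectrum} after conjugating by $J_\pi$.
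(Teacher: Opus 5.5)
Your Steps 1, 2 and 4 are correct and match the scalar part of the paper's argument. Step 3, however, is the only genuinely vector-valued step, and it is never proved; the one concrete plan you give for it does not work. With $e(y)$ aligned to $(\sP^t h)(y)$, the quantity you must bound is $\int \mathrm{Re}\langle e(y),(\sP^t h)(y)\rangle\,d\pi(y)=\iint \mathrm{Re}\langle e(y),h(x)\rangle\,\sP^t(y,dx)\,d\pi(y)$. This pairs the direction at the endpoint $y$ with the amplitude at the starting point $x$. Your density $1\pm\mathrm{Re}\langle e(x),h(x)\rangle/\|h\|_\infty$ evaluates both at the same point, so $\sP^t$ applied to it is a different function. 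It also need not have mean one, so without renormalization it is not even a $2$-warm start.

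Your randomization attempts fail for a slightly different reason than you state. For a uniform unit vector $\omega$ one always has $|g_\omega(x)|\le\|h_B(x)\|\le B$, but $\E_\omega[g_\omega(x)g_\omega(y)]=\mathrm{Re}\langle h_B(x),h_B(y)\rangle/\dim_{\mathbb R}\mathsf A$, so you lose a dimension factor. A Gaussian $\omega$ removes that factor but inflates $\|g_\omega\|_\infty$ by roughly $\sqrt{\log|\Omega|}$. Neither loss, nor the ``polylogarithmic loss'' you fall back on, is compatible with a universal constant and a possibly infinite-dimensional $\mathsf A$. So the dimension-free $L^\infty(\pi;\mathsf A)\to L^2(\pi;\mathsf A)$ estimate for $h_B$ is exactly what remains open.

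The paper avoids needing that estimate by randomizing \emph{after} interpolation rather than truncating in vector form. Write $h=\sum_j h_j\otimes\ket j$ and take independent signs $\varepsilon_j$. Cross terms vanish, so $\sum_j\|\Pi_{\rm slow}h_j\|_2^2=\E_\varepsilon\|\Pi_{\rm slow}\sum_j\varepsilon_jh_j\|_2^2$. Your scalar $L^4\to L^2$ bound from Step 2 then applies to each random function. Finally, $\E_\varepsilon\|\sum_j\varepsilon_jh_j\|_4^2$ is controlled by Jensen, Tonelli, and the fourth-moment Khintchine bound $\E_\varepsilon|\sum_j\varepsilon_jz_j|^4\le 3(\sum_j|z_j|^2)^2$. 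This gives $\sum_j\|\Pi_{\rm slow}h_j\|_2^2\lesssim\sqrt\eta\,\|\ket\Psi\|_{4,\pi}^2$, with the separable case following by finite-rank approximation. Random signs are harmless in $L^4$ even though they destroy $L^\infty$ control, which is precisely the obstruction you ran into.

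If you prefer to keep your truncation route, you can close the gap with Grothendieck's inequality. Apply it to the form $\langle f,\Pi_{\rm slow}g\rangle_\pi=\langle \Pi_{\rm slow}f,\Pi_{\rm slow}g\rangle_\pi$, which is $\lesssim\eta\|f\|_\infty\|g\|_\infty$ by Step 1. That yields $\|(\Pi_{\rm slow}\otimes\Id)h_B\|^2\lesssim\eta B^2$ with a universal constant, but it has to be invoked explicitly rather than gestured at.
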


\begin{proof}
We first obtain an $L^\infty(\pi)\to L^2(\pi)$ estimate from warm-start mixing, interpolate to $L^4(\pi)$, and then allow auxiliary registers. All function norms in this proof are with respect to $\pi$.

Let $g$ be nonzero, real, bounded, and mean zero. The density $r=1+g/\|g\|_\infty$ satisfies $0\le r\le2$ and $\E_\pi[r]=1$, so $d\nu=r\,d\pi$ is two-warm. Reversibility gives, for every measurable $B$,
\[
 (\nu\sP^t)(B)=\langle r,\sP^t\one_B\rangle_\pi
 =\langle\sP^t r,\one_B\rangle_\pi.
\]
Thus $\sP^t r$ is the density of $\nu\sP^t$ relative to $\pi$, and
\begin{align*}
 \|\sP^t g\|_1
 = &~ \|g\|_\infty\|\sP^t r-1\|_1\\*
 = &~ 2\|g\|_\infty\dtv(\nu\sP^t,\pi)
 \le2\eta\|g\|_\infty,
\end{align*}
where the last inequality follows from the warm-start mixing hypothesis. On the slow interval, $\lambda^t\ge(1-1/(2t))^t\ge1/2$. Positivity gives $\lambda^t\ge0$ elsewhere, hence $\Pi_{\rm slow}\preceq2\sP^t$ by spectral calculus. Therefore
\begin{align*}
 \|\Pi_{\rm slow}g\|_2^2
 = &~ \langle g,\Pi_{\rm slow}g\rangle_\pi\\*
 \leq &~ 2\langle g,\sP^t g\rangle_\pi\\*
 \leq &~ 2\|g\|_\infty\|\sP^t g\|_1\\*
 \leq &~ 4\eta\|g\|_\infty^2,
\end{align*}
where the first inequality follows from spectral domination, the second from H\"older's inequality, and the third from the preceding mixing estimate. The same bound holds for $g=0$. Since $\Pi_{\rm slow}\one=0$, centering costs at most a factor of two in $L^\infty$. Applying the real estimate to the real and imaginary parts of a bounded complex $f$ gives
\[
 \|\Pi_{\rm slow}\|_{L^\infty\to L^2}\lesssim\sqrt\eta,
 \qquad \|\Pi_{\rm slow}\|_{L^2\to L^2}\le1.
\]
Riesz--Thorin interpolation, \eqref{eq:riesz-thorin-prelim}, now yields
\begin{equation}\label{eq:scalar-fourth-spectrum}
 \|\Pi_{\rm slow}f\|_2
 \le\|\Pi_{\rm slow}\|_{L^2\to L^2}^{1/2}
       \|\Pi_{\rm slow}\|_{L^\infty\to L^2}^{1/2}\|f\|_4
 \lesssim\eta^{1/4}\|f\|_4.
\end{equation}

For a finite-dimensional auxiliary register, write
\[
 \ket{\Psi}_{\mathsf{SA}}=(J_\pi\otimes\Id_{\mathsf{A}})\sum_jh_j\otimes\ket j_{\mathsf{A}},
 \qquad \|\ket{\Psi}_{\mathsf{SA}}\|_{4,\pi}^4
 =\int\Bigl(\sum_j|h_j(x)|^2\Bigr)^2d\pi(x).
\]
Let $\varepsilon_j$ be independent uniform signs. Their cross terms have zero expectation, so
\begin{align*}
 \sum_j\|\Pi_{\rm slow}h_j\|_2^2
 = &~ \E_\varepsilon\!\left[\Bigl\|\Pi_{\rm slow}\sum_j\varepsilon_jh_j\Bigr\|_2^2\right]\\*
 \lesssim &~ \sqrt\eta\,\E_\varepsilon\!\left[\Bigl\|\sum_j\varepsilon_jh_j\Bigr\|_4^2\right]\\*
 \leq &~ C\sqrt\eta\left(\int\E_\varepsilon\!\left[\Bigl|\sum_j\varepsilon_jh_j(x)\Bigr|^4\right]d\pi(x)\right)^{1/2},
\end{align*}
where the second step follows from \eqref{eq:scalar-fourth-spectrum}, and the last from Jensen's inequality and Tonelli's theorem. For complex numbers $z_j$, expansion and $|\sum_jz_j^2|\le\sum_j|z_j|^2$ give
\[
 \E_\varepsilon\!\left[\Bigl|\sum_j\varepsilon_jz_j\Bigr|^4\right]
 =2\Bigl(\sum_j|z_j|^2\Bigr)^2+\Bigl|\sum_jz_j^2\Bigr|^2-2\sum_j|z_j|^4
 \le3\Bigl(\sum_j|z_j|^2\Bigr)^2.
\]
Consequently,
\[
 \bigl\|(J_\pi\Pi_{\rm slow}J_\pi^\dagger\otimes\Id_{\mathsf{A}})\ket{\Psi}_{\mathsf{SA}}\bigr\|^2
 =\sum_j\|\Pi_{\rm slow}h_j\|_2^2
 \lesssim\sqrt\eta\,\|\ket{\Psi}_{\mathsf{SA}}\|_{4,\pi}^2.
\]
Taking square roots proves the claim. For separable $\mathsf{A}$, project onto its first $r$ basis vectors. The vectors converge in Hilbert norm and the fourth-moment norms converge by monotone convergence; the projected images converge because $\Pi_{\rm slow}$ is a contraction. Passing to the limit gives the same bound. If the fourth-moment norm is infinite, the assertion is immediate.
\end{proof}

\begin{remark}
The exponent $1/4$ is sharp, as shown by the two-state example in \cref{rem:fourth-root-sharp}. Uniformity over warm starts is also essential: \cref{app:counterexample} gives a four-state chain in which mixing from one prescribed start does not control the slow component of its qsample.
\end{remark}

\subsection{The reflection circuit}
\begin{theorem}[Input-dependent stationary-state reflection]\label{thm:effective-reflection}
Let $\sP$ be a finite positive, ergodic, reversible chain with stationary law $\pi$ and $\tau_{\mix,2}(\eta,\pi;\sP)\le t$, where $0<\eta<1/2$ and $t\in\N_+$. Given controlled access to a $(1,b,0)$-block-encoding of $D_\pi$, for $0<\zeta<1/2$ there is a unitary $\widetilde R$ acting on the sample register $\mathsf{S}$ and a fresh register $\mathsf{E}$ of $a=b+O(1)$ ancilla qubits, using $O(\sqrt t\log(1/\zeta))$ block-encoding queries, such that for every joint state $\ket{\Psi}_{\mathsf{SA}}$,
\begin{equation}\label{eq:effective-reflection}
 \begin{aligned}
 \Bigl\|(\widetilde R\otimes\Id_{\mathsf{A}})
         (\ket{0^a}_{\mathsf{E}}\otimes\ket{\Psi}_{\mathsf{SA}})
       -\ket{0^a}_{\mathsf{E}}\otimes
          (R_\pi\otimes\Id_{\mathsf{A}})\ket{\Psi}_{\mathsf{SA}}\Bigr\|\lesssim \eta^{1/4}\|\ket{\Psi}_{\mathsf{SA}}\|_{4,\pi}+\zeta.
 \end{aligned}
\end{equation}
The same cost and bound hold for a controlled reflection, with its control included among the auxiliary registers when evaluating the fourth-moment norm.
\end{theorem}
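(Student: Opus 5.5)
We will construct $\widetilde R$ by quantum singular value transformation applied to the block-encoding of $D_\pi$, and then control the error with \cref{thm:warm-low-spectrum}. Set $\delta=1/(2t)$. Since $\sP$ is positive, the spectrum of $D_\pi$ lies in $[0,1]$. Its only eigenvector at eigenvalue $1$ is $\ket\pi$, by ergodicity.

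\textbf{Step 1: the polynomial filter.} We take a real polynomial $p$ of degree $O(\delta^{-1/2}\log(1/\zeta))=O(\sqrt t\log(1/\zeta))$ satisfying:
\begin{itemize}
 \item $|p(x)|\le1$ on $[-1,1]$;
 \item $p(1)=1$;
 \item $|p(x)+1|\le\zeta$ for $x\in[0,1-\delta]$.
\end{itemize}
A concrete choice is $p=2q-1$, where $q$ approximates the indicator of $\{1\}$. For $q$ we use a rescaled Chebyshev polynomial $T_k\bigl((1+x)/(2-\delta)\bigr)$-type construction, or the standard sign-function approximation of \cite{GSLW19} shifted to the threshold $1-\delta/2$. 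The square-root dependence on $\delta$ comes from the Chebyshev growth $T_k(1+\epsilon)\approx e^{k\sqrt{2\epsilon}}$.

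To make $p(1)=1$ hold exactly while keeping the error $\zeta$ on $[0,1-\delta]$, we normalize $q$ by $q(1)$. The value $q(1)$ is already exponentially close to $1$, so this normalization changes the error bound by at most a constant factor.

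QSVT for Hermitian matrices with definite parity is not directly available here, since $p$ has no parity. We therefore use the linear-combination-of-unitaries construction of \cite{GSLW19} to combine the even and odd parts of $p$. This yields a unitary $\widetilde R$ on $\mathsf{E}\otimes\mathsf{S}$ with $a=b+O(1)$ ancillas. It uses $O(\deg p)$ controlled queries to the block-encoding and its inverse, and satisfies
$$(\bra{0^a}\otimes\Id)\widetilde R(\ket{0^a}\otimes\Id)=p(D_\pi)/\beta$$
for some constant $\beta$. We want $\beta=1$ up to error $\zeta$. For that we apply a final $O(1)$ rounds of oblivious amplitude amplification, or alternatively use the real-polynomial QSVT of \cite{GSLW19} with $|p|\le1$ directly, which achieves subnormalization $1$ with error $\zeta$.

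\textbf{Step 2: error decomposition.} We split $\ket\Psi$ using three spectral projections of $D_\pi\otimes\Id_{\mathsf{A}}$:
\begin{itemize}
 \item the stationary projection $\ket\pi\!\bra\pi\otimes\Id$;
 \item the fast projection $\Pi_{\rm fast}$, covering eigenvalues in $[0,1-\delta]$;
 \item the slow projection $J_\pi\Pi_{\rm slow}J_\pi^\dagger\otimes\Id$, covering eigenvalues in $[1-\delta,1)$.
\end{itemize}
The fast part must exclude the endpoint $1-\delta$ from the slow part or include it in the fast part; either convention works.

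Write $\ket{\Phi}$ for the output of $\widetilde R\otimes\Id$ on $\ket{0^a}\ket\Psi$. Its component in $\ket{0^a}_{\mathsf{E}}$ is $\ket{0^a}\otimes(p(D_\pi)\otimes\Id)\ket\Psi$, up to block-encoding error $\zeta$. The target $R_\pi=2\ket\pi\!\bra\pi-\Id$ agrees with $p(D_\pi)$ on the stationary part, and agrees up to $\zeta$ on the fast part. On the slow part both operators have norm at most $1$, so the discrepancy there is at most $2\|(J_\pi\Pi_{\rm slow}J_\pi^\dagger\otimes\Id)\ket\Psi\|$. \Cref{thm:warm-low-spectrum} bounds this by $O(\eta^{1/4}\|\ket\Psi\|_{4,\pi})$.

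The components of $\ket{\Phi}$ orthogonal to $\ket{0^a}_{\mathsf{E}}$ remain to be controlled. Since $\widetilde R$ is unitary, the squared norm of the off-block part equals $\|\ket\Psi\|^2-\|(p(D_\pi)\otimes\Id)\ket\Psi\|^2$. This is bounded by
$$2\,\bigl\|\ket\Psi\bigr\|\,\bigl\|\bigl((p(D_\pi)-R_\pi)\otimes\Id\bigr)\ket\Psi\bigr\|,$$
which only yields a square-root error.

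\textbf{Main obstacle.} This square-root loss is the main obstacle. To avoid it, we instead use the phase-type construction in which the off-block leakage is controlled linearly. We implement a reflection by approximately computing, in the ancilla, a flag marking eigenvalue $>1-\delta/2$. Concretely, we use QSVT eigenvalue thresholding to produce the flag, apply a $Z$ phase on the flag, and then uncompute the flag. This is the standard approximate-reflection construction. Its output on each spectral component is exact up to $\zeta$ whenever the eigenvalue is either $1$ or at most $1-\delta$, with the ancilla returned to $\ket{0^a}$.

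On the slow component, the error is at most twice that component's norm, by unitarity. The total error is therefore linear:
$$O\bigl(\zeta+\eta^{1/4}\|\ket\Psi\|_{4,\pi}\bigr),$$
using the triangle inequality across the orthogonal components.

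\textbf{Controlled version.} The controlled version controls the phase gate alone. The control qubit is absorbed into $\mathsf{A}$ for the purpose of \cref{thm:warm-low-spectrum}, which leaves the fourth-moment norm unchanged. The query count stays $O(\sqrt t\log(1/\zeta))$.
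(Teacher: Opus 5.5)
Your final construction is essentially the paper's proof. You compute a QSVT flag separating $\lambda=1$ from $\lambda\le 1-1/(2t)$, apply a phase on the flag, and uncompute. You then split the input into stationary, fast, and slow spectral parts, charge the slow part at most twice its norm via \cref{thm:warm-low-spectrum}, and control only the phase gate. The paper does the same with $\widetilde R=V^\dagger Z_{\rm dec}V$, where $V$ is the singular-value discrimination circuit of \cite{GSLW19}.

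The Step~1 detour is not needed. Its ``square-root obstacle'' comes from a loose estimate: the squared off-block leakage equals $\bra{\Psi}\bigl((\Id-p(D_\pi)^2)\otimes\Id_{\mathsf A}\bigr)\ket{\Psi}$. This is at most $2\zeta$ plus the squared norm of the slow component, so the leakage is linear in the slow part.
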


\begin{proof}
We distinguish the stationary eigenvalue from the fast spectrum and bound the unresolved component using \cref{thm:warm-low-spectrum}. On the sample space, set
\[
 \Pi_{\rm fix}=\ket\pi\!\bra\pi,\qquad
 \widehat\Pi_{\rm slow}=J_\pi\Pi_{\rm slow}J_\pi^\dagger,\qquad
 \Pi_{\rm fast}=\Id-\Pi_{\rm fix}-\widehat\Pi_{\rm slow}.
\]
QSVT singular-value discrimination~\cite[Theorem~32]{GSLW19} distinguishes $\lambda=1$ from $0\le\lambda\le1-1/(2t)$ with error probability at most $\zeta^2/4$, using $O(\sqrt t\log(1/\zeta))$ block-encoding queries and $b+O(1)$ ancillas. Let $V$ be this circuit and let $Z_{\rm dec}$ apply $+1$ to the stationary decision and $-1$ to the other decision. Define
\[
 \widetilde R=V^\dagger Z_{\rm dec}V.
\]
The one-sided endpoint guarantee is exact at $1$, so $\widetilde R(\ket0_{\mathsf{E}}\otimes\ket\pi)=\ket0_{\mathsf{E}}\otimes\ket\pi$. For a fast vector $\ket\phi$, the wrong-decision component has norm at most $(\zeta/2)\|\ket\phi\|$. Since $R_\pi\ket\phi=-\ket\phi$,
\begin{align*}
 \bigl\|(\widetilde R-\Id_{\mathsf{E}}\otimes R_\pi)(\ket0_{\mathsf{E}}\otimes\ket\phi)\bigr\|
 = &~ \bigl\|(Z_{\rm dec}+\Id)V(\ket0_{\mathsf{E}}\otimes\ket\phi)\bigr\|\\*
 \leq &~ \zeta\|\ket\phi\|.
\end{align*}
This uniform fast-subspace estimate remains valid with auxiliary registers. On the slow subspace, the difference of two unitaries has norm at most two. Decomposing the input accordingly gives
\begin{align*}
 &\bigl\|[(\widetilde R-\Id_{\mathsf{E}}\otimes R_\pi)\otimes\Id_{\mathsf{A}}]
                 (\ket0_{\mathsf{E}}\otimes\ket\Psi_{\mathsf{SA}})\bigr\|\\*
 \leq &~ \zeta\|(\Pi_{\rm fast}\otimes\Id_{\mathsf{A}})\ket\Psi\|
          +2\|(\widehat\Pi_{\rm slow}\otimes\Id_{\mathsf{A}})\ket\Psi\|\\*
 \leq &~ \zeta+C\eta^{1/4}\|\ket\Psi\|_{4,\pi},
\end{align*}
where the last inequality follows from \cref{thm:warm-low-spectrum}.

To control the reflection, control only $Z_{\rm dec}$; the inactive branch is exactly $V^\dagger V=\Id$. Write the normalized input as $\ket\Psi_{\mathsf{CSA}}=\ket0_{\mathsf{C}}\ket{\psi_0}_{\mathsf{SA}}+\ket1_{\mathsf{C}}\ket{\psi_1}_{\mathsf{SA}}$. Only the unnormalized vector $\ket{\psi_1}$ contributes error. Pointwise projection onto control one cannot increase the fourth-moment norm, hence
\[
 \text{controlled error}
 \le C\eta^{1/4}\|\ket{\psi_1}\|_{4,\pi}+\zeta\|\ket{\psi_1}\|
 \le C\eta^{1/4}\|\ket\Psi\|_{4,\pi}+\zeta.
\]
No additional block-encoding queries are needed.
\end{proof}

A finite sample register encodes a box of lattice points, only some of which lie in the body. The chain acts on the feasible labels, whereas its circuit must be defined on the entire register. We use a block-encoding of $D_\pi\oplus0$ on the feasible and infeasible subspaces, respectively. The added eigenvalues are zero, so they belong to the fast subspace. The construction consequently approximates $R_\pi\oplus(-\Id_{\rm bad})$, with the stated bound on inputs supported on the feasible space.

\subsection{Fourth-moment norms along the reference circuit}
To apply the input-dependent error bound in~\cref{thm:effective-reflection}, we bound the fourth-moment norm of the input to each reflection in the reference circuit with exact reflections. Gates controlled by the sample register preserve its measurement distribution, whereas qsample reflections can change it. The next lemma bounds the resulting growth of the fourth-moment norm.

\begin{lemma}[Growth of the fourth-moment norm]\label{lem:fourth-growth}
Let $\{\pi_v:v\in I\}$ consist of one law or a pair of mutually absolutely continuous laws, with
\[
 \int\left(\frac{d\pi_v}{d\pi_u}\right)^2d\pi_u\le B^4
 \qquad(u,v\in I).
\]
Start a circuit from the state $\ket{\pi_r}_{\mathsf{S}}\otimes\ket{w}_{\mathsf{A}}$, where $r\in I$. Suppose each gate is sample-controlled or a selective phase through $\ket{\pi_v}_{\mathsf{S}}$, possibly controlled by an auxiliary projector. The sample register $\mathsf{S}$ is not exchanged with another sample register. After $q$ selective phases, the state $\ket{\Psi}_{\mathsf{SA}}$ satisfies
\begin{equation}\label{eq:fourth-growth}
 \|\ket{\Psi}_{\mathsf{SA}}\|_{4,\pi_u}\le B(1+2q)
 \qquad(u\in I).
\end{equation}
The same bound holds when the counted operations are controlled qsample reflections.
\end{lemma}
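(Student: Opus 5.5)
We propose to prove the bound by induction on the number $q$ of selective phases, tracking the fourth-moment norm with respect to a fixed reference law $\pi_u$. The starting point is simple: $\|\ket{\pi_r}\otimes\ket w\|_{4,\pi_u}^4=\|\ket w\|^4\int(d\pi_r/d\pi_u)^2\,d\pi_u\le B^4$ for a normalized $\ket w$. So the initial norm is at most $B$. We then show two things: sample-controlled gates leave the norm unchanged, and each selective phase increases it additively by at most $2B$. Together these give $B+2qB=B(1+2q)$.

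\emph{Sample-controlled gates.} A gate $\sum_x\ket x\!\bra x\otimes G_x$ with each $G_x$ unitary maps $\ket{\psi_x}\mapsto G_x\ket{\psi_x}$ and so preserves every norm $\|\ket{\psi_x}\|$. Since the fourth-moment norm depends only on these pointwise norms, it is invariant. The same holds when the control comes from an auxiliary projector, because that projector is part of $G_x$. The hypothesis that $\mathsf S$ is never swapped with another register is exactly what keeps the sample label $x$ fixed, so that this pointwise description stays valid.

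\emph{A selective phase.} A selective phase through $\ket{\pi_v}$, possibly controlled by an auxiliary projector $P$, has the form $\Id+(e^{i\theta}-1)\ket{\pi_v}\!\bra{\pi_v}\otimes P$. Applied to $\ket\Psi$, it adds the vector $(e^{i\theta}-1)\ket{\pi_v}\otimes\ket{\omega}$, where $\ket\omega=(\bra{\pi_v}\otimes P)\ket\Psi$.
\begin{itemize}
\item The norm $\|\cdot\|_{4,\pi_u}$ is a genuine norm, namely the vector-valued $L^4(\pi_u)$ norm of the relative amplitude $x\mapsto\ket{\psi_x}/\sqrt{\pi_u(x)}$. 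So the triangle inequality applies.
\item The added term is a product vector. As in the initial-state computation, its norm is $|e^{i\theta}-1|\,\|\ket\omega\|\,\bigl(\int(d\pi_v/d\pi_u)^2d\pi_u\bigr)^{1/4}\le 2B\|\ket\omega\|$.
\item Since $\|\ket\omega\|\le\|\ket\Psi\|=1$, the increment is at most $2B$.
\end{itemize}
A reflection $R_{\pi_v}=2\ket{\pi_v}\!\bra{\pi_v}-\Id$ is $-1$ times a selective phase with $\theta=\pi$. The global sign does not affect the norm. Controlled reflections are likewise selective phases controlled by the projector onto the control qubit, so the final sentence of the statement follows as well.

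\emph{Main obstacle.} The delicate step is justifying the triangle inequality for $\|\cdot\|_{4,\pi_u}$ when the auxiliary register is entangled. We would handle it by writing the state as $\sqrt{\pi_u(x)}\,\ket{h(x)}$ and applying Minkowski's inequality in $L^4(\pi_u;\mathcal H_{\mathsf A})$.

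We would also note where the hypotheses are used. Mutual absolute continuity, together with the moment bound for every pair $u,v\in I$, is what allows each $\pi_v$ to be measured against an arbitrary reference $\pi_u$. The restriction to one law or a pair keeps the constant at exactly $B$. Normalization is preserved because all gates are unitary, so $\|\ket\omega\|\le1$ remains valid at every step.
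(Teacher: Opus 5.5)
Your proof is correct and follows the paper's argument essentially step for step: the initial norm is at most $B$, sample-controlled gates leave the norm invariant, and each selective phase adds at most $2B$, by the triangle inequality, the product-vector computation, and $\|(\bra{\pi_v}\otimes P)\ket\Psi\|\le1$. One small precision: once the reflection is controlled, the sign in $R_{\pi_v}=-(\Id-2\ket{\pi_v}\!\bra{\pi_v})$ is no longer global, so $\ctrl(R_{\pi_v})$ equals the $P$-controlled selective phase with $\theta=\pi$ only after composing with the auxiliary unitary $\Id_{\mathsf S}\otimes(\Id-2P)$ (this is the factorization the paper writes out), but that factor is sample-controlled and preserves the norm, so your bound is unaffected.
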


\begin{proof}
Fix a reference index $u\in I$. We show that sample-controlled gates preserve the fourth-moment norm and each selective phase increases it by at most $2B$.

For $V=\sum_x\ket x\!\bra x_{\mathsf{S}}\otimes V_{\mathsf{A}}(x)$ and $\ket\Psi=\sum_x\ket x_{\mathsf{S}}\otimes\ket{\psi_x}_{\mathsf{A}}$, unitarity gives
\[
 \|V\ket\Psi\|_{4,\pi_u}^4
 =\sum_x\frac{\|V_{\mathsf{A}}(x)\ket{\psi_x}\|^4}{\pi_u(x)}
 =\sum_x\frac{\|\ket{\psi_x}\|^4}{\pi_u(x)}
 =\|\ket\Psi\|_{4,\pi_u}^4.
\]
For a selective phase controlled by an auxiliary projector $\Pi_{\mathsf{A}}$, put
\[
 U_\theta=\Id+(e^{i\theta}-1)\ket{\pi_v}\!\bra{\pi_v}_{\mathsf{S}}\otimes\Pi_{\mathsf{A}},
 \qquad \ket a_{\mathsf{A}}=(\bra{\pi_v}_{\mathsf{S}}\otimes\Pi_{\mathsf{A}})\ket\Psi.
\]
The partial inner product is a contraction, so $\|\ket a\|\le\|\ket\Psi\|=1$. The relative-moment hypothesis then gives
\[
 \|\ket{\pi_v}_{\mathsf{S}}\otimes\ket a_{\mathsf{A}}\|_{4,\pi_u}^4
 =\|\ket a\|^4\int\left(\frac{d\pi_v}{d\pi_u}\right)^2d\pi_u
 \le B^4.
\]
Since $U_\theta\ket\Psi=\ket\Psi+(e^{i\theta}-1)\ket{\pi_v}\otimes\ket a$, the triangle inequality yields
\[
 \|U_\theta\ket\Psi\|_{4,\pi_u}
 \le\|\ket\Psi\|_{4,\pi_u}+|e^{i\theta}-1|B
 \le\|\ket\Psi\|_{4,\pi_u}+2B.
\]
The initial fourth-moment norm is at most $B$; after $q$ stages it is at most $B+2qB$.

A controlled reflection factors as
\begin{align*}
 \Id_{\mathsf{S}}\otimes(\Id_{\mathsf{A}}-\Pi_{\mathsf{A}})+R_{\pi_v}\otimes\Pi_{\mathsf{A}}
 = &~ (\Id_{\mathsf{S}}\otimes(\Id_{\mathsf{A}}-2\Pi_{\mathsf{A}}))
   (\Id-2\ket{\pi_v}\!\bra{\pi_v}_{\mathsf{S}}\otimes\Pi_{\mathsf{A}}).
\end{align*}
The first factor is an auxiliary unitary; the second is $U_\pi$. Thus the same increment bound applies. The reference index $u$ was arbitrary, so the bounds hold for every $u\in I$ without changing reference measures during the proof.
\end{proof}

\subsection{Accumulating implementation errors}\label{sec:composition}
A primitive may approximate its intended task even when all its elementary operations are exact. Implementing those operations adds a second error. We first bound one coherently selected reflection, then accumulate these two errors along a circuit.

\begin{lemma}[Error of a coherently selected reflection]\label{lem:coherent-reflection-error}
Let $R$ be the intended qsample reflection on $\mathsf{SA}$, possibly controlled by qubits in $\mathsf{A}$, and let $\mathsf{E}$ be fresh implementation workspace. Fix states $\ket{\Psi}_{\mathsf{SA}}$ and
\[
 \ket{b}_{\mathsf{B}}=\sum_z\sqrt{p_z}\ket{z}_{\mathsf{B}},\qquad p_z\ge0,\qquad\sum_zp_z=1,
\]
where $\mathsf{B}$ is the control register. For each control label $z$, let $\widetilde R_z$ be a unitary on $\mathsf{ESA}$, and define
\[
 \widetilde R_{\mathsf{B}}=\sum_z\ket{z}\!\bra{z}_{\mathsf{B}}\otimes\widetilde R_z.
\]
Let $\xi\ge0$ and $0\le{\iota}\le1$. Suppose a set $G$ of good labels satisfies
\begin{equation}\label{eq:composition-good-labels}
 \sum_{z\notin G}p_z\le{\iota},\qquad
 \bigl\|(\widetilde R_z-\Id_{\mathsf{E}}\otimes R)(\ket{0}_{\mathsf{E}}\otimes\ket{\Psi}_{\mathsf{SA}})\bigr\|\le\xi\quad(z\in G).
\end{equation}
Then
\begin{equation}\label{eq:composition-single-call}
 \begin{aligned}
 \Bigl\|(\widetilde R_{\mathsf{B}}-\Id_{\mathsf{B}}\otimes\Id_{\mathsf{E}}\otimes R)(\ket{b}_{\mathsf{B}}\otimes\ket{0}_{\mathsf{E}}\otimes\ket{\Psi}_{\mathsf{SA}})\Bigr\|\le\xi+2\sqrt{\iota}.
 \end{aligned}
\end{equation}
The same bound holds in trace distance after discarding workspace.
\end{lemma}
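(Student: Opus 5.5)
The plan is to exploit the orthogonality of the control basis $\{\ket z_{\mathsf B}\}$, so that the error vector splits into a direct sum of per-label errors. Both $\widetilde R_{\mathsf B}$ and $\Id_{\mathsf B}\otimes\Id_{\mathsf E}\otimes R$ are block diagonal in the control basis. The difference applied to $\ket b\otimes\ket0\otimes\ket\Psi$ therefore equals
\[
 \sum_z\sqrt{p_z}\,\ket z_{\mathsf B}\otimes\Delta_z,\qquad
 \Delta_z:=(\widetilde R_z-\Id_{\mathsf E}\otimes R)(\ket0_{\mathsf E}\otimes\ket\Psi_{\mathsf{SA}}).
\]
Its squared norm is then $\sum_zp_z\|\Delta_z\|^2$ by Pythagoras.

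Next I will split this sum over the good labels $z\in G$ and the bad labels $z\notin G$. For good labels, hypothesis \eqref{eq:composition-good-labels} gives $\|\Delta_z\|\le\xi$, so their contribution is at most $\xi^2\sum_{z\in G}p_z\le\xi^2$. For bad labels, $\Delta_z$ is the difference of two unitaries applied to a unit vector (I take $\ket\Psi$ normalized, as the statement implicitly does). Hence $\|\Delta_z\|\le2$, and their contribution is at most $4\iota$. Together these give a squared norm of at most $\xi^2+4\iota$. Then $\sqrt{\xi^2+4\iota}\le\xi+2\sqrt\iota$ yields \eqref{eq:composition-single-call}.

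The trace-distance claim follows from two standard facts. First, for pure states, $\dtr(\ket u\bra u,\ket v\bra v)\le\|\ket u-\ket v\|$. Second, the partial trace over $\mathsf E$ is contractive in trace distance. The ideal output is $\ket0_{\mathsf E}\otimes(\cdot)$, so discarding $\mathsf E$ is well defined on both sides.

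There is no real obstacle here. The only point needing care is that a controlled $R$ whose control qubits sit inside $\mathsf A$ is still a single unitary on $\mathsf{SA}$, so the unitarity bound of $2$ applies unchanged.
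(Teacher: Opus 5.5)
Your proposal is correct and matches the paper's proof essentially step for step. Orthogonality of the control labels gives a squared error of $\sum_z p_z\|\Delta_z\|^2\le\xi^2+4\iota$, using the unitarity bound of $2$ on bad labels, and the trace-distance claim follows from the pure-state bound together with contractivity of the partial trace.
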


\begin{proof}
Put $\ket{e_z}=(\widetilde R_z-\Id_{\mathsf{E}}\otimes R)(\ket{0}_{\mathsf{E}}\otimes\ket{\Psi}_{\mathsf{SA}})$. The joint error vector is $\sum_z\sqrt{p_z}\ket{z}_{\mathsf{B}}\otimes\ket{e_z}_{\mathsf{ESA}}$. Orthogonality of the labels gives
\[
 \left\|\sum_z\sqrt{p_z}\ket{z}_{\mathsf{B}}\otimes\ket{e_z}_{\mathsf{ESA}}\right\|^2
 =\sum_zp_z\|\ket{e_z}\|^2
 \le\xi^2\sum_{z\in G}p_z+4\sum_{z\notin G}p_z
 \le\xi^2+4{\iota}.
\]
On a bad label, $\|\ket{e_z}\|\le2$ by unitarity. Taking square roots proves \eqref{eq:composition-single-call}. The trace-distance assertion follows from $\dtr(\ket{u}\!\bra{u},\ket{v}\!\bra{v})\le\|\ket{u}-\ket{v}\|$ for normalized vectors and contractivity under partial trace.
\end{proof}

The first lemma needs an independent control block only on its reference input. The next lemma carries forward the error of the actual, possibly entangled state. We use channels so that the same argument includes measurements and discards.

\begin{lemma}[Composition of approximate quantum channels]\label{lem:composition}
For $T\ge1$, let $\mathcal E_0,\ldots,\mathcal E_{T-1}$ be primitives with exact elementary operations, and let $\widetilde{\mathcal E}_k$ be their implementations. Both maps are completely positive and trace preserving. Let $\rho_0,\ldots,\rho_T$ be reference density operators on all registers just before the corresponding quantum channels. Suppose that for each $k$, the primitive approximation and implementation errors satisfy, respectively,
\begin{align}
 \dtr\bigl(\mathcal E_k(\rho_k),\rho_{k+1}\bigr)\leq &~ \varepsilon_k,
 \label{eq:primitive-error}\\*
 \dtr\bigl(\widetilde{\mathcal E}_k(\rho_k),\mathcal E_k(\rho_k)\bigr)\leq &~ \Delta_k,
 \label{eq:primitive-implementation-error}
\end{align}
The actual states evolve as $\widetilde\rho_{k+1}=\widetilde{\mathcal E}_k(\widetilde\rho_k)$. If $\dtr(\widetilde\rho_0,\rho_0)\le e_{\rm init}$, then
\begin{equation}\label{eq:composition}
 \dtr(\widetilde\rho_T,\rho_T)
 \le e_{\rm init}+\sum_{k=0}^{T-1}(\varepsilon_k+\Delta_k).
\end{equation}
\end{lemma}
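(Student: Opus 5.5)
The plan is a standard hybrid argument by induction on $k$. I will show the invariant
\[
 \dtr(\widetilde\rho_k,\rho_k)\le e_{\rm init}+\sum_{j=0}^{k-1}(\varepsilon_j+\Delta_j)
\]
for $k=0,\ldots,T$. The case $k=0$ is exactly the hypothesis $\dtr(\widetilde\rho_0,\rho_0)\le e_{\rm init}$, and the case $k=T$ is \eqref{eq:composition}.

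For the inductive step, I will write $\widetilde\rho_{k+1}=\widetilde{\mathcal E}_k(\widetilde\rho_k)$ and insert two intermediate states, $\widetilde{\mathcal E}_k(\rho_k)$ and $\mathcal E_k(\rho_k)$. The triangle inequality for the trace norm then gives
\[
 \dtr(\widetilde\rho_{k+1},\rho_{k+1})
 \le \dtr\bigl(\widetilde{\mathcal E}_k(\widetilde\rho_k),\widetilde{\mathcal E}_k(\rho_k)\bigr)
 +\dtr\bigl(\widetilde{\mathcal E}_k(\rho_k),\mathcal E_k(\rho_k)\bigr)
 +\dtr\bigl(\mathcal E_k(\rho_k),\rho_{k+1}\bigr).
\]
The third term is at most $\varepsilon_k$ by \eqref{eq:primitive-error}. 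The second term is at most $\Delta_k$ by \eqref{eq:primitive-implementation-error}. For the first term I will invoke contractivity of trace distance under completely positive trace-preserving maps: $\dtr(\mathcal E(\rho),\mathcal E(\sigma))\le\dtr(\rho,\sigma)$. This bounds the first term by $\dtr(\widetilde\rho_k,\rho_k)$. Combining the three bounds with the induction hypothesis closes the induction.

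The only ingredient that needs care is the contractivity step. I will cite it as the standard data-processing inequality for trace distance, since $\widetilde{\mathcal E}_k$ is assumed CPTP. The short justification is as follows. Write $\rho-\sigma=P-N$ with $P,N\succeq0$ and $\operatorname{tr}P=\operatorname{tr}N=\dtr(\rho,\sigma)$. Trace preservation and positivity then give
\[
 \|\mathcal E(P)-\mathcal E(N)\|_1\le\operatorname{tr}\mathcal E(P)+\operatorname{tr}\mathcal E(N)=2\dtr(\rho,\sigma).
\]
Positivity suffices here; complete positivity is only needed so that the channels may act on a subsystem of the larger register space, which is how the primitives are applied in the algorithm.

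One point of bookkeeping must hold: every density operator lives on the full register space, so that measurements, discards and fresh workspace are all modeled by CPTP maps between fixed spaces. Under that convention there is no real obstacle. In particular, the argument needs no independence or product structure in the actual states $\widetilde\rho_k$, because contractivity handles arbitrary entanglement.
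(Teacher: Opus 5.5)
Your proposal is correct and follows the paper's proof: it uses the same triangle inequality through $\widetilde{\mathcal E}_k(\rho_k)$ and $\mathcal E_k(\rho_k)$, bounds the first term by contractivity of $\widetilde{\mathcal E}_k$ and the other two by the hypotheses, and then iterates. Your extra derivation of contractivity from the Jordan decomposition $\rho-\sigma=P-N$ is a valid supplement that the paper simply cites.
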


\begin{proof}
The same implemented channel acts on the actual and reference inputs. Inserting its reference output and the exact primitive's output gives
\begin{align*}
 \dtr(\widetilde\rho_{k+1},\rho_{k+1})
 \leq &~ \dtr\bigl(\widetilde{\mathcal E}_k(\widetilde\rho_k),\widetilde{\mathcal E}_k(\rho_k)\bigr)\\*
 &\quad+\dtr\bigl(\widetilde{\mathcal E}_k(\rho_k),\mathcal E_k(\rho_k)\bigr)
       +\dtr\bigl(\mathcal E_k(\rho_k),\rho_{k+1}\bigr)\\*
 \leq &~ \dtr(\widetilde\rho_k,\rho_k)+\Delta_k+\varepsilon_k,
\end{align*}
where the last inequality follows from channel contractivity and the two local hypotheses, both evaluated only on $\rho_k$. Iteration gives
\[
 \dtr(\widetilde\rho_T,\rho_T)
 \le\dtr(\widetilde\rho_0,\rho_0)+\sum_{k=0}^{T-1}(\Delta_k+\varepsilon_k)
 \le e_{\rm init}+\sum_{k=0}^{T-1}(\Delta_k+\varepsilon_k).
\]
\end{proof}

To bound a primitive's implementation error, follow its exact internal circuit. At this finer level, the reference outputs are exact, so $\varepsilon_k=0$ in \cref{lem:composition}. A reflection replacement contributes $\xi_j+2\sqrt{{\iota_j}}$ by \cref{lem:coherent-reflection-error}; a numerical unitary replacement satisfies
\[
 \dtr(\widetilde U\rho\widetilde U^\dagger,U\rho U^\dagger)
 \le\|\widetilde U-U\|_{\op}.
\]
For a mixed working input, apply the pure-state reflection estimate to an untouched purification included in $\mathsf{A}$. The reference control block remains pure and independent. Summing the internal errors and then composing the primitives gives
\begin{equation}\label{eq:composition-reflection-budget}
 \dtr(\widetilde\rho_T,\rho_T)
 \le e_{\rm init}+e_{\rm prim}+e_{\rm arith}
       +\sum_{j=1}^{N}(\xi_j+2\sqrt{{\iota_j}}),
\end{equation}
where $e_{\rm prim}=\sum_k\varepsilon_k$ and $e_{\rm arith}$ sums the other numerical gate errors. Each error is counted once. Polynomial call counts and fourth-moment norms require only inverse-polynomial mixing and filtering accuracies; \cref{app:parameters} makes this choice explicit.

\section{Quantum Product Estimator}\label{sec:coherent-product}
In this section, we introduce a quantum subroutine to estimate the product of a sequence of expectations (\cref{lem:log-sum}) with streaming access to each distribution. 


For a random variable $Y(x)\in[0,1]$, define $V_Y$ to be the following sample-controlled rotation
\[
 V_Y(\ket{x}_{\mathsf{S}}\otimes\ket{0}_{\mathsf{Q}})
 =\ket{x}_{\mathsf{S}}\otimes\bigl(\sqrt{1-Y(x)}\ket{0}_{\mathsf{Q}}+\sqrt{Y(x)}\ket{1}_{\mathsf{Q}}\bigr),
\]
where $\mathsf{S}$ is the sample register and $\mathsf{Q}$ is an ancilla qubit. 

\begin{lemma}[Quantum product estimator]\label{lem:log-sum}
Let $n\ge1$, let $\pi_1,\ldots,\pi_n$ be probability distributions on finite sets, and let random variables $Y_\ell$ and signs $\sigma_\ell\in\{-1,1\}$ satisfy
\[
 0\le Y_\ell\le1,\qquad \E_{\pi_\ell}[Y_\ell]\ge\frac1{32}
 \qquad \forall 1\le\ell\le n.
\]
Given one independent qsample from each $\pi_\ell$ and access to the controlled $R_{\pi_\ell}$ and $V_{Y_\ell}$, for every $0<\varepsilon<1/2$, there is a quantum algorithm that estimates
\[
 \mathsf{L}=\sum_{\ell=1}^n\sigma_\ell(-\log (\E_{\pi_\ell}[Y_\ell])) = -\log\left(\prod_{\ell=1}^n \left(\E_{\pi_\ell}[Y_\ell]\right)^{\sigma_\ell} \right)
\]
to additive error $\varepsilon$ with probability at least $49/50$, using a total of $\widetilde O(n/\varepsilon)$ calls to the controlled qsample reflections and to $V_{Y_\ell}$ and its inverse. The qsamples may be supplied one at a time, and each is discarded before the next is needed; the phase register is shared between contributions. 
\end{lemma}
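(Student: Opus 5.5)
The plan follows the phase-accumulation construction sketched in \cref{sec:tech_ov_estimation}. For each $\ell$ let $p_\ell=\E_{\pi_\ell}[Y_\ell]\in[1/32,1]$ and $\ket{\phi_\ell}=V_{Y_\ell}(\ket{\pi_\ell}_{\mathsf S}\ket0_{\mathsf Q})$. We have $\Pi=\Id_{\mathsf S}\otimes\ket1\!\bra1_{\mathsf Q}$ and $\|\Pi\ket{\phi_\ell}\|^2=p_\ell$. The reflection $2\ket{\phi_\ell}\!\bra{\phi_\ell}-\Id=V_{Y_\ell}(R_{\pi_\ell}\otimes(2\ket0\!\bra0-\Id))V_{Y_\ell}^\dagger$ costs one call to $R_{\pi_\ell}$ and two to $V_{Y_\ell}^{\pm1}$. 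With it we form an $O(1)$-block-encoding of $G_\ell=Q_{\phi_\ell}(\Id-\Pi)Q_{\phi_\ell}=(1-p_\ell)Q_{\phi_\ell}$, for example as a product of block-encodings of projectors or via a linear combination of unitaries. Its only nonzero eigenvalue is $1-p_\ell\le 31/32$, with eigenvector $\ket{\phi_\ell}$.

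Next apply QSVT with a polynomial approximating $x\mapsto-\log(1-x)$ on $[0,31/32]$, scaled by a constant, to block-encode $H_\ell\approx -\log(\Id-G_\ell)$. Because $1-p_\ell$ is bounded away from $1$, a degree $k=O(\log(n/\varepsilon))$ approximation gives $H_\ell\ket{\phi_\ell}=(h_\ell)\ket{\phi_\ell}$ with $|h_\ell+\log p_\ell|\le\varepsilon/(8n)$. The truncated Taylor series $\sum_{j\le k}G^j/j$ suffices for this. Since $\ket{\phi_\ell}$ is an exact eigenvector of $G_\ell$, and hence of any polynomial in it, the target remains an eigenvector of the implemented operator, up to block-encoding numerical error. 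Then apply QSVT-based Hamiltonian simulation to obtain $U_\ell(t)\approx e^{it\sigma_\ell H_\ell}$ for $|t|\le T$, with error $\delta$, using $O(T+\log(1/\delta))$ calls, each call carrying the $\widetilde O(1)$ overhead above. All of these can be controlled by adding controls to the reflections and to $V$.

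The \textsc{Add}$_\ell$ step uses a register $\mathsf F$ of $r=\lceil\log_2 M\rceil$ qubits with $M=\Theta(n/\varepsilon)$ and $\tau_0=\Theta(1/n)$. Bit $u_s$ controls $U_\ell(2^s\tau_0)$. On the exact eigenstate this multiplies $\ket u$ by $e^{iu\tau_0\sigma_\ell h_\ell}$ and returns $\ket{\phi_\ell}$ unentangled, so the qsample can then be discarded. The cost per \textsc{Add} is $\sum_s\widetilde O(2^s\tau_0)=\widetilde O(M\tau_0)=\widetilde O(1/\varepsilon)$, which totals $\widetilde O(n/\varepsilon)$. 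After all $n$ additions, $\mathsf F$ holds $M^{-1/2}\sum_u e^{iu\tau_0 S}\ket u$ with $S=\sum_\ell\sigma_\ell h_\ell$. Choose $\tau_0$ so that $\tau_0|S|<\pi$, which is possible because $|S|\le n(\log 32+1)$. Standard QPE analysis with an inverse QFT then returns $\tau_0 S$ to within $O(1/M)$, with probability bounded below by a constant. Boosting to $49/50$ uses either the median trick with $O(1)$ repetitions or extra precision bits, as in the usual QPE failure bound. This yields $S$ to within $O(1/(M\tau_0))=\varepsilon/4$, and $|S-\mathsf L|\le\varepsilon/8$.

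The main obstacle is error accounting, which I would handle with \cref{lem:composition}. Each \textsc{Add} incurs error from Hamiltonian simulation and block-encoding ($\delta$ per call, $O(r)$ calls) and leaves residual entanglement of $\ket{\phi_\ell}$ with $\mathsf F$ when it is discarded. The reference computation uses exact phases $e^{iu\tau_0\sigma_\ell h_\ell}$. Each primitive error is $O(r\delta)$ in trace distance, so choosing $\delta=\Theta(1/(nr))$ keeps the total deviation at most $1/200$ with only logarithmic overhead. The second delicate point is verifying that the truncated-log polynomial and the QSVT scaling keep the eigenvalue bias below $\varepsilon/(8n)$ uniformly. This reduces to $|\log(1-x)+\sum_{j\le k}x^j/j|\le (31/32)^{k}\cdot 32$ on $[0,31/32]$.
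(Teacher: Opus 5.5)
Your plan is the paper's proof in all essentials. The shared steps are: block-encode $G_\phi=(1-p)Q_\phi$, approximate $-\log(\Id-G_\phi)$ by the truncated series, apply binary-controlled evolutions to a shared register $\mathsf F$, read out by an inverse QFT, and charge simulation errors through \cref{lem:composition}. The paper builds the series by LCU over the powers $G_\phi^j$, with normalization $\alpha_k=\sum_{j\le k}1/j$ rather than a constant; this costs only a $\log k$ factor.

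One step fails as written: $V_Y\bigl(R_\pi\otimes(2\ket{0}\!\bra{0}-\Id)\bigr)V_Y^\dagger$ is not $2\ket{\phi}\!\bra{\phi}-\Id$. The operator $R_\pi\otimes(2\ket{0}\!\bra{0}-\Id)$ also fixes every $\ket{w}_{\mathsf S}\ket{1}_{\mathsf Q}$ with $w\perp\pi$. After conjugation you therefore reflect about a much larger subspace, and the ``$Q_\phi$'' you block-encode is not rank one. In general $\ket{\phi}$ is then not an eigenvector of the resulting $G$, and the phase-kickback argument breaks. The correct identity applies $R_\pi$ only on the $\ket{0}_{\mathsf Q}$ branch:
\[
 2\ket{\phi}\!\bra{\phi}-\Id
 =V_Y\bigl(R_\pi\otimes\ket{0}\!\bra{0}_{\mathsf Q}-\Id_{\mathsf S}\otimes\ket{1}\!\bra{1}_{\mathsf Q}\bigr)V_Y^\dagger .
\]
This needs a $\mathsf Q$-controlled $R_\pi$, which is available because controlled reflections are granted, and it has the same cost.

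There are two smaller points.
\begin{enumerate}
\item Boosting by a median over repetitions needs a fresh qsample from each $\pi_\ell$ per repetition, which the lemma does not supply. This is avoidable only if the repetitions run on parallel phase registers all fed by the same qsample before it is discarded. The paper avoids the issue by taking $M\ge 2^{16}n/\varepsilon$, so that a single run is within $64/M$ with probability $99/100$.
\item The condition $\tau_0|S|<\pi$ alone does not keep the estimate away from the wraparound point; you need a constant margin. The paper obtains it from the initial offset $e^{2\pi i z/4}$ together with $\tau_0=2\pi/(64n)$, which places the phase in $[1/8,3/8]$.
\end{enumerate}
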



We construct and simulate a Hamiltonian for each logarithm in~\cref{sec:encode_log_phase}, then combine the evolutions to obtain the estimator in~\cref{sec:q_est_procedure}. \cref{sec:app_bridge_sampling} applies the estimator to ratios of normalization constants through bridge sampling~\cite{MengWong96}.

\subsection{Encoding the logarithms in eigenphases}\label{sec:encode_log_phase}
Fix one target $\pi$ and one random variable $Y$. Applying $V_Y$ to its qsample and a zero ancilla qubit gives
\[
 \ket{\phi}_{\mathsf{SQ}}=V_Y(\ket{\pi}_{\mathsf{S}}\otimes\ket{0}_{\mathsf{Q}}),
 \qquad \Pi=\Id_{\mathsf{S}}\otimes\ket{1}\!\bra{1}_{\mathsf{Q}},
 \qquad p=\bra\phi\Pi\ket\phi=\E_\pi[Y].
\]
The reflection through $\ket{\phi}$ can be implemented by
\begin{equation}\label{eq:marked-reflection}
 R_\phi=V_Y\bigl(R_\pi\otimes\ket{0}\!\bra{0}_{\mathsf{Q}}
 -\Id_{\mathsf{S}}\otimes\ket{1}\!\bra{1}_{\mathsf{Q}}\bigr)V_Y^\dagger,
\end{equation}
using one controlled call to $R_\pi$ and one call each to $V_Y$ and $V_Y^\dagger$. The same cost holds under an additional control.

Let $Q_\phi=\ket\phi\!\bra\phi_{\mathsf{SQ}}$. Then, we have
\begin{align*}
 G_\phi:=Q_\phi(\Id-\Pi)Q_\phi
 = \langle\phi|(\Id-\Pi)|\phi\rangle Q_\phi
 = (1-p)Q_\phi,
\end{align*}
Thus, $G_\phi^j=(1-p)^jQ_\phi$, and
\begin{equation}\label{eq:logarithm-polynomial}
 H_k=\sum_{j=1}^k\frac{G_\phi^j}{j}=g_k(p)Q_\phi,
 \qquad g_k(p)=\sum_{j=1}^k\frac{(1-p)^j}{j}.
\end{equation}
In particular, $\ket{\phi}_{\sf SQ}$ is an eigenstate of $H_k$ with eigenvalue $g_k(p)$:
\[
 e^{i\tau H_k}\ket\phi_{\mathsf{SQ}}
 =e^{i\tau g_k(p)}\ket\phi_{\mathsf{SQ}}.
\]

For $1\geq p\ge1/32$, the logarithm series has a nonnegative tail and satisfies
\begin{align}\label{eq:log-tail}
 -\log p-g_k(p)
 = \sum_{j=k+1}^\infty\frac{(1-p)^j}{j}
 \leq \frac{(1-p)^{k+1}}{(k+1)p}
 \le\frac{32}{k+1}e^{-(k+1)/32}.
\end{align}
Hence, degree $k=O(\log(1/\eta))$ suffices to give eigenvalue error at most $\eta$. 

Let $\alpha_k=\sum_{j=1}^k1/j$. The following lemma shows the cost of controlled Hamiltonian evolution for $H_k$.

\begin{lemma}[Controlled Hamiltonian evolution]\label{lem:logarithmic-phase}
For $k\ge1$, $\tau\in\R$, and $0<{\varepsilon_{\rm sim}}<1/2$, controlled access to $R_\phi$ and $\Id_{\mathsf{SQ}}-2\Pi$ gives a unitary $\widetilde U_\tau$ approximating $U_\tau=\ctrl_{\mathsf{C}}(e^{i\tau H_k})$. It acts on $\mathsf{C},\mathsf{S},\mathsf{Q}$ and a fresh register $\mathsf{E}$ of $a=O(k)$ qubits, and satisfies
\[
 \Bigl\|(\widetilde U_\tau\otimes\Id_{\mathsf{A}})(\ket{0^a}_{\mathsf{E}}\otimes\ket\Psi_{\mathsf{CSQA}})
 -\ket{0^a}_{\mathsf{E}}\otimes(U_\tau\otimes\Id_{\mathsf{A}})\ket\Psi_{\mathsf{CSQA}}\Bigr\|\le{\varepsilon_{\rm sim}}
\]
for every joint state $\ket\Psi_{\mathsf{CSQA}}$. It uses
\begin{equation}\label{eq:log-phase-cost}
 O\bigl(k(\alpha_k|\tau|+\log(1/{\varepsilon_{\rm sim}}))\bigr)
\end{equation}
controlled reflection queries. 
\end{lemma}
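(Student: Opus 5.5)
The plan is to block-encode $H_k$ with subnormalization $\alpha_k$ using QSVT applied to a block-encoding of $G_\phi$, and then run standard Hamiltonian simulation (QSP on a block-encoding) to implement $e^{i\tau H_k}$.

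\textbf{Step 1: block-encoding $G_\phi$.} The two controlled reflections $R_\phi=2Q_\phi-\Id$ and $\Id-2\Pi$ define projected unitaries in the qubitization sense. The product
\[
 W=R_\phi(\Id-2\Pi),
\]
together with its powers, generates the standard two-projector (Jordan) structure between $Q_\phi$ and $\Id-\Pi$. On the one-dimensional image of $Q_\phi$, the projected unitary $Q_\phi(\Id-\Pi)Q_\phi$ has the single singular value $1-p$.

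Using the alternating phase-modulated sequence of \cite{GSLW19}, with projectors $Q_\phi$ and $\Id-\Pi$, a degree-$2j$ sequence using $O(j)$ reflections block-encodes the even polynomial $x^{2j}$ in the singular value $x=\sqrt{1-p}$. This is exactly $G_\phi^j=(1-p)^jQ_\phi$, with $Q_\phi$ itself serving as the block projector. Both block projectors are realized by the reflection $R_\phi$ plus one ancilla, by the usual ``projector-controlled NOT'' construction. So no access to $\ket\phi$ beyond $R_\phi$ is needed.

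\textbf{Step 2: block-encoding $H_k$.} Rather than LCU over the separate powers $G_\phi^j$, I would apply QSVT once with the degree-$2k$ even polynomial
\[
 P(x)=\alpha_k^{-1}\sum_{j=1}^k x^{2j}/j .
\]
This polynomial is bounded by $1$ on $[-1,1]$ and real, so QSVT (possibly after the real-part trick) gives a $(\alpha_k, O(1), 0)$-block-encoding of $H_k$ with $O(k)$ reflection queries. Alternatively, an LCU with a $\lceil\log k\rceil$-qubit prepare register over the $O(k)$-reflection circuits gives the same subnormalization $\alpha_k$, with $a=O(k)$ ancillas being ample. This is consistent with the claimed $O(k)$ ancilla count and the $k\alpha_k|\tau|$ cost.

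\textbf{Step 3: controlled simulation.} Apply optimal block-encoded Hamiltonian simulation \cite{GSLW19}. It approximates $e^{i\tau H_k}$ to error ${\varepsilon_{\rm sim}}$ using $O(\alpha_k|\tau|+\log(1/{\varepsilon_{\rm sim}}))$ block-encoding queries, and each query costs $O(k)$ reflections, which gives \eqref{eq:log-phase-cost}.

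The error guarantee is an operator-norm bound on the block with ancillas returned to $\ket{0^a}$. Since the simulation error is uniform in operator norm, it extends to every joint state $\ket\Psi_{\mathsf{CSQA}}$ by tensoring with $\Id_{\mathsf A}$. Control on $\mathsf C$ is obtained by controlling only the reflections in the circuit; equivalently, control the phase-rotation gates so that the inactive branch reduces to $V^\dagger V=\Id$ exactly. Either choice adds only a constant factor, and the problem statement already grants controlled reflections.

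\textbf{Main obstacle.} The delicate point is the ancilla-return guarantee on arbitrary joint inputs: the simulated block must leave $\mathsf E$ in $\ket{0^a}$ up to ${\varepsilon_{\rm sim}}$. A bare block-encoding approximation only bounds the projected block, not the leakage out of it. I would handle this with the standard robust oblivious amplitude amplification step, or with the GSLW simulation's property that the block is unitary up to ${\varepsilon_{\rm sim}}$. The latter forces the orthogonal component to have norm $O(\sqrt{{\varepsilon_{\rm sim}}})$, so the target should be set to ${\varepsilon_{\rm sim}}^2$, costing only a constant factor in the logarithm.
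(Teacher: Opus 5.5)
Your plan matches the paper's proof in its essentials. The paper also block-encodes $G_\phi$ by conjugating each controlled reflection with Hadamards, which gives $(\Id+R)/2$ on one ancilla. It then block-encodes $H_k$ with subnormalization $\alpha_k$ at $O(k)$ reflections per use, and runs the QSVT simulation of \cite{GSLW19} with $O(\alpha_k|\tau|+\log(1/\varepsilon_{\rm sim}))$ queries. Your leakage fix is also the paper's: it asks for block error $\varepsilon_{\rm sim}^2/2$ and uses the identity that the squared full-state error equals $2\operatorname{Re}\bra{\Psi}U_\tau^\dagger(U_\tau-B_\tau)\ket{\Psi}\le2\|B_\tau-U_\tau\|_{\op}$, which is your near-unitarity argument.

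You differ in Step~2. The paper uses LCU over the powers $G_\phi^j$, hence $O(k)$ ancillas. Your single QSVT with the even, real polynomial $P(x)=\alpha_k^{-1}\sum_{j\le k}x^{2j}/j$ satisfies $0\le P\le1$ on $[-1,1]$, so it is valid and needs only $O(1)$ ancillas at the same query count. Its output is encoded with the system projector $Q_\phi$, so you must sandwich it between two one-ancilla encodings of $(\Id+R_\phi)/2$ before simulating.

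The control step, however, fails as written. If only the reflections are controlled on $\mathsf C$, then on the branch $\mathsf C=0$ both $R_\phi$ and $\Id-2\Pi$ become $\Id$. The construction then encodes $\Id$ in place of $Q_\phi$ and of $\Id-\Pi$, hence $\Id$ in place of $G_\phi$ and $\sum_{j\le k}\Id/j=\alpha_k\Id$ in place of $H_k$. The simulation therefore applies approximately $e^{i\tau\alpha_k}\Id$ on that branch, so you implement $e^{i\tau\alpha_k}\ctrl_{\mathsf C}(e^{i\tau(H_k-\alpha_k)})$. This differs from $U_\tau$ by $|e^{i\tau\alpha_k}-1|$ in operator norm, not by $\varepsilon_{\rm sim}$. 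In \textsc{Add}, where $\mathsf C$ is a phase-register qubit in superposition, this is a genuine relative phase that shifts every eigenvalue by $-\alpha_k$.

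The ``$V^\dagger V=\Id$'' device from \cref{thm:effective-reflection} does not transfer either. The simulation circuit is not of the form $V^\dagger Z_{\rm dec}V$: with the QSVT phases switched off, the odd-parity (sine) part still leaves an uncancelled block-encoding query, and the LCU and amplitude-amplification gates still act.

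The paper avoids this by simulating the block-encoded Hamiltonian $\ket1\!\bra1_{\mathsf C}\otimes H_k$, whose exponential is exactly $U_\tau$. One way to get that block-encoding is to add a flag qubit that is flipped when $\mathsf C=0$ and include it in the block projector. Controlling every gate of the circuit, or undoing the known phase $e^{i\tau\alpha_k}$ on $\ket0_{\mathsf C}$, would also repair the step at constant cost.
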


\begin{proof}
We first construct a block-encoding of $ H_k$, then apply QSVT-based Hamiltonian simulation to implement its controlled evolution. 

For a reflection $R$ on $\mathsf{S},\mathsf{Q}$, a Hadamard, a controlled $R$, and a second Hadamard on a fresh qubit give
\[
 W_R=({\rm H}\otimes\Id_{\mathsf{SQ}})\ctrl(R)({\rm H}\otimes\Id_{\mathsf{SQ}}),
 \qquad
 (\bra{0}\otimes\Id_{\mathsf{SQ}})W_R(\ket{0}\otimes\Id_{\mathsf{SQ}})
 =\frac{\Id_{\mathsf{SQ}}+R}{2}.
\]
Taking $R=R_\phi$ and $R=\Id_{\mathsf{SQ}}-2\Pi$ gives block-encodings of $Q_\phi$ and $\Id_{\mathsf{SQ}}-\Pi$. Their product on three separate ancillas is a $(1,3,0)$-block-encoding $W_G$ of $G_\phi=Q_\phi (\Id - \Pi) Q_\phi$ using $O(1)$ reflection calls. Then, we can block-encode $G_\phi^j$ for $1\le j\le k$ using $3k$ ancilla qubits, and let $W_j$ denote the resulting unitary:
\[
 (\bra{0^{3k}}\otimes\Id_{\mathsf{SQ}})W_j
                    (\ket{0^{3k}}\otimes\Id_{\mathsf{SQ}})=G_\phi^j,\qquad \forall 1\leq j\leq k.
\]
We can apply the linear combination of unitaries (LCU) to block-encode the Hamiltonian $H_k$. Specifically, let 
\begin{align*}
    \alpha_k=\sum_{j=1}^k\frac{1}{j}\le1+\log k,
\end{align*}
and prepare $\sum_{j=1}^k(j\alpha_k)^{-1/2}\ket{j}_{\mathsf{J}}$ in an index register $\mathsf{J}$ of $\lceil \log_2 k\rceil$ ancilla qubits. Then, apply $\sum_j\ket{j}\!\bra{j}_{\mathsf{J}}\otimes W_j$, and uncompute the register ${\sf J}$. This gives an $(\alpha_k, O(k),0)$-block-encoding for $H_k$ using $O(k)$ reflections~\cite[Lemmas~52--53]{GSLW19}. 

Now, we can apply the QSVT Hamiltonian simulation~\cite[Corollary~60]{GSLW19} to $\ket{1}\bra{1}_{\sf C}\otimes H_k$ and obtain a unitary $\widetilde{U}_\tau$ that approximates
\[
 U_\tau=\ket{0}\!\bra{0}_{\mathsf{C}}\otimes\Id_{\mathsf{SQ}}
       +\ket{1}\!\bra{1}_{\mathsf{C}}\otimes e^{i\tau H_k}.
\]
It uses $O(\alpha_k|\tau|+\log(1/{\varepsilon_{\rm sim}}))$ queries to the block-encoding of $H_k$, and satisfies the following error guarantee:
\[
 B_\tau=(\bra{0^a}_{\mathsf{E}}\otimes\Id_{\mathsf{CSQ}})\widetilde U_\tau
                          (\ket{0^a}_{\mathsf{E}}\otimes\Id_{\mathsf{CSQ}}),
 \qquad \|B_\tau-U_\tau\|_{\op}\le{\varepsilon_{\rm sim}}^2/2.
\]
It also implies a full-state output error guarantee: for any state $\ket{\Psi}_{\sf CSQA}$,
\begin{align*}
 &\Bigl\|(\widetilde U_\tau\otimes\Id_{\mathsf{A}})
          (\ket{0^a}_{\mathsf{E}}\otimes\ket{\Psi}_{\mathsf{CSQA}})
       -\ket{0^a}_{\mathsf{E}}\otimes(U_\tau\otimes\Id_{\mathsf{A}})
                              \ket{\Psi}_{\mathsf{CSQA}}\Bigr\|^2\\*
 = &~ 2-2\operatorname{Re}\bra{\Psi}
                    (U_\tau^\dagger B_\tau\otimes\Id_{\mathsf{A}})\ket{\Psi}\\*
 = &~ 2\operatorname{Re}\bra{\Psi}
          \bigl(U_\tau^\dagger(U_\tau-B_\tau)\otimes\Id_{\mathsf{A}}\bigr)
                    \ket{\Psi}\\*
 \leq &~ 2\|B_\tau-U_\tau\|_{\op}\le{\varepsilon_{\rm sim}}^2,
\end{align*}
where the first step follows by expanding the squared norm and using unitarity, the second step follows from $U_\tau^\dagger U_\tau=\Id$, and the third step uses the operator norm and the approximation error of the Hamiltonian simulation. 

Finally, since each block-encoding query of $H_k$ costs $O(k)$ reflections, in total the number of reflections is $O(k(\alpha_k|\tau|+\log(1/{\varepsilon_{\rm sim}})))$, which proves \eqref{eq:log-phase-cost}.
\end{proof}

\subsection{Quantum estimation procedure}\label{sec:q_est_procedure}
In this subsection, we present the quantum algorithm for estimating the logarithm of the expectation products (\cref{alg:add-log}) and finish the proof of~\cref{lem:log-sum}.

For $n$ distributions and target additive error $\varepsilon$, choose the parameters as follows:
\begin{equation}\label{eq:log-clock-parameters}
 \begin{gathered}
 k=\left\lceil32\log\frac{256n}{\varepsilon}\right\rceil,
 \qquad M=2^{\lceil\log_2(2^{16}n/\varepsilon)\rceil},\\
 \tau_j=\frac{2\pi2^j}{64n},\qquad 0\le j<\log_2M.
 \end{gathered}
\end{equation}

\cref{alg:add-log} uses quantum phase estimation~\cite{kitaev1995quantum}, with each controlled evolution split across the $n$ contributions. The \textsc{Add} calls process the qsamples one at a time. We initialize $\mathsf F$ with a known phase offset of $1/4$ to keep the signed sum away from the wraparound point; the final output subtracts this offset. The caller fixes $n,k,M$ once; every call to \textsc{Add} uses these same parameters.

\begin{algorithm}[H]
\caption{Quantum product estimator}
\label{alg:add-log}
\begin{algorithmic}[1]
\Procedure{Add}{$\mathsf{F}$, $\mathsf{S}$; $R_\pi$, $V_Y$, $\sigma$}\Comment{${\sf S}$ contains an independent qsample $\ket{\pi}$, $\sigma\in \{\pm 1\}$}
 \State Append $\ket0_{\mathsf{Q}}$ and apply $V_Y$ to $\mathsf{SQ}$ to obtain $\ket\phi_{\mathsf{SQ}}$.
 \For{$j=0,\ldots,\log_2M-1$}
  \State Set $\tau_j=2\pi2^j/(64n)$.
  \State Apply the unitary of $\ctrl_{\mathsf{F}_j}(e^{i\sigma\tau_j H_k})$ using fresh ancillas $\mathsf E$ and error $1/(100n\log_2M)$.\Comment{\cref{lem:logarithmic-phase}}
  \State Discard $\mathsf{E}$.
 \EndFor
 \State Discard $\mathsf{S}$ and $\mathsf{Q}$.
\EndProcedure
\Procedure{Estimation}{$n$, $\{{\sf S}_\ell, R_{\pi_\ell}, V_{Y_\ell},\sigma_\ell\}_{\ell\in [n]}$, $\varepsilon$}
\State Set $k$ and $M$ as~\eqref{eq:log-clock-parameters}.
\State Prepare the following state in a fresh register ${\sf F}$:
\begin{align*}
    \frac{1}{\sqrt{M}}\sum_{z=0}^{M-1}e^{2\pi iz/4}\ket z_{\mathsf{F}}
\end{align*}
\For{$\ell=1,\ldots,n$}
    \State \textsc{Add}(${\sf F}$, ${\sf S}_\ell$; $R_{\pi_\ell}$, $V_{Y_\ell}$, $\sigma_\ell$).
\EndFor
\State Apply the inverse QFT to ${\sf F}$
\State Measure ${\sf F}$ and let $Y\in \{0,\dots,M-1\}$ be the outcome
\State \Return $64n(Y/M-1/4)$
\EndProcedure
\end{algorithmic}
\end{algorithm}

\begin{proof}[Proof of \cref{lem:log-sum}]
Write $p_\ell=\E_{\pi_\ell}[Y_\ell]$. We first analyze an ideal case where the Hamiltonian evolutions in~\cref{alg:add-log} are implemented perfectly. On a state $\ket{z}_{\sf F}$ with $z=\sum_jz_j2^j$, the procedure \textsc{Add} acts as
\begin{align*}
 \ket z_{\mathsf{F}}\ket{\phi_\ell}_{\mathsf{SQ}}
 \longmapsto &~ \exp\!\left(i\sigma_\ell g_k(p_\ell)\sum_jz_j\tau_j\right)
 \ket z_{\mathsf{F}}\ket{\phi_\ell}_{\mathsf{SQ}}\\*
 = &~ \exp\left(\frac{2\pi iz\sigma_\ell g_k(p_\ell)}{64n}\right)\ket z_{\mathsf{F}}\ket{\phi_\ell}_{\mathsf{SQ}}.
\end{align*}
By linearity, it only changes the phases in $\mathsf{F}$, leaving $\ket{\phi_\ell}_{\sf SQ}$ unchanged. Thus,  discarding the ${\sf SQ}$ registers does not affect the accumulated phase. After all $n$ applications, the ideal ${\sf F}$ register is
\begin{equation}\label{eq:accumulated-clock}
 \frac1{\sqrt M}\sum_{z=0}^{M-1}e^{2\pi iz\theta}\ket z_{\mathsf{F}},
 \qquad \theta=\frac14+\frac{\mathsf{L}_k}{64n},
 \qquad \mathsf{L}_k=\sum_{\ell=1}^n\sigma_\ell g_k(p_\ell).
\end{equation}
Under the assumption that $p_\ell\geq 1/32$, it holds that $0\le g_k(p_\ell)\le-\log p_\ell\le\log32$. By~\eqref{eq:log-tail} and the choice of $k$ in~\eqref{eq:log-clock-parameters}, we have
\begin{align*}
 \theta\in &~ \left[\frac14-\frac{\log32}{64},\frac14+\frac{\log32}{64}\right]
 \subseteq[1/8,3/8],\\
 |\mathsf{L}_k-\mathsf{L}|
 \leq &~\sum_{\ell=1}^n|g_k(p_\ell)+\log p_\ell|
 \leq \frac{32n}{k+1}e^{-(k+1)/32}\le\frac\varepsilon8.
\end{align*}
Apply the inverse quantum Fourier transform to $\mathsf{F}$ and measure it. Let $Y\in \{0,\dots,M-1\}$ be its measurement outcome. By the standard analysis of QPE, we get that with probability at least $99/100$, $|Y/M-\theta|_\circ \leq 64/M$, where $|a|_\circ:=\min_{k\in \mathbb{Z}}\{|a+k|\}$ is the wrap-around distance. Since $\theta\in [1/8,3/8]$ and $M>512$ as in~\eqref{eq:log-clock-parameters}, this implies that $|Y/M-\theta|\leq 64/M$. Then, let
\begin{align*}
    \widehat{\mathsf{L}}=64n\left(\frac{Y}{M}-\frac{1}{4}\right)
\end{align*}
be the output of~\textsc{Estimation}. Since $M\ge2^{16} n/\varepsilon$, we have
\begin{align*}
 \left|\widehat{\mathsf{L}}-\mathsf{L}\right|\leq  64n\left|\frac{Y}{M}-\theta\right|+|\mathsf{L}_k-\mathsf{L}|
 \leq \frac{4096n}{M}+\frac\varepsilon8
 \le\frac{3\varepsilon}{16}<\varepsilon.
\end{align*}

We simulate the Hamiltonian evolutions in~\cref{alg:add-log} using~\cref{lem:logarithmic-phase}. Each of the $n\log_2M$ evolutions has error at most $(100n\log_2M)^{-1}$. By~\cref{lem:composition}, the trace distance between the simulated and ideal final states of $\mathsf F$ is at most
\begin{align*}
 \sum_{\ell=1}^n\sum_{j=0}^{\log_2M-1}\frac1{100n\log_2M}
 = \frac1{100},
\end{align*}
so simulation decreases the success probability by at most $1/100$. Therefore,
\[
 \Pr\left[\left|\widehat{\mathsf{L}}-\mathsf{L}\right|\le\varepsilon\right]
 \ge\frac{99}{100}-\frac1{100}=\frac{49}{50}.
\]

For the complexity of~\cref{alg:add-log}, each \textsc{Add} has total Hamiltonian evolution time
\[
 T_{\rm total}=\sum_{j=0}^{\log_2M-1}|\tau_j|
 =\frac{2\pi(M-1)}{64n}=O(1/\varepsilon).
\]
By~\eqref{eq:marked-reflection}, each reflection about $\ket\phi_{\mathsf{SQ}}$ uses one call each to $R_\pi$, $V_Y$, and $V_Y^\dagger$. Thus, by~\cref{lem:logarithmic-phase}, the \textsc{Estimation} procedure uses
\begin{align*}
    O(nk(\alpha_k T_{\rm total} + \log(100 n \log_2M))\cdot \log_2 M)=\widetilde{O}(n/\varepsilon)
\end{align*}
calls to $R_\pi$, $V_Y$, and $V_Y^\dagger$ in total, which follows from $k=O(\log (n/\varepsilon))$, $\alpha_k=O(\log k)$, $M=O(n/\varepsilon)$. 

%
\end{proof}

\subsection{Application to ratios of normalization constants}\label{sec:app_bridge_sampling}
We apply the product estimator to ratios of normalization constants along an annealing schedule. 

Suppose we have designed an $m$-stage cooling schedule with distributions $\pi_0,\ldots,\pi_m$ on a common finite set and positive observables $X_i$ satisfying
\begin{equation}\label{eq:bridge-family}
 \pi_{i+1}(x)=\frac{X_i(x)}{\mu_i}\pi_i(x),\qquad
 \mu_i=\E_{\pi_i}[X_i].
\end{equation}
The goal is to estimate
\begin{align*}
    R=\prod_{i=0}^{m-1}\mu_i.
\end{align*}
We also assume the following relative second moment bounds:
\begin{align}\label{eq:sec_moment_assump}
 \E_{\pi_i}\!\left[\left(\frac{X_i}{\mu_i}\right)^2\right]
 =\E_{\pi_i}\!\left[\left(\frac{\pi_{i+1}}{\pi_i}\right)^2\right]\le4
 \qquad\forall 0\le i<m.
\end{align}


The observables $X_i$ need not lie in $[0,1]$, as required by~\cref{lem:log-sum}. We therefore replace each $X_i$ by two bounded observables, using a positive scale $c_i$ to be chosen below:
\begin{equation}\label{eq:bridge-marks}
 \begin{aligned}
 Y_i^+(x)= &~ \frac{X_i(x)}{c_i+X_i(x)},\qquad &
 p_i^+= &~ \E_{\pi_i}[Y_i^+],\\
 Y_i^-(x)= &~ \frac{c_i}{c_i+X_i(x)},\qquad &
 p_i^-= &~ \E_{\pi_{i+1}}[Y_i^-].
 \end{aligned}
\end{equation}
The expectations use different distributions. Using~\eqref{eq:bridge-family}, we have
\begin{align*}
 p_i^-
 = \sum_x\pi_{i+1}(x)\frac{c_i}{c_i+X_i(x)}
 =  \sum_x\pi_i(x)\frac{X_i(x)}{\mu_i}\frac{c_i}{c_i+X_i(x)}
 = \frac{c_i}{\mu_i}p_i^+,
\end{align*}
which gives the bridge-sampling identity~\cite{MengWong96}:
\begin{equation}\label{eq:bridge-ratio-identity}
 p_i^-=(c_i/\mu_i)p_i^+,\qquad
 \mu_i=c_i\frac{p_i^+}{p_i^-}.
\end{equation}
Note that both $p_i^\pm$ lie in $[0,1]$, regardless of the size of $X_i$. 

Taking logarithms of~\eqref{eq:bridge-ratio-identity} gives
\begin{equation}\label{eq:bridge-log-sum}
 \log R=\sum_{i=0}^{m-1}\log c_i-\mathsf{L},\qquad
 \mathsf{L}=\sum_{i=0}^{m-1}(-\log p_i^++\log p_i^-).
\end{equation}
The first term in $\log R$ can be computed classically. To apply \cref{lem:log-sum} to estimate ${\sf L}$, we also need to guarantee that $p_i^\pm\geq 1/32$. 

We claim that taking $c_i\in[\mu_i/2,2\mu_i]$ suffices. Indeed,
\begin{align*}
 \mu_i^2
 = \left(\E_{\pi_i}\!\left[\sqrt{\frac{X_i}{c_i+X_i}}\sqrt{X_i(c_i+X_i)}\right]\right)^2
 \leq  p_i^+\,\E_{\pi_i}[X_i(c_i+X_i)]
 =  p_i^+\bigl(c_i\mu_i+\E_{\pi_i}[X_i^2]\bigr),
\end{align*}
where the first step follows from Cauchy--Schwarz, and the second step uses $\E_{\pi_i}[X_i]=\mu_i$. Hence, by~\eqref{eq:sec_moment_assump}, we have
\begin{align*}
 p_i^+\geq &~ \frac{\mu_i^2}{c_i\mu_i+\E_{\pi_i}[X_i^2]}
 \ge\frac1{c_i/\mu_i+4}\ge\frac16,\\*
 p_i^-= &~ \frac{c_i}{\mu_i}p_i^+
 \ge\frac{c_i/\mu_i}{c_i/\mu_i+4}\ge\frac19.
\end{align*}

The following lemma obtains preliminary constant-factor estimates $c_i$, called pilot estimates, using the median-of-means estimator with $\widetilde O(1)$ samples per stage.

\begin{lemma}[Pilot estimates]\label{lem:pilots}
If $\E_{\pi_i}[(X_i/\mu_i)^2]\le4$ for every $0\le i<m$, then $P=O(\log(m+1))$ independent samples per stage $i$ suffice to find $c_i\in[\mu_i/2,2\mu_i]$ simultaneously for all $i<m$, with probability at least $99/100$. The same conclusion holds when each sampled value $X_i(z)$ is replaced by $\widetilde X_i(z)$ satisfying $|\widetilde X_i(z)-X_i(z)|\le |X_i(z)|/16$.
\end{lemma}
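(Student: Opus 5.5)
The plan is to run a median-of-means estimator separately at each stage and then take a union bound over the $m$ stages. Fix a stage $i$ and normalize by setting $W=X_i/\mu_i$. Then $\E_{\pi_i}[W]=1$ and $\E_{\pi_i}[W^2]\le4$, so $\Var(W)\le3$. Split the $P$ samples into $g$ groups of $b$ samples each. By Chebyshev, a group mean $\bar W$ satisfies $\Pr[|\bar W-1|>1/4]\le 16\cdot3/b$. Taking $b=192$ makes this failure probability at most $1/4$. The median of the $g$ group means then lies in $[3/4,5/4]$ unless at least half the groups fail. By a Chernoff/Hoeffding bound on the number of failed groups, this event has probability $e^{-\Omega(g)}$.

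Choosing $g=C\log(100m+1)$ makes each stage's failure probability at most $1/(100m)$. A union bound over the $m$ stages then gives simultaneous success with probability at least $99/100$. The resulting $P=bg=O(\log(m+1))$. On success, the estimate $c_i=\mu_i\cdot\mathrm{median}\in[3\mu_i/4,5\mu_i/4]\subseteq[\mu_i/2,2\mu_i]$. Note that we never need to know $\mu_i$: the estimator is applied to the raw values $X_i(z)$, and the normalization by $\mu_i$ is only used in the analysis, since median-of-means commutes with positive scaling.

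For the perturbed version, $|\widetilde X_i-X_i|\le X_i/16$ with $X_i>0$ gives $(15/16)X_i\le\widetilde X_i\le(17/16)X_i$ pointwise. Every group mean, and hence the median of the group means, is therefore sandwiched between $15/16$ and $17/16$ times the corresponding unperturbed quantity, because the median is monotone. On the success event the output then lies in $[(15/16)(3/4)\mu_i,(17/16)(5/4)\mu_i]\subseteq[0.70\mu_i,1.33\mu_i]$, which is still inside $[\mu_i/2,2\mu_i]$.

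There is no genuine obstacle here. The only care needed is in the constants: the slack of $1/4$ must absorb the perturbation factor $17/16$, which it does. Independence of samples across stages is not needed for the union bound, only within each stage for the Chebyshev and Chernoff steps.
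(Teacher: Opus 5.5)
Your proposal is correct and follows essentially the same route as the paper: median-of-means at each stage, Chebyshev on each batch (the paper uses batches of size $256$ rather than $192$), Hoeffding on the number of bad batches with about $8\log(100m)$ batches, and a union bound over the $m$ stages. The only difference is cosmetic: the paper absorbs the $1/16$ perturbation batch by batch, showing each good batch's perturbed mean already lies in $[\mu_i/2,2\mu_i]$, whereas you push the perturbation through the median using its monotonicity and positive homogeneity; both arguments are valid.
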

\begin{proof}
Fix any $i\in \{0,\dots,m-1\}$.  Divide the independent samples into batches of size $256$.  For batch $j$, let
\[
 M_j=\frac1{256}\sum_{r=1}^{256}X_i(x_{j,r}),\qquad
 \widetilde M_j=\frac1{256}\sum_{r=1}^{256}\widetilde X_i(x_{j,r}),
\]
where each $x_{j,r}$ is an independent sample from $\pi_i$. By the second-moment hypothesis, we have
\[
 \E[M_j]=\mu_i,\qquad
 \Var(M_j)=\frac{\Var_{\pi_i}(X_i)}{256}
 \le\frac{3\mu_i^2}{256}.
\]
Chebyshev's inequality therefore yields
\[
 \Pr\bigl[|M_j-\mu_i|>\mu_i/4\bigr]
 \le\frac{16\Var(M_j)}{\mu_i^2}
 \le\frac3{16}<\frac14.
\]
On the complementary event, finite evaluation preserves the desired constant-factor interval, because
\[
 \frac{\mu_i}{2}<\frac{15}{16}\frac{3\mu_i}{4}
 \le\widetilde M_j\le\frac{17}{16}\frac{5\mu_i}{4}<2\mu_i.
\]

Take the smallest odd number $t$ of batches with $t\ge8\log(100m)$, and let $c_i$ be their median.  Each batch is good with probability at least $3/4$, independently.  A median can leave $[\mu_i/2,2\mu_i]$ only when at least half of the batches are bad. Thus, by Hoeffding's inequality,
\[
 \Pr[c_i\notin[\mu_i/2,2\mu_i]]
 \le\exp\bigl(-2t(1/2-1/4)^2\bigr)
 =e^{-t/8}\le\frac1{100m}.
\]
The number of samples is $P=256t=O(\log(m+1))$. 

Finally, by a union bound over all $i\in \{0,\dots,m-1\}$, we prove the lemma.
\end{proof}

To implement the qsample reflections in~\cref{alg:add-log} using~\cref{thm:effective-reflection}, we need a polynomial bound on the fourth-moment norm of each reference input.

\begin{remark}[Growth of the fourth-moment norm in~\cref{alg:add-log}]\label{subsec:estimator-moments}
Consider one \textsc{Add} call in the reference computation, with exact qsample reflections and an independent input qsample $\ket\pi$ in $\mathsf S$. Let $\mathsf A$ contain all other registers. After purifying their state, we can write the input as 
\begin{align*}
    \ket{\Psi_0}_{\mathsf{SA}}
 =\ket\pi_{\mathsf S}\otimes\ket w_{\mathsf A},
 \qquad
 \|\ket{\Psi_0}_{\mathsf{SA}}\|_{4,\pi}^4
 =\sum_{x:\pi(x)>0}\frac{\pi(x)^2}{\pi(x)}
 =1.
\end{align*}
The procedure uses the following kinds of operations:
\begin{itemize}
    \item The sample-controlled rotation $V_Y$ is controlled by ${\sf S}$ and acts on ${\sf Q}$ (contained in ${\sf A}$).
    \item $R_\phi$ has one controlled qsample-reflection $R_\pi$ on ${\sf S}$ (see~\eqref{eq:marked-reflection}).
    \item $\Id_{\sf SQ}-2\Pi$ acts only on ${\sf Q}$ (contained in ${\sf A}$).
    \item The block-encoding for $H_k$ and the Hamiltonian simulation combine the above operations with elementary gates on ${\sf A}$.
\end{itemize}
Sample-controlled operations of the form $\sum_x\ket{x}\bra{x}_{\sf S}\otimes U_x$ preserve the fourth-moment norm. Applying~\cref{lem:fourth-growth} to the single target $\pi$ with $B=1$ therefore gives, after $q$ qsample reflections,
\begin{equation}\label{eq:estimator-internal-moment}
 \|\ket{\Psi_q}_{\mathsf{SA}}\|_{4,\pi}
 \le1+2q\le\widetilde{O}(1/\varepsilon),
\end{equation}
where the last step uses the total number of queries to $R_\pi$ in~\textsc{Add}. This count restarts for each reference \textsc{Add} call because it receives an independent qsample. Any discrepancy of the implemented input, including entanglements with other registers, is carried forward by the trace-distance bound in~\cref{lem:composition}.
\end{remark}

\section{Lattice Metropolis walks and coherent preconditioning}\label{sec:finite-walk}

We construct a walk on one fixed lattice, block-encode its discriminant, and prove a uniform warm-start mixing bound. We then show how covariance blocks supply the preconditioners coherently and how the blocks advance along the cooling schedule.

\subsection{Lattice discretization and the random walk}\label{sec:lattice_metropolis}
This section treats a general $D$-dimensional body; in the volume algorithm we apply the construction to $K_0$ with tilt vectors $b_i=a_i e_0$. Let $K\subset\R^D$ be a convex body with known rational data $x_c,r_0,R_0$ satisfying $B(x_c,r_0)\subseteq K\subseteq B(x_c,R_0)$, and let $b_0,\ldots,b_m$ be rational tilt vectors in $\mathbb Q^D$. Choose one dyadic spacing $h>0$ and define the lattice
\[
 \Lambda=h\Z^D,\qquad\Omega_\Lambda=K\cap\Lambda.
\]
The continuous target distribution and its discretization are
\begin{equation}\label{eq:main-finite-logaffine-target}
 \begin{aligned}
 Z_{b_i}= &~ \int_K e^{-\langle b_i,x\rangle}\,dx,&
 \pi_{b_i}(dx)= &~ Z_{b_i}^{-1}e^{-\langle b_i,x\rangle}\one_K(x)\,dx,\\
 Z_{b_i,\Lambda}= &~ h^D\sum_{x\in K\cap h\Z^D}e^{-\langle b_i,x\rangle},&
 \pi_{i,\Lambda}(x)= &~ h^De^{-\langle b_i,x\rangle}/Z_{b_i,\Lambda}.
 \end{aligned}
\end{equation}
The factor $h^D$ is the volume of a lattice cell, so $Z_{b_i,\Lambda}$ approximates the continuous integral $Z_{b_i}$ as the grid becomes finer.

To obtain fast mixing, choose a positive-definite rational preconditioner $A$ approximating the continuous covariance $\Sigma_i=\Cov_{\pi_{b_i}}(X)$. Indeed,
\[
 A\asymp \Sigma_i
 \quad\Longrightarrow\quad
 \Cov_{\pi_{b_i}}(A^{-1/2}X)=A^{-1/2}\Sigma_i A^{-1/2}\asymp \Id.
\]
Here $\asymp$ denotes a two-sided constant-factor comparison in the Loewner order. We implement this preconditioning through the proposal distribution in each Markov chain update, while keeping the lattice and target distribution fixed. For the finite implementation, \Cref{prop:finite-affine-branches} shows that we can compute a rational whitening matrix $W_A$ and its inverse $T_A=W_A^{-1}$ such that
\[
 \|(W_A-A^{-1/2})A^{1/2}\|_{\op}\le1/10,
\]
and for an accurate covariance estimate $A$, $W_AX$ has covariance comparable to $\Id$.

In the local proposal of our \emph{lattice Metropolis walk} (lines~\ref{ln:local_proposal_1}-\ref{ln:local_proposal_2} in~\cref{alg:finite-metropolis}), we will first choose one of $J$ dyadic radii $\delta_j$ with probability $w_j\in[1/(2J),2/J]$. The finite-seed sampler in~\cref{lem:finite-direction-preparation} produces a rational vector $U\in B(0,\delta_j)$ whose rounded displacement approximates the ideal proposal from a uniform ball sample. We set
\[
 v=h\left\lfloor h^{-1}T_AU+\tfrac12\one\right\rfloor.
\]
The proposed displacement is $\sigma v$, where  $\sigma\in\{-1,1\}$ is an independent uniform sign. This makes the finite displacement law exactly symmetric, including at rounding ties. 


\begin{algorithm}[H]
\caption{One step of the lattice Metropolis walk $\sP_{i,\Lambda,A}$}
\label{alg:finite-metropolis}
\begin{algorithmic}[1]
\Require Feasible point $x$; tilt $b_i$; fixed lattice, radii, and a preconditioner $A$.
\State With probability $1/2$, return $x$.\label{ln:holding_branch}
\State Otherwise choose a \textcolor{red}{local} proposal with probability $3/4$, or a \textcolor{blue}{connector} proposal with probability $1/4$.
\If{the proposal is \textcolor{red}{local}}\label{ln:local_proposal_1}
 \State Sample a radius $\delta_j$, use the finite-seed sampler of~\cref{lem:finite-direction-preparation} to obtain a rational $U\in B(0,\delta_j)$, and set $v=h\lfloor h^{-1}T_AU+\frac12\one\rfloor$.\label{ln:local_proposal_2}
\Else
 \State Sample $n$ uniformly from a fixed integer box containing the coefficients of the difference of every two feasible lattice points, and set $v=hn$.\label{ln:connector_proposal}
\EndIf
\State Sample $\sigma\in\{-1,1\}$ uniformly and propose $y=x+\sigma v$.
\State Return $y$ with probability $\one_K(y)\min\{1,e^{-\sigma\langle b_i,v\rangle}\}$, and return $x$ otherwise.\label{ln:metropolis_step}
\end{algorithmic}
\end{algorithm}

Note that the connector proposal (line~\ref{ln:connector_proposal}) ensures that every feasible point has positive probability of moving to any other feasible point, so that the Markov chain is irreducible.

\subsection{Block-encoding of the discriminant}\label{sec:block_encode_discriminant}
The following lemma constructs the block-encoding of the discriminant of the lattice Metropolis walk using a coherently prepared finite random seed.

\begin{lemma}[Random seed-preserving Metropolis block-encoding]
\label{lem:metropolis-block-encoding}
Let $\Omega$ be a finite feasible subset of a sample register, with positive weights proportional to $e^{-\langle b,x\rangle}$. Suppose a finite random seed $s$ with a state-independent implementable law $p_s$ computes a displacement $v_s$, and proposals use an independent uniform sign.  The discriminant $D_\pi$ has a $1$-block-encoding that uses $O(1)$ random-seed preparations, reversible displacement computations, and membership queries.  The block-encoding can represent $D_\pi\oplus0$ on the full encoded sample register.
\end{lemma}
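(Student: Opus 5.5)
We will use the standard Szegedy-type construction: build an isometry $T$ with $T\ket{x}=\ket{x}\otimes\sum_y\sqrt{\sP(x,y)}\ket{y}$, and obtain $D_\pi$ as the overlap $T^\dagger \mathrm{SWAP}\, T$, which is a $1$-block-encoding because $\bra{x}T^\dagger\mathrm{SWAP}T\ket{y}=\sqrt{\sP(x,y)\sP(y,x)}=\sqrt{\pi(x)/\pi(y)}\,\sP(x,y)\cdot\sqrt{\pi(y)/\pi(x)}$-symmetrized, i.e.\ exactly $(D_\pi)_{xy}$ by reversibility. The point to avoid is that $\sum_y\sqrt{\sP(x,y)}\ket{y}$ requires coherent aggregation over seeds (many seeds give the same $y$), which is expensive. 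Instead we keep the seed register: we define the state $\ket{\omega_x}=\sum_{s,\sigma}\sqrt{p_s/2}\,\ket{s,\sigma}\bigl(\sqrt{a(x,s,\sigma)}\ket{1}\ket{x+\sigma v_s}+\sqrt{1-a(x,s,\sigma)}\ket{0}\ket{x}\bigr)$, with $a(x,s,\sigma)=\one_K(x+\sigma v_s)\min\{1,e^{-\sigma\langle b,v_s\rangle}\}$, and a seed-preserving ``swap'' that maps $\ket{x}\ket{s,\sigma,1,y}\mapsto\ket{y}\ket{s,-\sigma,1,x}$ (an involution, since $y=x+\sigma v_s$ iff $x=y-\sigma v_s$) and acts as identity on the reject branch. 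The holding branch is folded into the seed law as the zero displacement.

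The key steps are: (i) prepare the seed superposition $\sum\sqrt{p_s/2}\ket{s,\sigma}$ (one seed preparation, state-independent, so this is a fixed unitary); (ii) reversibly compute $v_s$, form $y=x+\sigma v_s$ in a fresh register, query membership once, compute the acceptance amplitude rotation on the flag qubit, then uncompute $v_s$ and the membership bit (so $O(1)$ queries); on the reject branch uncompute $y$ back to $0$ by recomputing it. Call this unitary $U$, with $U\ket{x}\ket{0}=\ket{x}\ket{\omega_x}$ in the appropriate form. (iii) Compute $\bra{x'}\bra{0}U^\dagger S U\ket{x}\ket{0}$, where $S$ is the swap above. The reject branch contributes $\delta_{xx'}\sum_{s,\sigma}\frac{p_s}{2}(1-a(x,s,\sigma))$, while the accept branch contributes, for each seed pair $(s,\sigma)$ with $x'=x+\sigma v_s$, the amplitude $\frac{p_s}{2}\sqrt{a(x,s,\sigma)a(x',s,-\sigma)}$. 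Here the exact symmetry of the displacement law under the sign is what makes the partner seed $(s,-\sigma)$ have the same weight $p_s/2$. Then $\sqrt{a(x,\cdot)a(x',\cdot)}=\min\{\sqrt{\pi(x')/\pi(x)},\sqrt{\pi(x)/\pi(x')}\}\cdot\one_K$ equals $\sqrt{\pi(x')/\pi(x)}\,a(x,s,\sigma)$. Summing over seeds gives $\sqrt{\pi(x')/\pi(x)}\,\sP(x,x')$ off the diagonal, and $\sP(x,x)$ on it once the accepted zero moves are combined with the rejections. This is exactly $(D_\pi)_{x'x}$. Thus $(\bra{0}\otimes\Id)U^\dagger S U(\ket{0}\otimes\Id)=D_\pi$ on $\Omega$ with no normalization loss.

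(iv) Handle infeasible labels $x\notin\Omega$ by one extra membership query on $x$ that flips an ancilla flagged as ``not zero'', so that these rows have zero block. This gives $D_\pi\oplus0$. Counting, $U$, $U^\dagger$, and $S$ each use $O(1)$ seed preparations, displacement computations, and membership queries.

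The main obstacle is step (iii): verifying that the seed registers match exactly after the swap. It requires the swap to be a genuine unitary involution on the accept subspace, including at rounding ties and for $v_s=0$, and it requires the uncomputation of garbage (such as $v_s$ and the membership bits) to leave no seed- or $x$-dependent residue that would spoil the inner products. We will check this by writing the accept-branch state as a function only of $(s,\sigma,x,y)$ and noting that $S$ maps it to the corresponding state for $(s,-\sigma,y,x)$.
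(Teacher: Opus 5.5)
Your construction is essentially the paper's. Both use an isometry that keeps the seed $(s,\sigma)$ and an acceptance coin, an involution that flips $\sigma$ on accepted moves, and per-seed detailed balance pairing $(x,s,\sigma)$ with $(x+\sigma v_s,s,-\sigma)$. There are two cosmetic differences.
\begin{itemize}
\item You keep the destination in a separate register and take $S$ to be a controlled swap, so $S$ needs no arithmetic. In exchange, $U$ must clear or reset that register on rejection. The paper's $T$ leaves only $\ket{x}$, and its $F$ shifts the sample register in place by $\sigma v_s$.
\item You use a single feasibility flag instead of the paper's two. This works, but you should say why: the return acceptance $a(x',s,-\sigma)$ contains $\one_K(x)$, so there are no cross terms between feasible and infeasible labels.
\end{itemize}

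One formula in step (iii) is inverted. Detailed balance $\pi(x)a(x,s,\sigma)=\pi(x')a(x',s,-\sigma)$ gives
\[
\sqrt{a(x,s,\sigma)a(x',s,-\sigma)}=\sqrt{\pi(x)/\pi(x')}\,a(x,s,\sigma),
\]
not $\sqrt{\pi(x')/\pi(x)}\,a(x,s,\sigma)$. For $r=\pi(x')/\pi(x)<1$, your version gives $r^{3/2}$ instead of the correct $\sqrt r$. The off-diagonal entry is therefore $\sqrt{\pi(x)/\pi(x')}\,\sP(x,x')$, which equals $(D_\pi)_{x'x}$; your expression $\sqrt{\pi(x')/\pi(x)}\,\sP(x,x')$ does not.

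Two implementation details also need fixing.
\begin{itemize}
\item Make the reject-branch content of the destination register consistent. Your definition of $\ket{\omega_x}$ puts $\ket{x}$ there, while step (ii) resets it to $\ket{0}$.
\item The swap requires the destination to live in a copy of the sample register. Connector displacements can leave the encoded box, so out-of-range proposals must be detected and assigned $a=0$. Otherwise a wrapped move $y\neq x+\sigma v_s$ is paired with an acceptance computed from $\sigma\langle b,v_s\rangle$, and detailed balance, hence the identification with $D_\pi$, fails. The paper handles this with modular addition plus explicit rejection of overflow.
\end{itemize}
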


\begin{proof}
Include all branch choices in~\cref{alg:finite-metropolis} in the seed $s$, with $v_s=0$ on a holding seed. For a feasible $x$, let
\[
 a(x,s,\sigma)=\one_\Omega(x+\sigma v_s)
                   \min\{1,e^{-\sigma\langle b,v_s\rangle}\}.
\]
Set $a=0$ for an infeasible $x$ or a proposal outside the encoded coordinate range. 

Let an operator $T$ prepare the random seed, an independent uniform sign, and coherently flip a coin qubit:
\begin{equation}\label{eq:metropolis-isometry}
 T\ket{x}_{\mathsf{S}}=\sum_{s,\sigma}\sqrt{p_s/2}\ket{s,\sigma}_{\sf E}\otimes 
 \Bigl(\sqrt{1-a(x,s,\sigma)}\ket{0}_{\mathsf{E}}
       +\sqrt{a(x,s,\sigma)}\ket{1}_{\mathsf{E}}\Bigr)\otimes\ket{x}_{\mathsf{S}}.
\end{equation}
The operator $T$ may use an additional arithmetic workspace in the ${\sf E}$ register, which will be uncomputed after the coin flip. This is an isometry because the square amplitudes of the random seed sum to one.

Let an involution $F$ leave rejected proposals fixed and reverse accepted proposals:
\begin{equation}\label{eq:metropolis-involution}
 \begin{aligned}
 F(\ket{s,\sigma,0}_{\mathsf{E}}\otimes\ket{x}_{\mathsf{S}})= &~ \ket{s,\sigma,0}_{\mathsf{E}}\otimes\ket{x}_{\mathsf{S}},\\
 F(\ket{s,\sigma,1}_{\mathsf{E}}\otimes\ket{x}_{\mathsf{S}})= &~ \ket{s,-\sigma,1}_{\mathsf{E}}\otimes\ket{x+\sigma v_s}_{\mathsf{S}}.
 \end{aligned}
\end{equation}
Addition is modular on the full register. The overflow moves will be rejected since they have $a=0$. Thus, every accepted move coincides with its intended physical displacement, while sign reversal ensures $F^2=\Id$ for every encoded input.

For distinct feasible $x,y$, we have
\begin{align}\label{eq:metropolis-compressed-entry}
 \langle y|T^\dagger FT|x\rangle
 = &~ \sum_{s,\sigma:\,y=x+\sigma v_s}\frac{p_s}{2}
          \sqrt{a(x,s,\sigma)a(y,s,-\sigma)}\notag\\*
 = &~ \sqrt{\frac{\pi(x)}{\pi(y)}}
     \sum_{s,\sigma:\,y=x+\sigma v_s}\frac{p_s}{2}a(x,s,\sigma)\notag\\*
 = &~ \sqrt{\frac{\pi(x)}{\pi(y)}}\sP(x,y)=D_\pi(y,x),
\end{align}
where the second equality follows from $\pi(x)a(x,s,\sigma)=\pi(y)a(y,s,-\sigma)$, and the last equality from the definition of the discriminant and reversibility. 

For a feasible diagonal entry $x$, we have
\begin{align*}
    \langle x|T^\dagger FT|x\rangle= &~ \sum_{s,\sigma}\frac{p_s}{2} (1-a(x,s,\sigma))+\sum_{s,\sigma:v_s=0} \frac{p_s}{2} a(x,s,\sigma)\\
    = &~ \sum_{s,\sigma}\frac{p_s}{2} (1-a(x,s,\sigma)) + \sum_{s:v_s=0} p_s\\
    = &~ {\cal P}(x,x)=D_\pi(x,x),
\end{align*}
where the second step follows from $a(x,s,\sigma)=1$ when $v_s=0$.

Thus, we get that $T^\dagger F T = D_\pi$ on the feasible subspace. If $x$ is infeasible, all proposals will be rejected, and $FT\ket{x}=T\ket{x}$. Hence, for the unitary $U_T$ that implements $T$, i.e., $U_T(\ket{0}_{\sf E}\otimes \ket{x}_{\sf S})=T\ket{x}$, we have
\begin{align*}
(\bra0_{\mathsf{E}}\otimes\Id)U_T^\dagger FU_T(\ket0_{\mathsf{E}}\otimes\Id)
 =D_\pi\oplus\Id_{\rm bad}.    
\end{align*}
Let $P_\Omega=\sum_{x\in\Omega}\ket{x}\!\bra{x}$ be the projector onto feasible sample labels. A membership query implements the unitary $V_\Omega$ given by
\[
 V_\Omega\ket{f}\ket{x}
 =\ket{f\oplus(1-\one_\Omega(x))}\ket{x},
 \qquad
 (\bra0\otimes\Id)V_\Omega(\ket0\otimes\Id)=P_\Omega.
\]
Apply $V_\Omega$ before and after $U_T^\dagger F U_T$, using a separate flag qubit for each application. Projecting both flags and $\mathsf E$ onto zero gives
\[
 P_\Omega(D_\pi\oplus\Id_{\rm bad})P_\Omega
 =D_\pi\oplus0_{\rm bad}.
\]
Thus the resulting unitary is a $1$-block-encoding of the zero-extended discriminant.

The total number of membership oracle queries is $O(1)$, and the numerical implementation error is bounded in~\cref{lem:acceptance-arithmetic}.

\end{proof}


\subsection{Mixing and the cost of qsample-reflection}\label{sec:mixing_time_guarantee}
The following theorem establishes that we can choose a lattice spacing $h$ that depends inverse-polynomially on the input parameters such that the lattice Metropolis walk mixes fast and the normalization constant in~\eqref{eq:main-finite-logaffine-target} has small discretization errors on this lattice. Its proof is deferred to~\cref{app:finite-state}.

\begin{theorem}[One finite walk for each covariance branch]
\label{thm:finite-metropolis}
Fix the body and the finite list of tilts in \eqref{eq:main-finite-logaffine-target}, rational bounds $0<\lambda_-\le\lambda_+$, $0<\psi\le\psi_D$, and accuracies $0<\eta,\zeta<1/2$. Fix also a covariance-estimation failure budget $0<\varepsilon_{\Cov}<1/2$ and set $L=\lceil CD(1+\log(4/\varepsilon_{\Cov}))^2\rceil$, where $C$ is a sufficiently large universal constant. There is a common dyadic spacing $h$ and common radius, seed-length, and precision bounds for which the following hold:
\begin{enumerate}[label=(\roman*)]
\item For every rational $A$ with $\lambda_-\Id\preceq A\preceq\lambda_+\Id$, the lattice Metropolis walk in~\cref{alg:finite-metropolis} is lazy, positive, irreducible, and reversible for $\pi_{i,\Lambda}$.
\item If also $A\asymp\Cov_{\pi_{b_i}}(X)$ with fixed comparison constants, then $\tau_{\mix,2}(\eta,\pi_{i,\Lambda};\sP_{i,\Lambda,A}) \le\widetilde O(D^2/\psi^2)$, where $\sP_{i,\Lambda,A}$ denotes the Markov kernel of the walk.
\item Its zero-extended discriminant has a $1$-block-encoding.  For any requested error $0<\varepsilon<1/2$, an implementation is within operator-norm error $\varepsilon$ of an exact block-encoding unitary, using $O(1)$ coherent membership queries.
\item Every listed normalization constant satisfies $|Z_{b_i,\Lambda}/Z_{b_i}-1|\le\zeta$.
\item For every listed tilt, let $\Sigma_i=\Cov_{\pi_{b_i}}(X)$. The empirical covariance $\widehat\Sigma_i$ of $L$ independent samples from $\pi_{i,\Lambda}$ satisfies
\[
 \Pr\!\left[\tfrac14\Sigma_i\preceq\widehat\Sigma_i\preceq4\Sigma_i\right]\ge1-\varepsilon_{\Cov}.
\]
\end{enumerate}
After rescaling the inner radius of the convex body to one, $1/h$ is at most polynomial in the geometric bounds, largest tilt norm, regularization bounds, and their reciprocals, $1/\eta$, $1/\zeta$, $1/\varepsilon_{\Cov}$, and $1/\psi$.
\end{theorem}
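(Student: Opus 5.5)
The plan is to treat the five items separately, with one common spacing $h$ chosen at the end as the minimum of the spacings each item requires. All requirements will be inverse-polynomial.

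\emph{Items (i) and (iii).} These are structural. For (i), the displacement law is independent of $x$, and it is exactly symmetric because of the uniform sign. Hence the Metropolis rule in line~\ref{ln:metropolis_step} satisfies detailed balance for $\pi_{i,\Lambda}$. The holding branch gives laziness, and therefore positivity. The connector branch puts positive mass on every difference of feasible points, which gives irreducibility. For (iii), we invoke \cref{lem:metropolis-block-encoding}. The only additional work is to bound, in operator norm, the numerical error of the acceptance rotation and of the finite-seed sampler by the requested $\varepsilon$, choosing a polynomial precision.

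\emph{Item (iv).} This is a standard Riemann-sum estimate. Every cell $x+[-h/2,h/2]^D$ with $x\in\Omega_\Lambda$ either lies in $K$ or meets $\partial K$, so the symmetric difference between $K$ and the union of cells lies in the $\sqrt D h$-neighborhood of $\partial K$. Since $B(x_c,r_0)\subseteq K$, this neighborhood has volume at most $O(\sqrt D h/r_0)\vol(K)$ up to a factor $(1+\sqrt D h/r_0)^D$. The weight $e^{-\langle b,x\rangle}$ varies by a factor $e^{\|b\|\sqrt D h}$ within a cell, and it is bounded above and below on $K$ up to $e^{2\|b\|R_0}$. Therefore $h=\zeta\cdot\poly^{-1}$ suffices.

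\emph{Item (v).} Compare the lattice law with the continuous one. By the same cell argument, the total-variation distance satisfies $\dtv(\pi_{i,\Lambda},\pi_{b_i})\le\poly\cdot h$. We also compare the first two moments, using $K\subseteq B(x_c,R_0)$. Then apply the known log-concave empirical covariance concentration for $L=\widetilde O(D)$ samples, with failure probability $\varepsilon_{\Cov}/2$, at the comparison constant $2$. To apply it to the lattice law, couple the $L$ lattice samples with continuous ones, which fails with probability $L\,\dtv\le\varepsilon_{\Cov}/2$. The remaining gap of $\pm h$ per coordinate between coupled samples is absorbed because $\Sigma_i\succeq\lambda_{\min}\Id$, and $\lambda_{\min}$ is bounded below polynomially: $K$ contains a ball and the tilts are bounded.

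\emph{Item (ii).} This is the main obstacle. Following the overview, we bound the $s$-conductance with $s=\eta/4$ and apply \cref{lem:warm-mixing-from-conductance}. Work in whitened coordinates $u=W_Ax$, where the target is near-isotropic log-concave. The continuous multiscale ball walk analysis of \cite[Appendix~F]{Zhang26HitRun} provides two ingredients, up to a set of stationary mass at most $s/2$. The first is a one-step overlap statement: for points $u,u'$ at distance at most $c\delta_j/\sqrt D$ at a good scale $j$, the two proposals have overlap at least a constant. The second is the isoperimetric separator bound with constant $\psi$.

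The discretization step has three parts. First, rounding commutes with lattice translation, so the ideal rounded proposals from $x$ and $y$ are images under the single map $\mathcal R_h\circ T_A$ of uniform balls; consequently rounding does not decrease their overlap. Second, acceptance is preserved. For good $U$ we have $x+2T_AU\in K$ with constant weight ratio. Away from an $O(\sqrt D h)$ boundary layer we also have $x+2e\in K$, since $\|e\|\le\sqrt D h$. Convexity then gives $x+\mathcal R_h(T_AU)\in K$, and the weight ratio changes by at most a factor $e^{\|b\|\sqrt D h}$. Third, the finite-seed sampler perturbs proposals by a TV amount that can be made polynomially small.

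We then transfer the continuous isoperimetry to $\Omega_\Lambda$ by using the cell-union coupling of item (iv). The set must be thickened by $\sqrt D h$, which costs only a negligible boundary mass when $h$ is small compared with the smallest good scale. The conclusion is $\Phi_s\gtrsim\psi/(JD\log(1/s))$, and hence $t=\widetilde O(D^2/\psi^2)$. What I would first verify is that the exceptional sets (boundary layer, bad scales, bad seeds) can be charged to $s$ uniformly over all $A$ in the allowed range and all tilts. This uniformity is what fixes the polynomial lower bound on $1/h$.
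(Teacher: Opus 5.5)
Your treatment of (i), (iii), (v) and of the mixing argument in (ii) follows the paper's route closely. That includes the whitened coordinates, rounding that commutes with lattice translation plus data processing for overlap, the convexity identity $z+v=\tfrac12(z+2T_AU)+\tfrac12(z+2e)$ for acceptance, cell unions to transfer the intrinsic separator, and a jitter coupling for the covariance.

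The genuine gap is in (iv), on which both (ii) and (v) rest. You bound the weighted mass of the boundary layer by comparing $e^{-\langle b,x\rangle}$ with its extreme values on $K$, which loses a factor $e^{2\|b\|_2R_0}$. That only yields $h\lesssim\zeta r_0e^{-2\|b\|_2R_0}/\poly(D)$. This is not polynomial in the tilt norm and outer radius, so the theorem's final assertion about $1/h$ does not follow. The loss is real in the application: the pencil tilts have $\|b\|_2\le2a_0=8D$ and $R_0=3\kappa$, so your bound forces $h=e^{-\Theta(D\kappa)}$. If the radius list is anchored at the mesh scale as in \eqref{eq:finite-scales}, then $J=\Theta(\log(1/h))$ becomes polynomial, and the $J^2$ factor in (ii) can no longer be hidden in $\widetilde O$. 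Your justification of a polynomial covariance floor in (v), that $K$ contains a ball and the tilts are bounded, has the same defect. The inner ball may carry only an exponentially small fraction of the mass, so a naive restriction argument loses exponentially.

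The missing idea is to compare weighted masses through dilations rather than pointwise weight bounds. Take $t=\varrho_\Lambda/r_0$ and $K_t^\pm=x_c+(1\pm t)(K-x_c)$. The inner ball and convexity give $K_t^-\subseteq U_\Lambda\subseteq K_t^+$ for the union $U_\Lambda$ of cells centered in $\Omega_\Lambda$. The substitution $y=x_c+(1\pm t)(x-x_c)$ changes the weight by at most a factor $e^{t\|b\|_2R_0}$ and the volume by $(1\pm t)^D$. This gives $|Z_\Lambda-Z|\lesssim\bigl((D+\|b\|_2R_0)t+\|b\|_2\varrho_\Lambda\bigr)Z$, as in \cref{lem:full-grid-quadrature}.

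The same estimate bounds the total-variation distance for the jittered law of $X_\Lambda+E$. Your statement $\dtv(\pi_{i,\Lambda},\pi_{b_i})\le\poly\cdot h$ has to be read this way, since $\pi_{i,\Lambda}$ is singular with respect to $\pi_{b_i}$.

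For the covariance floor, fix a unit $u$ oriented so that $\E\langle u,X-x_c\rangle\ge0$. Contract $K$ by the factor $1-s$ toward $x_c-r_0u$, with $s\asymp1/(D+\|b\|_2(R_0+r_0))$. This map sends $K$ into itself and changes the density by at most a constant factor. It yields $\lambda_{\min}(\Sigma)\gtrsim r_0^2/(D+\|b\|_2(R_0+r_0))^2$, as in \cref{lem:log-affine-covariance-floor}.

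A smaller quantitative point concerns the exceptional mass from the boundary layer and from small local radii. It cannot simply be charged to $s$: in the conductance argument it must be absorbed by the separator mass. So it has to be at most $c\,s\psi/(D(1+\log(4/s)))$, as in \eqref{eq:finite-error-budget}. This is still inverse-polynomial, so once (iv) is repaired your common inverse-polynomial choice of $h$ goes through.
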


As a corollary, we can efficiently implement the corresponding qsample-reflection.

\begin{corollary}[Lattice qsample reflections]
\label{cor:finite-reflection-cost-main}
On a good branch $A\asymp\Cov_{\pi_{b_i}}(X)$, for every $M_4\ge1$ and $0<{\varepsilon_{\rm ref}}<1/2$, a common sufficiently fine lattice permits one circuit approximating $R_{\pi_{i,\Lambda}}$ to error ${\varepsilon_{\rm ref}}$ on every joint state $\ket{\Psi}_{\mathsf{SA}}$ supported on the feasible sample space and satisfying $\|\ket{\Psi}_{\mathsf{SA}}\|_{4,\pi_{i,\Lambda}}\le M_4$. The error is measured as in \eqref{eq:effective-reflection}, including fresh workspace $\mathsf{E}$ and leaving $\mathsf{A}$ untouched. Each call, including a controlled one, uses $\widetilde O(D/\psi)$ membership queries.
\end{corollary}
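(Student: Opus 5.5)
The plan is to combine \cref{thm:finite-metropolis} with \cref{thm:effective-reflection}, choosing the mixing accuracy $\eta$ in terms of $M_4$ and ${\varepsilon_{\rm ref}}$, and then checking that the only price is polylogarithmic. First fix the target accuracies. By \cref{thm:effective-reflection}, the reflection error on any admissible input is at most $C\eta^{1/4}M_4+\zeta'+\varepsilon_{\rm BE}$-type terms. So we set
\[
 \eta=\Bigl(\frac{{\varepsilon_{\rm ref}}}{4CM_4}\Bigr)^4,
 \qquad
 \zeta'={\varepsilon_{\rm ref}}/4 .
\]
Here $\zeta'$ is the QSVT discrimination accuracy, and a further budget of ${\varepsilon_{\rm ref}}/4$ is reserved for block-encoding implementation error. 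We then invoke \cref{thm:finite-metropolis} with this $\eta$ to fix a common dyadic spacing $h$. Since $1/h$ is polynomial in $1/\eta$, and hence in $M_4/{\varepsilon_{\rm ref}}$, the lattice is fine enough for every good branch simultaneously.

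Next, on a good branch $A\asymp\Cov_{\pi_{b_i}}(X)$, item (i) of \cref{thm:finite-metropolis} gives a positive, ergodic, reversible chain for $\pi_{i,\Lambda}$. Item (ii) gives $t:=\tau_{\mix,2}(\eta,\pi_{i,\Lambda};\sP_{i,\Lambda,A})=\widetilde O(D^2/\psi^2)$, where the hidden factor is $\log^3(4/\eta)=\mathrm{polylog}(M_4/{\varepsilon_{\rm ref}})$. Item (iii) supplies a $1$-block-encoding of $D_\pi\oplus0$ with $O(1)$ membership queries. Applying \cref{thm:effective-reflection} to the exact block-encoding gives a circuit using $O(\sqrt t\log(1/\zeta'))=\widetilde O(D/\psi)$ block-encoding calls. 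Each call, including the controlled versions, costs $O(1)$ queries, since a controlled query is two ordinary calls. The total is therefore $\widetilde O(D/\psi)$ membership queries. The zero-extension issue is handled by the remark after \cref{thm:effective-reflection}. The added zero eigenvalues lie in the fast subspace, so the circuit approximates $R_\pi\oplus(-\Id_{\rm bad})$, which agrees with $R_{\pi_{i,\Lambda}}$ on inputs supported on the feasible space. The controlled case follows from the last sentence of \cref{thm:effective-reflection}, because the control qubit is absorbed into $\mathsf A$ without increasing the fourth-moment norm.

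Finally, we replace the exact block-encoding by its implementation. With $N=O(\sqrt t\log(1/\zeta'))$ calls, each within operator norm $\varepsilon_{\rm BE}$, a telescoping bound gives total deviation at most $N\varepsilon_{\rm BE}$ on every input. Choosing $\varepsilon_{\rm BE}={\varepsilon_{\rm ref}}/(4N)$ costs only bit precision, not queries. The step needing most care is making the lattice choice uniform. One spacing $h$ must work for every admissible $A$, every tilt, and the chosen $\eta$, and that $\eta$ depends on $M_4$. Here we rely on \cref{thm:finite-metropolis} quantifying over all $A$ in a fixed Loewner window with a common $h$. We also need $M_4$ to be fixed in advance, which is why the statement quantifies $h$ after $M_4$ and ${\varepsilon_{\rm ref}}$.
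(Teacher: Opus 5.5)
Your proposal is correct and follows the paper's proof. You fix $\eta\propto({\varepsilon_{\rm ref}}/M_4)^4$ to choose one common lattice via \cref{thm:finite-metropolis}, apply \cref{thm:effective-reflection} with filter error ${\varepsilon_{\rm ref}}/4$ to the exact $O(1)$-query block-encoding, and telescope a per-query implementation error of ${\varepsilon_{\rm ref}}/(4N_{\rm be})$ over the $N_{\rm be}=O(\sqrt t\log(1/{\varepsilon_{\rm ref}}))=\widetilde O(D/\psi)$ calls; the paper also reserves ${\varepsilon_{\rm ref}}/4$ for the remaining elementary gates, which your budget leaves room for implicitly.
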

\begin{proof}
Choose $\eta=({\varepsilon_{\rm ref}}/(CM_4))^4$, with $C$ sufficiently large, and use \cref{thm:finite-metropolis} to fix one lattice at this accuracy. On each good branch,
\[
 t=\widetilde O(D^2/\psi^2),\qquad
 N_{\rm be}=O(\sqrt t\log(1/{\varepsilon_{\rm ref}}))=\widetilde O(D/\psi).
\]
Apply \cref{thm:effective-reflection} with an exact block-encoding unitary $U$ and filter error ${\varepsilon_{\rm ref}}/4$. Its clean-input error is at most
\begin{align*}
    e_{\rm ideal}\leq C_0\eta^{1/4}M_4+{\varepsilon_{\rm ref}}/4\le{\varepsilon_{\rm ref}}/2.
\end{align*}

Next implement $U$ by $\widetilde U$ with $\|\widetilde U-U\|_{\op}\le{\varepsilon_{\rm ref}}/(4N_{\rm be})$. Inverses and controlled versions have the same operator error, so telescoping over all the block-encoding queries in the reflection circuit gives that
\[
 e_{\rm be}
 \le N_{\rm be}\|\widetilde U-U\|_{\op}\le{\varepsilon_{\rm ref}}/4.
\]
We can suppose that the remaining elementary gates have total error at most ${\varepsilon_{\rm ref}}/4$. On any promised joint input, the total error is therefore
\begin{align*}
 e_{\rm ideal} + e_{\rm be} +{\varepsilon_{\rm ref}}/4 \leq {\varepsilon_{\rm ref}}.
\end{align*}
Each block-encoding query uses $O(1)$ membership queries by \cref{lem:metropolis-block-encoding}. The total query complexity is
\begin{align*}
    N_{\rm be}\cdot O(1) = \widetilde O(D/\psi).
\end{align*}
\end{proof}

\subsection{Coherent covariance estimation}\label{sec:coherent-rounding}
We use the interlaced-rounding construction of~\cite{CCHLW23}. The reflection error bound depends on the input state. We bound this error on reference inputs with independent covariance blocks and propagate it using~\cref{lem:composition}.

Let $\pi_0,\ldots,\pi_m$ be finite laws with constant adjacent qsample overlap.  Let $\Sigma_i$ be positive-definite reference matrices with $\Sigma_i\asymp\Sigma_{i+1}$.  In the application, $\Sigma_i$ is the covariance of the continuous log-affine target and $\pi_i$ is its finite lattice law.

For a tuple $\mathbf x=(x_1,\ldots,x_L)$, define
\[
 \overline x=\frac1L\sum_{\ell=1}^Lx_\ell,\qquad
 \widehat\Sigma(\mathbf x)
 =\frac1L\sum_{\ell=1}^L(x_\ell-\overline x)(x_\ell-\overline x)^\top.
\]
Fix a covariance-estimation failure budget $0<\varepsilon_{\Cov}<1/2$, and assume that independent samples from each finite law satisfy
\begin{equation}\label{eq:good-covariance}
 \Pr_{\mathbf x\sim\pi_i^{\otimes L}}
 \left[\tfrac14\Sigma_i\preceq\widehat\Sigma(\mathbf x)
                         \preceq4\Sigma_i\right]\ge1-{\varepsilon_{\Cov}}.
\end{equation}
Let $G_i$ be the set of tuples ${\bf x}$ satisfying this covariance approximation. For the reference qsample block,
\[
 \ket{\pi_i^{\otimes L}}=\sum_{\mathbf x=(x_1,\dots,x_L)}\prod_{j=1}^L\sqrt{\pi_i(x_j)}\ket{\mathbf x},
 \qquad
 \sum_{\mathbf x\notin G_i}\pi_i^{\otimes L}(\mathbf x)\le{\varepsilon_{\Cov}}.
\]
Thus the classical failure probability is exactly the bad-label weight in \cref{lem:coherent-reflection-error}. \Cref{lem:covariance-confidence,lem:finite-block-transfer} establish \eqref{eq:good-covariance} for the lattice targets used later.

The preconditioner must be invertible on every encoded tuple. Choose a known rational ridge $0<\lambda_-\le\min_i\lambda_{\min}(\Sigma_i)$ and set
\begin{equation}\label{eq:ridge-covariance}
 A(\mathbf x)=\widehat\Sigma(\mathbf x)+\lambda_-\Id.
\end{equation}
The empirical covariance is positive semidefinite even on inaccurate or infeasible tuples, so $A(\mathbf x)\succeq\lambda_-\Id$ everywhere. If all encoded positions lie in $B(x_c,R_{\rm enc})$, then for any unit vector $u$,
\begin{align*}
 u^\top\widehat\Sigma(\mathbf x)u
 = &~ \frac1L\sum_\ell\langle u,x_\ell-x_c\rangle^2
                  -\langle u,\overline x-x_c\rangle^2\\*
 \leq &~ \frac1L\sum_\ell\|x_\ell-x_c\|_2^2
 \le R_{\rm enc}^2.
\end{align*}
Hence, we obtain the following branch-independent bounds:
\[
 \lambda_-\Id\preceq A(\mathbf x)
 \preceq(R_{\rm enc}^2+\lambda_-)\Id.
\]

For a good tuple from stage $i$, the ridge satisfies $\lambda_-\Id\preceq\Sigma_i$, so
\[
 \tfrac14\Sigma_i\preceq A(\mathbf x)
 \preceq4\Sigma_i+\lambda_-\Id\preceq5\Sigma_i.
\]
Together with $\Sigma_i\asymp\Sigma_{i+1}$, this gives $A(\mathbf x)\asymp\Sigma_j$ for both $j=i$ and $j=i+1$. 

We now use a block of samples without measuring it. Let $\mathsf{B}$ contain the control tuple, $\mathsf{S}$ be the sample register, $\mathsf{A}$ the existing auxiliary registers, and $\mathsf{E}$ the fresh implementation workspace. For each tuple $\mathbf x$, let $F_{j,\mathbf x}$ approximate the reflection $R_{\pi_j}$ on $\sf S$ by computing the preconditioner $A(\mathbf x)$, applying the reflection circuit (\cref{thm:effective-reflection}) on $\mathsf{E},\mathsf{S}$, and uncomputing $A({\bf x})$. The resulting controlled operation is
\[
 F_j=\sum_{\mathbf x}\ket{\mathbf x}\!\bra{\mathbf x}_{\mathsf{B}}
                              \otimes F_{j,\mathbf x}.
\]
The tuple label is unchanged, so the matrix computation can be reversed even when the target becomes entangled with it. For a good tuple ${\bf x}\in G_i$, suppose that
\[
 \bigl\|[(F_{j,\mathbf x}-\Id_{\mathsf{E}}\otimes R_{\pi_j})\otimes\Id_{\mathsf{A}}](\ket{0}_{\mathsf{E}}\otimes\ket{\Psi}_{\mathsf{SA}})\bigr\|\le\xi.
\]
Applying \cref{lem:coherent-reflection-error} gives
\begin{align*}
 \|\bigl[(F_j-\Id_{\mathsf{B}}\otimes\Id_{\mathsf{E}}\otimes R_{\pi_j})\otimes\Id_{\mathsf{A}}\bigr](\ket{\pi_i^{\otimes L}}_{\mathsf{B}}\otimes\ket{0}_{\mathsf{E}}\otimes\ket{\Psi}_{\mathsf{SA}})\Bigr\|\le\xi+2\sqrt{\varepsilon_{\Cov}},
\end{align*}
where ${\varepsilon_{\Cov}}$ is defined as in~\eqref{eq:good-covariance}.

\subsection{Alternating conversions of qsamples}\label{subsec:rounding-conversion}
The error bound for $F_j$, for $j=i$ or $j=i+1$, uses a reference control block containing $L$ independent copies of $\ket{\pi_i}$. We maintain two covariance blocks $\mathsf B_1,\mathsf B_2$ that alternate between controlling the conversion and being converted. Their reference states evolve as follows:
\begin{align*}
 \ket{\pi_i^{\otimes L}}_{\mathsf{B}_1}\otimes\ket{\pi_i^{\otimes L}}_{\mathsf{B}_2}
 &\xrightarrow{\ \mathsf{B}_1\text{ controls}\ }
 \ket{\pi_i^{\otimes L}}_{\mathsf{B}_1}\otimes\ket{\pi_{i+1}^{\otimes L}}_{\mathsf{B}_2}\\*
 &\xrightarrow{\ \mathsf{B}_2\text{ controls}\ }
 \ket{\pi_{i+1}^{\otimes L}}_{\mathsf{B}_1}\otimes\ket{\pi_{i+1}^{\otimes L}}_{\mathsf{B}_2}.
\end{align*}
First keep $\mathsf{B}_1$ fixed while converting $\mathsf{B}_2$ and all unused copies needed at later phases. Then use the updated $\mathsf{B}_2$ to convert $\mathsf{B}_1$. In the reference computation, estimation acts only on its reserved copies and the shared phase register $\mathsf F$.

\begin{lemma}[Qsample conversion~{\cite{YoderLowChuang14}}]\label{lem:transport}
Let $\pi$ and $\nu$ be probability distributions on a common finite set $\Omega$, and suppose that $|\langle\pi|\nu\rangle|^2\ge f_0$ for a known constant $0<f_0\le1$. Given exact controlled access to $R_\pi=2\ket{\pi}\!\bra{\pi}-\Id$ and $R_\nu=2\ket{\nu}\!\bra{\nu}-\Id$, for every $0<\varepsilon<1/2$, there is a circuit that maps a supplied qsample $\ket{\pi}_{\mathsf S}$ to a state $\ket{\widetilde\nu}_{\mathsf S}$ with trace-distance error at most $\varepsilon$ using $O(f_0^{-1/2}\log(1/\varepsilon))$ controlled qsample reflections.
\end{lemma}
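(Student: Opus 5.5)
We will prove \cref{lem:transport} by fixed-point amplitude amplification in the two-dimensional subspace spanned by $\ket\pi$ and $\ket\nu$. This follows the construction of~\cite{YoderLowChuang14}, with the marked state $\ket\nu$ and the initial state $\ket\pi$.

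First, we turn the two reflections into generalized selective phases with arbitrary angles. Define
\[
 S_\pi(\alpha)=\Id-(1-e^{i\alpha})\ket\pi\!\bra\pi,
 \qquad
 S_\nu(\beta)=\Id-(1-e^{i\beta})\ket\nu\!\bra\nu .
\]
To implement $S_\pi(\alpha)$, we use an ancilla qubit and two controlled calls to $R_\pi$. The controlled reflection, conjugated by Hadamards, computes the flag ``in $\ket\pi$'' into the ancilla; we then apply a single-qubit phase and uncompute with a second controlled $R_\pi$. This is the standard phase-kickback construction, equivalent to the block-encoding $W_R$ of $(\Id+R)/2$ used in \cref{lem:logarithmic-phase}. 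The same construction gives $S_\nu(\beta)$ from $R_\nu$, so each $S$ costs $O(1)$ controlled reflections.

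Next, we set $\lambda=|\langle\pi|\nu\rangle|^2\ge f_0$ and form the Yoder--Low--Chuang sequence
\[
 U=S_\pi(\alpha_\ell)S_\nu(\beta_\ell)\cdots S_\pi(\alpha_1)S_\nu(\beta_1).
\]
The angles are the closed-form Chebyshev-based choices of~\cite{YoderLowChuang14}, taken with parameters $\delta=\varepsilon$ and target overlap lower bound $f_0$. Their theorem states that $|\langle\nu|U|\pi\rangle|^2\ge1-\delta^2$ for every $\lambda\ge f_0$ simultaneously. This requires length $2\ell+1=O(f_0^{-1/2}\log(2/\delta))$, and the angles depend only on $f_0$ and $\delta$, not on the unknown $\lambda$. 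Because the input is $\ket\pi$ and every operation preserves $\Span\{\ket\pi,\ket\nu\}$, the analysis reduces to a $2\times2$ computation on this invariant subspace, which is exactly the setting of their result.

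Finally, we set $\ket{\widetilde\nu}=U\ket\pi$. For pure states, the trace distance is $\sqrt{1-|\langle\nu|\widetilde\nu\rangle|^2}\le\delta=\varepsilon$. Any relative phase on $\ket\nu$ is irrelevant, since trace distance is measured between density operators. Counting two controlled reflections per selective phase gives $O(f_0^{-1/2}\log(1/\varepsilon))$ controlled reflections in total.

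The main obstacle is ensuring that the fixed-point angle sequence is \emph{oblivious} to the exact value of $\lambda$: every $\lambda\in[f_0,1]$ must be handled, including $\lambda$ close to $1$, where ordinary Grover iteration would overshoot. We will cite the bound of~\cite{YoderLowChuang14}, which is stated in terms of the Chebyshev polynomial $T_{L}(1/\gamma)$ and gives monotone success probability at least $1-\delta^2$ above the threshold, rather than rederive it.
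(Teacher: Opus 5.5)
Your proposal is correct and follows essentially the same route as the paper's proof. Both use Yoder--Low--Chuang fixed-point amplitude amplification on the invariant plane $\Span\{\ket\pi,\ket\nu\}$, build each selective phase from two controlled reflections (Hadamard-conjugated controlled reflection, a phase on the ancilla's zero branch, then uncomputation), and conclude with the pure-state identity $\dtr=\sqrt{1-|\langle\nu|\widetilde\nu\rangle|^2}\le\varepsilon$.
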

\begin{proof}
The fixed-point amplitude amplification~\cite{YoderLowChuang14} uses $O(f_0^{-1/2}\log(1/\varepsilon))$ selective phases (as in~\eqref{eq:selected_phase}) and returns a state $\ket{\widetilde\nu}_{\mathsf{S}}$ with
\[
 |\langle\nu|\widetilde\nu\rangle|^2\ge1-\varepsilon^2,
\]
which implies a trace-distance error bound:
\[
 \dtr(\ket{\widetilde\nu}\!\bra{\widetilde\nu},
                           \ket{\nu}\!\bra{\nu})
 =\sqrt{1-|\langle\nu|\widetilde\nu\rangle|^2}\le\varepsilon.
\]

Each selective phase uses two controlled qsample reflections. For a state $\ket{\chi}_{\mathsf{S}}$ being $\ket{\pi}$ or $\ket{\nu}$, let
\[
 M_\chi=(H_{\mathsf{C}}\otimes\Id_{\mathsf{S}})\ctrl_{\mathsf{C}}(R_\chi)(H_{\mathsf{C}}\otimes\Id_{\mathsf{S}}).
\]
For every vector $\ket{v}_{\mathsf{S}}$,
\[
 M_\chi(\ket{0}_{\mathsf{C}}\otimes\ket{v}_{\mathsf{S}})
 =\ket{0}_{\mathsf{C}}\otimes\ket{\chi}_{\mathsf{S}}\langle\chi|v\rangle
   +\ket{1}_{\mathsf{C}}\otimes(\Id_{\mathsf{S}}-\ket{\chi}\!\bra{\chi})\ket{v}_{\mathsf{S}}.
\]
Apply phase $e^{i\theta}$ on control zero and uncompute. This gives
\begin{align}\label{eq:selected_phase}
 M_\chi^\dagger
     [(e^{i\theta}\ket{0}\!\bra{0}_{\mathsf{C}}+\ket{1}\!\bra{1}_{\mathsf{C}})
           \otimes\Id_{\mathsf{S}}]M_\chi(\ket{0}_{\mathsf{C}}\otimes\ket{v}_{\mathsf{S}})
 =  \ket{0}_{\mathsf{C}}\otimes
        \left(\Id_{\mathsf{S}}+(e^{i\theta}-1)\ket{\chi}\!\bra{\chi}\right)\ket{v}_{\mathsf{S}},
\end{align}
where the control qubit is returned to zero and can be reused later. Thus, the total number of reflections is $O(f_0^{-1/2}\log(1/\varepsilon))$.
\end{proof}

Now consider a sequence of finite targets $\pi_0,\ldots,\pi_m$, where $m$ is the number of transitions in the schedule. Assume
\[
 |\langle\pi_i|\pi_{i+1}\rangle|^2\ge f_0,\qquad  \forall 0\le i<m.
\]
A single-copy conversion evolves one sample register from $\ket{\pi_i}$ to $\ket{\pi_{i+1}}$. Suppose the algorithm uses $T\ge1$ such conversions. At each conversion boundary, the reference state contains an independent copy of the source qsample; the implemented state may already be correlated. We set the trace-distance error of each conversion to be at most $\beta/T$ in~\cref{lem:transport}. Let $\mathcal C_i$ be the single-copy state-conversion channel from $\ket{\pi_i}$ to $\ket{\pi_{i+1}}$ using exact reflections. For any state $\sigma_{\mathsf A}$ of the remaining registers, we have
\begin{align*}
 &\dtr\Bigl(
 (\mathcal C_i\otimes\operatorname{id}_{\mathsf A})
       (\ket{\pi_i}\!\bra{\pi_i}_{\mathsf S}\otimes\sigma_{\mathsf A}),
 \ket{\pi_{i+1}}\!\bra{\pi_{i+1}}_{\mathsf S}\otimes\sigma_{\mathsf A}
 \Bigr)\\
 = &~ \dtr\Bigl(
       \mathcal C_i(\ket{\pi_i}\!\bra{\pi_i}),
       \ket{\pi_{i+1}}\!\bra{\pi_{i+1}}
       \Bigr)\\
 \leq &~ \frac{\beta}{T},
\end{align*}
where the first step follows from invariance of trace distance under tensoring with the same density operator. Let $\rho_{\rm ref}$ be the final joint state of the reference computation and let $\widetilde\rho$ be the final joint state obtained using ${\cal C}_0,\dots,{\cal C}_{m-1}$. Starting from the same initial state and keeping all other operations unchanged, \cref{lem:composition} gives
\[
 \dtr(\widetilde\rho,\rho_{\rm ref})
 \le\sum_{j=1}^T\frac{\beta}{T}
 =\beta.
\]
By~\cref{lem:transport}, the total number of controlled qsample reflections is
\[
 O\!\left(\frac{T}{\sqrt{f_0}}\log\frac{T}{\beta}\right),
\]
which is $O(T\log(T/\beta))$ for constant $f_0$. Errors from implementing the reflection oracles are bounded separately by \cref{lem:composition}.

In the two-block construction, each of $\mathsf B_1$ and $\mathsf B_2$ contains $L$ sample registers. Advancing both blocks through all $m$ transitions requires $2Lm$ single-copy conversions. Let $n_i$ be the number of additional qsample copies that must evolve from $\ket{\pi_i}$ to $\ket{\pi_{i+1}}$. The total number of conversions is therefore
\[
 T=\sum_{i=0}^{m-1}(2L+n_i)
  =2Lm+\sum_{i=0}^{m-1}n_i.
\]
For example, an additional qsample prepared at stage zero and used at stage $r$ contributes $r$ conversions to this count.

Finally, relative second moments give the required adjacent overlap. Let $r=d\mu/d\nu$ and suppose $\E_\nu[r^2]\le B$. H\"older's inequality, with exponents $3/2$ and $3$, gives
\[
 1=\E_\nu[r]
 =\E_\nu\bigl[(\sqrt r)^{2/3}(r^2)^{1/3}\bigr]
 \le(\E_\nu[\sqrt r])^{2/3}(\E_\nu[r^2])^{1/3}.
\]
Thus, we get that
\begin{equation}\label{eq:overlap-from-moment}
 |\langle\mu|\nu\rangle|^2
 =(\E_\nu[\sqrt r])^2
 \ge\frac1{\E_\nu[r^2]}\ge\frac1B.
\end{equation}
In the volume estimation, \cref{lem:cooling-moments} shows that both adjacent relative second moments are at most $4$. Hence, we can take $f_0=1/4$ in the above calculations. 

\section{Quantum volume estimation by pencil cooling}\label{sec:assembly}

In this section, we present our quantum algorithm for estimating the volume of convex bodies. Its guarantees are stated in the formal version of the main theorem as follows.

\begin{theorem}[Main theorem]\label{thm:intro-volume}
Let $K\subset\R^d$ be a convex body given by a strong coherent membership oracle, and suppose that we are given a rational point $x_0$, rational numbers $0<r\le R$, and a certified rational bound $0<\psi\le\psi_{d+1}$ such that
\[
 B(x_0,r)\subseteq K\subseteq B(x_0,R),
 \qquad \kappa:=R/r\le\poly(d).
\]
Then, for every rational $0<\varepsilon<1/2$, there is a quantum algorithm on finitely many qubits that outputs $\widehat V\in[(1-\varepsilon)\vol(K),(1+\varepsilon)\vol(K)]$ with probability at least $2/3$ using at most
\begin{equation}
 \left(\frac{d^{5/2}}{\psi}+\frac{d^{3/2}}{\psi\varepsilon}\right)
 \log^{O(1)}\!\left(\frac{d\kappa}{\varepsilon\psi}\right)
\end{equation}
coherent membership queries on every execution. The remaining computation is polynomial in $d$, $\kappa$, $\psi^{-1}$, $\varepsilon^{-1}$, and the bit length of the input.
\end{theorem}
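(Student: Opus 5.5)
The proof assembles the components already established. After dyadic rescaling to $B(0,1)\subseteq K\subseteq B(0,\kappa)$, I would form the pencil body $K_0\subset\R^{D}$, $D=d+1$, with tilts $b_i=a_ie_0$. The schedule has $a_0=\Theta(D)$ and $a_{i+1}=a_i(1-1/\sqrt D)$, stopped once $a_m=\Theta(1/\kappa)$, so $m=\widetilde O(\sqrt d)$. The first step is a cooling-moment lemma (the forthcoming \cref{lem:cooling-moments}): it checks that both adjacent relative second moments are at most $4$ for the continuous laws. This uses the standard log-concavity of $a\mapsto a^{D}Z(a)$. The bound should transfer to the lattice laws $\pi_{i,\Lambda}$ with a slightly worse constant via \cref{thm:finite-metropolis}(iv) at small $\zeta$; if needed, I would replace $4$ by $5$ and rescale the constants in the bridge argument of \cref{sec:app_bridge_sampling}.

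The volume identity then has three pieces. The first is
\begin{align*}
 Z_\Lambda(a_m)/Z_\Lambda(a_0)=\prod_i\mu_i .
\end{align*}
The second is the explicit cone integral $Z(a_0)$, corrected by the probability that $\pi_0$ lies in the ball part of the pencil. The third is the probability $q_{\rm out}$ under $\pi_m$ of the slab $x_0\in[\kappa,2\kappa]$, whose mass is $\vol(K)\int_\kappa^{2\kappa}e^{-a_mt}dt$. Both endpoint probabilities are $\Omega(1)$, so their logarithms enter \cref{lem:log-sum} as two extra signed terms with indicator observables. The other $2m$ terms are $-\log p_i^+$ and $+\log p_i^-$, with pilots $c_i$ from \cref{lem:pilots}. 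Discretization error $\zeta=\varepsilon/10$ from \cref{thm:finite-metropolis}(iv) contributes at most a $1\pm\varepsilon/3$ relative error. Estimating $\mathsf L$ to additive error $\varepsilon/3$ then gives relative error $\varepsilon$.

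Initialization prepares the cone qsample $\ket{\nu_\Lambda}$ (explicitly computable, with no membership queries) for all copies. Amplitude amplification with the membership oracle of $K_0$ then produces $\ket{\pi_0}$, since the cone near the tip lies in $K_0$ with constant mass, at $\widetilde O(1)$ queries per copy. The copies are $2L$ covariance copies, $P=O(\log m)$ pilot copies per stage, and two estimator copies per stage, with $L=\widetilde O(d)$ from \cref{thm:finite-metropolis}(v). They are advanced by the alternating scheme of \cref{subsec:rounding-conversion} using \cref{lem:transport} with $f_0=1/4$, for $T=2Lm+O(m^2\log m)$ conversions. Every reflection is implemented via \cref{cor:finite-reflection-cost-main} on the branch selected coherently by the ridge covariance $A(\mathbf x)$. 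The covariance comparison $\Sigma_i\asymp\Sigma_{i+1}$ comes from \cref{lem:adjacent-covariance-prelim} with $B=4$.

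The error accounting is the step I expect to be the most delicate. I would apply \cref{lem:composition} at the outer level, with primitives being conversions, pilot measurements, and \textsc{Add} calls. Inside each primitive I would use \cref{lem:coherent-reflection-error} on the reference input, which has an independent product control block $\ket{\pi_i^{\otimes L}}$. The fourth-moment norms are $O(q)$ by \cref{lem:fourth-growth} with $B=\sqrt2$ in conversions and by \cref{subsec:estimator-moments} in estimation. With total reflection count $N=\poly(d,1/\varepsilon)$, I would choose
\begin{align*}
 \varepsilon_{\rm ref}&=1/(100N), & \varepsilon_{\Cov}&=1/(100N)^2, & M_4&=\poly(d,1/\varepsilon).
\end{align*}
These choices make the total trace distance at most $1/100$, at only logarithmic overhead in $L$ and $t_{\rm mix}$. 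Since $h^{-1}$ is polynomial, the register widths are polynomial.

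The success probability follows by a union bound: the pilots fail with probability $1/100$, the estimator with $1/50$, and the trace error is $1/100$, so the total is at least $2/3$. The query count is $N\cdot\widetilde O(d/\psi)$ with
\begin{align*}
 N=\widetilde O(Lm+m^2+m/\varepsilon)=\widetilde O\bigl(d^{3/2}+\sqrt d/\varepsilon\bigr),
\end{align*}
plus an $\widetilde O(d^{3/2})$ initialization. This gives the bound stated in \cref{thm:intro-volume}, with fixed circuit lengths. The main obstacle is verifying the cooling moments and covariance comparisons uniformly on the lattice, and ensuring that the ridge $\lambda_-$ and bounds $\lambda_\pm$ can be fixed a priori so that one lattice serves all stages.
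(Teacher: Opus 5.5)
Your proposal follows the paper's proof essentially step for step: pencil cooling on one lattice, the bridge-sampled phase accumulation of \cref{lem:log-sum} with two endpoint indicator terms, alternating covariance blocks used coherently, fourth-moment bounds via \cref{lem:fourth-growth} with $B=\sqrt2$, the outer/inner hybrid accounting of \cref{lem:composition,lem:coherent-reflection-error}, and the count $N\cdot\widetilde O(D/\psi)$ with $N=\widetilde O(Lm+m^2+m/\varepsilon)$. Two repairs are needed. First, the step $1/\sqrt D$ must be shrunk (the paper uses $1/(16\lceil\sqrt d\rceil)$): with $\Delta=a_i/\sqrt D$, the curvature bound $F''(a)\le D/a^2$ only gives moments up to about $\exp\bigl(1/(1-2/\sqrt D)\bigr)$, which exceeds $4$ for moderate $D$ and is meaningless for $D\le4$. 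Second, the terminal slab mass must be controlled by applying \cref{lem:full-grid-quadrature} to $[\kappa,2\kappa]\times K$ rather than by item (iv), which only covers normalizers of $K_0$.
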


The pencil construction~\cite{LovaszVempala06Annealing} expresses the volume through a telescoping product. We choose a cooling schedule with bounded adjacent relative second moments, then combine qsample conversion with the product estimator to prove~\cref{thm:intro-volume}.

\subsection{The volume estimation identity}\label{subsec:pencil-geometry}
Apply the dyadic adjustment of \cref{subsec:membership-oracle-model} and rescale the input body.  We may assume
\[
 B(0,1)\subseteq K\subseteq B(0,\kappa),\qquad
 \vol(K_{\rm in})=r^d\vol(K),\qquad D=d+1.
\]
Every membership query after rescaling uses one original query. For $d=1$, binary searches from the given interior point locate the two interval endpoints to accuracy $\varepsilon r/4$ using $O(\log(\kappa/\varepsilon))$ queries.  Since its length is at least $2r$, this settles that case.  Assume $d\ge2$ below.

Following the classical pencil construction~\cite{LovaszVempala06Annealing}, define
\begin{equation}\label{eq:assembly-pencil}
 K_0=\left\{(x_0,x'):\ 0\le x_0\le2\kappa,\ x'\in K,
                                  \ \|x'\|_2\le x_0\right\}.
\end{equation}
Membership query for $K_0$ requires one query to $K$. Note that for any $\kappa\leq x_0\leq 2\kappa$, the cross-section of $K_0$ is exactly $K$. It satisfies the known support bounds
\begin{equation}\label{eq:pencil-support-balls}
 B((1/2,0),1/4)\subseteq K_0\subseteq B((1/2,0),3\kappa).
\end{equation}
The first inclusion follows from $B(0,1)\subseteq K$ and $\|x'\|_2\le1/4\le x_0$ in the displayed ball; the second follows from $x_0\le2\kappa$ and $\|x'\|_2\le\kappa$.

Define a continuous log-affine distribution over $K_0$ parameterized by $a\in \mathbb{R}$:
\begin{align*}
    Z(a)=\int_{K_0}e^{-ax_0}\,dx,\qquad \pi_a(dx)=Z(a)^{-1}e^{-ax_0}\one_{K_0}(x)\,dx.
\end{align*}
Consider a decreasing sequence $a_0=4D>\cdots>a_m>0$ with $a_m\kappa\le1$. 
On the common dyadic lattice to be chosen later, define the discrete distribution
\begin{equation}\label{eq:assembly-finite-target}
 Z_\Lambda(a)=h^D\sum_{z\in K_0\cap h\Z^D}e^{-az_0},\qquad
 \pi_i(z)=\frac{h^De^{-a_i z_0}}{Z_\Lambda(a_i)},\qquad 0\leq i\leq m.
\end{equation}
The observables
\begin{equation}\label{eq:assembly-ratio-definition}
 X_i(z)=e^{(a_i-a_{i+1})z_0},\qquad
 \mu_i=\E_{\pi_i}[X_i]=\frac{Z_\Lambda(a_{i+1})}{Z_\Lambda(a_i)}
\end{equation}
satisfy the exact adjacent-law relation in \eqref{eq:bridge-family}. 

Let $C=\{(x_0,x'):0\le x_0\le2\kappa,\ \|x'\|_2\le x_0\}$ be the truncated round cone.  Its finite normalization constant and law at $a_0$ are
\begin{equation}\label{eq:finite-cone-anchor}
 Z_{C,\Lambda}=h^D\sum_{z\in C\cap h\Z^D}e^{-a_0z_0},\qquad
 \nu_\Lambda(z)=h^De^{-a_0z_0}/Z_{C,\Lambda}.
\end{equation}
Define the initial and terminal probabilities
\begin{equation}\label{eq:endpoint-probabilities}
 q_{\rm in}=\Pr_{z\sim \nu_\Lambda}[z'\in K],\qquad
 q_{\rm out}=\Pr_{z\sim \pi_m}[z_0\in[\kappa,2\kappa]].
\end{equation}
We have the following telescoping equation:
\begin{equation}\label{eq:exact-volume-telescope}
\begin{aligned}
 Z_{C,\Lambda}\,q_{\rm in}
 \left(\prod_{i=0}^{m-1}\mu_i\right)q_{\rm out}= &~ \left(h^D\sum_{z\in K_0\cap h\Z^D} e^{-a_0z_0}\right) \frac{Z_\Lambda(a_m)}{Z_\Lambda(a_0)}\left(\sum_{z\in([\kappa,2\kappa]\times K)\cap h\Z^D}\frac{h^D e^{-a_mz_0}}{Z_\Lambda(a_m)}\right)\\
 = &~ h^D\sum_{z\in([\kappa,2\kappa]\times K)\cap h\Z^D}e^{-a_mz_0}.
\end{aligned}
\end{equation}
The corresponding continuous slab mass is
\[
 \int_{[\kappa,2\kappa]\times K}e^{-a_mx_0}dx
 =\left(\int_Kdx'\right)
   \left(\int_\kappa^{2\kappa}e^{-a_mt}dt\right)
 =\vol(K)I_m,
\]
where $I_m:=\int_\kappa^{2\kappa}e^{-a_mt}\,dt$.
The slab contains $B((3\kappa/2,0),1/2)$ and has outer radius $O(\kappa)$, so by~\cref{lem:full-grid-quadrature}, the discrete sum can approximate this integral to any requested relative accuracy. Hence, \eqref{eq:exact-volume-telescope} can be used to estimate the volume of $K$.

We apply the quantum product estimator~\cref{lem:log-sum} to estimate~\eqref{eq:exact-volume-telescope}. For any positive bridge scales $c_i$, let $p_i^+,p_i^-$ be the probabilities in \eqref{eq:bridge-marks} for these finite targets. Define
\begin{equation}\label{eq:volume-log-sum}
 \mathsf L=\sum_{i=0}^{m-1}(-\log p_i^++\log p_i^-)
                   -\log q_{\rm in}-\log q_{\rm out}.
\end{equation}
The bridge identities and \eqref{eq:exact-volume-telescope} motivate the following estimator: after obtaining $\widehat{\mathsf L}$, take
\begin{equation}\label{eq:volume-output}
 \log\widehat V=d\log r+\log Z_{C,\Lambda}
      +\sum_{i=0}^{m-1}\log c_i-\widehat{\mathsf L}-\log I_m,
\end{equation}
where $r$ is the rescaling factor for the input convex body $K_{\rm in}$, and $\widehat{V}$ is our estimator for $\vol(K_{\rm in})$. The remaining terms are evaluated classically using the procedure for $Z_{C,\Lambda}$ in~\cref{app:initialization-details} and the numerical bounds in~\cref{app:mark-arithmetic}.

We next choose a schedule with bounded adjacent relative second moments and verify that both endpoint probabilities $q_{\rm in}$ and $q_{\rm out}$ are at least $1/32$, as required by~\cref{lem:log-sum}.


\subsection{The cooling schedule}\label{sec:cool_schedule}
Choose the rational schedule
\begin{equation}\label{eq:cooling-schedule}
 a_0=4D,\qquad
 a_{i+1}=\left(1-\frac1{16\lceil\sqrt d\rceil}\right)a_i,
 \qquad
 m=\min\{i:a_i\kappa\le1\}.
\end{equation}
Thus, $m=O(\sqrt D\log(D\kappa))=\widetilde O(\sqrt D)$. 

\begin{lemma}[Second moments along cooling]\label{lem:cooling-moments}
For the cooling schedule in~\eqref{eq:cooling-schedule}, for any $0\leq i<m$, $\pi_{a_i}$ and $\pi_{a_{i+1}}$ have continuous relative second moments $\E_{\pi_{a_i}}[(\pi_{a_{i+1}}/\pi_{a_i})^2]$ and $\E_{\pi_{a_{i+1}}}[(\pi_{a_{i}}/\pi_{a_{i+1}})^2]$ at most $2$.  

Furthermore, we can choose a dyadic lattice such that at $a\in \{a_i-2\Delta,a_i-\Delta,a_i,a_i+\Delta\}$ with $\Delta=a_i/(16\lceil\sqrt d\rceil)$,
\begin{align*}
    \frac{99}{100}Z(a)\leq Z_{\Lambda}(a) \leq \frac{101}{100} Z(a).
\end{align*}
Then both finite relative second moments are at most $4$.
\end{lemma}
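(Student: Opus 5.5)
The plan is to write both relative second moments as ratios of normalization constants and then bound those ratios by log-concavity in $a$. For log-affine laws,
\[
 \E_{\pi_{a_i}}\!\left[\Bigl(\frac{\pi_{a_{i+1}}}{\pi_{a_i}}\Bigr)^2\right]
 =\frac{Z(a_i)\,Z(2a_{i+1}-a_i)}{Z(a_{i+1})^2}
 =\frac{Z(a_i)\,Z(a_i-2\Delta)}{Z(a_i-\Delta)^2},
\]
with $\Delta=a_i-a_{i+1}=a_i/(16\lceil\sqrt d\rceil)$. Symmetrically,
\[
 \E_{\pi_{a_{i+1}}}\!\left[\Bigl(\frac{\pi_{a_i}}{\pi_{a_{i+1}}}\Bigr)^2\right]
 =\frac{Z(a_i+\Delta)\,Z(a_i-\Delta)}{Z(a_i)^2}.
\]
This explains why the lattice statement involves exactly the four values $a\in\{a_i-2\Delta,a_i-\Delta,a_i,a_i+\Delta\}$.

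For the continuous bound, I would use the standard Lov\'asz--Vempala fact: since $K_0$ is convex, $a\mapsto a^{D}Z(a)$ is log-concave on $a>0$. This holds because $a^D Z(a)=\int_{aK_0}e^{-y_0}\,dy$, and this is the Kalai--Vempala lemma \cite{KalaiVempala06} / \cite[Lemma~3.2]{LovaszVempala06Annealing}. It gives
\[
 Z(a)Z(b)\le Z\!\left(\tfrac{a+b}2\right)^2
 \left(\frac{((a+b)/2)^2}{ab}\right)^{D}.
\]
Applied with $a=a_i$ and $b=a_i-2\Delta$, the ratio is at most $\bigl(1+\Delta^2/(a_i(a_i-2\Delta))\bigr)^D$. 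The second ratio is likewise at most $\bigl(1+\Delta^2/((a_i+\Delta)(a_i-\Delta))\bigr)^D$. With $\Delta/a_i=1/(16\lceil\sqrt d\rceil)$, both exponents are at most $D\cdot 2/(256 d)\le 1/64$ (using $D\le 2d$). Hence both second moments are at most $e^{1/64}<2$. One point needs care: $a_i-2\Delta>0$, which holds trivially.

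For the finite statement, the quadrature lemma (\cref{lem:full-grid-quadrature}) lets me pick one dyadic $h$ that works simultaneously for all $4m$ tilts. All these tilts are bounded by $a_0+\Delta\le 5D$, and $K_0$ contains a ball of radius $1/4$, so $Z_\Lambda(a)\in[0.99,1.01]\,Z(a)$ at each of them. The finite relative second moments are given by the same formulas with $Z_\Lambda$ in place of $Z$, because $\pi_i$ is an exact log-affine reweighting on the fixed lattice. Each finite ratio is therefore at most $(1.01)^2/(0.99)^2$ times the continuous ratio. That gives a bound of at most $2\cdot 1.041<4$.

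The main obstacle is invoking the quadrature lemma uniformly. I would need to check that its polynomial bound on $1/h$ depends only on $\kappa$, $D$, and the largest tilt, and then take the minimum spacing over the finitely many tilts. The log-concavity computation itself is routine.
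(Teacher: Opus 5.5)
Your proposal is correct and follows essentially the paper's argument. Both write the two moments as ratios of normalizing constants, both rely on the log-concavity of $a\mapsto a^DZ(a)$ (the paper obtains it from Pr\'ekopa's theorem), and both absorb the $1/100$ lattice errors into the final constant. The only difference is cosmetic: you apply midpoint log-concavity directly and get $\bigl(u^2/(u^2-\Delta^2)\bigr)^D$, whereas the paper converts log-concavity into $F''(a)\le D/a^2$ for $F=\log Z$ and integrates the second difference, which gives an estimate of the same order.
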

\begin{proof}
Relative second moments are ratios of normalization constants, so we bound second differences of $F(a)=\log Z(a)$. The following bound for the second derivative of $F$ is proved in~\cite{LovaszVempala06Annealing}:  
\begin{align*}
    0\le F''(a)\le D/a^2.
\end{align*}
Indeed, since $0\le x_0\le2\kappa$, we can differentiate under the integral and get that
\[
 F'(a)=-\E_{\pi_a}[x_0],\qquad
 F''(a)=\E_{\pi_a}[x_0^2]-(\E_{\pi_a}[x_0])^2=\Var_{\pi_a}(x_0)\ge0.
\]
For the upper bound, the set $\{(a,y):a>0,\ y\in aK_0\}$ is convex: if $y_j=a_jx_j$ with $x_j\in K_0$, $a_j>0$, and $j\in \{1,2\}$, then we have
\[
 \frac{ty_1+(1-t)y_2}{ta_1+(1-t)a_2}
 =\frac{ta_1}{ta_1+(1-t)a_2}x_1
  +\frac{(1-t)a_2}{ta_1+(1-t)a_2}x_2\in K_0.
\]
Pr\'ekopa's theorem~\cite{Prekopa73} applied to $e^{-y_0}\one_{\{y\in aK_0\}}$ shows that the function
\[
 a\longmapsto\int_{aK_0}e^{-y_0}\,dy=a^DZ(a)
\]
is log-concave. Taking two derivatives yields
\[
 0\ge(D\log a+F(a))''=-D/a^2+F''(a).
\]

For any $0\leq i< m$, let $\Delta=a_i/(16\lceil\sqrt d\rceil)$. We have
\begin{align*}
 \E_{\pi_{a_i}}\!\left[\left(\frac{d\pi_{a_{i+1}}}{d\pi_{a_i}}\right)^2\right]= &~ \int_{K_0}\frac{Z(a_{i}-\Delta)^{-2}e^{-2(a_i-\Delta)x_0}}{Z(a_{i})^{-1}e^{-a_ix_0}}\,dx
 =  \frac{Z(a_i)Z(a_i-2\Delta)}{Z(a_i-\Delta)^2},\\*
 \E_{\pi_{a_{i+1}}}\!\left[\left(\frac{d\pi_{a_i}}{d\pi_{a_{i+1}}}\right)^2\right] =  &~ \int_{K_0}\frac{Z(a_{i})^{-2}e^{-2a_ix_0}}{Z(a_{i}-\Delta)^{-1}e^{-(a_i-\Delta)x_0}}\,dx
 = \frac{Z(a_i-\Delta)Z(a_i+\Delta)}{Z(a_i)^2}.
\end{align*}
For either $u=a_i-\Delta$ or $u=a_i$, take the logarithm and we get that
\begin{equation}\label{eq:log_normalizer_prod}
\begin{aligned}
 F(u-\Delta)+F(u+\Delta)-2F(u)
 = &~ \int_{-\Delta}^{\Delta}(\Delta-|v|)F''(u+v)\,dv\\*
 \leq &~ \frac{D\Delta^2}{(u-\Delta)^2}
 \le\frac{D\Delta^2}{(a_i-2\Delta)^2}<\frac1{16},
\end{aligned}
\end{equation}
where the first inequality follows from $F''(u+v)\le D/(u-\Delta)^2$, and the last step follows from $a_i/\Delta=16\lceil \sqrt{d}\rceil$, $D=d+1$, and $d\geq 2$. Both continuous moments are therefore at most $e^{1/16}<2$.

The finite relative second moments of $\pi_{i}$ and $\pi_{i+1}$ can also be expressed as the ratios between the lattice normalization constants. For each $a\in\{a_i-2\Delta,a_i-\Delta,a_i,a_i+\Delta\}$,
\[
 |Z_\Lambda(a)/Z(a)-1|\le1/100
 \quad\Longrightarrow\quad
 |\log Z_\Lambda(a)-\log Z(a)|\le2/100.
\]
Thus, the additive error to the LHS of~\eqref{eq:log_normalizer_prod} is at most $8/100$. And each finite relative second moment is bounded by $e^{1/16+8/100}<4$.

Finally, write $\Delta_i=a_i-a_{i+1}$. We can choose a common lattice such that the above approximation guarantee holds for all $a$ in the following finite list
\[
\bigcup_{i=0}^{m-1}
 \{a_i-2\Delta_i,\ a_i-\Delta_i,\ a_i,\ a_i+\Delta_i\}\subset [a_m/2,2a_0].
\]
This list is determined by the cooling schedule, so we can apply~\cref{thm:finite-metropolis}~\textnormal{(iv)} to choose the lattice spacing $h$.

\end{proof}

The finite second-moment bounds in~\cref{lem:cooling-moments} give constant overlap between adjacent qsamples by~\eqref{eq:overlap-from-moment}. Applying~\cref{lem:adjacent-covariance-prelim} to the continuous laws gives $\Sigma_i\asymp\Sigma_{i+1}$, allowing one preconditioner to serve both stages.



The following lemma proves that $q_{\rm in}$ and $q_{\rm out}$ as defined in~\eqref{eq:endpoint-probabilities} are bounded away from zero, and also shows the quantum implementation costs for the initialization stage of the cooling schedule.
\begin{lemma}\label{lem:endpoint-resources}
For a sufficiently fine lattice, $q_{\rm in}\ge1/2$ and $q_{\rm out}\ge1/(2(e+1))$. The quantity $\log Z_{C,\Lambda}$ can be computed to any prescribed accuracy with no membership queries.  The qsample $\ket{\nu_\Lambda}$ and its controlled reflection can also be prepared to prescribed accuracy without membership queries. 

Moreover, given access to $\ket{\nu_\Lambda}$ and its controlled reflection, we can prepare $\ket{\pi_0}$ to trace-distance error ${\varepsilon_{\rm init}}$ using $O(\log(1/{\varepsilon_{\rm init}}))$ membership queries. 
\end{lemma}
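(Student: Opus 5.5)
I would prove the four claims separately, using continuous computations plus the lattice quadrature bounds of \cref{thm:finite-metropolis}~(iv) and \cref{lem:full-grid-quadrature} to pass from continuous to finite quantities with relative error as small as needed.

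\textbf{Endpoint probabilities.} For $q_{\rm in}$, take the continuous cone law $\propto e^{-a_0x_0}\one_C$ with $a_0=4D$. The marginal of $x_0$ has density $\propto x_0^{d}e^{-4Dx_0}$, a Gamma law with mean about $1/4$ and standard deviation $O(1/\sqrt D)$. If $x_0\le1$ then $\|x'\|_2\le x_0\le1$, so $x'\in B(0,1)\subseteq K$. Hence the continuous probability is at least $\Pr[x_0\le1]$, which Markov's inequality bounds below by $3/4$.

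For $q_{\rm out}$, use $a_m\kappa\le1$ and $K_0\subseteq[0,2\kappa]\times K$. Write $V_t=\vol(K\cap B(0,t))$, which is nondecreasing in $t$ and equals $\vol(K)$ for $t\ge\kappa$. Then
\[
\int_{[\kappa,2\kappa]\times K}e^{-a_mx_0}\,dx\ge \kappa e^{-2}\vol(K),
\]
while the mass on $[0,\kappa]$ is at most $\kappa\vol(K)$. This gives a ratio at least $e^{-2}/(1+e^{-2})$. To match the stated $1/(2(e+1))$, I would sharpen by using $e^{-a_mt}\ge e^{-1}$ on $[0,\kappa]$ versus $\ge e^{-2}$ on the slab, and redo the comparison carefully. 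The continuous bound then exceeds the target with room to spare.

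Passing to the lattice, I would take $h$ so that the cone and slab sums approximate their integrals within relative error $1/100$. This is possible because both regions contain balls of radius $\Omega(1)$ and lie in $B(0,O(\kappa))$, so \cref{lem:full-grid-quadrature} applies. The finite probabilities then stay above $1/2$ and $1/(2(e+1))$.

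\textbf{The cone normalizer and $\ket{\nu_\Lambda}$.} Both are explicit because $C$ is known. I would write
\[
Z_{C,\Lambda}=h^D\sum_{z_0\in h\Z\cap[0,2\kappa]}e^{-a_0z_0}N(z_0),
\]
where $N(z_0)=\#\{z'\in h\Z^d:\|z'\|_2\le z_0\}$ is a lattice-point count in a ball. This count is computable classically to any accuracy without membership queries, which is the point of \cref{app:initialization-details}.

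The qsample is prepared by a Grover--Rudolph style construction. First prepare the $z_0$ marginal amplitudes $\sqrt{h^De^{-a_0z_0}N(z_0)/Z_{C,\Lambda}}$. Then, conditioned on $z_0$, prepare the uniform superposition over lattice points of the ball of radius $z_0$, coordinate by coordinate, using classically computable conditional counts. The reflection is $U(2\ket0\bra0-\Id)U^\dagger$, where $U$ is the preparation unitary. The main obstacle is making these counts computable with polynomial gate cost. If exact counting is too costly, I would fall back on preparing a slightly larger box superposition, marking membership in $C$ arithmetically, and amplifying; this costs no $K$-queries.

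\textbf{Preparing $\ket{\pi_0}$.} Write $\ket{\pi_0}$ as the normalized projection of $\ket{\nu_\Lambda}$ onto labels with $z'\in K$. This holds because $K_0\cap\{z_0\le2\kappa\}=C\cap\{z'\in K\}$ and the weights $e^{-a_0z_0}$ agree. The projection has squared norm $q_{\rm in}\ge1/2$. Fixed-point amplitude amplification (\cref{lem:transport} in its oracle-marking form) with the controlled reflection about $\ket{\nu_\Lambda}$ and the membership phase oracle then reaches $\ket{\pi_0}$ within error $\varepsilon_{\rm init}$. This uses $O(\log(1/\varepsilon_{\rm init}))$ rounds, each making $O(1)$ queries: compute the bit, apply the phase, uncompute.
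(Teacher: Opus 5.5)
Your plan follows the paper's proof almost step for step:
\begin{itemize}
\item the Markov bound $\Pr_\nu[x_0\le 1]\ge 1-\E_\nu[x_0]\ge 3/4$ on the truncated Gamma marginal;
\item the $99/101$ transfer to the lattice via \cref{lem:full-grid-quadrature} and \cref{thm:finite-metropolis}~(iv);
\item Grover--Rudolph preparation from conditional lattice-point counts, with $R_{\nu_\Lambda}=U_C(2\ket0\!\bra0-\Id)U_C^\dagger$;
\item fixed-point amplification of $\Pi_{\rm in}\ket{\nu_\Lambda}=\sqrt{q_{\rm in}}\ket{\pi_0}$ using $R_{\nu_\Lambda}$ and $2\Pi_{\rm in}-\Id$.
\end{itemize}

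The one step that fails as written is the constant for $q_{\rm out}$. Your crude comparison gives $1/(e^2+1)\approx 0.119$. This is already below the claimed $1/(2(e+1))\approx 0.134$ before any lattice loss. Your proposed sharpening points the wrong way: a lower bound $e^{-a_mt}\ge e^{-1}$ on $[0,\kappa]$ can only enlarge the denominator. What you need is an upper bound on the mass below height $\kappa$ relative to the slab. Every cross-section $K\cap B(0,t)$ has volume at most $\vol(K)$, with equality for $t\ge\kappa$, so you can integrate exactly:
\[
\Pr_{\pi_{a_m}}\bigl[x_0\in[\kappa,2\kappa]\bigr]\ \ge\ \frac{\int_\kappa^{2\kappa}e^{-a_mt}\,dt}{\int_0^{2\kappa}e^{-a_mt}\,dt}\ =\ \frac{1}{1+e^{a_m\kappa}}\ \ge\ \frac{1}{e+1}.
\]
Equivalently, compare height $t$ with $t+\kappa$: the weight drops by exactly $e^{-a_m\kappa}\ge e^{-1}$, and the cross-section can only grow. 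This is the paper's argument. The resulting factor-two slack over $1/(2(e+1))$ is what absorbs the $99/101$ discretization loss.

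Separately, your box-and-amplify fallback for the cone qsample is unnecessary and would be costly. The lattice points of a ball are an exponentially small fraction (in $d$) of those in the enclosing cube. Amplification would therefore need exponentially many rounds, even though it makes no queries. The recursion $C_j(s)=\sum_{|v|\le\lfloor\sqrt s\rfloor}C_{j-1}(s-v^2)$ with $C_0(s)=1$, tabulated for $0\le j\le d$ and $0\le s\le\lfloor 2\kappa/h\rfloor^2$, gives all the counts you need exactly. This yields $Z_{C,\Lambda}$ and the conditional rotation angles in time polynomial in $d$ and $\kappa/h$, which is how the paper carries out this step.
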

\begin{proof}
We bound the two event probabilities
\begin{align*}
    q_{\rm in}=\Pr_{z\sim \nu_\Lambda}[z'\in K],\qquad
 q_{\rm out}=\Pr_{z\sim \pi_m}[z_0\in[\kappa,2\kappa]].
\end{align*}
by first estimating the corresponding quantities with the continuous distributions and then applying the lattice quadrature errors. 

Recall that
\begin{align*}
    C=\{(x_0,x'):0\le x_0\le2\kappa,\ \|x'\|_2\le x_0\}\supset K_0.
\end{align*}
Let $\nu$ be the continuous distribution with density
\begin{align*}
    \nu(x)=Z_C^{-1}e^{-a_0x_0}\one_C(x),\qquad Z_C=\int_Ce^{-a_0x_0}\,dx.
\end{align*}
For $x_0=t\in[0,2\kappa]$, the cross-section of $C$ is the $d$-dimensional ball $B(0,t)$, whose volume is $v_dt^d$. Therefore,
\begin{align*}
 \E_{\nu}[x_0]
 = \frac{\int_0^{2\kappa}t^{d+1}e^{-a_0t}\,dt}
            {\int_0^{2\kappa}t^de^{-a_0t}\,dt}
 = -\frac{(2\kappa)^{d+1}e^{-2\kappa a_0}}
        {a_0\int_0^{2\kappa}t^de^{-a_0t}\,dt}+\frac{d+1}{a_0}
 \leq \frac{D}{a_0}=\frac14,
\end{align*}
where the first step follows by integrating over the horizontal cross-sections, the second step follows from integration by parts, and the last step uses $D=d+1$ and $a_0=4D$. If $x_0\le1$, then $\|x'\|_2\le 1$, so $x'\in B(0,1)\subseteq K$ and $x\in K_0$. Thus, we have
\begin{align*}
 \frac{Z(a_0)}{Z_C} = \frac{\int_{K_0} e^{-a_0x_0}\,dx}{Z_C} 
 =  \Pr_{x\sim \nu}[x\in K_0]
 \geq \Pr_{x\sim \nu}[x_0\le 1]
 \geq  1-\E_{\nu}[x_0]
 \geq \frac34.
\end{align*}
For the final target, every cross-section of $K_0$ is contained in $K$ and those on $x_0\in [\kappa,2\kappa]$ exactly equal $K$. Hence, we have
\[
 \Pr_{x\sim \pi_{a_m}}[x_0\in[\kappa,2\kappa]]
 \ge\frac{\vol(K)\int_\kappa^{2\kappa}e^{-a_mt}\,dt}
          {\vol(K)\int_0^{2\kappa}e^{-a_mt}\,dt}
 =\frac1{1+e^{a_m\kappa}}\ge\frac1{e+1},
\]
where the last step follows from $a_m\kappa \leq 1$.

Note that
\begin{align*}
    q_{\rm in}=\Pr_{z\sim \nu_\Lambda}[z'\in K] = \frac{Z_{\Lambda}(a_0)}{Z_{C,\Lambda}}\,,
\end{align*}
and by choosing a sufficiently small lattice spacing $h$, we have
\begin{align*}
    |Z_{\Lambda}(a_0)/Z(a_0)-1|\leq \frac{1}{100},\qquad |Z_{C,\Lambda}/Z_C-1|\leq \frac{1}{100}.
\end{align*}
Then, we have
\begin{align*}
    q_{\rm in}\geq \frac{\frac{99}{100}Z(a_0)}{\frac{101}{100}Z_C}\ge\frac{99}{101}\frac34>\frac{1}{2}.
\end{align*}
Similarly, let
\begin{align}\label{eq:def_Vm}
    V_m=\int_{[\kappa,2\kappa]\times K}e^{-a_mx_0}\,dx,
 \qquad
 V_{m,\Lambda}=h^D\sum_{z\in([\kappa,2\kappa]\times K)\cap h\Z^D}e^{-a_mz_0}.
\end{align}
We have
\begin{align*}
    q_{\rm out}=\Pr_{z\sim \pi_m}[z_0\in[\kappa,2\kappa]] = \frac{V_{m,\Lambda}}{Z_\Lambda(a_m)},\qquad \Pr_{x\sim \pi_{a_m}}[x_0\in[\kappa,2\kappa]]=\frac{V_m}{Z(a_m)}.
\end{align*}
Suppose the lattice also satisfies
\begin{align*}
    |Z_{\Lambda}(a_m)/Z(a_m)-1|\leq \frac{1}{100},\qquad |V_{m,\Lambda}/V_m-1|\leq \frac{1}{100}.
\end{align*}
Then, 
\begin{align*}
    q_{\rm out}\geq \frac{\frac{99}{100}V_m}{\frac{101}{100}Z(a_m)}\ge\frac{99}{101}\frac1{e+1}>\frac{1}{2(e+1)}.
\end{align*}

Finally, we consider the quantum implementation costs. Using the procedure in~\cref{app:initialization-details}, we can evaluate $Z_{C,\Lambda}$ and construct a unitary $U_C$ with no membership queries that $U_C\ket{0}=\ket{\nu_\Lambda}$. Then, the reflection can be implemented by $R_{\nu_\Lambda}=U_C(2\ket{0}\bra{0}-\Id)U_C^\dagger$.

To prepare $\ket{\pi_0}$, note that
\[
 \frac{\nu_\Lambda(z)\one_K(z')}{q_{\rm in}}
 =\frac{h^De^{-a_0z_0}}{Z_{C,\Lambda}q_{\rm in}}\one_{K_0}(z)=\frac{h^De^{-a_0z_0}}{Z_\Lambda(a_0)}\one_{K_0}(z)
 =\pi_0(z).
\]
Let $\Pi_{\rm in}$ be the membership projector defined by $\Pi_{\rm in}\ket z_{\mathsf S}=\one_K(z')\ket z_{\mathsf S}$. Then
\begin{align*}
    \Pi_{\rm in}\ket{\nu_\Lambda}
 =\sqrt{q_{\rm in}}\ket{\pi_0}.
\end{align*}
Since $q_{\rm in}\geq 1/2$, we can apply the fixed-point amplification~\cite{YoderLowChuang14} to prepare $\ket{\pi_0}$ from $\ket{\nu_\Lambda}$ within trace-distance error ${\varepsilon_{\rm init}}$ using  $O(\log(1/{\varepsilon_{\rm init}}))$ controlled-reflections $R_{\nu_\Lambda}$ and $2\Pi_{\rm in}-\Id$, which cost $O(\log(1/{\varepsilon_{\rm init}}))$ membership queries in total. 
\end{proof}


\subsection{The volume estimation algorithm}\label{sec:vol_est_algorithm}


We present the main volume estimation algorithm (\cref{alg:volume}) in this subsection.

For each $0\leq i<m$, let $\mathsf{S}_i^+$ and $\mathsf{S}_i^-$ be the reserved sample registers for $p_i^+$ and $p_i^-$. They are used at stages $i$ and $i+1$, respectively. Let $\mathsf{S}_{\rm in}$ and $\mathsf{S}_{\rm out}$ be the registers used for $q_{\rm in}$ and $q_{\rm out}$, respectively. The corresponding endpoint observables are
\[
 Y_{\rm in}(z)=\one_K(z'),\qquad
 Y_{\rm out}(z)=\one_{[\kappa,2\kappa]}(z_0).
\]
Lines~\ref{ln:begin_prod_est}--\ref{ln:end_prod_est} of~\cref{alg:volume} implement the \textsc{Estimation} procedure in~\cref{alg:add-log} with $n=2m+2$ random variables. The stage loop interleaves bridge estimation with qsample conversions.

\begin{algorithm}[H]
\caption{Quantum convex-body volume estimation}
\label{alg:volume}
\begin{algorithmic}[1]
\Procedure{VolumeEstimation}{${\cal O}_{K}$, $x_c,r,R$, $\psi$, $\varepsilon$}
\State \Comment{Initialization}
\State Make the dyadic rescaling, form $K_0$ and $C$, and choose the schedule \eqref{eq:cooling-schedule}.
\State Fix all counts, one lattice, and precisions as in \cref{app:parameters}. 
\State Set $n=2m+2$ and choose $k,M$ from \eqref{eq:log-clock-parameters} with accuracy $\varepsilon/8$.
\State Prepare $\ket{\pi_0}^{\otimes L}_{\mathsf{B}_1}$ and $\ket{\pi_0}^{\otimes L}_{\mathsf{B}_2}$
\State For each $i<m$, prepare $\ket{\pi_0}^{\otimes P}$ for pilot estimation, $\ket{\pi_0}_{\mathsf{S}_i^+}$, and $\ket{\pi_0}_{\mathsf{S}_i^-}$,
\State Prepare $\ket{\nu_\Lambda}_{\mathsf{S}_{\rm in}}$ and $\ket{\pi_0}_{\mathsf{S}_{\rm out}}$
\State \Comment{Product estimation}
\State Prepare $M^{-1/2}\sum_{z=0}^{M-1}e^{2\pi iz/4}\ket z_{\mathsf{F}}$.\label{ln:begin_prod_est}
\State \Call{Add}{$\mathsf{F},\mathsf{S}_{\rm in};R_{\nu_\Lambda},V_{Y_{\rm in}},+1$}.\Comment{\cref{alg:add-log}}
\For{$i=0,\ldots,m$}
 \If{$i>0$}
  \State \Call{Add}{$\mathsf{F},\mathsf{S}_{i-1}^-;R_{\pi_i},V_{Y_{i-1}^-},-1$}.
 \EndIf
 \If{$i<m$}
  \State \textcolor{blue}{Measure the reserved $\ket{\pi_i}^{\otimes P}$ for stage $i$ and compute $c_i$.}
  \State \Call{Add}{$\mathsf{F},\mathsf{S}_i^+;R_{\pi_i},V_{Y_i^+},+1$}.
  \State \textcolor{blue}{With $\mathsf{B}_1$ fixed, convert the qsamples in $\mathsf{B}_2$ and all unused copies to $\ket{\pi_{i+1}}$; then use the updated $\mathsf{B}_2$ to convert $\mathsf{B}_1$.}
 \Else
  \State \Call{Add}{$\mathsf{F},\mathsf{S}_{\rm out};R_{\pi_m},V_{Y_{\rm out}},+1$}.
 \EndIf
\EndFor
\State Apply the inverse Fourier transform to $\mathsf{F}$, measure $Y$, and compute $\widehat{\mathsf{L}}=64n(Y/M-1/4)$.\label{ln:end_prod_est}
\State \Comment{Classical postprocessing}
\State Compute $\log \widehat{V}$ in~\eqref{eq:volume-output} with total numerical error at most $\varepsilon/8$.
\State \Return $\widehat{V}$.
\EndProcedure
\end{algorithmic}
\end{algorithm}



\subsection{Proof of the main theorem}\label{subsec:assembly-supply}
We prove~\cref{thm:intro-volume} by bounding the query complexity of~\cref{alg:volume} in~\cref{lem:query_complexity_main} and analyzing its estimation error and success probability in~\cref{lem:est_error_succ_prob_main}.

\begin{lemma}[Query complexity]\label{lem:query_complexity_main}
Let ${\varepsilon_{\rm init}}$ be the error of preparing $\ket{\pi_0}$ from $\ket{\nu_\Lambda}$ (\cref{lem:endpoint-resources}), $\xi$ be the reflection error for each accurate covariance estimate on reference inputs satisfying the fourth-moment norm bound (\cref{cor:finite-reflection-cost-main}), and $\varepsilon_{\rm conv}$ be the target error of each qsample conversion (\cref{lem:transport}). Suppose ${\varepsilon_{\rm init}}$, $\xi$, and $\varepsilon_{\rm conv}$ are inverse-polynomial in $D$ and $1/\varepsilon$. Then, the total number of membership queries in~\cref{alg:volume} is
\begin{align*}
    \widetilde O\left(\frac{D^{5/2}}\psi
                     +\frac{D^{3/2}}{\psi\varepsilon}\right).
\end{align*}
\end{lemma}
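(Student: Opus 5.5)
The count is a bookkeeping argument. We first bound the total number $N$ of calls to qsample reflections (controlled or not) made by \cref{alg:volume}. We then multiply by the per-reflection cost of \cref{cor:finite-reflection-cost-main} and add the few queries that do not go through a reflection. We fix the parameters of \cref{app:parameters}: $m=\widetilde O(\sqrt D)$ from \eqref{eq:cooling-schedule}, $L=\widetilde O(D)$ from \cref{thm:finite-metropolis} with inverse-polynomial $\varepsilon_{\Cov}$, $P=O(\log(m+1))$ from \cref{lem:pilots}, and $n=2m+2$. All tolerances ${\varepsilon_{\rm init}},\xi,\varepsilon_{\rm conv}$ are inverse-polynomial, so every $\log(1/\cdot)$ factor is polylogarithmic.

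\textbf{Step 1: conversions.} By \cref{lem:transport} with $f_0=1/4$ from \eqref{eq:overlap-from-moment} and \cref{lem:cooling-moments}, each single-copy conversion costs $O(\log(1/\varepsilon_{\rm conv}))=\widetilde O(1)$ controlled reflections. The total number of conversions is $T=2Lm+\sum_i n_i$. The blocks contribute $2Lm=\widetilde O(D^{3/2})$. Every other copy is prepared at stage $0$ and advanced at most $m$ times. These copies are the pilot batches ($P$ per stage), $\mathsf S_i^\pm$, and $\mathsf S_{\rm out}$, which number $O(mP)$, so they contribute $O(m^2P)=\widetilde O(D)$. Hence conversions use $\widetilde O(D^{3/2})$ reflections.

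\textbf{Step 2: estimation.} By \cref{lem:log-sum} with $n=2m+2$ and accuracy $\varepsilon/8$, the \textsc{Add} calls use $\widetilde O(m/\varepsilon)=\widetilde O(\sqrt D/\varepsilon)$ controlled reflections and $V_Y$ calls. Each $V_{Y}$ for a bridge mark involves no membership query. The mark $Y_{\rm in}$ requires one query to $K$, and $Y_{\rm out}$ is explicit. The reflection $R_{\nu_\Lambda}$ is free by \cref{lem:endpoint-resources}. Together with Step 1, this gives $N=\widetilde O(D^{3/2}+\sqrt D/\varepsilon)$.

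\textbf{Step 3: per-reflection cost.} Each reflection $R_{\pi_i}$ is implemented as the coherently selected $F_j$ of \cref{sec:coherent-rounding}. It computes $A(\mathbf x)$ classically from the control block, which needs no queries, and runs the circuit of \cref{cor:finite-reflection-cost-main} with fourth-moment bound $M_4$. The key point is that $M_4$ is only polynomial. By \cref{lem:fourth-growth} with $B=4^{1/4}$, it is $O(1+q)$ with $q\le \widetilde O(1/\varepsilon)$ within an \textsc{Add}, by \cref{subsec:estimator-moments}, and $q=\widetilde O(1)$ within a conversion. Consequently $\eta=(\xi/(CM_4))^4$ is inverse-polynomial, and the cost per call is $\widetilde O(D/\psi)$ queries. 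This holds uniformly over tuples $\mathbf x$, because bad branches still run the same fixed-length circuit. The cost is therefore the same on every execution, independent of branch correctness.

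\textbf{Step 4: initialization.} Preparing the $O(L+mP)$ copies of $\ket{\pi_0}$ costs $O(\log(1/{\varepsilon_{\rm init}}))$ queries each by \cref{lem:endpoint-resources}, which is $\widetilde O(D)$ in total. Pilot measurements use no queries. Multiplying, the total is $N\cdot\widetilde O(D/\psi)+\widetilde O(D)=\widetilde O(D^{5/2}/\psi+D^{3/2}/(\psi\varepsilon))$.

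The main obstacle is Step 3. The mixing bound underlying the $\widetilde O(D/\psi)$ cost holds only on good branches, so we must make sure that the per-call query count is a fixed circuit length, independent of the coherently computed branch. We must also check that the fourth-moment bounds entering $\eta$ are polynomial rather than exponential in the number of reflections. \cref{lem:fourth-growth} gives linear growth, and the count restarts for each fresh qsample, so this holds.
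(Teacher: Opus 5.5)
Your proposal is correct and follows the paper's proof essentially step for step. You count qsamples and single-copy conversions to get $N=\widetilde O(Lm+m^2+m/\varepsilon)=\widetilde O(D^{3/2}+\sqrt D/\varepsilon)$ reflections, use \cref{lem:fourth-growth} with $B=\sqrt2$ to bound each reference input within a single conversion or a single \textsc{Add} by $\sqrt2(1+2q)$, so that $M_4=\widetilde O(1/\varepsilon)$ and $\eta$ is inverse-polynomial, multiply by the fixed-length $\widetilde O(D/\psi)$ cost of \cref{cor:finite-reflection-cost-main}, and add the $\widetilde O(L+m)$ initialization queries. The only slip is that your final tally drops the $\widetilde O(1/\varepsilon)$ membership queries made through $V_{Y_{\rm in}}$ in the $q_{\rm in}$ \textsc{Add}, which you noted in Step 2 and the paper adds explicitly; this term is dominated by $D^{3/2}/(\psi\varepsilon)$, so the bound is unaffected.
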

\begin{proof}
We first count the number of qsamples needed throughout the algorithm:
\begin{itemize}
    \item Two blocks ${\sf B}_1,{\sf B}_2$, each containing $L$ qsamples, are converted alternately from $\ket{\pi_0}^{\otimes L}$ to $\ket{\pi_m}^{\otimes L}$.
    \item For $0\leq i<m$, estimating $\log p_i^+$ and $\log p_i^-$ uses $\ket{\pi_i}$ and $\ket{\pi_{i+1}}$, respectively, and estimating $c_i$ costs $P=O(\log(m+1))$ qsamples $\ket{\pi_i}^{\otimes P}$ by~\cref{lem:pilots}.
    \item Estimating $q_{\rm out}$ uses one qsample $\ket{\pi_m}$.
    \item Estimating $q_{\rm in}$ uses one qsample $\ket{\nu_\Lambda}$.
\end{itemize}
Thus, the total number of independent qsamples is
\begin{align*}
    2L+(P+2)m+2.
\end{align*}
Each qsample is prepared from $\ket{\nu_\Lambda}$ at the initial stage, then evolved to its designated stage, and discarded after use. Since preparing one copy of $\ket{\pi_0}$ from $\ket{\nu_\Lambda}$ only takes $O(\log (1/{\varepsilon_{\rm init}}))=\widetilde{O}(1)$ membership queries by~\cref{lem:endpoint-resources}, it takes 
\begin{align*}
    \widetilde{O}(2L+(P+2)m+1)=\widetilde{O}(L+m)
\end{align*}
queries in total. Then, we count the total number of qsample conversions from $\ket{\pi_i}$ to $\ket{\pi_{i+1}}$ for $0\leq i<m$. A rough upper bound is
\begin{align*}
    m\cdot (2L + (P+2)m + 1)=\widetilde{O}(Lm+m^2).
\end{align*}
Since the overlap between $\ket{\pi_i}$ and $\ket{\pi_{i+1}}$ is at least $1/4$ by~\eqref{eq:overlap-from-moment} and~\cref{lem:cooling-moments}, one single-copy qsample conversion with error $\varepsilon_{\rm conv}$ uses $\widetilde{O}(1)$ controlled-reflections by~\cref{lem:transport}. Thus, we need $\widetilde{O}(Lm+m^2)$ qsample reflections to prepare those qsamples. 

In addition, the quantum estimation for ${\sf L}$ uses $\widetilde O(m/\varepsilon)$ qsample reflections by~\cref{lem:log-sum}. Summing the two costs and using $L=\widetilde{O}(D)$ by~\cref{lem:covariance-confidence} and $m=\widetilde O(\sqrt D)$ by~\eqref{eq:cooling-schedule}, we obtain that the total number of qsample reflections used in~\cref{alg:volume} is
\begin{equation}\label{eq:total-reflection-count}
 \begin{aligned}
 N=\widetilde O\left(Lm+m^2+\frac m\varepsilon\right)
 = \widetilde O\left(D^{3/2}+\frac{\sqrt D}{\varepsilon}\right).
 \end{aligned}
\end{equation}

To apply~\cref{cor:finite-reflection-cost-main}, we verify the covariance-estimation guarantee and bound the fourth-moment norm of each reference input.

Let $\Sigma_i=\Cov_{\pi_{a_i}}(X)$ be the covariance of the continuous target at stage $i$, and let ${\varepsilon_{\Cov}}$ be the failure probability for covariance estimation. By~\cref{lem:covariance-confidence}, we can choose $L$ so that the empirical covariance $\widehat{\Sigma}({\bf x})$ of $L$ independent continuous samples satisfies
\begin{align*}
    \Pr_{{\bf x}\sim \pi_{a_i}^{\otimes L}}\left[\frac{1}{2}\Sigma_i\preceq \widehat{\Sigma}({\bf x})\preceq 2\Sigma_i\right]\geq 1-{\varepsilon_{\Cov}}/2.
\end{align*}
Then, by \cref{lem:finite-block-transfer}, we can choose one sufficiently small lattice spacing $h$ so that, for every stage $i$, 
\[
 \Pr_{\mathbf x\sim\pi_i^{\otimes L}}
 \left[\frac14\Sigma_i\preceq\widehat\Sigma(\mathbf x)\preceq4\Sigma_i\right]
 \ge1-{\varepsilon_{\Cov}}.
\]
As discussed in~\cref{sec:coherent-rounding}, this ensures that $A({\bf x})\asymp \Sigma_i$ for those good ${\bf x}$. In addition, by~\cref{lem:adjacent-covariance-prelim} and~\cref{lem:cooling-moments}, we have $\Sigma_i\asymp \Sigma_{i+1}$. Thus, the covariance condition in~\cref{cor:finite-reflection-cost-main} is satisfied.

We bound the fourth-moment norm separately within each subroutine with exact reflections, starting from its reference input: 
\begin{itemize}
    \item A single-copy conversion uses reflections about two adjacent targets $\ket{\pi_i}$ and $\ket{\pi_{i+1}}$, whose relative second moments in both directions are at most $4$ by~\cref{lem:cooling-moments}.
    \item An \textsc{Add} call uses reflections about a single target and starts with an independent qsample, whose measurement law has relative second moment $1$ (\cref{subsec:estimator-moments}).
\end{itemize}
Thus, in either case, after $q$ preceding qsample-reflection queries within this subroutine,
\[
 \|\ket{\Psi_q}_{\mathsf{SA}}\|_{4,\pi}
 \le\sqrt2(1+2q),
\]
where $\pi$ is the target of the reflection being applied. Let $q_{\max}$ be a deterministic upper bound on the number of qsample-reflection queries in any one such subroutine. A single-copy conversion uses $O(\log(1/\varepsilon_{\rm conv}))$ queries by \cref{lem:transport}, where $\varepsilon_{\rm conv}$ is its prescribed trace-distance error. One call to \textsc{Add} uses $\widetilde O(1/\varepsilon)$ queries by \cref{subsec:estimator-moments}. Since $\varepsilon_{\rm conv}$ is inverse-polynomial in the input parameters, we may therefore take
\[
 q_{\max}=\widetilde O(1/\varepsilon),
 \qquad
 M_4:=\sqrt2(1+2q_{\max})=\widetilde O(1/\varepsilon).
\]
Consequently, every reference input to a qsample-reflection satisfies
\[
 \|\ket{\Psi}_{\mathsf{SA}}\|_{4,\pi}\le M_4.
\]
The estimate is applied separately within each subroutine.

By~\cref{cor:finite-reflection-cost-main}, each accurate covariance estimate gives a reflection circuit using $\widetilde O(D/\psi)$ membership queries, with error at most $\xi$ on these reference inputs. The total query complexity of~\cref{alg:volume} can be bounded by
\[
 \underbrace{\widetilde O\left(D^{3/2}+\frac{\sqrt D}{\varepsilon}\right)
 \cdot\widetilde O\left(\frac D\psi\right)}_{\text{qsample reflections}}+\underbrace{\widetilde{O}(L+m)}_{\text{preparing}~\ket{\pi_0}} + \underbrace{\widetilde{O}(1/\varepsilon)}_{\text{estimating}~q_{\rm in}}
 =\widetilde O\left(\frac{D^{5/2}}\psi
                     +\frac{D^{3/2}}{\psi\varepsilon}\right).
\]

\end{proof}

\begin{lemma}[Estimation error and success probability]\label{lem:est_error_succ_prob_main}
With probability at least $2/3$, \cref{alg:volume} can estimate $\vol(K_{\rm in})$ within relative error at most $\varepsilon$.    
\end{lemma}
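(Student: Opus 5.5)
I would prove \cref{lem:est_error_succ_prob_main} by comparing the actual run of \cref{alg:volume} with an ideal reference run, and then controlling three things separately: the failure events, the bias of the ideal output, and the accumulated state error. The \emph{reference computation} uses exact qsample reflections, exact conversions $\ket{\pi_i}\mapsto\ket{\pi_{i+1}}$, exact preparation of $\ket{\pi_0}$, and exact Hamiltonian evolutions. In this reference, every register holds the intended independent product state at each primitive boundary.

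In the reference computation there are two probabilistic events.
\begin{enumerate}[label=(\alph*)]
\item The pilot estimates satisfy $c_i\in[\mu_i/2,2\mu_i]$ for all $i$. This holds with probability at least $99/100$ by \cref{lem:pilots}, applicable because of the finite second-moment bound $4$ from \cref{lem:cooling-moments}. The measured pilot copies are exact $\pi_i$-samples in the reference. On this event, $p_i^\pm\ge1/9$. Also $q_{\rm in}\ge1/2$ and $q_{\rm out}\ge1/(2(e+1))>1/32$ by \cref{lem:endpoint-resources}, so every observable fed to \cref{lem:log-sum} has mean at least $1/32$.
\item The phase-estimation readout is accurate. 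Conditioned on the classical values $c_i$, the \textsc{Add} calls are exactly the $n=2m+2$ contributions of \cref{lem:log-sum}, run with accuracy $\varepsilon/8$. Hence $|\widehat{\mathsf L}-\mathsf L|\le\varepsilon/8$ with probability at least $49/50$.
\end{enumerate}
One subtlety is that the $c_i$ are obtained mid-circuit, while the phase register $\mathsf F$ is being accumulated. I would handle this by noting that the measurement of the pilot block commutes with all operations on $\mathsf F$ and the other sample registers. It can therefore be deferred, or equivalently we can condition on its outcome, and the product-state structure of the reference is unaffected.

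On both events, I would combine \eqref{eq:bridge-ratio-identity}, \eqref{eq:exact-volume-telescope} and \eqref{eq:volume-output}. This gives
\[
\log\widehat V-\log\vol(K_{\rm in})=(\mathsf L-\widehat{\mathsf L})+\log\frac{V_{m,\Lambda}}{V_m}+e_{\rm num},
\]
where $V_{m,\Lambda},V_m$ are as in \eqref{eq:def_Vm}. The three terms are bounded as follows:
\begin{itemize}
\item the readout term is at most $\varepsilon/8$ on event (b);
\item the quadrature term is at most $\varepsilon/8$ by choosing $h$ via \cref{lem:full-grid-quadrature} (and \cref{thm:finite-metropolis}(iv)) so that $|V_{m,\Lambda}/V_m-1|\le\varepsilon/16$;
\item the classical numerical error $e_{\rm num}$, covering the evaluation of $\log Z_{C,\Lambda}$, the $\log c_i$, and $\log I_m$, is at most $\varepsilon/8$ by the parameter choice.
\end{itemize}
The total log-error is then at most $3\varepsilon/8$, which gives relative error $e^{3\varepsilon/8}-1\le\varepsilon$ for $\varepsilon<1/2$. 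The reference computation therefore succeeds with probability at least $1-1/100-1/50$.

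The main obstacle is bounding the trace distance between the final actual state and the final reference state. I would apply \cref{lem:composition} across all primitives, using the budget \eqref{eq:composition-reflection-budget}:
\begin{itemize}
\item $e_{\rm init}$ is at most $(2L+(P+2)m+2)\,\varepsilon_{\rm init}$, from preparing $\ket{\pi_0}$ (\cref{lem:endpoint-resources});
\item $e_{\rm prim}$ is at most $T\varepsilon_{\rm conv}$ with $T=\widetilde O(Lm+m^2)$, from the conversion primitives (\cref{lem:transport});
\item each of the $N$ reflections contributes $\xi+2\sqrt{\varepsilon_{\Cov}}$ (\cref{lem:coherent-reflection-error}), with $\xi$ valid because every reference input has fourth-moment norm at most $M_4=\widetilde O(1/\varepsilon)$, as established in the proof of \cref{lem:query_complexity_main}, so \cref{cor:finite-reflection-cost-main} applies;
\item $e_{\rm arith}$ covers the block-encoding and arithmetic errors (\cref{lem:acceptance-arithmetic}).
\end{itemize}
The delicate point is the reflection term. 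The covariance-control block must be a genuine independent product $\ket{\pi_i^{\otimes L}}$ in the reference, and this holds by the alternating schedule of \cref{subsec:rounding-conversion}. Then \cref{thm:finite-metropolis}(v) and \cref{lem:finite-block-transfer} give bad-label weight at most $\varepsilon_{\Cov}$.

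Choosing $\varepsilon_{\rm init}$, $\varepsilon_{\rm conv}$, $\xi$ and $\varepsilon_{\Cov}$ inverse-polynomially (as in \cref{app:parameters}) makes the total trace distance at most $1/100$. This costs only polylogarithmic overhead, and it is exactly the setting assumed in \cref{lem:query_complexity_main}. Since the measurement statistics of the actual and reference runs differ by at most the trace distance, the overall success probability is at least $1-3/100-1/100\ge2/3$.
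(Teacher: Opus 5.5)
Your proposal is correct and follows the paper's proof essentially step for step: the same reference computation, the same decomposition $\log(\widehat V/\vol(K_{\rm in}))=(\mathsf L-\widehat{\mathsf L})+\log(V_{m,\Lambda}/V_m)+e_{\rm num}$ with each term at most $\varepsilon/8$, a reference failure probability of at most $1/100+1/50$, and a \cref{lem:composition} trace-distance budget of $1/100$ split over initialization, conversions, coherently controlled reflections, and arithmetic. Conditioning explicitly on the mid-circuit pilot outcomes $c_i$, which are independent of $\mathsf F$ in the reference, and charging initialization and conversion errors by the exact counts $n_0$ and $T$ rather than $N$, are harmless refinements of the same argument.
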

\begin{proof}
We first analyze the reference computation in which every required qsample is supplied exactly and the reflection oracles are exact. We then bound the trace distance introduced by preparing and converting the qsamples and implementing the reflections, which bounds the decrease in success probability.

\paragraph{Estimation error.} Let $V_m$ and $V_{m,\Lambda}$ be defined as~\eqref{eq:def_Vm}. By~\eqref{eq:exact-volume-telescope},
\begin{align*}
    \log Z_{C,\Lambda} + \sum_{i=0}^{m-1}\log c_i - {\sf L} = \log V_{m,\Lambda}.
\end{align*}
Then, by~\eqref{eq:volume-output} and including the numerical error $e_{\rm num}$, we have 
\begin{align}\label{eq:vol_est_error_decomp}
    \log\frac{\widehat{V}}{\vol(K_{\rm in})}= &~ d\log r+\log Z_{C,\Lambda}
      +\sum_{i=0}^{m-1}\log c_i-\widehat{\mathsf L}-\log I_m +e_{\rm num}- (d\log r + \log \vol(K)) \notag\\
      = &~ \log V_{m,\Lambda} + ({\sf L} - \widehat{\sf L}) - \log (I_m \vol(K))+e_{\rm num}\notag\\
      = &~ \log\frac{V_{m,\Lambda}}{V_m} + ({\sf L} - \widehat{\sf L})+e_{\rm num}.
\end{align}

For the first term in~\eqref{eq:vol_est_error_decomp}, \cref{lem:full-grid-quadrature} ensures that
\[
 \left|\frac{V_{m,\Lambda}}{\vol(K)I_m}-1\right|
 \le\frac{\varepsilon}{16}.
\]
Since $|\log u|\le2|u-1|$ whenever $|u-1|\le1/2$, it follows that
\[
 \left|\log\frac{V_{m,\Lambda}}{\vol(K)I_m}\right|
 \le\frac{\varepsilon}{8}.
\]

For the second term in~\eqref{eq:vol_est_error_decomp}, conditioned on every $c_i$ being a 2-approximation of $\mu_i$ (\cref{lem:pilots}), the bridge bounds in~\cref{sec:coherent-product} and~\cref{lem:cooling-moments} imply that all probabilities in~\eqref{eq:volume-log-sum} are at least $1/32$. Applying \cref{lem:log-sum} with additive accuracy $\varepsilon/8$ gives
\[
 \left|\widehat{\mathsf L}-\mathsf L\right|\le\frac{\varepsilon}{8}
\]
with probability at least $49/50$. 

For the third term in~\eqref{eq:vol_est_error_decomp}, we can choose the scalar arithmetic precision as in~\cref{app:mark-arithmetic} so that
\begin{align*}
    |e_{\rm num}|\leq \frac{\varepsilon}{8}.
\end{align*}

Therefore, by the triangle inequality, we have
\begin{align*}
    \left|\log\frac{\widehat V}{\vol(K_{\rm in})}\right|
 \leq &~ |\mathsf L-\widehat{\mathsf L}|
 +\left|\log\frac{V_{m,\Lambda}}{\vol(K)I_m}\right|
 +|e_{\rm num}|\\
 \leq &~ \frac{\varepsilon}{8}
       +\frac{\varepsilon}{8}
       +\frac{\varepsilon}{8}
 =\frac{3\varepsilon}{8}<\frac{\varepsilon}{2}.
\end{align*}
Since $0<\varepsilon<1/2$, exponentiating gives
\[
 1-\varepsilon\le e^{-\varepsilon/2}
 \le\frac{\widehat V}{\vol(K_{\rm in})}
 \le e^{\varepsilon/2}\le1+\varepsilon.
\]

\paragraph{Success probability.} By~\cref{lem:pilots,lem:log-sum}, the failure probability of the reference computation is bounded by
\begin{align*}
    \frac{1}{100} + \frac{1}{50} = \frac{3}{100}.
\end{align*}
Consider the implemented circuit. Let $\beta=1/100$. Choose the parameters as described in~\cref{app:parameters} so that
\begin{align*}
    {\varepsilon_{\rm init}}\leq \frac{\beta}{4N},\qquad \xi\leq \frac{\beta}{8N},\qquad {\varepsilon_{\Cov}}\leq \left(\frac{\beta}{16N}\right)^2,\qquad \varepsilon_{\rm conv}\leq \frac{\beta}{4N},\qquad e_{\rm arith}\leq \frac{\beta}{4},
\end{align*}
Here ${\varepsilon_{\rm init}},\xi,\varepsilon_{\rm conv}$ are defined in~\cref{lem:query_complexity_main}, ${\varepsilon_{\Cov}}$ bounds the probability of an inaccurate covariance tuple as in~\eqref{eq:good-covariance}, and $e_{\rm arith}$ is the total error from the arithmetic gates. Then, by~\cref{lem:composition} and~\eqref{eq:composition-reflection-budget}, we have
\begin{equation}\label{eq:assembly-implementation-error}
\begin{aligned}
 \dtr(\rho_{\rm impl},\rho_{\rm ref})\leq &~ \underbrace{N{\varepsilon_{\rm init}}}_{e_{\rm init}} +\underbrace{N\varepsilon_{\rm conv}}_{e_{\rm prim}}+ N(\xi+2\sqrt{{\varepsilon_{\Cov}}}) + e_{\rm arith}\\
 \le &~ \beta/4+\beta/4+\beta/4+\beta/4=\beta,
\end{aligned}
\end{equation}
where $\rho_{\rm impl}$ is the final state of the implemented circuit, and $\rho_{\rm ref}$ is the final state of the reference computation. Therefore, the success probability of~\cref{alg:volume} is at least
\begin{align*}
    1-\frac{3}{100}-\beta=\frac{24}{25}>\frac{2}{3}. 
\end{align*}

\end{proof}

\section{A linear quantum query lower bound}\label{sec:volume-lower-bound}

We prove an $\Omega(d)$ quantum query lower bound~\cref{thm:intro-volume-lower_informal} by constructing two distributions of ellipsoids. Their volumes lie in disjoint constant-factor ranges with high probability, but a quantum algorithm that uses fewer than half as many queries as the dimension cannot distinguish between the two distributions.

\subsection{The ellipsoids and their membership queries}

Let $n=2\lfloor d/2\rfloor$, so $n$ is even and $d-n\in\{0,1\}$. Let 
\begin{align*}
    \mathrm O(n)=\left\{O\in\R^{n\times n}:O^\top O=\Id_n\right\}
\end{align*}
be the orthogonal group, and let $\mu_+$ and $\mu_-$ be the corresponding Haar probability measure conditioned on $\det O=1$ and $\det O=-1$, respectively. Equivalently, $\mu_+$ is the uniform probability measure on the rotation group $\mathrm{SO}(n)$, and multiplication by any fixed orthogonal matrix of determinant $-1$ sends $\mu_+$ to $\mu_-$.

For $0<a<1$, decompose a point as $x=(u,v)\in\R^n\times\R^{d-n}$ and define
\begin{align}\label{eq:def_KO}
 K_O=\left\{(u,v):\|(\Id_n-aO)u\|_2^2+v^2\le1\right\},
 \qquad \Delta_a(O)=\det(\Id_n-aO).
\end{align}
The $v$ coordinate is omitted when $n=d$. Since $0<a<1$ and $O$ is orthogonal,
\[
 (1-a)\|u\|_2\le\|(\Id_n-aO)u\|_2\le(1+a)\|u\|_2.
\]
The lower bound implies that $\Id_n-aO$ is invertible. Hence, $K_O$ is the inverse image of the unit ball under the invertible linear map $(u,v)\mapsto((\Id_n-aO)u,v)$, and is therefore an origin-centered ellipsoid. The same inequalities give
\[
 B(0,1/2)\subseteq K_O\subseteq B(0,(1-a)^{-1}).
\]
The eigenvalues of $O$ are $1$, $-1$, or conjugate pairs on the unit circle. Their contributions to $\Delta_a(O)$ are $1-a$, $1+a$, or $|1-ae^{i\theta}|^2$, which are all positive. Thus, writing $v_d=\vol(B(0,1))$, the linear change of variables gives
\[
 \vol(K_O)=\frac{v_d}{\Delta_a(O)}.
\]

A membership query depends on $O$ only through one quadratic form:
\begin{align}\label{eq:membership_KO}
 \one_{K_O}(u,v)
 =\one\!\left\{(1+a^2)\|u\|_2^2-2a\,u^\top Ou+v^2\le1\right\}.
\end{align}
The algorithm can only access $K_O$ through the membership oracle. The above formula allows us to compare the two oracle distributions without approximating the ellipsoid's boundary.

\begin{lemma}[Indistinguishability of the two orthogonal components]\label{lem:orthogonal-membership-indistinguishability}
Fix $0<a<1$. For any quantum algorithm making at most $q$ membership queries to $K_O$, if $2q<n$, its output distributions averaged over $O\sim\mu_+$ and $O\sim\mu_-$ are identical. This holds for every finite choice of coordinate registers and their precision.
\end{lemma}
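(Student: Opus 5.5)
We combine the quantum polynomial method with a reflection symmetry that flips $\det O$ while fixing every answer on a small set of query points. By \cite{BealsEtAl01}, after at most $q$ queries every output probability of the algorithm is a real polynomial of degree at most $2q$ in the membership bits $\one_{K_O}(x)$, where $x$ ranges over the finitely many encodable query points; this holds for any finite register size and precision. It therefore suffices to show, for every tuple $x^{(1)},\ldots,x^{(k)}$ of encodable points with $k\le 2q<n$,
\[
 \E_{O\sim\mu_+}\Bigl[\prod_{j=1}^k\one_{K_O}(x^{(j)})\Bigr]
 =\E_{O\sim\mu_-}\Bigl[\prod_{j=1}^k\one_{K_O}(x^{(j)})\Bigr].
\]
Linearity of expectation then makes every averaged output probability agree under the two distributions.

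Fix such a tuple and write $x^{(j)}=(u_j,v_j)$. The span $S$ of $u_1,\ldots,u_k$ in $\R^n$ has dimension at most $k<n$, so there is a unit vector $w\perp S$. Let $H=\Id_n-2ww^\top$, so that $H$ is orthogonal, symmetric, $\det H=-1$, and $Hu_j=u_j$ for every $j$. By \eqref{eq:membership_KO}, each answer depends on $O$ only through $u_j^\top O u_j$. For $O'=HO$ we have
\[
 u_j^\top O'u_j=(Hu_j)^\top O u_j=u_j^\top O u_j,
\]
using $H^\top=H$. Hence $\one_{K_{HO}}(x^{(j)})=\one_{K_O}(x^{(j)})$ for all $j$ and every $O$, so the product of the bits is invariant under $O\mapsto HO$.

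It remains to note that the pushforward of $\mu_+$ under $O\mapsto HO$ is $\mu_-$. This holds because left multiplication by a fixed orthogonal matrix preserves Haar measure on $\mathrm O(n)$ and, since $\det H=-1$, maps the determinant-$1$ coset bijectively onto the determinant-$(-1)$ coset. This is also the equivalence already stated before \eqref{eq:def_KO}. A change of variables then gives the displayed equality of expectations.

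The step requiring the most care is the reduction from a $q$-query algorithm to degree-$2q$ polynomials in finitely many Boolean variables, together with handling repeated query points and the $v$ coordinate. Repeated points only lower the effective $k$. The $v$ coordinate is untouched by $H$, and the definition of $K_O$ involves $O$ only through $u^\top O u$. The polynomial method in \cite{BealsEtAl01} is stated for oracles $\ket{x}\ket{b}\mapsto\ket{x}\ket{b\oplus f(x)}$ on a finite domain, which is exactly our setting once the precision is fixed, so no genuine obstacle remains.
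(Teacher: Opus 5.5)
Your proposal is correct and follows the paper's proof essentially step for step. You use the polynomial method of \cite{BealsEtAl01} to reduce to monomials in fewer than $n$ membership bits, then apply the Householder reflection $H=\Id_n-2ww^\top$ with $w$ orthogonal to the $u_j$. This $H$ preserves every answer through the quadratic form $u_j^\top Ou_j$ and swaps $\mu_+$ with $\mu_-$.
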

\begin{proof}
We first compare membership at any fixed set of $r<n$ points $x_j=(u_j,v_j)$. Choose a unit vector $w$ orthogonal to $u_1,\ldots,u_r$, and put $H=\Id_n-2ww^\top$. Then, $\det H=-1$ and $Hu_j=u_j$, so
\[
 u_j^\top HOu_j=(Hu_j)^\top Ou_j=u_j^\top Ou_j,
 \qquad
 \one_{K_{HO}}(x_j)=\one_{K_O}(x_j).
\]
The first equality uses $H^\top=H$, and the second uses the membership formula~\eqref{eq:membership_KO}. Multiplication by $H$ exchanges $\mu_+$ and $\mu_-$, while preserving all these answers. In particular,
\[
 \E_{O\sim\mu_+}\!\left[\prod_{j=1}^r\one_{K_O}(x_j)\right]
 =\E_{O\sim\mu_-}\!\left[\prod_{j=1}^r\one_{K_O}(x_j)\right].
\]

Fix the algorithm's finite set of encoded query points. By the polynomial method~\cite[Lemma~4.2]{BealsEtAl01}, the probability of any specified output event is a multilinear polynomial of degree at most $2q$ in the membership bits at these points. If $2q<n$, the preceding equality applies to every monomial of this polynomial. Its expectations under $\mu_+$ and $\mu_-$ are therefore equal. The argument permits intermediate measurements and adaptive queries by storing the measurement outcomes in auxiliary registers. The equality of membership answers is exact, including on the boundary, so no restriction on coordinate precision is needed.
\end{proof}

\subsection{The determinant estimates}

It remains to separate the volumes while keeping $(1-a)^{-1}$ polynomial in $d$. In even dimension, every orthogonal matrix with determinant $-1$ fixes a nonzero vector, giving $K_O$ a semiaxis of length $(1-a)^{-1}$. A Haar-random orthogonal matrix with determinant $1$ has no fixed nonzero vector with probability one. The following moment bounds quantify this distinction after replacing $O$ by $aO$.

\begin{lemma}[Two determinant moments]\label{lem:orthogonal-determinant-moments}
For even $n\ge2$ and $0<a<1$,
\[
 \E_{O\sim\mu_-}[\Delta_a(O)]=1-a^n\le n(1-a),
 \qquad
 \E_{O\sim\mu_+}[\det(\Id_n-O)^{-1/4}]\le2n^{1/4}.
\]
\end{lemma}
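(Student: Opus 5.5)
\emph{First moment.} I would expand the characteristic-type determinant as $\Delta_a(O)=\sum_{k=0}^n(-a)^k e_k(O)$, where $e_k(O)$ is the sum of the principal $k\times k$ minors of $O$. To compute $\E_{\mu_\pm}[e_k(O)]$, I would use invariance of $\mu_\pm$ under left multiplication by diagonal sign matrices $D$ with $\det D=1$. Such a $D$ preserves each $\mu_\pm$. It multiplies the principal minor indexed by $S$ by $\prod_{i\in S}D_{ii}$. For $1\le k\le n-1$, I would pick $D$ with exactly two entries $-1$, one inside $S$ and one outside. This flips the sign of that minor while keeping $\det D=1$, so every principal minor of size $1\le k\le n-1$ has zero mean under both $\mu_+$ and $\mu_-$. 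Hence $\E_{\mu_\pm}[\Delta_a(O)]=1+(-a)^n(\pm1)$. Since $n$ is even, this gives $\E_{\mu_-}[\Delta_a(O)]=1-a^n$. The bound $1-a^n=(1-a)\sum_{j<n}a^j\le n(1-a)$ is immediate.

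\emph{Inverse fourth-root moment.} For $O\in\mathrm{SO}(n)$ with $n$ even, the eigenvalues come in conjugate pairs $e^{\pm i\theta_j}$. Therefore $\det(\Id_n-O)=\prod_j 4\sin^2(\theta_j/2)\ge0$, and it is almost surely positive. I would use the recursive construction of Haar measure of \cite{BourgadeEtAl07}: a Haar element of $\mathrm{SO}(n)$ is built as a product of independent reflections/rotations determined by uniform vectors on spheres $S^{k-1}$. This yields a distributional identity
\[
\det(\Id_n-O)\overset{d}{=}\prod_{k=2}^{n}(1-x_k),
\]
with independent $x_k$ distributed as one coordinate of a uniform point on $S^{k-1}$ (up to the parity bookkeeping of which factors appear). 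Consequently $1-x_k\overset{d}{=}2B_k$ with $B_k\sim\mathrm{Beta}(\tfrac{k-1}2,\tfrac{k-1}2)$. Independence turns the moment into a product:
\[
\E_{\mu_+}[\det(\Id_n-O)^{-1/4}]=\prod_k 2^{-1/4}\frac{\Gamma(\alpha_k-\tfrac14)\Gamma(2\alpha_k)}{\Gamma(\alpha_k)\Gamma(2\alpha_k-\tfrac14)},\qquad \alpha_k=\tfrac{k-1}{2}.
\]
Every factor is finite because $\alpha_k\ge\tfrac12>\tfrac14$. As a cross-check, I would compare with the Keating--Snaith moment formula for $\mathrm{SO}(2N)$ evaluated at exponent $s=-1/4$. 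That formula gives the same Gamma product in closed form.

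\emph{Main obstacle: bounding the Gamma product.} The remaining step is to show the product is at most $2n^{1/4}$. I would use the duplication formula $\Gamma(2\alpha)=2^{2\alpha-1}\pi^{-1/2}\Gamma(\alpha)\Gamma(\alpha+\tfrac12)$ on each factor. This should make the product nearly telescope into a ratio like $\Gamma(\tfrac n2+c)/\Gamma(\tfrac n2+c-\tfrac14)$ times a bounded constant. Gautschi's inequality $\Gamma(x+1)/\Gamma(x+s)\le(x+1)^{1-s}$ then bounds that ratio by $O(n^{1/4})$. The delicate points are getting the exact indexing and parity in the recursive factorization right, and tracking the constants for the first few small-$k$ factors. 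Those factors are the largest, so they determine the leading constant $2$. If the telescoping turns out not to be exact, I would fall back on bounding each factor by $\exp(\tfrac{c}{k})$ for $k\ge k_0$ using $\log\Gamma$ convexity, which gives $\prod\le C n^{c'}$. I would then check numerically that $c'=1/4$ and $C\le2$.
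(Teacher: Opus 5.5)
Your first-moment argument is the paper's: left multiplication by a sign matrix with one $-1$ inside $S$ and one outside kills every nonempty proper principal minor. Your recursive factorization for the second moment is also the one the paper uses. For the record, every $k=2,\dots,n$ appears, and the identity carries an extra factor: $\det(\Id_n-O)\stackrel{\rm law}{=}2\prod_{k=2}^n(1-Z_k)$, because the terminal $1\times1$ block equals $-1$. This factor only lowers the moment by $2^{-1/4}$, and your Beta/Gamma product for each factor is correct.

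The genuine gap is the step you flag as the main obstacle. You never prove that the product is at most $2n^{1/4}$, and neither suggested route delivers it as stated.
\begin{itemize}
\item \emph{Duplication does not telescope to a single ratio.} The duplication formula turns the $k$-th factor into $\Gamma(\alpha_k-\frac14)\Gamma(\alpha_{k+1})/[\Gamma(\alpha_k-\frac18)\Gamma(\alpha_{k+1}-\frac18)]$. The product therefore collapses only to a constant times $\Gamma(\frac n2)/\Gamma(\frac n2-\frac18)$ times $\prod_{k\ge3}\Gamma(\alpha_k-\frac14)\Gamma(\alpha_k)/\Gamma(\alpha_k-\frac18)^2$. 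This leftover product is not a bounded constant; it grows like $n^{1/32}$.
\item \emph{The factor-by-factor fallback hinges on the constant $c$ in $\exp(c/k)$.} Gautschi's inequality, applied to each factor, only gives a bound of $((k-1)/(k-3))^{1/4}$, and these bounds multiply to growth of order $n^{1/2}$.
\item \emph{A numerical check cannot certify the bound for all even $n$.}
\end{itemize}

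The missing idea is the paper's use of concavity of $x^{1/4}$ (Jensen). For $k\ge4$,
\[
\E[(1-Z_k)^{-1/4}]\le\bigl(\E[(1-Z_k)^{-1}]\bigr)^{1/4}=\Bigl(\frac{k-2}{k-3}\Bigr)^{1/4}.
\]
Here the $-1$ moment is an exact Beta integral; in your notation, $\E[(2B_k)^{-1}]=(2\alpha_k-1)/(2\alpha_k-2)$. These bounds telescope exactly to $(n-2)^{1/4}$.

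For $k=2,3$ the $-1$ moment diverges, so those two factors must be computed directly: $\E[(1-Z_3)^{-1/4}]=\frac43\,2^{-1/4}$ and $\E[(1-Z_2)^{-1/4}]\le 2^{3/4}$. With the prefactor $2^{-1/4}$ from the factor $2$, this gives the total bound $\frac43\,2^{1/4}(n-2)^{1/4}\le 2n^{1/4}$.
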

\begin{proof}
Expand the determinant into principal minors:
\[
 \Delta_a(O)=\sum_{S\subseteq\{1,\ldots,n\}}(-a)^{|S|}\det O_{S,S}.
\]
For a nonempty proper subset $S$, choose $i\in S$ and $j\notin S$. Negating rows $i$ and $j$ preserves each of $\mu_+,\mu_-$, because the row operation has determinant $1$, but it negates $\det O_{S,S}$. Hence this minor has expectation zero. Only the empty and full minors remain. Since $n$ is even,
\[
 \E_{\mu_-}[\Delta_a(O)]
 =1+a^n\E_{\mu_-}[\det O]
 =1-a^n\le n(1-a).
\]
The last inequality is $1-a^n=(1-a)\sum_{j=0}^{n-1}a^j\le n(1-a)$.

For the second estimate, we use the recursive construction of Haar measure; see \cite[Section~1 and the remark in Section~2.2]{BourgadeEtAl07}. We give the determinant calculation to specify the resulting independent factors. In dimension $j\ge2$, for any uniformly random vector $\omega=(z,w)$ on the unit sphere, define
\[
 H_\omega=\Id_j-\frac{(e_1-\omega)(e_1-\omega)^\top}{1-z}.
\]
Outside the probability-zero event $\omega=e_1$, $H_\omega$ is a reflection satisfying
\begin{align*}
    H_\omega^2 = \Id_j, \qquad H_\omega e_1=\omega,\qquad \det H_\omega=-1.
\end{align*}

Let $O$ be a Haar-random orthogonal matrix conditioned on $\det O=s$, where $s\in\{\pm1\}$, and let $\omega=Oe_1$ be its first column. By the rotational invariance of the Haar measure, $\omega$ is a uniformly random vector on the unit sphere. Since $H_\omega\omega=H_\omega H_\omega e_1=e_1$, the orthogonal matrix $H_\omega O$
fixes $e_1$ and therefore has the form
\[
 H_\omega O=\begin{pmatrix}
     1 & \\
     & O'
 \end{pmatrix},
 \qquad O'\in\mathrm O(j-1).
\]
Moreover, $\det O'=\det(H_\omega)\det O=-s$.

For every $V\in\mathrm{SO}(j-1)$, right multiplication by
$\operatorname{diag}(1,V)$ preserves the law of $O$ and its first
column, while replacing $O'$ by $O'V$. Conditional on $\omega$,
the matrix $O'$ is therefore Haar distributed on the component
with determinant $-s$. This conditional law does not depend on
$\omega$, so $O'$ and $\omega$ are independent. Consequently,
\[
 O=H_\omega\begin{pmatrix}
     1 & \\
     & O'
 \end{pmatrix}
\]
gives the required recursive decomposition. Consider
\begin{align*}
    \Id_j - O = \begin{pmatrix}
 1-z&-w^\top O'\\
 -w&\Id_{j-1}-O'+(1-z)^{-1}ww^\top O'
 \end{pmatrix}.
\end{align*}
The Schur complement of $1-z$ is
\begin{align*}
    \Id_{j-1}-O'+(1-z)^{-1}ww^\top O' - (-w)(1-z)^{-1} (-w^\top O')=\Id_{j-1}-O'.
\end{align*}
Thus, we have
\[
 \det(\Id_j-O)=(1-z)\det(\Id_{j-1}-O').
\]
At each recursive step, the dimension decreases by one, and the determinant of the orthogonal matrix changes sign. Starting with $\det O=1$ in even dimension $n$, the final one-dimensional matrix therefore has determinant $(-1)^{n-1}=-1$. Therefore, 
\[
 \det(\Id_n-O)\ \stackrel{\rm law}{=}\ 2\prod_{j=2}^n(1-Z_j),
\]
where the $Z_j$ are independent and $Z_j$ is the first coordinate of a uniform point on the unit sphere in $\R^j$. All factors are positive almost surely.

For $j\ge4$, the density of $Z_j$ is proportional to $(1-z^2)^{(j-3)/2}$ on $[-1,1]$. A beta integral and concavity of $x^{1/4}$ give
\[
 \E[(1-Z_j)^{-1}]=\frac{j-2}{j-3},
 \qquad
 \E[(1-Z_j)^{-1/4}]\le\left(\frac{j-2}{j-3}\right)^{1/4}.
\]
For the two remaining dimensions, $Z_3$ is uniform on $[-1,1]$ and $Z_2$ has the law of $\cos\theta$ for uniform $\theta\in[0,2\pi]$. Thus
\begin{align*}
 \E[(1-Z_3)^{-1/4}]
 = &~ \frac12\int_{-1}^1(1-z)^{-1/4}\,dz
 =\frac43\,2^{-1/4},\\
 \E[(1-Z_2)^{-1/4}]
 = &~ \frac{2^{-1/4}}\pi\int_0^\pi(\sin u)^{-1/2}\,du
 \le2^{3/4}.
\end{align*}
The last inequality uses symmetry about $\pi/2$ and $\sin u\ge2u/\pi$ on $[0,\pi/2]$. Independence now yields, for $n\ge4$,
\begin{align*}
 \E_{\mu_+}[\det(\Id_n-O)^{-1/4}]
 = &~ 2^{-1/4}\prod_{j=2}^n\E[(1-Z_j)^{-1/4}]\\
 \leq &~ \frac43\,2^{1/4}
 \left(\prod_{j=4}^n\frac{j-2}{j-3}\right)^{1/4}\\
 = &~ \frac43\,2^{1/4}(n-2)^{1/4}\le2n^{1/4}.
\end{align*}
For $n=2$, the bound follows directly from the estimate for $Z_2$.
\end{proof}

\subsection{From volume estimation to distinguishing the components}
\begin{theorem}[Linear quantum query lower bound]\label{thm:intro-volume-lower}
There is a universal constant $C>0$ such that, for every $d\ge2$, every quantum algorithm that estimates the volume of a convex body $K\subset\R^d$ to relative error $1/4$ with success probability at least $2/3$ makes $\Omega(d)$ membership queries in the worst case. This holds even when $K$ is promised to be an ellipsoid centered at the origin with $B(0,1/2)\subseteq K\subseteq B(0,Cd^2)$, for queries of arbitrary finite precision, and with unrestricted computation between queries.
\end{theorem}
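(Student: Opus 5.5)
We reduce volume estimation to distinguishing $\mu_+$ from $\mu_-$ and then invoke \cref{lem:orthogonal-membership-indistinguishability}. Let $n=2\lfloor d/2\rfloor\ge2$ and set $1-a=c'/n^2$ for a small universal constant $c'$. Then every $K_O$ in \eqref{eq:def_KO} is an origin-centered ellipsoid with $B(0,1/2)\subseteq K_O\subseteq B(0,(1-a)^{-1})=B(0,n^2/c')$, so the promise holds with $C=1/c'$. Since $\vol(K_O)=v_d/\Delta_a(O)$, it suffices to show that $\Delta_a(O)$ falls in two ranges separated by a factor larger than $(1+1/4)/(1-1/4)=5/3$, each with probability at least $7/8$ under the respective measure.

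\textbf{The $\mu_-$ side.} By \cref{lem:orthogonal-determinant-moments}, $\E_{\mu_-}[\Delta_a]\le n(1-a)=c'/n$. Markov's inequality then gives $\Pr_{\mu_-}[\Delta_a\le 8c'/n]\ge7/8$.

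\textbf{The $\mu_+$ side.} We need to transfer the moment bound on $\det(\Id-O)^{-1/4}$ to $\Delta_a(O)$. Diagonalize $O\in\mathrm{SO}(n)$ into conjugate eigenvalue pairs $e^{\pm i\theta}$; pairs of $\pm1$ eigenvalues are the degenerate cases $\theta\in\{0,\pi\}$. A pair contributes $|1-ae^{i\theta}|^2$ to $\Delta_a$ and $|1-e^{i\theta}|^2$ to $\det(\Id-O)$. We claim $|1-ae^{i\theta}|\ge\frac12|1-e^{i\theta}|$ for $a\in(0,1)$, with a constant no smaller than this. To see it, write
\[
|1-ae^{i\theta}|^2=(1-a)^2+2a(1-\cos\theta)=(1-a)^2+a|1-e^{i\theta}|^2 .
\]
When $a\ge1/2$ this is at least $\frac12|1-e^{i\theta}|^2$. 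When $a<1/2$ it is at least $1/4\ge\frac1{16}|1-e^{i\theta}|^2$.

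Applying the claim factor by factor gives $\Delta_a(O)\ge 4^{-n/2}\det(\Id-O)$, which is too lossy. We therefore use $(1-a)^2+a|1-e^{i\theta}|^2\ge a|1-e^{i\theta}|^2$ instead, giving $\Delta_a(O)\ge a^n\det(\Id-O)$. With $1-a=c'/n^2$ we have $a^n\ge1-c'/n\ge1/2$, so $\Delta_a(O)\ge\frac12\det(\Id-O)$. Markov's inequality applied to $\det(\Id-O)^{-1/4}$, using $\E\le2n^{1/4}$, gives
\[
\Pr\bigl[\det(\Id-O)<t\bigr]\le 2n^{1/4}t^{1/4}.
\]
Choosing $t=c/n$ with $c=(1/16)^4$ makes this probability at most $1/8$. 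Hence $\Pr_{\mu_+}[\Delta_a\ge c/(2n)]\ge7/8$.

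\textbf{Choice of constants.} Take $c'$ with $8c'/n\le (c/(2n))/4$, i.e.\ $c'=c/64$. Then the two volume ranges $\vol\le 2nv_d/c$ under $\mu_+$ and $\vol\ge 4\cdot 2nv_d/c$ under $\mu_-$ are separated by a factor of $4>5/3$, each holding with probability at least $7/8$.

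\textbf{Distinguishing argument.} Suppose an algorithm makes $q<n/2$ queries and outputs a $1/4$-relative estimate with probability at least $2/3$ on every such ellipsoid. Declare ``$-$'' if the estimate exceeds the geometric midpoint threshold between the two ranges. Under $\mu_-$ this declaration occurs with probability at least $\frac78\cdot\frac23$. Under $\mu_+$ it occurs with probability at most $\frac18+\frac13$. Since $\frac{7}{12}>\frac{11}{24}$, the averaged output distributions differ, contradicting \cref{lem:orthogonal-membership-indistinguishability}. Therefore $q\ge n/2\ge(d-1)/2=\Omega(d)$. Precision and unrestricted interleaved computation are already covered by that lemma.

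\textbf{Main obstacle.} The delicate step is the comparison $\Delta_a\ge a^n\det(\Id-O)$ that transfers the $a=1$ moment bound to $\Delta_a$. It requires $1-a=O(1/n)$ so that $a^n$ stays constant, which is compatible with the chosen $1-a=\Theta(n^{-2})$.
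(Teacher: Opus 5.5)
Your proposal is correct and follows essentially the same route as the paper. It uses the same $\mu_\pm$ ellipsoid family with $1-a=\Theta(n^{-2})$, Markov's inequality on $\E_{\mu_-}[\Delta_a]\le n(1-a)$, the comparison $|1-a\lambda|^2\ge a|1-\lambda|^2$ combined with Markov on $\det(\Id_n-O)^{-1/4}$ for $\mu_+$, and the threshold reduction to \cref{lem:orthogonal-membership-indistinguishability}. Two blemishes are harmless: the product over the $n/2$ eigenvalue pairs actually gives $a^{n/2}$, of which your $a^n$ is a valid weakening, and the abandoned aside claiming the constant $\frac12$ is never used, which is just as well since your $a<1/2$ case only yields $\frac14$.
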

\begin{proof}[Proof of \cref{thm:intro-volume-lower}]
Choose a sufficiently small universal dyadic constant $c_*>0$, and set
\[
 n=2\lfloor d/2\rfloor,
 \qquad \tau=\frac{c_*}{n},
 \qquad \gamma=\frac{c_*}{64n^2},
 \qquad a=1-\gamma.
\]
We take $K=K_O$, which satisfies
\[
 B(0,1/2)\subseteq K_O\subseteq B(0,64d^2/c_*).
\]

On the negative component $\mu_-$, positivity of $\Delta_a$ and~\cref{lem:orthogonal-determinant-moments} give
\begin{align*}
 \Pr_{\mu_-}[\Delta_a(O)>\tau/4]
 \leq \frac{4\E_{\mu_-}[\Delta_a(O)]}{\tau}
 \leq \frac{4n\gamma}{\tau}=\frac1{16},
\end{align*}
where the first step follows from Markov's inequality. 

On the positive component $\mu_+$, each eigenvalue $\lambda$ of $O$ satisfies
\[
 |1-a\lambda|^2=(1-a)^2+a|1-\lambda|^2\ge a|1-\lambda|^2.
\]
Multiplying over the eigenvalues and using positivity gives
\[
 \Delta_a(O)\ge a^{n/2}\det(\Id_n-O)
 \ge\frac12\det(\Id_n-O),
\]
where $a^{n/2}=(1-\gamma)^{n/2}\ge1-n\gamma/2\ge1/2$. Applying Markov's inequality to the inverse fourth root, we obtain
\begin{align*}
 \Pr_{\mu_+}[\Delta_a(O)<\tau]
 \leq &~ \Pr_{\mu_+}[\det(\Id_n-O)<2\tau]\\
 \leq &~ (2\tau)^{1/4}\E_{\mu_+}[\det(\Id_n-O)^{-1/4}]\\
 \leq &~ 2(2n\tau)^{1/4}
 =2(2c_*)^{1/4}\le\frac18,
\end{align*}
where the last step fixes $c_*$ sufficiently small. Therefore, for the known number $V_*=v_d/\tau$,
\[
 \Pr_{\mu_+}[\vol(K_O)\le V_*]\ge\frac78,
 \qquad
 \Pr_{\mu_-}[\vol(K_O)\ge4V_*]\ge\frac78.
\]

Suppose an algorithm makes $q<n/2$ membership queries and estimates the volume of every promised body to relative error $1/4$ with probability at least $2/3$. Decide that $O$ has negative determinant when its estimate exceeds $2V_*$. On a successful estimation event, the two volume ranges give
\[
 \vol(K_O)\le V_*\ \Longrightarrow\ \widehat V\le\frac54V_*<2V_*,
 \qquad
 \vol(K_O)\ge4V_*\ \Longrightarrow\ \widehat V\ge3V_*>2V_*.
\]
Since the estimation guarantee holds for every fixed $O$, the resulting decision succeeds under each distribution with probability at least
\[
 \frac78\cdot\frac23=\frac7{12}>\frac12.
\]
This is impossible by~\cref{lem:orthogonal-membership-indistinguishability}: identical averaged output distributions have average success probability exactly $1/2$ when the two components have equal prior probability. Thus $q\ge n/2=\lfloor d/2\rfloor$, proving the linear lower bound.
\end{proof}

\section*{Acknowledgments}
\addcontentsline{toc}{section}{Acknowledgments}
The author thanks Shouvanik Chakrabarti, Arjan Cornelissen, Yassine Hamoudi, Tongyang Li, and Xinzhao Wang for helpful discussions during the early stages of this project.
\section*{AI Methodology}
\addcontentsline{toc}{section}{AI Methodology}

\begin{flushleft}
\renewcommand{\arraystretch}{1.3}
\begin{tabular}{@{}l@{\qquad}c@{}}
\textbf{Activity} & \textbf{AI used} \\
Asking the research question & \texttt{No} \\
Coming up with the approach & \texttt{Yes} \\
Proof development & \texttt{Yes} \\
Writing and exposition & \texttt{Yes} \\
Checking for bugs & \texttt{No} \\
Other supporting tasks & \texttt{Yes} \\
\end{tabular}
\end{flushleft}

\begin{sloppypar}
\paragraph{Details.}
The author began thinking about how progress on the KLS conjecture could improve the quantum volume algorithm of~\cite{CCHLW23} in 2023. The main obstacles were the slower known mixing bounds for the hit-and-run walk compared with the ball walk, a gap in the earlier effective spectral-gap argument, and the difficulty of implementing the speedy walk in quantum. After obtaining the results in the companion paper~\cite{Zhang26HitRun}, the author recognized that these techniques could lead to improved quantum volume algorithms.

With the author's guidance, ChatGPT 5.5 Pro developed a way to use uniform warm-start mixing bounds to replace the earlier effective spectral-gap argument based on a single initial distribution. The resulting algorithm reduced the first term in the reported query bound from $d^3$ to $d^{5/2}$, while keeping the second term of~\cite{CornelissenHamoudi23}. This initial version used a lattice discretization of hit-and-run. Further interactions with ChatGPT 5.6 Pro and GPT 5.6 Sol led to an improved quantum product estimator, yielding the query complexity stated in~\cref{thm:intro-volume_informal}. The $\Omega(d)$ lower bound was obtained by asking ChatGPT 6 Pro to improve the quantum volume estimation lower bound using the techniques in~\cite{Childs26Matrix}. However, after multiple rounds of conversations, ChatGPT 6 Pro and GPT-6 Astra did not obtain a superlinear lower bound. 

The resulting proof was long and technically involved. The author spent considerable effort over months checking its correctness and simplifying it, with assistance from ChatGPT 5.6 Pro, ChatGPT 6 Pro, and Claude Fable 5.1. The lattice Metropolis walk used in the present paper was developed during this process. 

The author verified all AI-generated material before incorporating it into the manuscript and wrote the final proofs and text, with AI assistance for polishing and creating figures. The author takes full responsibility for the paper's correctness and presentation.
\end{sloppypar}

\clearpage
\appendix
\crefalias{section}{appendix}
\crefalias{subsection}{appendix} 

\section{Why one-start mixing is insufficient}\label{app:counterexample}

Classical mixing controls the density ratio of a starting law, whereas its qsample contains the square root of that ratio.  Taking square roots can destroy cancellation against a slow eigenfunction. The following finite example disproves the general implication used in \cite[Proposition~4.2]{CCHLW23}.

\begin{proposition}[A four-state counterexample]\label{prop:counterexample}
For every integer $t\ge1$, there is a lazy, reversible, ergodic four-state chain with uniform stationary law $\pi$, a nonstationary eigenvalue $\lambda=1-1/(100t)$, and a two-warm start $a$ such that $\sum_x\pi(x)^2/a(x)\le2$ and
\[
 \dtv(a\sP^j,\pi)=\frac3{16}2^{-j}\quad(j\ge0),
\]
but the projection of $\ket{a}$ on the $\lambda$-eigenspace has a fixed positive norm independent of $t$.
\end{proposition}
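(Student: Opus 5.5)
The plan is to build the chain directly from a prescribed orthonormal eigenbasis in $L^2(\pi)$, with $\pi$ uniform on $\{1,2,3,4\}$. The starting density $r=da/d\pi$ will lie in the span of the constant function and a \emph{fast} eigenfunction, so that classical mixing is exact geometric decay. The square root $\sqrt r$, which is the relative amplitude of $\ket a$, will then acquire a component along a \emph{slow} eigenfunction, because the square root is not affine in the fast eigenfunction.

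\textbf{Construction.} I would take the $\pi$-orthonormal functions
\[
 \one,\quad f=(1,1,-1,-1),\quad g=(\sqrt2,-\sqrt2,0,0),\quad h=(0,0,\sqrt2,-\sqrt2),
\]
and define
\[
 \sP(x,y)=\pi(y)\bigl[1+\lambda f(x)f(y)+\tfrac12 g(x)g(y)+\tfrac12 h(x)h(y)\bigr],
 \qquad \lambda=1-\frac1{100t}.
\]
The first step is the routine verification that this is a valid chain with the required properties.
\begin{itemize}
 \item \emph{Symmetry and stationarity.} The matrix is symmetric and $\pi$ is uniform, so $\sP$ is reversible for $\pi$. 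Since $f,g,h\perp\one$, we have $\sP\one=\one$, so rows sum to one.
 \item \emph{Nonnegative entries.} The off-diagonal entries are $\lambda/4$ between states $1$ and $2$, $\lambda/4$ between states $3$ and $4$, and $(1-\lambda)/4$ across the two pairs. All are nonnegative.
 \item \emph{Laziness.} The diagonal entries are $(2+\lambda)/4\cdot$ roughly, in fact $\tfrac14(1+\lambda+1)\ge\tfrac12$, so the chain is lazy.
 \item \emph{Spectrum and ergodicity.} The spectrum is $\{1,\lambda,\tfrac12,\tfrac12\}\subset[0,1]$. Since $\lambda<1$, the eigenvalue $1$ is simple, so the chain is ergodic.
\end{itemize}

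\textbf{The start and its classical mixing.} Set $r=1+\tfrac{3}{4\sqrt2}g=(7/4,\,1/4,\,1,\,1)$ and $a=\pi r$. Since $0\le r\le 7/4\le 2$, the start $a$ is two-warm. The inverse second moment is
\[
 \sum_x \pi(x)^2/a(x)=\tfrac14\sum_x 1/r(x)=\tfrac14\bigl(\tfrac47+4+1+1\bigr)<2.
\]
The density of $a\sP^j$ relative to $\pi$ is $1+\tfrac{3}{4\sqrt2}2^{-j}g$. Hence
\[
 \dtv(a\sP^j,\pi)=\tfrac12\,\E_\pi|r_j-1|=\tfrac12\cdot\tfrac{3}{4\sqrt2}\,2^{-j}\cdot\sqrt2=\tfrac3{16}\,2^{-j}.
\]

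\textbf{The slow component of the qsample.} The amplitudes of $\ket a$ are $\sqrt{\pi(x)}\sqrt{r(x)}$, so under $J_\pi$ the vector $\ket a$ corresponds to $\sqrt r$, and its projection onto the $\lambda$-eigenspace $\Span\{f\}$ has norm
\[
 |\langle f,\sqrt r\rangle_\pi|=\tfrac14\Bigl|\tfrac{\sqrt7}{2}+\tfrac12-1-1\Bigr|=\tfrac{3-\sqrt7}{8}>0 .
\]
This value is independent of $t$. It is nonzero by strict concavity of $\sqrt{\cdot}$: $\sqrt{7/4}+\sqrt{1/4}<2$, even though $7/4+1/4=2$.

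\textbf{Main obstacle.} The only delicate point is choosing the fast eigenfunction $g$ so that $r$ takes at least three distinct values. If $g$ took only two values, then $\sqrt r$ would be affine in $g$ and would have no slow component. At the same time, $f$ must be orthogonal to both $\one$ and $g$ without being orthogonal to $\sqrt r$. Supporting $g$ on a single pair of states and letting $f$ distinguish the two pairs achieves both requirements.
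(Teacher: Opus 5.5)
Your proof is correct and is essentially the paper's: your kernel, written through its spectral decomposition, is exactly the paper's matrix $\tfrac12\Id+\tfrac18\one\one^\top+\tfrac{2\lambda-1}{8}ss^\top$ with $s=f$, and, like the paper, you place the start density in the $\tfrac12$-eigenspace and obtain a nonzero slow coefficient from strict concavity of $\sqrt{\cdot}$; the only difference is the perturbation, $(7/4,1/4,1,1)$ instead of the paper's $1+g/4$ with $g=(1,-1,2,-2)$. One small slip: $\E_\pi|g|=\sqrt2/2$, not $\sqrt2$, so the last factor in your total-variation display should be $\sqrt2/2$, although the final value $\tfrac3{16}2^{-j}$ is correct.
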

\begin{proof}
Let $\one=(1,1,1,1)^\top$ and $s=(1,1,-1,-1)^\top$, and set
\[
 \sP=\tfrac12\Id+\tfrac18\one\one^\top
                    +\tfrac{2\lambda-1}{8}ss^\top.
\]
The entries have the explicit form
\[
 \sP(x,y)=\tfrac12\one_{\{x=y\}}
                    +\frac{1+(2\lambda-1)s_xs_y}{8}>0.
\]
Since $0<2\lambda-1<1$, positivity holds for both possible signs of $s_xs_y$.  Also $s^\top\one=0$, so $\sP\one=\one$. The matrix is symmetric and its diagonal is at least $1/2$. Thus it defines a lazy, reversible, irreducible chain with uniform stationary law.

The orthogonality relations and $\|s\|_2^2=\|\one\|_2^2=4$ give
\[
 \sP\one=\one,\qquad
 \sP s=\left(\tfrac12+\tfrac{2\lambda-1}{2}\right)s=\lambda s,
 \qquad
 \sP v=\tfrac12v\quad(v\perp\one,s).
\]
Hence its spectrum is $1,\lambda,1/2,1/2$. Its discriminant is the same matrix because $\pi$ is uniform; $\ket{v}_{\mathsf{S}}=\frac12\sum_xs_x\ket{x}_{\mathsf{S}}$ is a stationary-orthogonal eigenstate with eigenvalue $\lambda$.

Choose $g=(1,-1,2,-2)^\top$ and $a_x=(1+g_x/4)/4$. Since $g\perp\one,s$, for every $j\ge0$,
\[
 a^\top\sP^j=\pi^\top+2^{-j}g^\top/16.
\]
Therefore
\[
 \dtv(a\sP^j,\pi)
 =\tfrac12\sum_x|2^{-j}g_x/16|
 =\frac{2^{-j}}{32}(1+1+2+2)
 =\frac3{16}2^{-j}.
\]
The density ratio and reverse moment satisfy
\[
 \tfrac12\le a_x/\pi_x\le\tfrac32,
 \qquad
 \sum_x\frac{\pi_x^2}{a_x}
 =\sum_x\pi_x\frac{\pi_x}{a_x}\le2.
\]
These bounds are independent of $t$.

The coherent amplitude has a different slow coefficient:
\[
 \langle v|a\rangle
 =\tfrac14\bigl(\sqrt{1+1/4}+\sqrt{1-1/4}
                       -\sqrt{1+1/2}-\sqrt{1-1/2}\bigr)>0.
\]
Indeed,
\[
 \frac{d}{du}\bigl(\sqrt{1+u}+\sqrt{1-u}\bigr)
 =\frac1{2\sqrt{1+u}}-\frac1{2\sqrt{1-u}}<0
 \quad(0<u<1).
\]
This positive coefficient does not depend on $t$, whereas the one-start mixing error $(3/16)2^{-t}$ tends to zero.  Thus, that mixing error cannot bound the slow qsample component by a quantity tending to zero.
\end{proof}

For a slow eigenfunction $f$, classical mixing controls $\int f(a/\pi)\,d\pi$, not $\int f\sqrt{a/\pi}\,d\pi$. A pointwise bound on $\sqrt{\pi/a}$ cannot compare these signed integrals while preserving cancellation.  This is the issue with the coefficient comparison in \cite[Equation~(4.40)]{CCHLW23}.

\paragraph{Connection to the Szegedy walk.}
To compare with the walk formulation in~\cite{CCHLW23}, let $V_1,V_2$ be the Szegedy isometries with $V_1^\dagger V_2=D_\pi$, and put $W=(2V_2V_2^\dagger-\Id)(2V_1V_1^\dagger-\Id)$. For a real state $\ket{v}_{\mathsf{S}}$ satisfying $D_\pi\ket{v}_{\mathsf{S}}=\lambda\ket{v}_{\mathsf{S}}$, where $0<\lambda<1$, an orthonormal basis of the corresponding invariant plane is
\[
 \ket{e_1}=V_1\ket{v}_{\mathsf{S}},
 \qquad
 \ket{e_2}=\frac{V_2\ket{v}_{\mathsf{S}}-\lambda V_1\ket{v}_{\mathsf{S}}}
                           {\sqrt{1-\lambda^2}}.
\]
For any vector $\ket{\psi}_{\mathsf{S}}$, the identities $V_1^\dagger V_1=\Id$ and $V_1^\dagger V_2\ket{v}_{\mathsf{S}} =\lambda\ket{v}_{\mathsf{S}}$ give
\[
 \bra{e_1}V_1\ket{\psi}_{\mathsf{S}}=\langle v|\psi\rangle,
 \qquad
 \bra{e_2}V_1\ket{\psi}_{\mathsf{S}}
 =\frac{\lambda\langle v|\psi\rangle
                 -\lambda\langle v|\psi\rangle}
                  {\sqrt{1-\lambda^2}}=0.
\]
In this basis the walk is
\[
 W\big|_{\Span\{\ket{e_1},\ket{e_2}\}}=
 \begin{pmatrix}
 2\lambda^2-1&-2\lambda\sqrt{1-\lambda^2}\\
 2\lambda\sqrt{1-\lambda^2}&2\lambda^2-1
 \end{pmatrix}.
\]
Its eigenstates $(\ket{e_1}\mp i\ket{e_2})/\sqrt2$ have eigenphases $\pm2\arccos\lambda$. Therefore each corresponding coefficient of $V_1\ket{\psi}_{\mathsf{S}}$ has magnitude $|\langle v|\psi\rangle|/\sqrt2$. For the counterexample, take $\ket{v}_{\mathsf{S}}=\frac12\sum_xs_x\ket{x}_{\mathsf{S}}$ and $\ket{\psi}_{\mathsf{S}}=\ket{a}_{\mathsf{S}}$. The computed overlap is nonzero, so both slow-walk phases have nonzero amplitudes.

\begin{remark}[Sharpness of the fourth-root dependence]
\label{rem:fourth-root-sharp}
Take $\pi_p=(p,1-p)$ and $\sP_p=\frac34\Id+\frac14\one\pi_p^\top$, with $0<p\le1/4$. Every two-warm start is $(q,1-q)$ for $0\le q\le2p$; hence the largest one-step distance from stationarity is $\eta=3p/4$. The centered function $f_p=\one_{\{1\}}-p$ has eigenvalue $3/4$, which belongs to the slow interval $[1/2,1)$ for $t=1$. Its norms satisfy
\[
 \|f_p\|_2^2=p(1-p),\qquad
 \|f_p\|_4^4=p(1-p)^4+(1-p)p^4.
\]
Set $\ket{\Psi_p}_{\mathsf{S}}=J_{\pi_p}f_p$, which is a possibly unnormalized vector in the slow subspace. The displayed moments give
\[
 \frac{\|\ket{\Psi_p}_{\mathsf{S}}\|}{\|\ket{\Psi_p}_{\mathsf{S}}\|_{4,\pi_p}}
 =\frac{\|f_p\|_2}{\|f_p\|_4}\asymp p^{1/4}.
\]
Since $\eta=3p/4$, no larger exponent of $\eta$ can replace $1/4$ in \eqref{eq:warm-low-spectrum}.
\end{remark}

\section{Geometry and mixing of the lattice Metropolis walk}\label{app:finite-state}
This appendix proves \cref{thm:finite-metropolis}. We first state some geometric tools from the companion paper~\cite{Zhang26HitRun} in~\cref{subsec:classical-logaffine}. In~\cref{subsec:cubic-grid}, we bound the quadrature error of the lattice-discretized normalization constants and analyze affine preconditioning on the fixed cubic lattice $h\Z^D$. We then prove local overlap for the lattice Metropolis walk in~\cref{subsec:finite-proposals} and combine it with the separator estimate to obtain the conductance and warm-start mixing bounds in~\cref{subsec:lattice-conductance}. In \cref{subsec:common-mesh}, we transfer empirical covariance concentration to the lattice target and bound the continuous covariance using the known geometric parameters. Finally, we combine these estimates to uniformly select the lattice spacing and the lattice Metropolis walk parameters that work for all tilts and allowed preconditioners, completing the proof of \cref{thm:finite-metropolis}.

\begin{figure}[!ht]
 \centering
 \begingroup
 \definecolor{proofblue}{RGB}{40,73,110}
 \definecolor{proofteal}{RGB}{36,106,96}
 \definecolor{proofgray}{RGB}{85,91,100}
 \makebox[\textwidth][c]{%
 \resizebox{1.08\textwidth}{!}{%
 \begin{tikzpicture}[
  x=1cm,y=1cm,
  every node/.style={font=\fontsize{9}{11}\selectfont,align=center},
  proofbox/.style={draw=proofblue,fill=proofblue!3,rounded corners=2pt,
    line width=.65pt,inner xsep=5pt,inner ysep=6pt,
    text width=4.53cm,minimum width=4.9cm,minimum height=1.85cm},
  widebox/.style={proofbox,text width=9.93cm,minimum width=10.3cm},
  supportbox/.style={proofbox,draw=proofteal,fill=proofteal!3},
  dependency/.style={-{Latex[length=2mm,width=1.4mm]},
    draw=proofblue,line width=.7pt},
  support/.style={dependency,draw=proofteal},
  imported/.style={dependency,draw=proofgray,densely dashed}]

  \node[anchor=west,text=proofblue,font=\fontsize{10}{11}\selectfont\bfseries]
    at (0,1.42) {Mixing for an accurate covariance estimate};
  \node[anchor=west,text=proofteal,font=\fontsize{10}{11}\selectfont\bfseries]
    at (10.8,1.42) {Covariance guarantees};

  \node[proofbox,minimum height=2.1cm] (geometry) at (2.45,0) {
    \textbf{\ref{subsec:classical-logaffine}\quad Geometry}\\[3pt]
    Local radii and mass bounds;\\
    an intrinsic separator bound\\[3pt]
    \cref{thm:classical-log-affine-input}};

  \node[proofbox,minimum height=2.1cm] (grid) at (7.85,0) {
    \textbf{\ref{subsec:cubic-grid}\quad The fixed lattice}\\[3pt]
    Normalization constants\\
    and preconditioning\\[3pt]
    \cref{lem:full-grid-quadrature};~\cref{prop:finite-affine-branches}};

  \node[supportbox,minimum height=2.1cm] (covariance) at (13.25,0) {
    \textbf{\ref{subsec:common-mesh}\quad Covariance bounds}\\[3pt]
    Concentration of estimates;\\
    spectral bounds from geometry\\[3pt]
    \cref{lem:covariance-confidence,lem:log-affine-covariance-floor}};

  \node[widebox,minimum height=1.5cm] (cells) at (5.15,-2.45) {
    \textbf{\ref{subsec:finite-proposals}\quad Good lattice cells}\\[3pt]
    Small exceptional mass and controlled cell diameters\\[3pt]
    \cref{lem:mesh-collar}};

  \node[supportbox,minimum height=1.65cm] (transfer) at (13.25,-2.45) {
    \textbf{\ref{subsec:common-mesh}\quad Lattice transfer}\\[3pt]
    Accurate estimates using\\
    samples from the lattice\\[3pt]
    \cref{lem:finite-block-transfer}};

  \node[proofbox,minimum height=1.7cm] (overlap) at (2.45,-4.8) {
    \textbf{\ref{subsec:finite-proposals}\quad Local overlap}\\[3pt]
    Shared transition mass $\Omega(1/J)$\\
    for nearby lattice points\\[3pt]
    \cref{lem:finite-overlap}};

  \node[proofbox,minimum height=1.7cm] (separator) at (7.85,-4.8) {
    \textbf{\ref{subsec:lattice-conductance}\quad Lattice separator}\\[3pt]
    Separated sets have enough\\
    probability mass between them\\[3pt]
    \cref{lem:cell-separator}};

  \node[widebox,fill=proofblue!6,minimum height=1.95cm] (mixing) at (5.15,-7.25) {
    \textbf{\ref{subsec:lattice-conductance}\quad Conductance and warm-start mixing}\\[4pt]
    $\displaystyle \Phi_s\gtrsim\frac{\psi_D}{JD(1+\log(1/s))}$\\[5pt]
    $t_{\mathrm{mix},2}(\eta)=\widetilde O(D^2/\psi_D^2)$\qquad
    \cref{thm:finite-mixing}};

  \node[widebox,minimum height=1.65cm] (parameters) at (5.15,-9.8) {
    \textbf{\ref{sec:B_proof}\quad One common choice of parameters}\\[3pt]
    Choose $h$, the radius list, and precision bounds\\
    to satisfy the mixing, covariance, and normalization constant requirements.};

  \node[widebox,line width=.9pt,fill=proofblue!6,minimum height=2.3cm] (result) at (5.15,-12.45) {
    \textbf{The finite-walk theorem (\cref{thm:finite-metropolis})}\\[4pt]
    \emph{All allowed $A$:} reversible with stationary law $\pi_{i,\Lambda}$;\\
    lazy, positive, irreducible; a $1$-block-encoding of the discriminant.\\[3pt]
    \emph{Accurate $A$:} $\widetilde O(D^2/\psi^2)$ warm-start mixing.\\[3pt]
    \emph{Every stage:} relative normalization constant error at most $\zeta$.};

  \node[proofbox,draw=proofgray,fill=black!2,densely dashed,
    minimum height=2.3cm] (implementation) at (13.25,-12.45) {
    \textbf{Construction and implementation}\\[3pt]
    \cref{alg:finite-metropolis};\\
    $O(1)$-query block-encoding\\[3pt]
    \cref{lem:metropolis-block-encoding}};

  \draw[dependency] (geometry.south)--([xshift=-2.7cm]cells.north);
  \draw[dependency] (grid.south)--([xshift=2.7cm]cells.north);
  \draw[support] (covariance.south)--(transfer.north);
  \draw[support,rounded corners=3pt]
    (grid.east)--(10.55,0)--(10.55,-2.45)--(transfer.west);

  \draw[dependency] ([xshift=-2.7cm]cells.south)--(overlap.north);
  \draw[dependency] ([xshift=2.7cm]cells.south)--(separator.north);
  \draw[dependency] (overlap.south)--([xshift=-2.7cm]mixing.north);
  \draw[dependency] (separator.south)--([xshift=2.7cm]mixing.north);
  \draw[dependency] (mixing.south)--(parameters.north);
  \draw[dependency] (parameters.south)--(result.north);

  \draw[support,rounded corners=3pt]
    (transfer.south)--(13.25,-9.8)--(parameters.east);
  \node[fill=white,text=proofteal,inner sep=5pt,text width=4.1cm]
    at (13.25,-6.5) {Constraints for\\covariance accuracy};
  \draw[imported] (implementation.west)--(result.east);
 \end{tikzpicture}
 }
 }
 \endgroup
 \caption{Proof structure of~\cref{app:finite-state}. The left branch proves mixing for a fixed accurate preconditioner $A\asymp\operatorname{Cov}_\pi(X)$. The right branch transfers covariance concentration to lattice samples. \cref{sec:B_proof} chooses common parameters for all cooling stages and allowed preconditioners and completes the proof of the main theorem for the lattice Metropolis walk (\cref{thm:finite-metropolis}).}
 \label{fig:appendix-b-proof-map}
\end{figure}

\subsection{Continuous geometry}\label{subsec:classical-logaffine}
Let $K\subset\R^D$ be a convex body. For a continuous log-affine distribution
\[
 \pi_b(dx)=\frac{f_b(x)}{Z_b}\,dx,\qquad f_b(x)=e^{-\langle b,x\rangle}\one_K(x),\qquad Z_b=\int_K f_b(x)\,dx,
\]
define the local radius
\begin{equation}\label{eq:density-local-radius}
 \rho_b(x)=\sup\left(\{0\}\cup\left\{r>0:\frac{\vol(B(x,r)\cap\{y:f_b(y)\ge\tfrac34f_b(x)\})}{\vol(B(0,r))}\ge\frac{63}{64}\right\}\right),\qquad x\in K.
\end{equation}
Thus, at any radius $r\leq \rho_b(x)$, 
\begin{align*}
    \Pr_{U\sim B(0,r)}\left[x+U\in K~\wedge~e^{-\langle b,U\rangle}\geq \frac{3}{4}\right]\geq \frac{63}{64}.
\end{align*}

For $x,y\in\operatorname{int}K$, define
\begin{equation}\label{eq:density-radius-metric}
 d_{\rho_b}(x,y)=\inf_\gamma\int_0^1\frac{\|\gamma'(t)\|_2}{\rho_b(\gamma(t))}\,dt,
\end{equation}
where the infimum is over absolutely continuous paths in $\operatorname{int}K$ joining $x$ and $y$. For subsets $S,T\subseteq K$, set $d_{\rho_b}(S,T)=\inf\{d_{\rho_b}(x,y):x\in S\cap\operatorname{int}K,\ y\in T\cap\operatorname{int}K\}$, with infimum $+\infty$ for an empty set. The following estimates are from the companion paper.

\begin{theorem}[{\cite[Lemmas~F.2, F.3, and F.5; Corollary~F.4; Theorem~F.6]{Zhang26HitRun}}]\label{thm:classical-log-affine-input}
Suppose $\pi_b$ is $C_0$-near-isotropic. There is a constant $C$ that depends only on $C_0$ such that the following holds.
\begin{enumerate}[label=(\roman*)]
\item The zero extension of $\rho_b$ is $C\sqrt D$-Lipschitz, is positive in $\operatorname{int}K$, and vanishes on $\partial K$. In particular, $\rho_b(x)\le C\sqrt D\operatorname{dist}(x,\partial K)$.
\item For every $u>0$, $\pi_b\{x:\rho_b(x)<u\}\le\min\{1,C\sqrt D\,u\}$.
\item For interior $x,y$,
\[
 \left|\log\frac{\rho_b(y)}{\rho_b(x)}\right|\le C\sqrt D\,d_{\rho_b}(x,y),\qquad
 \|x-y\|_2\le\rho_b(x)e^{C\sqrt D\,d_{\rho_b}(x,y)}d_{\rho_b}(x,y).
\]
\item If measurable $S,T\subseteq K$ have smaller mass $p=\min\{\pi_b(S),\pi_b(T)\}>0$ and $d_{\rho_b}(S,T)\ge a/\sqrt D$ for a fixed $a>0$, then
\[
 \pi_b(K\setminus(S\cup T))\ge c_a\frac{\psi_D}{D(1+\log(1/p))}\,p.
\]
For sufficiently small fixed $a$, $c_a$ can be chosen proportional to $a$.
\end{enumerate}
\end{theorem}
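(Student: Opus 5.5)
The four items are proved in the order (i), (iii), (ii), (iv), since each one feeds the next. Throughout I use one consequence of $C_0$-near-isotropy of the log-affine law $\pi_b$, which I still need to confirm: the tilt satisfies $\|b\|_2\lesssim\sqrt D$. My reason for expecting it is that the marginal of $\pi_b$ along $b$ has variance at most $C_0$, while a log-concave density carrying an exponential factor $e^{-\|b\|s}$ along that direction has spread controlled by $D/\|b\|$. The basic geometric input is the ball-walk estimate
\[
 \frac{\vol\bigl(B(x,r)\,\triangle\,B(y,r)\bigr)}{\vol(B(0,r))}\lesssim\frac{\sqrt D\,\|x-y\|_2}{r}.
\]
A companion fact is that a slab of width $w$ captures at most an $O(\sqrt D\,w/r)$ fraction of $B(0,r)$.

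\emph{Step 1: item (i).} Fix $x$ with $\rho_b(x)>0$ and take $r$ slightly below $\rho_b(x)$. For $y$ with $\|x-y\|_2=s\le c\,r/\sqrt D$, I compare the good set of $y$ with that of $x$. It differs in two ways.
\begin{itemize}
\item Replacing $B(x,r)$ by $B(y,r)$ costs an $O(\sqrt D s/r)$ fraction of the ball, by the displayed estimate.
\item The superlevel set $\{f_b\ge\tfrac34 f_b(y)\}$ is the halfspace $\{f_b\ge\tfrac34 f_b(x)\}$ translated by $\|b\|_2 s$ in the direction of $b$. By the slab bound this costs an $O(\sqrt D\,\|b\|_2 s)$ fraction.
\end{itemize}
The fixed constants $63/64$ and $3/4$ leave no room to absorb these losses at the same radius. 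So I pass to a slightly smaller radius $r'=r(1-C\sqrt D s/r)$. The good fraction is essentially monotone in the radius, because $K\cap\{f_b\ge\tfrac34 f_b(x)\}$ is convex and contains $x$. This gives $\rho_b(y)\ge\rho_b(x)-C\sqrt D\,s$, which is the Lipschitz claim. Vanishing on $\partial K$ follows from convexity: a supporting halfspace at a boundary point removes half of every ball centred there. Positivity in the interior follows because small balls around an interior point lie in $K$ and in the density superlevel set.

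\emph{Step 2: item (iii).} This is a Grönwall argument along an almost optimal path $\gamma$. By (i),
\[
 \Bigl|\tfrac{d}{dt}\log\rho_b(\gamma(t))\Bigr|\le C\sqrt D\,\frac{\|\gamma'(t)\|_2}{\rho_b(\gamma(t))},
\]
and integrating gives the logarithmic comparison of $\rho_b(y)$ with $\rho_b(x)$. For the distance bound, write $\|x-y\|_2\le\int_0^1\|\gamma'(t)\|_2\,dt$. Then bound each $\rho_b(\gamma(t))$ by $\rho_b(x)e^{C\sqrt D\,d_{\rho_b}(x,y)}$ using the logarithmic comparison, and integrate against the metric weight $\|\gamma'(t)\|_2/\rho_b(\gamma(t))$.

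\emph{Step 3: item (ii).} Suppose $\rho_b(x)<u$. Then a $1/64$ fraction of $B(x,u)$ is bad, so either $K^c$ or the density halfspace $\{f_b<\tfrac34 f_b(x)\}$ occupies a constant fraction of the ball.
\begin{itemize}
\item If it is $K^c$, convexity forces $\operatorname{dist}(x,\partial K)\lesssim u/\sqrt D$, because a halfspace occupying a constant fraction of $B(x,u)$ must come within $O(u/\sqrt D)$ of the centre.
\item If it is the density halfspace, the same reasoning forces $\|b\|_2\,u/\sqrt D\gtrsim 1$.
\end{itemize}
In the second case, $\|b\|_2\lesssim\sqrt D$ gives $u\gtrsim 1$, where the bound $\min\{1,C\sqrt D\,u\}$ is trivial. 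In the first case, I bound the $\pi_b$-mass of the boundary collar of width $\epsilon=u/\sqrt D$ by $C D\epsilon$. This is the standard collar estimate for near-isotropic log-concave measures, which follows from the relation between surface area and volume under bounded covariance.

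\emph{Step 4: item (iv), the main obstacle.} This is an isoperimetric inequality in the $\rho_b$-metric, and the difficulty is that $\rho_b$ is not bounded below. My plan has three parts.
\begin{enumerate}
\item Truncate at the threshold $u_*\asymp p/\sqrt D$, chosen via (ii) so that $\pi_b\{\rho_b<u_*\}\le p/4$. Removing this low-radius region from $S$ and $T$ leaves both with mass at least $p/2$.
\item On the remaining region, $d_{\rho_b}$-separation of at least $a/\sqrt D$ should turn into Euclidean separation of at least $a\,u/\sqrt D$ on each dyadic shell $\rho_b\in[u,2u]$, using (iii).
\item Apply the Cheeger bound $\psi_{\sf ch}\ge\psi_D/\sqrt{C_0}$ in its thick-boundary form, $\pi_b(\text{middle})\gtrsim\psi_D\cdot\text{width}\cdot p$, and sum over the dyadic shells.
\end{enumerate}
The factor $1/(D(1+\log(1/p)))$ should come from the $\log(1/p)$ dyadic shells between $u_*$ and $1$, together with the $1/\sqrt D$ appearing in both the separation and the width.

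I expect the delicate part to be that the Cheeger bound requires separation uniform across the whole space, whereas the separation available here scales with the local radius. If a direct dyadic decomposition does not go through, my first fallback is localization: reduce by needle decomposition to one-dimensional log-concave needles, where the weighted metric $\int\|\gamma'\|_2/\rho_b$ can be handled explicitly. The remaining question there is how $\rho_b$ restricts to a needle.
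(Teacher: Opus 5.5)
The paper does not prove this statement. It imports all four items from the companion paper (Lemmas F.2--F.5 and Theorem F.6 there), so your proposal has to stand on its own, and as written it does not establish (ii) or (iv).

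\textbf{Step~3.} Your inference that ``$K^c$ occupies a constant fraction of $B(x,u)$'' forces $\operatorname{dist}(x,\partial K)\lesssim u/\sqrt D$ is false. $K^c$ is a union of halfspaces, not a halfspace, and the cap argument runs only the other way, giving $\rho_b(x)\le C\sqrt D\operatorname{dist}(x,\partial K)$. Take a point on the axis of the cone $\{\|x'\|_2\le x_0\}$ near its apex (for example, the tip of the pencil). A uniform point of $B(x,u)$ has transverse displacement close to $u$, so $\rho_b(x)\asymp\operatorname{dist}(x,\partial K)$. Hence $\{\rho_b<u\}$ can contain points at distance $\asymp u$ from $\partial K$, and your collar estimate at that width gives only $CDu$.

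The missing $\sqrt D$ comes from averaging, not from a pointwise bound. By Fubini, $\E_{x\sim\pi_b}\Pr_{U\sim B(0,u)}[x+U\notin K]=\E_U\,\pi_b(K\setminus(K-U))$. Parametrizing $K\setminus(K-v)$ by exit points bounds this by $e^{\|b\|_2u}\,\E_U\int_{\partial K}\langle U,n\rangle_+\,d\sigma_b$, where $\sigma_b$ is the $\pi_b$-weighted surface measure. The average of $\langle U,n\rangle_+$ is $O(u/\sqrt D)$. The divergence theorem for $(x-x_c)e^{-\langle b,x\rangle}$, with an inner ball of radius $\Omega(1)$ about the mean, gives $\sigma_b(\partial K)=O(D)$. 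Markov's inequality then yields (ii). Also, $\|b\|_2=O(\sqrt D)$ follows from the \emph{lower} covariance bound, via $\|b\|_2^2/C_0\le\Var(\langle b,X\rangle)\lesssim D$, not from the upper bound you cite.

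\textbf{Step~4.} This is the real content of the statement, and your plan does not yet contain the idea that makes it work. Truncating at $u_*$ replaces the separator by $M\cup\{\rho_b<u_*\}$. The added set can have mass up to $p/4$, far larger than the target $\psi_Dp/(D(1+\log(1/p)))$. So a lower bound on the mass of the new separator says nothing about $\pi_b(M)$.

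The per-shell Cheeger step fails for the same reason. The thick boundary of $S\cap\{\rho_b\in[u,2u)\}$ may lie in other shells of $S$. Moreover, $\min\{\pi_b(S\cap\text{shell}),\pi_b(T\cap\text{shell})\}$ can vanish on every shell when $S$ and $T$ sit at different radii.

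A natural global starting point is the co-area formula for $\phi=\min\{d_{\rho_b}(\cdot,S),a/\sqrt D\}$. Its gradient is at most $\rho_b^{-1}$ and is supported in $M$, which gives $\int_M\rho_b^{-1}\,d\pi_b\gtrsim a\psi_Dp/\sqrt D$. Converting this into a lower bound on $\pi_b(M)$ requires controlling the part of $M$ where $\rho_b$ is tiny. Item (ii) alone cannot do this, since it only places $\rho_b^{-1}$ in weak $L^1$; this is where the $\log(1/p)$ loss has to be earned. Your localization fallback would not recover $\psi_D$ either, because needle decompositions by themselves yield only the classical $D^{-1/2}$-type isoperimetry.

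\textbf{Step~1.} This also needs repair. Monotonicity of the good fraction in $r$ gives no quantitative gain from shrinking the radius. In addition, the density halfspace moves by at most $\|x-y\|_2$, not $\|b\|_2\|x-y\|_2$.

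A clean argument uses the homothety $z\mapsto y+\mu(z-x)$ with $1-\mu\asymp\sqrt D\|x-y\|_2/r$, for $r<\rho_b(x)$. It maps $B(x,r)\cap\{f_b\ge\frac34f_b(x)\}$ into $B(y,\mu r)\cap\{f_b\ge\frac34f_b(y)\}$ for two reasons. First, $\operatorname{dist}(x,\partial K)\gtrsim r/\sqrt D$ ensures $x+(y-x)/(1-\mu)\in K$. Second, $\mu\langle b,z-x\rangle\le\log(4/3)$ whenever $\langle b,z-x\rangle\le\log(4/3)$. This proves (i) without any bound on $\|b\|_2$, and your Step~2 then gives (iii) as described.
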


\subsection{The cubic grid and affine preconditioning}\label{subsec:cubic-grid}
Fix $B(x_c,r_0)\subseteq K\subseteq B(x_c,R_0)$ and a tilt $b$. Write $\pi=\pi_b$ and $Z=Z_b$. The discretization lattice and its cells are
\[
 \Lambda=h\Z^D,\qquad Q_\Lambda=h[-1/2,1/2)^D.
\]
Choose a known dyadic cell-radius bound 
\begin{align}\label{eq:rho_Lambda_bound}
    h\sqrt D/2\le\varrho_\Lambda\le h\sqrt D.
\end{align}
In particular, $\|e\|_2\le\varrho_\Lambda$ for every $e\in Q_\Lambda$. The lattice-discretized target distribution is
\begin{equation}\label{eq:full-grid-target}
 \Omega_\Lambda=K\cap\Lambda,\qquad
 Z_\Lambda=h^D\sum_{z\in\Omega_\Lambda}e^{-\langle b,z\rangle},\qquad
 \pi_\Lambda(z)=\frac{h^De^{-\langle b,z\rangle}}{Z_\Lambda}.
\end{equation}

The following lemma bounds the quadrature error to the normalization constant $Z$ due to the lattice discretization. It also compares a sample from $\pi_\Lambda$ plus an independent uniform offset in $Q_\Lambda$ and a sample from $\pi$ in total-variation distance. 

\begin{lemma}[Quadrature error]\label{lem:full-grid-quadrature}
Suppose $\varrho_\Lambda\le r_0/4$, and for a sufficiently large constant $C$, let 
\begin{equation}\label{eq:quadrature-error}
 \varepsilon_{\rm quad}=C\left(\frac{(D+\|b\|_2R_0)\varrho_\Lambda}{r_0}+\|b\|_2\varrho_\Lambda\right)\leq 1.
\end{equation}
Then
\begin{equation}\label{eq:full-grid-quadrature}
 |Z_\Lambda-Z|\le\varepsilon_{\rm quad}Z.
\end{equation}
Moreover, for $X_\Lambda\sim\pi_\Lambda$ and $E$ an independent uniform point in $Q_\Lambda$, the law $\widetilde\pi_\Lambda$ of $X_\Lambda+E$ satisfies
\begin{equation}\label{eq:finite-jitter-tv}
 \dtv(\widetilde\pi_\Lambda,\pi)\le\varepsilon_{\rm quad}.
\end{equation}
These bounds hold simultaneously for finitely many tilts when $\|b\|_2$ is replaced in \eqref{eq:quadrature-error} by their largest norm.
\end{lemma}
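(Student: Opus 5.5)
The plan is to compare the lattice sum with the integral cell by cell, splitting cells into interior cells (fully inside $K$) and boundary cells. Write
\[
 Z_\Lambda=\sum_{z\in\Omega_\Lambda}\int_{z+Q_\Lambda}e^{-\langle b,z\rangle}\,dx,
 \qquad
 Z=\sum_{z\in\Lambda}\int_{(z+Q_\Lambda)\cap K}e^{-\langle b,x\rangle}\,dx,
\]
since the cells $z+Q_\Lambda$ tile $\R^D$. Within one cell, $|\langle b,x-z\rangle|\le\|b\|_2\varrho_\Lambda$, so on any cell contained in $K$ the weights agree up to a factor $e^{\pm\|b\|_2\varrho_\Lambda}$. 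This produces the second term $\|b\|_2\varrho_\Lambda$ in \eqref{eq:quadrature-error}. The remaining discrepancy comes from cells meeting $\partial K$: either $z\in K$ but the cell sticks out, or $z\notin K$ but the cell meets $K$. All such cells lie in the collar $\{x:\operatorname{dist}(x,\partial K)\le\varrho_\Lambda\}$, both inside and outside $K$.

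The main step is to bound the tilted mass of this two-sided collar by $O\bigl((D+\|b\|_2R_0)\varrho_\Lambda/r_0\bigr)Z$. I will use the standard homothety argument centered at $x_c$. Let $s=\varrho_\Lambda/r_0\le1/4$. Then:
\begin{enumerate}[label=(\alph*)]
\item The outer parallel body satisfies $K+B(0,\varrho_\Lambda)\subseteq x_c+(1+s)(K-x_c)$, because $B(0,\varrho_\Lambda)\subseteq s(K-x_c)$ and $K$ is convex.
\item The inner collar misses the shrunk body $x_c+(1-s)(K-x_c)$, since that body plus $B(0,\varrho_\Lambda)$ stays inside $K$.
\end{enumerate}
So the collar lies in $x_c+[(1+s)(K-x_c)\setminus(1-s)(K-x_c)]$. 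Changing variables $x=x_c+\lambda(y-x_c)$, the tilted mass of $x_c+\lambda(K-x_c)$ equals
\[
 \lambda^D\int_K e^{-\langle b,x_c\rangle-\lambda\langle b,y-x_c\rangle}\,dy,
\]
and since $|\langle b,y-x_c\rangle|\le\|b\|_2R_0$ this is at most $\lambda^De^{|\lambda-1|\|b\|_2R_0}Z$. The difference between $\lambda=1+s$ and $\lambda=1-s$ is then
\[
 \bigl((1+s)^De^{s\|b\|_2R_0}-(1-s)^De^{-s\|b\|_2R_0}\bigr)Z
 \lesssim s(D+\|b\|_2R_0)Z,
\]
provided $s(D+\|b\|_2R_0)\lesssim1$, which the hypothesis $\varepsilon_{\rm quad}\le1$ guarantees for large $C$. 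Boundary cells carry lattice weight $h^De^{-\langle b,z\rangle}$, which is within a factor $e^{\|b\|_2\varrho_\Lambda}\le e$ of their integral against the extended density $e^{-\langle b,x\rangle}$ on the whole cell. Their contribution to both $Z_\Lambda$ and $Z$ is therefore bounded by a constant times the collar mass, giving \eqref{eq:full-grid-quadrature}.

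For \eqref{eq:finite-jitter-tv}, note that $\widetilde\pi_\Lambda$ has density $h^{-D}\pi_\Lambda(z)=e^{-\langle b,z\rangle}/Z_\Lambda$ on each cell $z+Q_\Lambda$ with $z\in\Omega_\Lambda$. I will bound $\int|\widetilde\pi_\Lambda-\pi|$ using three pieces:
\begin{enumerate}[label=(\roman*)]
\item the normalization mismatch $|Z/Z_\Lambda-1|\lesssim\varepsilon_{\rm quad}$;
\item the within-cell weight variation $\|b\|_2\varrho_\Lambda$ on interior cells;
\item the collar mass on boundary cells, as above.
\end{enumerate}
Adjusting $C$ absorbs the constants. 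For finitely many tilts, every estimate is monotone in $\|b\|_2$, so substituting the largest norm works uniformly. I expect the collar bound in the main step to be the only delicate point, specifically checking that cells of centers outside $K$ are also covered by the outer dilation.
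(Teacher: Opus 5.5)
Your proposal is correct and is essentially the paper's proof. It uses the same homothety sandwich about $x_c$ with $t=\varrho_\Lambda/r_0$, the same in-cell factor $e^{\pm\|b\|_2\varrho_\Lambda}$, the same Jacobian-and-tilt bound on the mass of $x_c+(1\pm t)(K-x_c)$, and the same triangle inequality for the total-variation bound.

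One small correction concerns your claim that every cell meeting $\partial K$ lies in the collar of width $\varrho_\Lambda$. That holds only for the cell diameter $h\sqrt D\le 2\varrho_\Lambda$, not for $\varrho_\Lambda$ itself. The delicate case is not a cell whose center lies outside $K$, since only its part inside $K$ counts and that part is within $\varrho_\Lambda$ of the center. It is a protruding cell whose center lies in $K$: its inside part can reach depth nearly $h\sqrt D$. There are two easy fixes:
\begin{itemize}
\item dilate by $1\pm 2\varrho_\Lambda/r_0$, which the hypothesis $\varrho_\Lambda\le r_0/4$ permits at the cost of a constant in $C$;
\item as the paper does, compare the cell-extended lattice density with $e^{-\langle b,x\rangle}\one_K$ pointwise. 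The symmetric difference between $K$ and the union of cells centered in $\Omega_\Lambda$ lies exactly in the set difference of the two dilates $x_c+(1\pm t)(K-x_c)$.
\end{itemize}
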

\begin{proof}
Let
\begin{align*}
    U_\Lambda=\bigcup_{z\in\Omega_\Lambda}(z+Q_\Lambda).
\end{align*}
We extend the discrete density to $\mathbb{R}^D$ by defining $f_\Lambda(y)=e^{-\langle b,z\rangle}$ for any $y$ in a cell $z+Q_\Lambda$ for $z\in\Omega_\Lambda$, and zero elsewhere. Then, $f_\Lambda$ is supported on $U_\Lambda$ and
\begin{align*}
    \int_{U_\Lambda} f_\Lambda(y)\,dy=\sum_{z\in \Omega_\Lambda}\int_{z+Q_\Lambda}e^{-\langle b,z\rangle}\,dy=h^D \sum_{z\in \Omega_\Lambda} e^{-\langle b,z\rangle}=Z_\Lambda. 
\end{align*}
On each cell $z+Q_\Lambda$,
\begin{align}\label{eq:in_cell_error_bound}
 e^{-\|b\|_2\varrho_\Lambda}\le\frac{f_\Lambda(y)}{e^{-\langle b,y\rangle}}=e^{-\langle b, z-y\rangle}\le e^{\|b\|_2\varrho_\Lambda}.
\end{align}

We first bound the quadrature error $|Z_\Lambda-Z|$, which can be decomposed as follows:
\begin{align}\label{eq:quadrature_decomposition}
    |Z_\Lambda-Z|=&~ \left|\int_{U_\Lambda} f_\Lambda(y)\,dy - \int_Kf_b(y)\,dy\right|\leq \|f_\Lambda-f_b\|_{1}\notag\\
    = &~ \int_{K\cap U_\Lambda} |f_\Lambda(y)-f_b(y)|\,dy + \int_{U_\Lambda\backslash K} f_\Lambda(y)\,dy + \int_{K\backslash U_\Lambda} f_b(y)\,dy\notag\\
    \leq &~ (e^{\|b\|_2\varrho_\Lambda}-1)\int_{K\cap U_\Lambda} e^{-\langle b,y\rangle}\,dy + e^{\|b\|_2\varrho_\Lambda}\int_{U_\Lambda\backslash K} e^{-\langle b,y\rangle}\,dy+\int_{K\backslash U_\Lambda} e^{-\langle b,y\rangle}\,dy\notag\\
    = &~ \underbrace{\int_{U_\Lambda\triangle K}e^{-\langle b,y\rangle}\,dy}_{\text{support error}} + \underbrace{(e^{\|b\|_2\varrho_\Lambda}-1)\int_{U_\Lambda}e^{-\langle b,y\rangle}\,dy}_{\text{in-cell error}},
\end{align}
where the third line follows from~\eqref{eq:in_cell_error_bound}. 

To bound the above two parts of the quadrature error, we show some geometric bounds for $U_\Lambda$ and $U_\Lambda\triangle K$. For $t=\varrho_\Lambda/r_0\leq 1/4$, define
\[
 K_t^\pm=x_c+(1\pm t)(K-x_c).
\]
The inner ball and convexity give
\[
 K_t^-+\varrho_\Lambda B(0,1)\subseteq K,\qquad K+\varrho_\Lambda B(0,1)\subseteq K_t^+.
\]
Every point of $K_t^-+\varrho_\Lambda B(0,1)$ has the form $x_c+(1-t)(x-x_c)+tr_0u$ for some $x\in K$ and $u\in B(0,1)$, and 
\begin{align*}
    x_c + (1-t)(x-x_c) + tr_0u=(1-t)x + t(x_c+r_0 u)\in K,
\end{align*}
where the last step follows from $B(x_c,r_0)\subseteq K$. On the other hand, $x+tr_0u\in K+\varrho_\Lambda B(0,1)$ and
\begin{align*}
    \frac{x + tr_0 u - x_c}{1+t} + x_c=\frac{1}{1+t} x + \frac{t}{1+t} (x_c+r_0u)\in K\quad \Longrightarrow \quad x+tr_0 u\in K_t^+.
\end{align*}
Since every point in $\mathbb{R}^D$ is within $\varrho_\Lambda$ of its cell center, we have
\begin{equation}\label{eq:parallel-sandwich}
 K_t^-\subseteq U_\Lambda\subseteq K_t^+,\qquad U_\Lambda\triangle K\subseteq K_t^+\setminus K_t^-.
\end{equation}

Then, the first term in~\eqref{eq:quadrature_decomposition} can be bounded by
\begin{align}\label{eq:integral_U_K_diff}
 \int_{U_\Lambda\triangle K}e^{-\langle b,y\rangle}\,dy \leq &~ \int_{K_t^+} e^{-\langle b,y\rangle}\,dy - \int_{K_t^-} e^{-\langle b,y\rangle}\,dy\notag\\
 = &~  (1+t)^D\int_{K} e^{-\langle b,x\rangle}e^{-t\langle b, x-x_c\rangle}\,dx - (1-t)^D\int_{K} e^{-\langle b,x\rangle}e^{t\langle b, x-x_c\rangle}\,dx\notag\\
 \leq &~ (1+t)^D\int_{K} e^{-\langle b,x\rangle}e^{t\|b\|_2R_0}\,dx - (1-t)^D\int_{K} e^{-\langle b,x\rangle}e^{-t\|b\|_2R_0}\,dx\notag\\
 = &~ (1+t)^De^{t\|b\|_2R_0}Z-(1-t)^De^{-t\|b\|_2R_0}Z\notag\\
 \leq &~ e^{(D+\|b\|_2R_0)t}Z - e^{-2(D+\|b\|_2R_0)t}Z\notag\\
 \leq &~ 3e(D+\|b\|_2R_0)tZ,
\end{align}
Here \eqref{eq:parallel-sandwich} reduces the integral to the two dilates of $K$. The substitutions $y=x_c+(1\pm t)(x-x_c)$ contribute Jacobians $(1\pm t)^D$, and $|\langle b,x-x_c\rangle|\le\|b\|_2R_0$ bounds the change in density. For the final estimates, use $\log(1+t)\le t$, $\log(1-t)\ge-2t$, and $e^u-e^{-2u}\le3eu$ for $u=(D+\|b\|_2R_0)t\le1$.

Similarly, the second term in~\eqref{eq:quadrature_decomposition} can be bounded by
\begin{align}\label{eq:integral_U}
    \int_{U_\Lambda}e^{-\langle b,y\rangle}\,dy\leq &~ \int_{K_t^+} e^{-\langle b,y\rangle}\,dy\leq (1+t)^D\int_{K}e^{-\langle b,x\rangle} e^{t\|b\|_2R_0}\,dx\leq e Z.
\end{align}

Combining them together, we get that
\begin{align*}
 |Z_\Lambda-Z|\leq \|f_\Lambda-f_b\|_{1}
 \leq &~ (e^{\|b\|_2\varrho_\Lambda}-1) e Z + 3e (D+\|b\|_2R_0)tZ\\
 \leq &~ 3e\bigl(\|b\|_2\varrho_\Lambda+(D+\|b\|_2R_0)t\bigr)Z\\*
 \leq &~ \varepsilon_{\rm quad}Z,
\end{align*}
This follows by substituting \eqref{eq:integral_U_K_diff} and \eqref{eq:integral_U} into \eqref{eq:quadrature_decomposition}, using $e^u-1\le eu$ for $u=\|b\|_2\varrho_\Lambda\le1$, and taking $C\ge3e$ in \eqref{eq:quadrature-error}.

Finally, since $f_\Lambda/Z_\Lambda$ is the density of $X_\Lambda+E$, it holds that
\begin{align*}
 2\dtv(\widetilde\pi_\Lambda,\pi)
 = \left\|\frac{f_\Lambda}{Z_\Lambda}-\frac{f_b}{Z}\right\|_1\leq \left\|\frac{f_\Lambda}{Z_\Lambda}-\frac{f_\Lambda}{Z}\right\|_1 + \left\|\frac{f_\Lambda}{Z}-\frac{f_b}{Z}\right\|_1
 \leq \frac{|Z-Z_\Lambda|}{Z}+\frac{\|f_\Lambda-f_b\|_1}{Z}
 \le2\varepsilon_{\rm quad}.
\end{align*}

\end{proof}

\paragraph{Affine preconditioning.} The mixing argument will use a positive-definite matrix $A$ as a preconditioner for the target distribution. Let $W_A$ be a nonsingular matrix and $T_A=W_A^{-1}$. Let $X\sim \pi$ and $\pi_A$ be the law of $W_A X$. Then, under the change of variable $y=W_Ax$, we have
\begin{align*}
    x=T_Ay,\qquad dx=|\det(T_A)|\,dy=\frac{dy}{|\det(W_A)|}, \qquad \langle b, x\rangle = \langle T_A^\top b, y\rangle.
\end{align*}
Thus,
\begin{align*}
    \pi_A(dy)=\frac{e^{-\langle T_A^\top b, y\rangle}}{|\det(W_A)|Z}{\bf 1}_{W_AK}(y)=\frac{e^{-\langle b_A, y\rangle}}{Z_A}{\bf 1}_{K_A}(y),
\end{align*}
where we define $K_A=W_AK$, $b_A=T_A^\top b$, and $Z_A=|\det(W_A)|Z$. Hence, the transformed distribution is still a log-affine distribution over a convex body $K_A$.

The lattice is also transformed from $h\Z^D$ to $hW_A\Z^D$. The new lattice normalization constant is
\begin{align*}
    Z_{A,\Lambda}=\vol(W_AQ_\Lambda)\sum_{y\in W_A\Omega_\Lambda}e^{-\langle b_A,y\rangle}=h^D|\det(W_A)|\sum_{x\in \Omega_\Lambda}e^{-\langle b,x\rangle}=|\det(W_A)|Z_\Lambda.
\end{align*}
Therefore, the affine transformation will not incur additional quadrature error:
\begin{align*}
    \left|\frac{Z_{A,\Lambda}}{Z_A}-1\right|=\left|\frac{|\det(W_A)|Z_{\Lambda}}{|\det(W_A)|Z}-1\right|=\left|\frac{Z_{\Lambda}}{Z}-1\right|\leq \varepsilon_{\rm quad}.
\end{align*}
If $E$ is uniform in $Q_\Lambda$ and independent of $X_\Lambda$, then $W_AE$ is uniform in $W_AQ_\Lambda$. Applying the invertible map $W_A$ preserves the total-variation bound in~\cref{lem:full-grid-quadrature}.

Let $\rho_A(x)=\rho_{b_A}(W_Ax)$ be the local radius from \eqref{eq:density-local-radius} for the transformed target. Equivalently, $\rho_A(x)$ is the largest $r$ such that
\begin{align*}
    \frac{63}{64}\leq \Pr_{U\sim B(0,r)}\left[y+U\in K_A~\wedge~e^{-\langle b_A, U\rangle}\geq \frac{3}{4}\right]=\Pr_{U\sim B(0,r)}\left[x+T_AU\in K~\wedge~e^{-\langle b, T_AU\rangle}\geq \frac{3}{4}\right].
\end{align*}
The intrinsic metric becomes
\begin{equation}\label{eq:preconditioned-metric}
 d_A(x,y)=\inf_\gamma\int_0^1\frac{\|W_A\gamma'(t)\|_2}{\rho_A(\gamma(t))}\,dt=d_{\rho_{b_A}}(W_Ax,W_Ay),
\end{equation}
where the infimum is over paths $\gamma$ from $x$ to $y$ in $K$.

The following proposition shows that on a good branch of the coherent covariance estimation where $A\asymp \Cov_\pi(X)$, the transformed target distribution $\pi_A$ is nearly-isotropic.

\begin{proposition}[Affine preconditioning on the fixed grid]\label{prop:finite-affine-branches}
Suppose $\lambda_-\Id\preceq A\preceq\lambda_+\Id$. A rational nonsingular matrix $W_A$ and its rational inverse $T_A=W_A^{-1}$ can be computed so that
\[
 \|W_AA^{1/2}-\Id\|_{\op}\le\frac1{10}.
\]
If $A\asymp\Cov_\pi(X)$, then $W_A\Cov_\pi(X)W_A^\top\asymp\Id$. 

Furthermore, dyadic upper bounds for the radius of the transformed cell $W_AQ_\Lambda$ and the outer radius of $K_A=W_AK$ can be taken to be
\begin{equation}\label{eq:affine-lattice-uniform-bounds}
 \varrho_A=\frac{Ch\sqrt D}{\sqrt{\lambda_-}},\qquad
 R_A=\frac{CR_0}{\sqrt{\lambda_-}},
\end{equation}
\end{proposition}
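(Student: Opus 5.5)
The plan is to build $W_A$ as a rational approximation of $A^{-1/2}$, obtained from a truncated polynomial or Newton iteration for the inverse square root. First rescale $A$ so its spectrum lies in $[\lambda_-/\lambda_+,1]$. Then approximate $x^{-1/2}$ on this interval by a polynomial $p$ with relative error at most $1/20$, for example a truncated binomial series about $1$ or a Chebyshev approximation of degree $O(\sqrt{\lambda_+/\lambda_-}\log 20)$.

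Since $p(A)$ is a polynomial in the rational matrix $A$ with rational coefficients, $p(A)$ is rational. It is also symmetric and commutes with $A$. Set $W_A:=p(A)$, or a rational rounding of it.

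Because $p(A)$ and $A^{1/2}$ are simultaneously diagonalizable, we have
\[
\|W_AA^{1/2}-\Id\|_{\op}=\max_{\lambda\in\mathrm{spec}(A)}|p(\lambda)\lambda^{1/2}-1|\le 1/20.
\]
If we round entries of $p(A)$ to rationals afterwards, the rounding perturbation $E$ contributes at most $\|E\|_{\op}\sqrt{\lambda_+}$. This is below $1/20$ for sufficiently fine rounding, which gives the $1/10$ bound.

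Nonsingularity follows because $\|W_AA^{1/2}-\Id\|_{\op}<1$. The exact inverse $T_A=W_A^{-1}$ is rational and is computed by Gaussian elimination over $\mathbb Q$.

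For the comparison claim, write $W_A=(\Id+F)A^{-1/2}$ with $\|F\|_{\op}\le 1/10$. Then
\[
W_A\Cov_\pi(X)W_A^\top=(\Id+F)\,A^{-1/2}\Cov_\pi(X)A^{-1/2}\,(\Id+F)^\top .
\]
The middle factor is $\asymp\Id$ whenever $A\asymp\Cov_\pi(X)$. Conjugation by $\Id+F$ distorts the Loewner order by at most the factors $(1\pm 1/10)^2$. So the whole product is $\asymp\Id$ with adjusted constants.

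For the radius bounds, use
\[
\|W_A\|_{\op}\le\|(\Id+F)\|_{\op}\|A^{-1/2}\|_{\op}\le\tfrac{11}{10}\lambda_-^{-1/2}.
\]
Every point of $Q_\Lambda$ has norm at most $h\sqrt D/2$, so $W_AQ_\Lambda\subseteq B(0,\tfrac{11}{20}h\sqrt D\lambda_-^{-1/2})$. Likewise, translating by $W_Ax_c$, the body $W_AK$ lies in a ball of radius $\tfrac{11}{10}R_0\lambda_-^{-1/2}$ around $W_Ax_c$. Rounding these bounds up to dyadic numbers gives \eqref{eq:affine-lattice-uniform-bounds} with a universal $C$.

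The main obstacle is keeping $W_A$ rational, exactly invertible, and uniformly accurate over all allowed $A$. I would handle this with the commuting-polynomial construction, together with an explicit degree and rounding precision depending only on $\lambda_\pm$. This keeps the bit size polynomial, which is deferred to the implementation appendix.
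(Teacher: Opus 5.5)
Your proof is correct and follows the paper's argument. The bound $\|W_A\|_{\op}\le 11/(10\sqrt{\lambda_-})$, the factorization $W_A\Cov_\pi(X)W_A^\top=(W_AA^{1/2})(A^{-1/2}\Cov_\pi(X)A^{-1/2})(W_AA^{1/2})^\top$, and the dyadic rounding of the cell and body radii are the paper's steps. The only difference is how the rational approximation of $A^{-1/2}$ is produced. The paper uses the Newton iteration $W_{j+1}=\tfrac12W_j(3\Id-AW_j^2)$ from $W_0=a^{-1}\Id$, with a dyadic $a$ satisfying $\lambda_+\le a^2<4\lambda_+$. This keeps everything exactly rational and avoids the irrational factor $\lambda_+^{-1/2}$ that your rescaling introduces and your rounding step must absorb. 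Your direct polynomial approximation, which depends only on $\lambda_\pm$, works equally well.
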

\begin{proof}
Since $\|W_AA^{1/2}-\Id\|_{\op}\le 1/10$,  we have
\begin{align}\label{eq:WA_TA_op_norm}
 \|W_A\|_{\op}\le \frac{11}{10\sqrt{\sigma_{\min}(A)}}\leq \frac{11}{10\sqrt{\lambda_-}},\qquad
 \|T_A\|_{\op}\le\frac{10\sqrt{\sigma_{\max}(A)}}{9}\leq \frac{10\sqrt{\lambda_+}}9.
\end{align}
Thus,
\begin{align*}
    \|W_A\|_{\op} \varrho_\Lambda\leq \frac{11h\sqrt{D}}{10\sqrt{\lambda_-}}, \qquad \|W_A\|_{\op} R_0\leq \frac{11R_0}{9\sqrt{\lambda_-}},
\end{align*}
which imply~\eqref{eq:affine-lattice-uniform-bounds} by rounding these bounds to dyadic values.

If $c\Cov_\pi(X)\preceq A\preceq C\Cov_\pi(X)$ for some constants $c\leq C$, then we have
\begin{align*}
    C^{-1}\Id\preceq A^{-1/2}\Cov_\pi(X) A^{-1/2}\preceq c^{-1}\Id.
\end{align*}
Using 
\begin{align*}
    W_A\Cov_\pi(X) W_A^\top
 = &~ (W_AA^{1/2})(A^{-1/2}\Cov_\pi(X) A^{-1/2})(W_AA^{1/2})^\top
\end{align*}
and $\|W_A A^{1/2}-\Id\|_{\op}\leq 1/10$, we get that
\begin{align*}
 \frac{(9/10)^2}{C}\Id
 \preceq &~ W_A\Cov_\pi(X) W_A^\top\preceq\frac{(11/10)^2}{c}\Id,
\end{align*}

The rational computation of $W_A$ is given in \cref{app:whitening-arithmetic}, which has polynomial bit complexity in the supplied numerical parameters and input bit length.
\end{proof}

\cref{prop:finite-affine-branches} also implies that on a good covariance estimation branch, \cref{thm:classical-log-affine-input} applies to the transformed distribution. In particular,
\begin{align}\label{eq:preconditioned-radius-bounds}
|\rho_A(x)-\rho_A(y)| = |\rho_{b_A}(W_Ax)-\rho_{b_A}(W_Ay)| \leq C\sqrt D\,\|W_A(x-y)\|_2,
\end{align}
where the second step follows from~\cref{thm:classical-log-affine-input}~\textnormal{(i)}. And for $X\sim \pi$ and $u>0$,
\begin{align}\label{eq:preconditioned-radius-bounds_2}
    \pi\{x:\rho_A(x)<u\}=\Pr_{X\sim \pi}[\rho_{b_A}(W_AX)<u]=\Pr_{Y\sim \pi_A}[\rho_{b_A}(Y)<u]\leq C\sqrt{D}u,
\end{align}
where the third step follows from~\cref{thm:classical-log-affine-input}~\textnormal{(ii)}. 

\subsection{Local overlap of the lattice walk}\label{subsec:finite-proposals}
Fix a good covariance estimate $A$ for this subsection and the conductance proof. We first bound the mass of lattice centers with small local radius. We then specify the walk and prove that transition laws from nearby good centers share probability mass $\Omega(1/J)$ (\cref{lem:finite-overlap}).

Define
\begin{equation}\label{eq:mesh-depth}
 r_{\rm mesh}=C_1D^3\varrho_A,\quad
 \cG_\Lambda=\{z\in\Omega_\Lambda:\rho_A(z)\ge64r_{\rm mesh}\},\quad
 \cB_\Lambda=\Omega_\Lambda\setminus\cG_\Lambda,\quad U_{\cG}=\bigcup_{z\in\cG_\Lambda}(z+Q_\Lambda)
\end{equation}
where $C_1$ is a sufficiently large fixed power of two. 

\begin{lemma}[Good cells and their complement]\label{lem:mesh-collar}
Suppose the hypotheses of \cref{lem:full-grid-quadrature} hold. Every cell centered in $\cG_\Lambda$ lies in $\operatorname{int}K$ and satisfies that for any $x$ in the cell,
\begin{equation}\label{eq:good-cell-depth}
 \rho_A(x)\ge32r_{\rm mesh}.
\end{equation}
Its diameter in $d_A$ is at most
\begin{equation}\label{eq:good-cell-intrinsic-diameter}
 C\varrho_A/r_{\rm mesh}.
\end{equation}
For $\varepsilon_{\rm mesh}=C(\sqrt D\,r_{\rm mesh}+\varepsilon_{\rm quad})$, we have
\begin{equation}\label{eq:bad-grid-mass}
 \pi_\Lambda(\cB_\Lambda)\le\varepsilon_{\rm mesh},\qquad
 \pi(K\setminus U_{\cG})\le\varepsilon_{\rm mesh}.
\end{equation}
\end{lemma}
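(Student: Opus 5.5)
The argument has three parts. First, the pointwise bound \eqref{eq:good-cell-depth} follows from the Lipschitz property of the preconditioned local radius. Second, the diameter bound comes from integrating along a straight segment inside the cell. Third, the mass bounds in \eqref{eq:bad-grid-mass} come from the small-radius tail bound \eqref{eq:preconditioned-radius-bounds_2}, transferred to the lattice with the jitter comparison of \cref{lem:full-grid-quadrature}.

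\emph{Depth and interior.} Fix $z\in\cG_\Lambda$ and $x\in z+Q_\Lambda$. Then $\|W_A(x-z)\|_2\le\varrho_A$, so \eqref{eq:preconditioned-radius-bounds} gives
\[
 \rho_A(x)\ge\rho_A(z)-C\sqrt D\varrho_A\ge 64r_{\rm mesh}-C\sqrt D\varrho_A.
\]
Since $r_{\rm mesh}=C_1D^3\varrho_A$ with $C_1$ large, the subtracted term is at most $32r_{\rm mesh}$, which gives $\rho_A(x)\ge32r_{\rm mesh}>0$. By \cref{thm:classical-log-affine-input}(i), the zero extension of $\rho_{b_A}$ vanishes on $\partial K_A$ and is positive only on $\operatorname{int}K_A$. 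Here I use the Lipschitz extension with $\rho_A$ read as the zero extension outside $K$. Hence every point of the cell lies in $\operatorname{int}K$; this uses that the cell is connected and that $\rho_A$ stays positive along it.

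\emph{Intrinsic diameter.} For $x,y$ in the same cell, use the straight segment between them as the path in \eqref{eq:preconditioned-metric}. It lies in the convex cell, hence in $\operatorname{int}K$. Along it $\|W_A\gamma'\|_2\le\|W_A(x-y)\|_2\le2\varrho_A$ and $\rho_A\ge32r_{\rm mesh}$. This gives $d_A(x,y)\le 2\varrho_A/(32r_{\rm mesh})$, which is \eqref{eq:good-cell-intrinsic-diameter}.

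\emph{Mass bounds.} Let $Y=X_\Lambda+E$ with $X_\Lambda\sim\pi_\Lambda$ and $E$ uniform on $Q_\Lambda$. The event $X_\Lambda\in\cB_\Lambda$ means $\rho_A(X_\Lambda)<64r_{\rm mesh}$. For such centers lying in $K$, the Lipschitz bound gives $\rho_A(Y)<64r_{\rm mesh}+C\sqrt D\varrho_A\le 65r_{\rm mesh}$, where $\rho_A$ is the zero-extended radius. The jittered point may leave $K$, but then $\rho_A(Y)=0$ and it still lies in the target set. Therefore
\[
 \pi_\Lambda(\cB_\Lambda)\le\Pr[\rho_A(Y)<65r_{\rm mesh}]
 \le\pi\{\rho_A<65r_{\rm mesh}\}+\dtv(\widetilde\pi_\Lambda,\pi).
\]
By \eqref{eq:preconditioned-radius-bounds_2}, the first term is at most $C\sqrt D r_{\rm mesh}$. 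By \eqref{eq:finite-jitter-tv}, the second is at most $\varepsilon_{\rm quad}$. Here $\pi\{\rho_A<u\}$ is taken over $K$, and points outside $K$ have $\pi$-mass zero, so the transfer through the total-variation bound is valid.

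For the continuous bound, I would again use the jitter law. A point $Y$ lies in $U_{\cG}$ exactly when its cell center is in $\cG_\Lambda$. Since $\widetilde\pi_\Lambda$ is supported on $\bigcup_{z\in\Omega_\Lambda}(z+Q_\Lambda)$, we get $\widetilde\pi_\Lambda(U_{\cG}^c)=\pi_\Lambda(\cB_\Lambda)$. Then
\[
 \pi(K\setminus U_{\cG})\le\widetilde\pi_\Lambda(U_{\cG}^c)+\varepsilon_{\rm quad}.
\]

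\emph{Main obstacle.} The delicate point is the interior claim near $\partial K$. It requires that the Lipschitz bound on the zero extension rules out boundary points inside a good cell. It also requires the half-open cell convention, so that cells are disjoint and $U_{\cG}$ is identified exactly with cells whose centers are good. Beyond that, the constants only need $C_1$ to dominate $C\sqrt D$, which holds with a wide margin since $r_{\rm mesh}$ carries a factor $D^3$.
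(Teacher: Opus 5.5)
Your proposal is correct and follows the paper's proof essentially step for step: the Lipschitz bound on the zero-extended radius gives depth and interiority, the straight segment inside the cell gives the $d_A$-diameter, and the jittered lattice law, combined with \eqref{eq:finite-jitter-tv} and the tail bound \eqref{eq:preconditioned-radius-bounds_2}, gives both mass estimates (after enlarging the constant in $\varepsilon_{\rm mesh}$ to absorb the second $\varepsilon_{\rm quad}$). One small point: connectedness of the cell is not needed for the interior claim, since positivity of the zero extension at each point of the cell already forces that point into $\operatorname{int}K$.
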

\begin{proof}
If $x$ lies in the cell centered at $z\in\cG_\Lambda$, then $\|W_A(x-z)\|_2\le\varrho_A$. By~\eqref{eq:preconditioned-radius-bounds}, we have
\begin{align*}
 \rho_A(x)
 \geq  \rho_A(z)-C\sqrt D\,\varrho_A
 \geq 64r_{\rm mesh}-C\sqrt D\,\varrho_A
 \ge32r_{\rm mesh}>0.
\end{align*}
The zero extension vanishes outside $\operatorname{int}K$, so the cell lies inside the body. 

For $x,y$ in that cell, convexity of the cube and \eqref{eq:preconditioned-metric} give
\[
 d_A(x,y)\le\int_0^1\frac{\|W_A(y-x)\|_2}{\rho_A((1-t)x+ty)}\,dt\le\frac{2\varrho_A}{32r_{\rm mesh}},
\]
which proves the diameter bound~\eqref{eq:good-cell-intrinsic-diameter}.

Let $Z\sim\pi_\Lambda$ and let $E$ be an independent uniform cell offset. If $Z\in\cB_\Lambda$, then \eqref{eq:preconditioned-radius-bounds} gives
\begin{align*}
    \rho_A(Z+E)\leq \rho_A(Z)+C\sqrt{D}\varrho_A<64r_{\rm mesh} + C\sqrt{D}\varrho_A\leq Cr_{\rm mesh}
\end{align*}
for a sufficiently large constant $C$. Thus,
\begin{align*}
 \pi_\Lambda(\cB_\Lambda)
 \leq &~ \Pr_{Z\sim \pi_\Lambda,E\sim Q_\Lambda}[\rho_A(Z+E)<Cr_{\rm mesh}]\\*
 \leq &~ \pi\{x:\rho_A(x)<Cr_{\rm mesh}\}+\varepsilon_{\rm quad}\\*
 \leq &~ C\sqrt D\,r_{\rm mesh}+\varepsilon_{\rm quad}\leq \varepsilon_{\rm mesh},
\end{align*}
where the second step follows from~\eqref{eq:finite-jitter-tv}, and the third follows from~\eqref{eq:preconditioned-radius-bounds_2}. We also have
\begin{align*}
 \pi(K\setminus U_{\cG})
 \leq \Pr[Z+E\notin U_{\cG}]+\varepsilon_{\rm quad}
 = \pi_\Lambda(\cB_\Lambda)+\varepsilon_{\rm quad},
\end{align*}
where the equality follows because the cells partition space up to null boundaries. Increasing the constant in $\varepsilon_{\rm mesh}$ proves~\eqref{eq:bad-grid-mass}.
\end{proof}

\paragraph{The lattice Metropolis walk.}
We first specify the local proposals in~\cref{alg:finite-metropolis}. Let $\mathcal R_h(u)=h\left\lfloor h^{-1}u+\tfrac12\one\right\rfloor$ be the coordinatewise rounding to the discretization lattice. Define the ideal displacement law
\begin{equation}\label{eq:lattice-rounded-ball}
 q_{A,\delta}(v)=\Pr_{U_\star \sim B(0,\delta)}[\mathcal R_h(T_AU_\star)=v],\qquad v\in\Lambda.
\end{equation}
This law is symmetric: $T_AU_\star$ has a centrally symmetric continuous law and rounding ties have probability zero. The implemented finite sampler in \cref{lem:finite-direction-preparation} produces a rational $U$ and $v=\mathcal R_h(T_AU)$. Its law $q^0_{A,\delta}$ approximates $q_{A,\delta}$ within $\varepsilon_{\rm dir}$ in total variation. The independent sign in \cref{alg:finite-metropolis} gives the exactly symmetric law
\begin{equation}\label{eq:symmetric-proposal}
 \widehat q_{A,\delta}(v)=\tfrac12\bigl(q^0_{A,\delta}(v)+q^0_{A,\delta}(-v)\bigr),\qquad
 \dtv(\widehat q_{A,\delta},q_{A,\delta})\le\varepsilon_{\rm dir}.
\end{equation}
Only this error guarantee is needed for our mixing analysis in this section, and the implemented sampler's precision analysis is deferred to \cref{app:proposal-arithmetic}.

The radius $\delta$ is sampled from the following dyadic list
\begin{equation}\label{eq:finite-scales}
 \delta_j=2^j(4r_{\rm mesh}),\quad 0\le j<J,\qquad \delta_{J-1}\ge C\sqrt D\,R_A,
\end{equation}
where $J\ge1$ is the smallest integer satisfying the last inequality. Since $\rho_A(z)\le C\sqrt D\,R_A$ by~\eqref{eq:preconditioned-radius-bounds}, this list covers the local radii of all good centers. Its length satisfies
\begin{equation}\label{eq:number-digital-scales}
 J=O\!\left(1+\log\left(2+\frac{\sqrt D\,R_A}{r_{\rm mesh}}\right)\right).
\end{equation}
To select a scale $\delta_j$ using a fixed number of unbiased random bits, let $q=\lceil\log_2J\rceil$, draw $s$ uniformly from $\{0,\ldots,2^q-1\}$, and choose $\delta_j$ with $j=s\bmod J$. Since $J\le2^q<2J$, each scale has either one or two seed values mapped to it. Its probability therefore satisfies
\[
 \frac1{2J}\le\frac1{2^q}\le w_j\le\frac2{2^q}\le\frac2J.
\]
This process can be implemented coherently using reversible gates:
\begin{align*}
    2^{-q/2}\sum_{s=0}^{2^q-1}\ket{s}_{\sf R}\ket{0}_{\sf J}\longmapsto 2^{-q/2}\sum_{s=0}^{2^q-1}\ket{s}_{\sf R}\ket{s \bmod J}_{\sf J}
\end{align*}

Then, we specify the connector proposal in~\cref{alg:finite-metropolis}. Choose a power of two $M_{\rm out}>2R_0/h$ and sample the coefficients of $v$ uniformly from $\{-M_{\rm out},\ldots,M_{\rm out}-1\}^D$. Every feasible difference is included. Append the same independent sign and use Metropolis acceptance. 

If $\sP_j$ and $\sP_{\rm conn}$ denote these Metropolis kernels, the implemented walk is
\begin{equation}\label{eq:finite-chain-mixture}
 \sP_{\Lambda,A}=\tfrac12\Id+\tfrac38\sum_{j=0}^{J-1}w_j\sP_j+\tfrac18\sP_{\rm conn}.
\end{equation}

The following lemma shows that transition laws from good lattice points at intrinsic distance less than $c/\sqrt D$ share probability mass $\Omega(1/J)$. This overlap will give a lower bound on conductance.

\begin{lemma}[Local overlap]\label{lem:finite-overlap}
Suppose $\varepsilon_{\rm dir}\le1/100$ and
\begin{equation}\label{eq:mesh-radius-assumptions}
 \|b\|_2\cdot \varrho_\Lambda\le\frac12\log(4/3).
\end{equation}
There are constants $c,\alpha_0>0$ such that, for $x,y\in\cG_\Lambda$,
\begin{equation}\label{eq:finite-overlap}
 d_A(x,y)<c/\sqrt D
 \quad\Longrightarrow\quad
 \dtv(\sP_{\Lambda,A}(x,\cdot),\sP_{\Lambda,A}(y,\cdot))\le1-\alpha_0/J.
\end{equation}
\end{lemma}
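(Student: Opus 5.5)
The overlap will come from the local branch at a single dyadic scale comparable to $\rho_A(x)$. Both transition laws then put mass $\Omega(1/J)$ on common lattice destinations, and the connector and holding branches can only help. First I fix the scale. Since $x\in\cG_\Lambda$, we have $\rho_A(x)\ge64r_{\rm mesh}$. By \eqref{eq:finite-scales} there is an index $j$ with $\rho_A(x)/4\le\delta_j\le\rho_A(x)/2$, and the weight of this scale satisfies $w_j\ge1/(2J)$. By \cref{thm:classical-log-affine-input}~(iii), $d_A(x,y)<c/\sqrt D$ gives $\rho_A(y)\asymp\rho_A(x)$. The same result gives $\|W_A(x-y)\|_2\le 2c\,\rho_A(x)/\sqrt D$. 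Hence $\delta_j\le\rho_A(y)$, and the centers of the balls $B(W_Ax,\delta_j)$ and $B(W_Ay,\delta_j)$ are at distance at most $O(c/\sqrt D)\delta_j$. For small $c$, the standard cap estimate shows that the two uniform ball laws have total-variation distance at most $1/10$.

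Next I transfer this overlap to rounded lattice proposals, using the identity $x+\mathcal R_h(T_AU)=\mathcal R_h(T_A(W_Ax+U))$ from the overview. With the ideal law $q_{A,\delta_j}$, the destinations from $x$ and from $y$ are images of uniform points in the two balls under the same map $\mathcal R_h\circ T_A$. Pushforward does not increase total variation, so the destination laws share mass at least $9/10$. The implemented symmetric law \eqref{eq:symmetric-proposal} costs at most $2\varepsilon_{\rm dir}\le1/50$ more.

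The main step is to show that most of the common proposal mass is accepted from both starting points. Fix a starting point $x$ and write $e=\mathcal R_h(T_AU)-T_AU$, so $\|W_Ae\|\le\varrho_A$ and $\|e\|_2\le\varrho_\Lambda$. I will use the scale $2\delta_j\le\rho_A(x)$. By the definition \eqref{eq:density-local-radius} applied at radius $2\delta_j$, with probability at least $63/64$ over $U$ uniform in $B(0,\delta_j)$, the point $x+2T_AU$ lies in $K$ and $e^{-\langle b,2T_AU\rangle}\ge3/4$. Here $2U$ is uniform on $B(0,2\delta_j)$. Because $x\in\cG_\Lambda$, $\rho_A(x)\ge64 r_{\rm mesh}\gg\varrho_A$, so $x+2e\in\operatorname{int}K$ by \cref{lem:mesh-collar}. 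Convexity then gives $x+\mathcal R_h(T_AU)=\tfrac12(x+2T_AU)+\tfrac12(x+2e)\in K$. For the acceptance weight, $e^{-\langle b,T_AU\rangle}\ge(3/4)^{1/2}$, and $e^{-\langle b,e\rangle}\ge e^{-\|b\|_2\varrho_\Lambda}\ge(3/4)^{1/2}$ by \eqref{eq:mesh-radius-assumptions}. So the acceptance probability is at least $3/4$ on this event.

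The same argument at $y$ works after checking $2\delta_j\le\rho_A(y)$, which follows from the comparability of $\rho_A(x)$ and $\rho_A(y)$ once the choice of $j$ and $c$ are adjusted. I expect the main obstacle to be this bookkeeping. A common good event must be defined on the shared destination mass, so one has to pass between the event over $U$ from $x$ and the corresponding event from $y$; I would handle this via a union bound on bad mass in each ball, using the $63/64$ guarantee. One must also handle the proposal error $\varepsilon_{\rm dir}$ and the sign symmetrization, and check that the $-v$ branch satisfies the same bounds because the ball is centrally symmetric.

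Combining the pieces, the common accepted mass of $\sP_j(x,\cdot)$ and $\sP_j(y,\cdot)$ is at least $\tfrac34(9/10-1/50-2/64)\ge1/2$. Weighting by $\tfrac38 w_j\ge3/(16J)$ and using \eqref{eq:finite-chain-mixture} gives overlap at least $\alpha_0/J$, with the other branches contributing nonnegatively. This gives \eqref{eq:finite-overlap}.
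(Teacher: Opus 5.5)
Your proposal is correct and follows the paper's proof essentially step for step. Both arguments use a single scale $\delta\asymp\rho_A(x)\asymp\rho_A(y)$ with $2\delta\le\min\{\rho_A(x),\rho_A(y)\}$, push the two uniform balls forward through the common map $\mathcal R_h\circ T_A$, and obtain acceptance from the identity $z+v=\tfrac12(z+2T_AU)+\tfrac12(z+2e)$ together with \eqref{eq:mesh-radius-assumptions}. The only cosmetic difference is how the bad destinations are discarded. The paper defines $G_z=\{u\in\Omega_\Lambda:f_b(u)\ge\tfrac34f_b(z)\}$ and assumes without loss of generality that $f_b(x)\ge f_b(y)$, so $G_x\subseteq G_y$ and only one bad mass is subtracted. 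Your union bound over both bad sets works equally well, provided each bad mass is taken as $1/64+\varepsilon_{\rm dir}$ under the implemented proposal rather than $1/64$; the resulting constant is still positive.
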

\begin{proof}
To bound the total-variation distance between the one-step transition law from $x$ and $y$, we first have
\begin{align*}
    1-\dtv(\sP_{\Lambda,A}(x,\cdot),\sP_{\Lambda,A}(y,\cdot))=\sum_{u\in \Lambda} \min\{{\cal P}_{\Lambda, A}(x,u),{\cal P}_{\Lambda, A}(y,u)\}.
\end{align*}
For any $0\leq j<J$, let ${\cal P}_j(z,\cdot)$ denote the transition law from $z$ conditioned on the local proposal with scale $\delta_j$ in~\cref{alg:finite-metropolis}. The probability of this conditioning event is 
\begin{align*}
    \frac{3}{8}w_j\geq \frac{3}{16J}.
\end{align*}
Since the other branches of the update have nonnegative transition probabilities, we have
\begin{align*}
    1-\dtv(\sP_{\Lambda,A}(x,\cdot),\sP_{\Lambda,A}(y,\cdot)) \geq &~  \frac{3}{16J}\sum_{u\in \Lambda} \min\{{\cal P}_{j}(x,u),{\cal P}_{j}(y,u)\}\\
    = &~ \frac{3}{16J} (1-\dtv(\sP_{j}(x,\cdot),\sP_{j}(y,\cdot))).
\end{align*}

It therefore suffices to choose a scale $\delta_j$ at which ${\cal P}_j(x,\cdot)$ and ${\cal P}_j(y,\cdot)$ share a constant amount of probability mass. Applying \cref{thm:classical-log-affine-input}~\textnormal{(iii)} in the transformed coordinates gives, for sufficiently small $c$,
\[
 \max\{\rho_A(x),\rho_A(y)\}\le2\min\{\rho_A(x),\rho_A(y)\},\qquad
 \|W_A(x-y)\|_2\le Cc\rho_A(x)/\sqrt D.
\]
The list \eqref{eq:finite-scales} therefore contains a radius $\delta=\delta_j$ such that
\[
 \frac{\max\{\rho_A(x),\rho_A(y)\}}{32}\le\delta\le\frac{\min\{\rho_A(x),\rho_A(y)\}}8.
\]
Indeed, the interval has ratio at least two, its upper endpoint is at least $8r_{\rm mesh}$, and the list begins at $4r_{\rm mesh}$. Therefore, for this $\delta$, it holds that
\begin{align}\label{eq:rho_x_y_WA_bound}
 8\delta\le\rho_A(x),\rho_A(y)\le32\delta,\qquad
 \|W_A(x-y)\|_2\le Cc\delta/\sqrt D.
\end{align}

Let $z\in\{x,y\}$ be any of the starting points, and we consider the next-step move. Let $U_\star$ be uniform in $B(0,\delta)$, $v=\mathcal R_h(T_AU_\star)$, and $e=v-T_AU_\star$. Rounding gives
\[
 \|e\|_2\le\varrho_\Lambda,\qquad \|W_Ae\|_2\le\varrho_A.
\]
Since $z\in {\cal G}_\Lambda$, the boundary-distance estimate in \cref{thm:classical-log-affine-input}~\textnormal{(i)} gives
\[
 \operatorname{dist}(W_Az,\partial K_A)\ge\frac{\rho_A(z)}{C\sqrt D}\ge\frac{64r_{\rm mesh}}{C\sqrt D}\ge\frac{C_1D^{5/2}}{C}\varrho_A>2\varrho_A.
\]
Thus, $z+2e\in K$. Since $2U_\star$ is uniform in $B(0,2\delta)$ and $2\delta<\rho_A(z)$, the definition of $\rho_A$ implies that
\[
 \Pr\left[z+2T_AU_\star\in K,\ e^{-\langle b,2T_AU_\star\rangle}\ge\frac34\right]\ge\frac{63}{64}.
\]
Conditioned on this event, by the convexity of $K$, we have
\begin{equation}\label{eq:convex-rounding-stability}
 z+v=z+T_AU_\star+e=\tfrac12(z+2T_AU_\star)+\tfrac12(z+2e)\in K.
\end{equation}
Moreover, the density satisfies
\begin{align*}
 \frac{f_b(z+v)}{f_b(z)}
 = e^{-\langle b,T_AU_\star\rangle} e^{-\langle b,e\rangle}
 \geq \sqrt{\frac{3}{4}}\,e^{-\|b\|_2\varrho_\Lambda}
 \ge\frac34,
\end{align*}
where the last step follows from~\eqref{eq:mesh-radius-assumptions}. Hence, we get that
\begin{align*}
    \Pr_{v\sim q_{A,\delta}}\left[z+v\notin K~\vee~f_b(z+v)<\frac{3}{4}f_b(z)\right]\leq \frac{1}{64}.
\end{align*}
By the total-variation distance bound~\eqref{eq:symmetric-proposal}, we have
\begin{align}\label{eq:good_delta_bad_destin_prob}
 \Pr_{v\sim\widehat q_{A,\delta}}\left[z+v\notin K~\vee~f_b(z+v)<\frac{3}{4}f_b(z)\right]
 \le\frac1{64}+\varepsilon_{\rm dir}\leq\frac18.
\end{align}
Let $q_x,q_y$ denote the proposal destination laws at the scale $\delta$ before Metropolis acceptance, and define the set of good destinations:
\[
 G_z:=\{u\in\Omega_\Lambda:f_b(u)\ge3f_b(z)/4\},
 \qquad z\in\{x,y\}.
\]
Then, \eqref{eq:good_delta_bad_destin_prob} implies that
\begin{align*}
    q_x(G_x^c)\leq \frac{1}{8},\qquad q_y(G_y^c)\leq \frac{1}{8}.
\end{align*}
Note that for any $u\in G_z$, its Metropolis acceptance probability is at least $3/4$. Thus, we have
\begin{align*}
    \sum_{u\in \Lambda} \min\{{\cal P}_{j}(x,u),{\cal P}_{j}(y,u)\} \geq \sum_{u\in G_x\cap G_y} \frac{3}{4}\min\{q_x(u),q_y(u)\}
\end{align*}
Without loss of generality, we may assume that $f_b(x)\geq f_b(y)$. Then, $G_x\subseteq G_y$, and the above equation becomes
\begin{align*}
    \sum_{u\in \Lambda} \min\{{\cal P}_{j}(x,u),{\cal P}_{j}(y,u)\} \geq &~ \sum_{u\in G_x} \frac{3}{4}\min\{q_x(u),q_y(u)\}\\
    = &~ \frac{3}{4}\left(\sum_{u\in \Lambda} \min\{q_x(u),q_y(u)\} - \sum_{u\in G_x^c} \min\{q_x(u),q_y(u)\}\right)\\
    \geq &~ \frac{3}{4}\left(\sum_{u\in \Lambda} \min\{q_x(u),q_y(u)\} - q_x(G_x^c)\right)\\
    = &~ \frac{3}{4}\left(1-\dtv(q_x,q_y) - q_x(G_x^c)\right)\\
    \geq &~ \frac{3}{4}\left(\frac{7}{8}-\dtv(q_x,q_y)\right).
\end{align*}

For $z\in \{x,y\}$, let $q_z^\star$ denote the law of $z+v$ for $v\sim q_{A,\delta}$. We first bound $\dtv(q_x^\star, q_y^\star)$. Since $z\in h\Z^D$ and $T_AW_A=\Id$, we can swap rounding and translation:
\[
 \mathcal R_h(T_A(W_Az+U_\star))
 =\mathcal R_h(z+T_AU_\star)
 =z+\mathcal R_h(T_AU_\star).
\]
Thus, $q_z^\star$ is the image of the uniform law on $B(W_Az,\delta)$ under the map $\mathcal R_h\circ T_A$. By the data-processing inequality,
\begin{align*}
    \dtv(q_x^\star, q_y^\star)= &~ \dtv(({\cal R}_h\circ T_A) \operatorname{Unif}(B(W_Ax,\delta)), ({\cal R}_h \circ T_A) \operatorname{Unif}(B(W_Ay,\delta)))\\
    \leq &~ \dtv(\operatorname{Unif}(B(W_Ax,\delta)),\operatorname{Unif}(B(W_Ay,\delta)))\\
    = &~ \frac{\vol(B(W_Ax,\delta)\setminus B(W_Ay,\delta))}{v_D\delta^D},
\end{align*}
where $v_D$ denotes the volume of the unit ball in $\R^D$. Slice the two balls along lines parallel to the direction of $W_A(x-y)$. Each pair of slices consists of equal-length intervals shifted by $\|W_A(x-y)\|_2$, so the length of the non-overlap part of each interval is at most $\|W_A(x-y)\|_2$. The perpendicular projection is a $(D-1)$-dimensional ball of radius $\delta$. Hence,
\[
 \vol(B(W_Ax,\delta)\setminus B(W_Ay,\delta))
 \le v_{D-1}\delta^{D-1}\|W_A(x-y)\|_2.
\]
Thus,
\begin{align*}
    \dtv(q_x^\star, q_y^\star)\leq  \frac{v_{D-1}}{v_D}\frac{\|W_A(x-y)\|_2}{\delta}
    \le  O(\sqrt D)\,\frac{\|W_A(x-y)\|_2}{\delta}
    \leq  O(c),
\end{align*}
where the last step follows from~\eqref{eq:rho_x_y_WA_bound}. Then, by~\eqref{eq:symmetric-proposal} and triangle inequality, we get that
\begin{align*}
    \dtv(q_x,q_y)
 \leq &~ \dtv(q_x,q_x^\star)+\dtv(q_x^\star,q_y^\star)+\dtv(q_y^\star,q_y)\\
 \leq &~ 2\varepsilon_{\rm dir}+O(c)\leq \frac{1}{50}+O(c)\leq \frac{1}{2},
\end{align*}
after choosing $c$ sufficiently small in the assumption~\eqref{eq:finite-overlap}.

Therefore, we conclude that 
\begin{align*}
    1-\dtv(\sP_{\Lambda,A}(x,\cdot),\sP_{\Lambda,A}(y,\cdot)) \geq \frac{3}{16J} \cdot \frac{3}{4}\left(\frac{7}{8}-\dtv(q_x,q_y)\right)\geq \frac{27}{512J}.
\end{align*}
\end{proof}

\subsection{The mixing time bound}\label{subsec:lattice-conductance}
In this subsection, we prove a warm-start mixing time bound for the lattice Metropolis walk (\cref{thm:finite-mixing}). To lower bound the $s$-conductance, we adapt the continuous space mixing proof for the hit-and-run walk in~\cite{LovaszVempala06Corner,Zhang26HitRun} to the lattice setting by considering the union of the cells centered at a subset of lattice points. For $S\subseteq\cG_\Lambda$, write $U(S)=\bigcup_{z\in S}(z+Q_\Lambda)$.

\begin{lemma}[Separator on the cubic grid]\label{lem:cell-separator}
Suppose
\begin{equation}\label{eq:finite-error-budget}
 \varepsilon_{\rm mesh}\le c\min\left\{s,\frac{s\psi_D}{D(1+\log(4/s))}\right\}.
\end{equation}
If $S_-,S_+\subseteq\cG_\Lambda$ satisfy $d_A(S_-,S_+)\ge c_0/\sqrt D$ and $\pi_\Lambda(S_-),\pi_\Lambda(S_+)\ge s/2$ for a fixed $c_0>0$, then
\begin{equation}\label{eq:cell-separator}
 \pi_\Lambda(\cG_\Lambda\setminus(S_-\cup S_+))
 \ge \gamma_s\min\{\pi_\Lambda(S_-),\pi_\Lambda(S_+)\},\qquad \gamma_s=c\frac{\psi_D}{D(1+\log(4/s))}.
\end{equation}
The constant in the mesh requirement is chosen sufficiently small depending on $c_0$ and the fixed covariance-comparison constants.
\end{lemma}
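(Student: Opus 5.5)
The plan is to pass from lattice sets to continuous unions of cells, apply the continuous separator \cref{thm:classical-log-affine-input}~(iv) to the transformed target $\pi_A$, and then transfer back using the quadrature and jitter bounds of \cref{lem:full-grid-quadrature} and \cref{lem:mesh-collar}.

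\textbf{Step 1: comparing cell masses with lattice masses.} Let $X\sim\pi_\Lambda$ and let $E$ be an independent uniform offset. By construction, for any $S\subseteq\Omega_\Lambda$, the jittered law $\widetilde\pi_\Lambda$ satisfies
\[
\widetilde\pi_\Lambda(U(S))=\pi_\Lambda(S).
\]
Together with \eqref{eq:finite-jitter-tv}, this gives
\[
|\pi(U(S))-\pi_\Lambda(S)|\le\varepsilon_{\rm quad}\le\varepsilon_{\rm mesh}.
\]

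\textbf{Step 2: separation of the cell unions.} Set $S'_\pm=U(S_\pm)$. By \cref{lem:mesh-collar}, each good cell lies in $\operatorname{int}K$ and has $d_A$-diameter at most $C\varrho_A/r_{\rm mesh}=C/(C_1D^3)$. By the triangle inequality for $d_A$,
\[
d_A(S'_-,S'_+)\ge c_0/\sqrt D-2C/(C_1D^3)\ge c_0/(2\sqrt D),
\]
once $C_1$ is large. Since $d_A(x,y)=d_{\rho_{b_A}}(W_Ax,W_Ay)$, the images $W_AS'_\pm$ are separated by $a/\sqrt D$ with $a=c_0/2$ in the intrinsic metric of $\pi_A$. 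On a good branch, $\pi_A$ is $C_0$-near-isotropic by \cref{prop:finite-affine-branches}.

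\textbf{Step 3: applying the continuous separator.} Let
\[
p=\min\{\pi(S'_-),\pi(S'_+)\}\ge \min\{\pi_\Lambda(S_\pm)\}-\varepsilon_{\rm mesh}\ge s/4,
\]
using $\varepsilon_{\rm mesh}\le cs$. \Cref{thm:classical-log-affine-input}~(iv) then gives
\[
\pi(K\setminus(S'_-\cup S'_+))\ge c_a\frac{\psi_D}{D(1+\log(4/s))}\,p.
\]
Here we used $\log(1/p)\le\log(4/s)$. Also $p\ge\min\{\pi_\Lambda(S_\pm)\}-\varepsilon_{\rm mesh}\ge\tfrac12\min\{\pi_\Lambda(S_\pm)\}$.

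\textbf{Step 4: transferring back to the lattice.} Decompose
\[
K\setminus(S'_-\cup S'_+)\subseteq U(\cG_\Lambda\setminus(S_-\cup S_+))\cup(K\setminus U_\cG),
\]
which holds because the cells partition space up to null sets. Step 1 and \eqref{eq:bad-grid-mass} then give
\[
\pi_\Lambda(\cG_\Lambda\setminus(S_-\cup S_+))\ge\pi(K\setminus(S'_-\cup S'_+))-2\varepsilon_{\rm mesh}.
\]
Finally, hypothesis \eqref{eq:finite-error-budget} absorbs the error:
\[
2\varepsilon_{\rm mesh}\le 2c\,\frac{s\psi_D}{D(1+\log(4/s))}\le 4c\,\frac{\psi_D}{D(1+\log(4/s))}\min\{\pi_\Lambda(S_\pm)\}.
\]
The last inequality uses $\min\{\pi_\Lambda(S_\pm)\}\ge s/2$. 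Choosing $c$ small relative to $c_a$ leaves the claimed $\gamma_s$.

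\textbf{Main obstacle.} The delicate step is Step 2. We must confirm that the separation in $d_A$ of the lattice sets (defined via interior points) survives thickening to cells. This uses the cell-diameter bound, which needs $\rho_A\ge 32r_{\rm mesh}$ throughout each good cell. We must also confirm that (iv) applies with $a$ depending only on $c_0$; this is why the constant in the mesh requirement depends on $c_0$.
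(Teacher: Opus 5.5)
Your proposal is correct and follows essentially the same route as the paper. You separate the cell unions $U(S_\pm)$ using the good-cell $d_A$-diameter bound, apply \cref{thm:classical-log-affine-input}~(iv) in whitened coordinates with $p\ge s/4$, and transfer back by subtracting $\pi(K\setminus U_{\cG})$, absorbing the loss via \eqref{eq:finite-error-budget}. The only difference is that you compare $\pi(U(S))$ with $\pi_\Lambda(S)$ additively through the jitter bound \eqref{eq:finite-jitter-tv}, using $\widetilde\pi_\Lambda(U(S))=\pi_\Lambda(S)$, whereas the paper uses the multiplicative cell-by-cell comparison $\tfrac12\pi_\Lambda(S)\le\pi(U(S))\le2\pi_\Lambda(S)$ from the in-cell density variation and $Z_\Lambda/Z$; either works, since the resulting $O(\varepsilon_{\rm mesh})$ error is absorbed the same way.
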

\begin{proof}
We will apply~\cref{thm:classical-log-affine-input}~\textnormal{(iv)} to show that the separator between $U(S_-)$ and $U(S_+)$ has large $\pi$-mass, which will imply~\eqref{eq:cell-separator}. 

We first need to check that these two subsets are well-separated in $d_{A}$-distance. By~\cref{lem:mesh-collar}, the good cells lie inside $K$, and their $d_A$-diameter is at most 
\begin{align*}
    C\frac{\varrho_A}{r_{\rm mesh}}\leq C\frac{\varrho_A}{C_1D^3\varrho_A}\leq \frac{c_0}{4\sqrt{D}}
\end{align*}
by increasing $C_1$ in~\eqref{eq:mesh-depth}. Thus, for $x\in U(S_-)$ and $y\in U(S_+)$ with cell centers $z_-,z_+$, by the triangle inequality, we have
\[
 d_A(x,y)\ge d_A(z_-,z_+)-d_A(z_-,x)-d_A(z_+,y)\ge\frac{c_0}{2\sqrt D}.
\]
Hence, $d_A(U(S_-),U(S_+))\geq \frac{c_0}{2\sqrt{D}}$.

For a good center $z\in {\cal G}_\Lambda$, the Euclidean radius of the cell is upper bounded by $\varrho_\Lambda$. Thus, for any $x\in z+Q_\Lambda$, $e^{-\|b\|_2\varrho_\Lambda}\leq \frac{\pi(x)}{\pi(z)}\leq e^{\|b\|_2\varrho_\Lambda}$, and we have
\[
 e^{-\|b\|_2\varrho_\Lambda}\frac{Z_\Lambda}{Z}\pi_\Lambda(z)
 \le\pi(z+Q_\Lambda)
 \le e^{\|b\|_2\varrho_\Lambda}\frac{Z_\Lambda}{Z}\pi_\Lambda(z).
\]
The quadrature bound in~\cref{lem:full-grid-quadrature} makes these multipliers lie in $[1/2,2]$. Summing over any $S\subseteq\cG_\Lambda$ gives
\begin{equation}\label{eq:cell-mass-comparison}
 \frac12\pi_\Lambda(S)\le\pi(U(S))\le2\pi_\Lambda(S).
\end{equation}
Let $p=\min\{\pi_\Lambda(S_-),\pi_\Lambda(S_+)\}$ and $p_c=\min\{\pi(U(S_-)),\pi(U(S_+))\}$. Then, \eqref{eq:cell-mass-comparison} implies that
\begin{align*}
    p_c\ge p/2\ge s/4.
\end{align*}
By~\cref{thm:classical-log-affine-input}~\textnormal{(iv)}, we get that
\begin{align}\label{eq:mass_of_separator}
 \pi(K\setminus(U(S_-)\cup U(S_+)))
 \geq c\frac{\psi_Dp_c}{D(1+\log(1/p_c))}
 \geq c'\frac{\psi_Dp}{D(1+\log(4/s))}.
\end{align}

Therefore, 
\begin{align*}
 \pi_\Lambda(\cG_\Lambda\setminus(S_-\cup S_+))
 \geq &~ \frac{1}{2}\pi(U(\cG_\Lambda\setminus(S_-\cup S_+)))\\
 = &~ \frac{1}{2}\left(\pi(K\setminus(U(S_-)\cup U(S_+)))-\pi(K\setminus U_{\cal G})\right)\\*
 \geq &~ \frac{1}{2}\left(c'\frac{\psi_Dp}{D(1+\log(4/s))}-\varepsilon_{\rm mesh}\right)\\*
 \geq &~ \frac{c'}{4}\frac{\psi_Dp}{D(1+\log(4/s))}=:\gamma_s p,
\end{align*}
We use \eqref{eq:cell-mass-comparison} to compare lattice mass with cell mass, then apply the separator bound \eqref{eq:mass_of_separator} and subtract the exceptional mass bounded in \eqref{eq:bad-grid-mass}. The error budget \eqref{eq:finite-error-budget}, together with $p\ge s/2$, absorbs this loss.

\end{proof}

\begin{theorem}[Conductance and warm-start mixing]\label{thm:finite-mixing}
Let $0<s<1/4$. Suppose $A\asymp\Cov_\pi(X)$, $\varepsilon_{\rm dir}\le1/100$, and the lattice satisfies~\eqref{eq:quadrature-error}, \eqref{eq:mesh-radius-assumptions}, and \eqref{eq:finite-error-budget}. Then
\begin{equation}\label{eq:finite-conductance}
 \Phi_s(\sP_{\Lambda,A})\ge c\frac{\psi_D}{JD(1+\log(4/s))}.
\end{equation}
For $M\ge1$ and $0<\eta<1/2$, if these conditions hold at $s=\eta/(2M)$, every $M$-warm start $\nu$ satisfies
\begin{equation}\label{eq:finite-mixing}
 \tau_{\mix}(\eta,\nu,\pi_\Lambda;\sP_{\Lambda,A})
 \le C\frac{J^2D^2}{\psi_D^2}(1+\log(8M/\eta))^2\log(2M/\eta).
\end{equation}
\end{theorem}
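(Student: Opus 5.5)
We follow the standard Lovász–Simonovits conductance argument, in its discrete form. Fix a set $A_0\subseteq\Omega_\Lambda$ with $s<\pi_\Lambda(A_0)\le1/2$, and write $A_1=\Omega_\Lambda\setminus A_0$. By reversibility, $Q(A_0,A_1)=Q(A_1,A_0)$. The goal is to show $Q(A_0,A_1)\gtrsim\frac{\psi_D}{JD(1+\log(4/s))}\bigl(\pi_\Lambda(A_0)-s\bigr)$.

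\textbf{Step 1: define the poorly-crossing sets.} Only good centers are considered. Let
\[
S_-=\{x\in A_0\cap\cG_\Lambda:\sP_{\Lambda,A}(x,A_1)<\alpha_0/(2J)\},\qquad S_+=\{y\in A_1\cap\cG_\Lambda:\sP_{\Lambda,A}(y,A_0)<\alpha_0/(2J)\}.
\]
Take $x\in S_-$ and $y\in S_+$. Then $\sP(x,\cdot)$ and $\sP(y,\cdot)$ are concentrated on opposite sets up to mass $\alpha_0/(2J)$ each. Hence their total-variation distance exceeds $1-\alpha_0/J$. By the contrapositive of \cref{lem:finite-overlap}, $d_A(x,y)\ge c/\sqrt D$, so $d_A(S_-,S_+)\ge c/\sqrt D$.

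\textbf{Step 2: case split.} The bad set satisfies $\pi_\Lambda(\cB_\Lambda)\le\varepsilon_{\rm mesh}\le cs$ by \cref{lem:mesh-collar} and \eqref{eq:finite-error-budget}.

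\emph{Case (a):} $\pi_\Lambda(S_-)\le\pi_\Lambda(A_0\cap\cG_\Lambda)/2$. Then at least half of the good mass of $A_0$ leaves at rate $\ge\alpha_0/(2J)$. Since $\pi_\Lambda(A_0\cap\cG_\Lambda)\ge\pi_\Lambda(A_0)-cs$, this gives
\[
Q(A_0,A_1)\gtrsim J^{-1}\bigl(\pi_\Lambda(A_0)-s\bigr),
\]
which is stronger than needed.

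\emph{Case (b):} the analogous bound holds for $S_+$. This is handled symmetrically, using $Q(A_1,A_0)$ together with $\pi_\Lambda(A_1)\ge1/2\ge\pi_\Lambda(A_0)$.

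\emph{Case (c):} otherwise. Then $\pi_\Lambda(S_\pm)$ is at least about $(\pi_\Lambda(A_0)-cs)/2\ge s/2$ when $c$ is small. So \cref{lem:cell-separator} applies and gives
\[
\pi_\Lambda\bigl(\cG_\Lambda\setminus(S_-\cup S_+)\bigr)\ge\gamma_s\min\{\pi_\Lambda(S_-),\pi_\Lambda(S_+)\}.
\]
Every point in this separator crosses at rate $\ge\alpha_0/(2J)$. Therefore
\[
2Q(A_0,A_1)\ge\frac{\alpha_0}{2J}\gamma_s\cdot\frac{\pi_\Lambda(A_0)-s}{2},
\]
which yields \eqref{eq:finite-conductance}.

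\textbf{Step 3: mixing.} Plug \eqref{eq:finite-conductance} at $s=\eta/(2M)$ into \cref{lem:warm-mixing-from-conductance}. This needs $\sP_{\Lambda,A}$ to be lazy and reversible, which holds by construction via the holding branch and the symmetric Metropolis proposal. We get
\[
\tau_{\mix}\lesssim\Phi_s^{-2}\log(2M/\eta)\lesssim\frac{J^2D^2}{\psi_D^2}\bigl(1+\log(8M/\eta)\bigr)^2\log(2M/\eta),
\]
since $4/s=8M/\eta$.

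\textbf{Main obstacle.} The delicate step is the mass bookkeeping in case (c). The lower bounds $\pi_\Lambda(S_\pm)\ge s/2$ required by \cref{lem:cell-separator} must follow from $\pi_\Lambda(A_0)>s$ after losing the bad mass $\varepsilon_{\rm mesh}$. Near $\pi_\Lambda(A_0)\approx s$, this only works because the conductance target is $\pi_\Lambda(A_0)-s$ rather than $\pi_\Lambda(A_0)$. If $\pi_\Lambda(A_0\cap\cG_\Lambda)<s$, the target is nonpositive and the bound is trivial. Otherwise, we would tune the thresholds (using halves versus a fraction like $1/4$) so that the constants in \eqref{eq:finite-error-budget} absorb the loss.
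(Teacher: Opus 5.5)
Your proof is essentially correct but takes a different route from the paper, and the case (c) step needs the threshold change you only mention in passing.

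\textbf{The two routes.} The paper does not split into cases. For its cut $S$ (your $A_0$) it uses the escape threshold $\alpha_0/(4J)$ and argues by contradiction. Suppose $Q<\alpha_0\gamma_s p/(32J)$. Markov's inequality gives $\pi_\Lambda(S\setminus S_-),\ \pi_\Lambda(S^c\setminus S_+)\le\varepsilon_{\rm mesh}+4JQ/\alpha_0$. Together with $\varepsilon_{\rm mesh}\le\gamma_s s/32$, this forces both low-escape sets to keep mass at least $27p/32>s/2$. The separator lower bound $\gamma_s\cdot 27p/32$ then contradicts the upper bound $2\varepsilon_{\rm mesh}+8JQ/\alpha_0$ on the middle set. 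In that argument, a small flow is exactly what makes $S_\pm$ large, so the $s/2$ hypothesis of \cref{lem:cell-separator} comes for free. Your explicit three-case Lov\'asz--Simonovits split gives the same $\Phi_s\gtrsim\gamma_s/J$ and needs only $\varepsilon_{\rm mesh}\le cs$ outside the separator lemma. However, it has to secure that $s/2$ hypothesis by hand. Your threshold $\alpha_0/(2J)$ and the mixing step are fine as written.

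\textbf{Where case (c) fails as written.} With the half-split, case (c) only gives $\pi_\Lambda(S_-)>\frac12(\pi_\Lambda(A_0)-cs)>\frac{1-c}{2}s$. This is below $s/2$ whenever $s<\pi_\Lambda(A_0)<(1+c)s$, so the claim ``$(\pi_\Lambda(A_0)-cs)/2\ge s/2$ when $c$ is small'' is false there. Your fallback is also wrong: the target is not nonpositive when $\pi_\Lambda(A_0\cap\cG_\Lambda)<s$. The target is $\pi_\Lambda(A_0)-s$, which is positive for every admissible cut, however little of $A_0$ is good.

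\textbf{The fix.} Split at $1/4$ instead: put a cut in case (a) whenever $\pi_\Lambda((A_0\cap\cG_\Lambda)\setminus S_-)\ge\frac14\pi_\Lambda(A_0\cap\cG_\Lambda)$, and symmetrically for (b).
\begin{itemize}
\item Cases (a) and (b) then give $Q\ge\frac{\alpha_0}{8J}(1-c)\pi_\Lambda(A_0)$.
\item In case (c), $\pi_\Lambda(S_\pm)>\frac34(1-c)s\ge s/2$ for $c\le 1/3$.
\item The separator bound then yields $Q\ge\frac{3\alpha_0}{16J}(1-c)\gamma_s\pi_\Lambda(A_0)$.
\end{itemize}
All of these bounds are in terms of $\pi_\Lambda(A_0)$ itself. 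What absorbs the bad mass is $\varepsilon_{\rm mesh}\le cs<c\,\pi_\Lambda(A_0)$, not the subtraction of $s$ in the conductance denominator.
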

\begin{proof}
Write $\sP=\sP_{\Lambda,A}$, and let $c_0$ and $\alpha_0$ be the distance and overlap constants in \cref{lem:finite-overlap}. Let $\gamma_s$ be as in~\eqref{eq:cell-separator}, with its absolute constant chosen so that $\gamma_s\le1$. Choose the mesh constant so that $\varepsilon_{\rm mesh}\le\gamma_s s/32$.

Fix a cut $S$ with $p:=\pi_\Lambda(S)\in (s,1/2]$ and let $Q=\int_S\sP(x,S^c)\,d\pi_\Lambda(x)$ be the ergodic flow. Define its low-escape subsets
\[
 S_-:=\{x\in S\cap\cG_\Lambda:\sP(x,S^c)<\alpha_0/(4J)\},\qquad
 S_+:=\{y\in S^c\cap\cG_\Lambda:\sP(y,S)<\alpha_0/(4J)\}.
\]
Then, for any $x\in S_-$ and $y\in S_+$,
\[
 \dtv(\sP(x,\cdot),\sP(y,\cdot))\ge\sP(x,S)-\sP(y,S)>1-\frac{\alpha_0}{2J}.
\]
Thus, the contrapositive of~\cref{lem:finite-overlap} implies that
\begin{align*}
    d_A(S_-,S_+)\ge \frac{c_0}{\sqrt{D}}.
\end{align*}
Markov's inequality and reversibility also give
\begin{equation}\label{eq:abstract-low-escape-mass}
 \pi_\Lambda(S\setminus S_-),\ \pi_\Lambda(S^c\setminus S_+)
 \le\varepsilon_{\rm mesh}+\frac{4JQ}{\alpha_0}.
\end{equation}
Indeed, we have
\begin{align*}
    Q\geq \sum_{x\in (S\cap {\cal G}_\Lambda)\setminus S_-} {\cal P}(x,S^c) \pi_\Lambda(x)\geq \frac{\alpha_0}{4J}  \sum_{x\in (S\cap {\cal G}_\Lambda)\setminus S_-} \pi_\Lambda(x)=\frac{\alpha_0}{4J}\pi_\Lambda((S\cap {\cal G}_\Lambda)\setminus S_-),
\end{align*}
which implies that
\begin{align*}
    \pi_\Lambda(S\setminus S_-)= \pi_\Lambda(S\setminus {\cal G}_\Lambda) + \pi_\Lambda((S\cap {\cal G}_\Lambda)\setminus S_-) \leq \pi_\Lambda({\cal B}_\Lambda) + \frac{4JQ}{\alpha_0}\leq \varepsilon_{\rm mesh}+\frac{4JQ}{\alpha_0}.
\end{align*}
The other inequality follows from the reversibility that the flow out of $S^c$ is also equal to $Q$.

Suppose the flow is small:
\begin{align*}
    Q<\frac{\alpha_0 p}{32J} \gamma_s,
\end{align*}
where $\gamma_s$ is defined in~\eqref{eq:cell-separator}. Then by~\eqref{eq:abstract-low-escape-mass}, we have
\begin{align*}
    \pi_\Lambda(S_-)= \pi_\Lambda(S)-\pi_\Lambda(S\setminus S_-)\geq &~p-\varepsilon_{\rm mesh} - \frac{4JQ}{\alpha_0}\\
    > &~p-\varepsilon_{\rm mesh}-\frac{p \gamma_s}{8} \\
    \geq &~ p-\frac{5\gamma_s}{32}p\geq \frac{27}{32}p>s/2,
\end{align*}
where the third line uses $\varepsilon_{\rm mesh}\le\gamma_s p/32$ and $\gamma_s\leq 1$. Similarly, using $\pi_\Lambda(S^c)\ge p$, we also have
\begin{align*}
    \pi_\Lambda(S_+)\geq \frac{27}{32}p>s/2.
\end{align*}
Then, by~\cref{lem:cell-separator}, we get that
\begin{align*}
 \gamma_s\frac{27p}{32}
 \leq \pi_\Lambda(\cG_\Lambda\setminus(S_-\cup S_+))
 \leq 2\varepsilon_{\rm mesh}+\frac{8JQ}{\alpha_0}
 < \frac{\gamma_s p}{16}+\frac{\gamma_s p}{4}
 =\frac{5\gamma_s p}{16},
\end{align*}
where the second step uses \eqref{eq:abstract-low-escape-mass}. This is a contradiction, and therefore, 
\begin{align*}
    Q\geq \frac{\alpha_0 p}{32J}\gamma_s,
\end{align*}
which implies that
\begin{align*}
    \frac{Q}{p-s}\geq \frac{\alpha_0 p}{32 J(p-s)}\gamma_s\geq \frac{\alpha_0}{32 J}\gamma_s.
\end{align*}
Taking the infimum over $S$ proves the $s$-conductance lower bound~\eqref{eq:finite-conductance}.

The mixing time bound~\eqref{eq:finite-mixing} immediately follows from~\cref{lem:warm-mixing-from-conductance} with $s=\eta/(2M)$. 
\end{proof}
\subsection{Covariance estimates and the common mesh}\label{subsec:common-mesh}
We recall empirical covariance concentration for continuous log-concave distributions~\cite{AdamczakEtAl10,LovaszVempala07} and transfer it to lattice samples in~\cref{lem:finite-block-transfer}. We also derive upper and lower spectral bounds for the continuous covariance from the supplied geometric parameters (\cref{lem:log-affine-covariance-floor}). These bounds determine the range of preconditioners used to implement the lattice walk.

\begin{lemma}[Empirical covariance concentration~{\cite{AdamczakEtAl10,LovaszVempala07}}]\label{lem:covariance-confidence}
Let $X_1,\ldots,X_L$ be independent samples from a log-concave distribution on $\R^D$ with covariance $\Sigma$. Define
\[
 \overline X=\frac1L\sum_{\ell=1}^L X_\ell,
 \qquad
 \widehat\Sigma=\frac1L\sum_{\ell=1}^L
 (X_\ell-\overline X)(X_\ell-\overline X)^\top.
\]
There is a universal constant $C$ such that, for every $0<{\varepsilon_{\Cov}}<1/2$, if
\[
 L\ge CD\left(1+\log\frac2{\varepsilon_{\Cov}}\right)^2,
\]
then
\[
\Pr\left[\frac12\Sigma\preceq\widehat\Sigma\preceq2\Sigma\right]\ge1-{\varepsilon_{\Cov}}.
\]
\end{lemma}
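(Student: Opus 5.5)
The plan is to reduce to the isotropic case and then invoke a known sample-covariance bound for log-concave vectors, centering included. Since $\widehat\Sigma$ is affine-equivariant ($X\mapsto \Sigma^{-1/2}(X-\E X)$ sends $\widehat\Sigma$ to $\Sigma^{-1/2}\widehat\Sigma\Sigma^{-1/2}$), it suffices to show that for isotropic log-concave $X_\ell$,
\[
 \Pr\bigl[\|\widehat\Sigma-\Id\|_{\op}\le 1/2\bigr]\ge1-\varepsilon_{\Cov}.
\]
I would first split off the mean:
\[
 \widehat\Sigma=\frac1L\sum_\ell X_\ell X_\ell^\top-\overline X\,\overline X^\top .
\]
It then suffices to show that the uncentered empirical second moment is within $1/4$ of $\Id$ in operator norm, and that $\|\overline X\|_2^2\le1/4$, each with failure probability at most $\varepsilon_{\Cov}/2$.

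For the mean term, $\sqrt L\,\overline X$ is again isotropic log-concave, because log-concavity is preserved under convolution and scaling. Paouris/Borell concentration, or just Markov on $\E\|\sqrt L\overline X\|^2=D$ combined with a log-concave tail bound, gives $\|\overline X\|_2\le C\sqrt{D\log(2/\varepsilon_{\Cov})/L}$ with probability at least $1-\varepsilon_{\Cov}/2$. This is at most $1/2$ once $L\ge CD\log(2/\varepsilon_{\Cov})$.

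The main step is the uncentered term. Here I would invoke the Adamczak--Litvak--Pajor--Tomczak-Jaegermann theorem for isotropic log-concave vectors. It gives, for $t\ge1$,
\[
 \Bigl\|\frac1L\sum_\ell X_\ell X_\ell^\top-\Id\Bigr\|_{\op}\le Ct\sqrt{D/L}
\]
with probability at least $1-e^{-ct\sqrt D}$, for $L\ge D$ (up to log factors in their original form; the sharp form is their $\sqrt{D/L}$ statement). Choosing $t\asymp 1+\log(2/\varepsilon_{\Cov})$ makes the failure probability at most $\varepsilon_{\Cov}/2$, even when $\sqrt D$ is small. The deviation is then at most $1/4$ provided $L\ge C D t^2$. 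This is exactly the stated requirement $L\ge CD(1+\log(2/\varepsilon_{\Cov}))^2$, which is where the squared logarithm comes from.

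Alternatively, for a self-contained route matching the citation to~\cite{LovaszVempala07}, I would use Rudelson's bound with a truncation at $\|X\|\le C\sqrt D\log(2L/\varepsilon_{\Cov})$, controlled by Paouris's tail bound. This costs only extra log factors, which are acceptable under the stated $L$ after enlarging $C$.

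The main obstacle is making the probability parameter explicit in the cited concentration theorem, so that the dependence is $\log^2(1/\varepsilon_{\Cov})$ rather than something worse. I would handle this by quoting the tail form of the ALPT result directly. A fallback is a median-of-means style boosting over blocks, but that would not directly yield a single $\widehat\Sigma$, so I would avoid it.
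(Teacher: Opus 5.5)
Your outline matches the paper's proof: reduce to the isotropic case, split off $\overline X\,\overline X^\top$, bound the mean by a log-concave tail estimate, and finish with a union bound. Your treatment of the mean is fine under the stated $L$.

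\textbf{The gap is in the main step.} You attribute to \cite{AdamczakEtAl10} the tail form ``deviation $Ct\sqrt{D/L}$ with probability $1-e^{-ct\sqrt D}$ for all $t\ge1$ and $L\ge D$.'' That is not their theorem, and it is false as stated. Take $D=L=1$ and $X$ a unit-variance Laplace variable. Then $\Pr[X^2-1>Ct]=e^{-\sqrt{2(Ct+1)}}$, which is not bounded by $e^{-ct}$ for large $t$.

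The tail version available from that work (their Theorem~4.1, which the paper uses only at $t=1$) has a much worse $t$-dependence. At fixed accuracy, it needs a sample size of order $t^4$ times $D$, up to logarithms. Even with the best choice $t\asymp\max\{1,\log(2/\varepsilon_{\Cov})/\sqrt D\}$, this gives roughly $L\gtrsim\max\{D,\log^4(2/\varepsilon_{\Cov})/D\}$, up to doubly logarithmic factors. That exceeds $CD(1+\log(2/\varepsilon_{\Cov}))^2$ whenever $D\ll\log(1/\varepsilon_{\Cov})$. So quoting the tail form does not give the squared logarithm.

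Your Rudelson fallback fails for a different reason. Its $\log D$ and $\log L$ factors cannot be absorbed into a universal constant $C$. For constant $\varepsilon_{\Cov}$ the hypothesis only guarantees $L\ge C'D$, while Rudelson's bound needs $L\gtrsim D\log D$.

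\textbf{The missing idea is the paper's dimension-padding trick.} It boosts confidence without using any $t$-dependence.
\begin{enumerate}
\item Apply the concentration theorem only at $t=1$ and accuracy $1/4$. In dimension $r$, if $L\ge C_0r$, the deviation exceeds $1/4$ with probability at most $e^{-c_0\sqrt r}$.
\item Set $r=\max\{D,\lceil c_0^{-2}\log^2(2/\varepsilon_{\Cov})\rceil\}$, so $r\le C_1D(1+\log(2/\varepsilon_{\Cov}))^2$.
\item Append $r-D$ independent standard Gaussian coordinates to each isotropic $Z_\ell$. The padded vectors are still independent, isotropic, and log-concave.
\item $S_L=\frac1L\sum_\ell Z_\ell Z_\ell^\top$ is the top-left $D\times D$ block of their empirical second-moment matrix. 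So $\|S_L-\Id_D\|_{\op}$ is at most the padded deviation.
\end{enumerate}
Hence $\Pr[\|S_L-\Id_D\|_{\op}>1/4]\le e^{-c_0\sqrt r}\le\varepsilon_{\Cov}/2$ once $L\ge C_0r$. This is exactly where the hypothesis $L\ge CD(1+\log(2/\varepsilon_{\Cov}))^2$ comes from.
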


\begin{proof}
We first make the log-concave distribution isotropic by defining
\begin{align*}
    Z_\ell = \Sigma^{-1/2}(X_\ell-\E[X_\ell])\quad \forall \ell\in [L],\qquad 
    S_L=\frac1L\sum_{\ell=1}^L Z_\ell Z_\ell^\top,
 \qquad
 \overline Z=\frac1L\sum_{\ell=1}^L Z_\ell.
\end{align*}
By direct calculation, we have 
\[
 \Sigma^{-1/2}\widehat\Sigma\Sigma^{-1/2}
 =S_L-\overline Z\,\overline Z^\top.
\]
We bound the two terms separately.

By \cite[Theorem~4.1]{AdamczakEtAl10} with $t=1$ and $\varepsilon=1/4$, there are universal constants $C_0,c_0>0$ such that, for independent isotropic log-concave vectors $V_1,\ldots,V_L$ in $\R^r$, if $L\ge C_0r$, then
\begin{align}\label{eq:isotropic_cov_concentration}
 \Pr\left[
 \left\|\frac1L\sum_{\ell=1}^L V_\ell V_\ell^\top-\Id_r\right\|_{\op}
 >\frac14
 \right]
 \le e^{-c_0\sqrt r}.
\end{align}
To obtain failure probability ${\varepsilon_{\Cov}}/2$, set
\[
 r
 =\max\left\{D,\left\lceil c_0^{-2}\log^2(2/{\varepsilon_{\Cov}})\right\rceil\right\}\leq C_1D(1+\log(2/{\varepsilon_{\Cov}}))^2.
\]
Append $r-D$ independent standard Gaussian coordinates to each $Z_\ell$. The resulting vectors are isotropic and log-concave, and their empirical second-moment matrix has the block structure $\begin{bmatrix}
        S_L & *\\
        * & *
    \end{bmatrix}$.
Then, we have that $L\ge C_0r$ after increasing $C$, and
\begin{align*}
    \|S_L-\Id_D\|_{\rm op} \leq \left\|\begin{bmatrix}
        S_L & *\\
        * & *
    \end{bmatrix} -\Id_r\right\|_{\rm op}.
\end{align*}
Therefore, by~\eqref{eq:isotropic_cov_concentration}, we obtain that
\[
 \Pr\left[\|S_L-\Id_D\|_{\op}>\frac14\right]
 \le\frac{\varepsilon_{\Cov}}2.
\]

For the second term, we know that $\overline Z$ is log-concave since it is the sum of independent log-concave random vectors. Since $Z_1,\dots,Z_L$ are isotropic and independent, we have 
\[
 \E\left[\left\|\overline Z\right\|_2^2\right]=\frac DL.
\]
By~\cite[Lemma~5.17]{LovaszVempala07}, we have
\[
 \Pr\left[\|\overline Z\|_2>r\sqrt{D/L}\right]\le e^{1-r}
 \qquad\forall r>1.
\]
Since $L\ge4D(1+\log(2/{\varepsilon_{\Cov}}))^2$, taking $r=\frac12\sqrt{L/D}>1$ gives
\begin{align*}
    \Pr\left[\left\|\overline Z\right\|_2>\frac12\right]\leq \exp\left(1-\frac12\sqrt{\frac LD}\right)\leq \frac{\varepsilon_{\Cov}}2.
\end{align*}

By a union bound, both estimates hold with probability at least $1-{\varepsilon_{\Cov}}$. On this event, we have
\[
 \frac12\Id_D
 \preceq S_L-\overline Z\,\overline Z^\top
 \preceq S_L
 \preceq\frac54\Id_D.
\]
Therefore,
\begin{align*}
    \frac{1}{2}\Sigma\preceq \widehat{\Sigma}\preceq \frac{5}{4}\Sigma\preceq 2\Sigma.
\end{align*}
\end{proof}

\begin{lemma}[Empirical covariance transfer]\label{lem:finite-block-transfer}
Let $\Sigma=\Cov_\pi(X)$ be the covariance of the continuous log-affine target $\pi$. Fix $0<\varepsilon_{\Cov}<1/2$ and $L\geq CD(1+\log(4/\varepsilon_{\Cov}))^2$. Choose the lattice so that
\begin{equation}\label{eq:covariance-transfer-conditions}
 L\varepsilon_{\rm quad}\le{\varepsilon_{\Cov}}/2,\qquad
 \varrho_\Lambda^2\le\frac{\lambda_{\min}(\Sigma)}{64}.
\end{equation}
Let $\widehat{\Sigma}$ be the empirical covariance estimated from $L$ independent samples from $\pi_\Lambda$. Then
\begin{align*}
    \Pr\left[\frac{1}{4}\Sigma \preceq \widehat{\Sigma}\preceq 4\Sigma\right]\geq 1-{\varepsilon_{\Cov}}.
\end{align*}
\end{lemma}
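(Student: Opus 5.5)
We couple lattice samples with continuous ones through the jitter. Let $Z_1,\ldots,Z_L$ be independent samples from $\pi_\Lambda$, and let $E_1,\ldots,E_L$ be independent uniform offsets in $Q_\Lambda$. Set $Y_\ell=Z_\ell+E_\ell$. By \eqref{eq:finite-jitter-tv} of \cref{lem:full-grid-quadrature}, each $Y_\ell$ has law $\widetilde\pi_\Lambda$ with $\dtv(\widetilde\pi_\Lambda,\pi)\le\varepsilon_{\rm quad}$. Tensorization of total variation then gives $\dtv(\widetilde\pi_\Lambda^{\otimes L},\pi^{\otimes L})\le L\varepsilon_{\rm quad}\le\varepsilon_{\Cov}/2$. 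We apply \cref{lem:covariance-confidence} with failure budget $\varepsilon_{\Cov}/2$; its sample size condition $CD(1+\log(4/\varepsilon_{\Cov}))^2$ is exactly the hypothesis on $L$. Hence the empirical covariance $\widehat\Sigma_Y$ of $(Y_\ell)$ satisfies $\frac12\Sigma\preceq\widehat\Sigma_Y\preceq2\Sigma$ with probability at least $1-\varepsilon_{\Cov}$.

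Next we compare $\widehat\Sigma=\widehat\Sigma(\mathbf Z)$ with $\widehat\Sigma_Y$ deterministically on this event. Write the centered matrices $\mathcal Y,\mathcal Z,\mathcal E\in\R^{D\times L}$, whose columns are $Y_\ell-\overline Y$, $Z_\ell-\overline Z$ and $E_\ell-\overline E$, so that $\mathcal Z=\mathcal Y-\mathcal E$. For any unit $u$, Minkowski's inequality in $\R^L$ gives
\[
 \sqrt{u^\top\widehat\Sigma u}=\tfrac{1}{\sqrt L}\|\mathcal Z^\top u\|_2\in\Bigl[\sqrt{u^\top\widehat\Sigma_Yu}-\tfrac{1}{\sqrt L}\|\mathcal E^\top u\|_2,\ \sqrt{u^\top\widehat\Sigma_Yu}+\tfrac{1}{\sqrt L}\|\mathcal E^\top u\|_2\Bigr].
\]
Centering only reduces the second moment, and $\|E_\ell\|_2\le\varrho_\Lambda$. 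Therefore $\frac1L\|\mathcal E^\top u\|_2^2\le\frac1L\sum_\ell\langle u,E_\ell\rangle^2\le\varrho_\Lambda^2\le\lambda_{\min}(\Sigma)/64\le u^\top\Sigma u/64$.

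On the good event we also have $\sqrt{u^\top\widehat\Sigma_Yu}\in[\sqrt{1/2},\sqrt2]\sqrt{u^\top\Sigma u}$. Combining the two bounds,
\[
 \sqrt{u^\top\widehat\Sigma u}\in\bigl[(0.707-0.125),\,(1.415+0.125)\bigr]\sqrt{u^\top\Sigma u}.
\]
Squaring gives factors of about $0.34\ge1/4$ and $2.37\le4$, which is the claimed sandwich $\frac14\Sigma\preceq\widehat\Sigma\preceq4\Sigma$.

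The only subtle point is the coupling step. The event for the $Y$'s is computed under $\widetilde\pi_\Lambda^{\otimes L}$, not $\pi^{\otimes L}$, and the total-variation transfer charges exactly $L\varepsilon_{\rm quad}$. Every later step is deterministic, so the lattice variables $Z_\ell$ inherit the bound with total failure probability at most $\varepsilon_{\Cov}/2+\varepsilon_{\Cov}/2$.
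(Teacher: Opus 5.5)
Your proof is correct and follows the paper's argument: you jitter the lattice samples by independent uniform cell offsets, transfer the continuous covariance concentration (\cref{lem:covariance-confidence} at failure level $\varepsilon_{\Cov}/2$) through the tensorized total-variation bound $L\varepsilon_{\rm quad}\le\varepsilon_{\Cov}/2$, and compare empirical standard deviations in each direction using $\|E_\ell\|_2\le\varrho_\Lambda$ and $\varrho_\Lambda^2\le\lambda_{\min}(\Sigma)/64$. The only cosmetic difference is that you charge the total-variation error directly to the good event for the jittered samples, whereas the paper builds an explicit coupling with exact samples from $\pi$ and then takes a union bound; the constants $(1/\sqrt2-1/8)^2\ge1/4$ and $(\sqrt2+1/8)^2\le4$ are identical.
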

\begin{proof}
Let $Z_1,\ldots,Z_L$ be independent lattice samples and $E_1,\ldots,E_L$ be independent uniform offsets in $Q_\Lambda$. By \cref{lem:full-grid-quadrature} and tensorization,
\[
 \dtv\bigl(\operatorname{Law}((Z_\ell+E_\ell)_{\ell=1}^L),\pi^{\otimes L}\bigr)\le L\varepsilon_{\rm quad}\le{\varepsilon_{\Cov}}/2.
\]
Hence there is a coupling with independent $X_1,\ldots,X_L\sim\pi$ such that $X_\ell=Z_\ell+E_\ell$ for every $\ell$, except with probability ${\varepsilon_{\Cov}}/2$.

On this coupling event, compare empirical standard deviations in an arbitrary direction $u$. The centered observation vectors satisfy that for any $\ell\in [L]$,
\[
 \langle u,X_\ell-\overline X\rangle=\langle u,Z_\ell-\overline Z\rangle+\langle u,E_\ell-\overline E\rangle.
\]
Thus, we have
\begin{align*}
 \left|\sqrt{u^\top\widehat\Sigma(X)u}-\sqrt{u^\top\widehat\Sigma(Z)u}\right| = &~ \left|\sqrt{\frac{1}{L}\sum_{\ell=1}^L (\langle u,Z_\ell-\overline{Z}\rangle+\langle u,E_\ell-\overline{E}\rangle)^2}-\sqrt{\frac{1}{L}\sum_{\ell=1}^L \langle u,Z_\ell-\overline{Z}\rangle^2}\right|\\
 \leq &~ \sqrt{\frac{1}{L}\sum_{\ell=1}^L \langle u,E_\ell-\overline{E}\rangle^2}\\*
 \leq &~ \left(\frac1L\sum_{\ell=1}^L\langle u,E_\ell\rangle^2\right)^{1/2}\\*
 \leq &~ \varrho_\Lambda\|u\|_2\le\frac18\sqrt{u^\top\Sigma u},
\end{align*}
The reverse triangle inequality bounds the change in empirical standard deviation by that of the offsets. Centering reduces their empirical second moment, and $\|E_\ell\|_2\le\varrho_\Lambda$ bounds it by $\varrho_\Lambda^2\|u\|_2^2$. Condition~\eqref{eq:covariance-transfer-conditions} gives the final bound.

By~\cref{lem:covariance-confidence} and the choice of $L$, we have
\begin{align*}
    \Pr\left[\frac12\Sigma\preceq\widehat\Sigma(X)\preceq2\Sigma\right]=\Pr\left[\frac{1}{\sqrt{2}}\sqrt{u^\top \Sigma u}\leq \sqrt{u^\top\widehat{\Sigma}(X)u}\leq \sqrt{2}\sqrt{u^\top \Sigma u}~~\forall u\right]\ge1-{\varepsilon_{\Cov}}/2.
\end{align*}
By a union bound, with probability at least $1-{\varepsilon_{\Cov}}$, for every $u$,
\[
 \left(\frac1{\sqrt2}-\frac18\right)\sqrt{u^\top\Sigma u}
 \le\sqrt{u^\top\widehat\Sigma(Z)u}
 \le\left(\sqrt2+\frac18\right)\sqrt{u^\top\Sigma u},
\]
which implies that $\Sigma/4\preceq\widehat\Sigma(Z)\preceq4\Sigma$.
\end{proof}

\cref{lem:finite-block-transfer} requires the minimum eigenvalue of the covariance matrix of the continuous distribution. The following lemma provides a lower bound on it using the known input data.

\begin{lemma}[Covariance spectral bounds]\label{lem:log-affine-covariance-floor}
For $B(x_c,r_0)\subseteq K\subseteq B(x_c,R_0)$ and a density $\pi\propto e^{-\langle b,x\rangle}\one_K(x)$,
\begin{equation}\label{eq:log-affine-covariance-floor}
 \frac{c r_0^2}{(D+\|b\|_2(R_0+r_0))^2}\Id\preceq\Cov(X)\preceq R_0^2\Id.
\end{equation}
\end{lemma}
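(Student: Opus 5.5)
The upper bound is immediate. Since $X\in K\subseteq B(x_c,R_0)$, for any unit vector $u$ we have
\[
 u^\top\Cov(X)u\le\E\langle u,X-x_c\rangle^2\le R_0^2 .
\]
The variance is the minimum over centers of the second moment, so centering at $x_c$ instead of the mean only overestimates it.

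For the lower bound, the plan is to exhibit a set of large $\pi$-mass on which the density ratio is controlled, and then lower bound the directional variance. Fix a unit vector $u$ and set $t=r_0/(D+\|b\|_2(R_0+r_0))\le r_0$. Consider the shrunken copy
\[
 K'=x_c+\tfrac{t}{r_0}\,(K-x_c)\cdot(\text{scaled toward a point})
\]
or, more concretely, use the ball $B(x_c,r_0)\subseteq K$ as follows. Let $S=B(x_c,r)$ with $r=t/2$ (say). On $S$ the density ratio $e^{-\langle b,x-x_c\rangle}$ lies in $[e^{-\|b\|_2r},e^{\|b\|_2 r}]$, which is a constant range since $\|b\|_2 r\le1$.

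The key step is to show that $\pi(S)$ is not too small, or more robustly, to work with a cone construction. Any $y\in K$ and any $z\in B(x_c,r_0)$ give $(1-\lambda)y+\lambda z\in K$. The map $x\mapsto x_c+\lambda(x-x_c)$ sends $K$ into $K$ with Jacobian $\lambda^D$ and changes the exponent by at most $\|b\|_2R_0(1-\lambda)$. Comparing masses via this homothety is the standard way to get the factor $D+\|b\|_2R_0$.

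For the directional variance, I would use the standard fact for log-concave $X$: $\Var\langle u,X\rangle\gtrsim w^2$ whenever the one-dimensional marginal density of $\langle u,X\rangle$ is at most $1/w$ everywhere. Equivalently, the maximum of the log-concave marginal density is at most $1/(c\sqrt{\Var})$, and conversely $\sqrt{\Var}\ge c/\max g$. So it suffices to bound the maximal marginal density $g$ of $\langle u,X\rangle$ by $C(D+\|b\|_2(R_0+r_0))/r_0$.

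To bound $g$, fix a level $s$ and let $H_s=\{x:\langle u,x\rangle=s\}$. I will show that the slab $\{|\langle u,x\rangle-s|\le \epsilon\}$ has $\pi$-mass at most $C\epsilon(D+\|b\|_2(R_0+r_0))/r_0$. Take the homotheties $\phi_\lambda(x)=z+\lambda(x-z)$, where the center $z=x_c\pm r_0u$ is the point of the inner ball lying on the far side from $H_s$. These maps move the slab to disjoint translates as $\lambda$ ranges over $[1-\theta,1]$ with $\theta\approx r_0^{-1}\cdot$(distance), each image still inside $K$. Each image has mass at least $\lambda^De^{-\|b\|_2(R_0+r_0)(1-\lambda)}$ times the slab mass. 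Taking $\theta=1/(D+\|b\|_2(R_0+r_0))$ keeps this factor at least a constant, and about $r_0\theta/\epsilon$ disjoint images fit. Since the total mass is $1$, this yields the slab bound.

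The main obstacle is the disjointness and containment bookkeeping in this homothety argument. Hyperplanes near the inner ball need care, and I would choose whichever of the two poles $x_c\pm r_0u$ makes the level $s$ move monotonically by at least $\sim r_0(1-\lambda)$. Everything else is routine log-concave facts.
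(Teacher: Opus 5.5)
Your route is correct but differs from the paper's. The paper (with $x_c=0$) applies a \emph{single} contraction $F_s(x)=(1-s)x-sr_0u$ toward the pole on the side opposite the mean $m=\E\langle u,X\rangle$ (without loss of generality $m\ge0$), with $s=1/(2(D+\|b\|_2(R_0+r_0)))$. Since $F_s(K)\subseteq K$, and the Jacobian $(1-s)^D$ together with the tilt change costs at most a factor $e$, one gets
$\Var\langle u,X\rangle\ge e^{-1}\E[(\langle u,F_s(X)\rangle-m)^2]\ge e^{-1}s^2(m+r_0)^2\ge e^{-1}s^2r_0^2$ by Jensen. In words, the contraction shifts the mean by $s(m+r_0)$, and density domination turns that shift into variance.

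You use the same homothety density comparison, but many times. You pack rescaled copies of a thin slab to show that every slab of width $2\epsilon$ has mass $\lesssim\epsilon(D+\|b\|_2(R_0+r_0))/r_0$. You then convert the resulting bound on the marginal density $g$ into a variance bound. This gives a stronger anti-concentration statement than the lemma needs, at the cost of packing bookkeeping and an extra one-dimensional fact. The paper's mean-shift argument obtains exactly the needed bound in a few lines.

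The bookkeeping you flag is routine and needs no special care near the inner ball. For a level $\ell$ with $\sigma=\ell-\langle u,x_c\rangle\ge0$, take $z=x_c-r_0u$ and $\lambda_k=1-2k\epsilon/(\sigma+r_0)$; the case $\sigma<0$ is symmetric with $z=x_c+r_0u$. Then $\phi_{\lambda_k}$ maps the slab to a scaled (not translated) copy centered at level $\ell-2k\epsilon$ with half-width $\lambda_k\epsilon\le\epsilon$. These images are therefore pairwise disjoint, and the constraint $1-\lambda_k\le\theta$ admits at least $\theta r_0/(2\epsilon)$ of them.

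For the last step you only need $\Var\ge1/(12\|g\|_\infty^2)$. This holds for every density, since the uniform law minimizes variance under a density cap. It is not ``equivalent'' to the log-concave reverse inequality $\|g\|_\infty\le1/\sqrt{\Var}$, which you never use. The exploratory material about $K'$ and the ball $S$ plays no role and can be dropped.
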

\begin{proof}
Since translation does not change covariance, we may assume $x_c=0$. Fix a unit vector $u\in\R^D$ and let $m=\E[\langle u,X\rangle]$. We have
\begin{align*}
    u^\top \Cov(X)u=\Var[\langle u, X\rangle].
\end{align*}

We first prove a lower bound on $\Var(\langle u,X\rangle)$. Without loss of generality, we may assume that $m\geq 0$. 
Since $B(0,r_0)\subseteq K$, we have $-r_0u\in K$. Define a contraction map towards $-r_0u$:
\begin{align*}
    F_s(x)=(1-s)x-sr_0u,\qquad s=\frac{1}{2(D+\|b\|_2(R_0+r_0))}.
\end{align*}
Since $0<s\le1/2$ and $K$ is convex, $F_s(K)$ is contained in $K$. Then, we have
\begin{align*}
    \Var(\langle u,X\rangle)
 = &~ \frac1Z\int_K(\langle u,x\rangle-m)^2e^{-\langle b,x\rangle}\,dx\\
\geq &~ \frac1Z\int_{F_s(K)}(\langle u,x\rangle-m)^2e^{-\langle b,x\rangle}\,dx\\
= &~ \frac1Z\int_K(\langle u,F_s(y)\rangle-m)^2 e^{-\langle b,F_s(y)\rangle}(1-s)^D\,dy\\
= &~ \frac1Z\int_K(\langle u,F_s(y)\rangle-m)^2 e^{-\langle b, y\rangle} \cdot e^{s\langle b,y+r_0u\rangle+D\log(1-s)}\,dy\\
\geq &~ \frac1Z\int_K(\langle u,F_s(y)\rangle-m)^2 e^{-\langle b, y\rangle} \cdot e^{-s\|b\|_2(R_0+r_0)-2sD}\,dy\\
\geq &~ \frac{1}{eZ}\int_K(\langle u,F_s(y)\rangle-m)^2 e^{-\langle b, y\rangle}\,dy,
\end{align*}
The substitution $x=F_s(y)$ has Jacobian $(1-s)^D$. We bound the resulting weight using $\log(1-s)\ge-2s$, $\|y+r_0u\|_2\le R_0+r_0$, and the choice of $s$. Thus,
\begin{align*}
    \Var(\langle u,X\rangle)\geq &~ \frac{1}{e}\E\left[(\langle u,F_s(X)\rangle-m)^2\right] \geq \frac{1}{e} \left(\E\left[\langle u,F_s(X)\rangle-m\right]\right)^2\\
    = &~ \frac{s^2(m+r_0)^2}{e} \geq \frac{r_0^2}{4e(D+\|b\|_2(R_0+r_0))^2},
\end{align*}
Jensen's inequality bounds the second moment below by the squared mean. The definition of $F_s$ gives $\E[\langle u,F_s(X)\rangle-m]=-s(m+r_0)$, and $m\ge0$ yields the last inequality.

For the upper bound, since $K\subseteq B(0,R_0)$, we have
\begin{align*}
    \Var(\langle u,X\rangle)\leq \E\left[\langle u,X\rangle^2\right]\leq R_0^2.
\end{align*}

The lower and upper bounds for $u^\top \Cov(X)u$ hold for arbitrary unit vector $u$. Therefore,    \eqref{eq:log-affine-covariance-floor} is proved.
\end{proof}

\subsection{Proof of~\cref{thm:finite-metropolis}}\label{sec:B_proof}
We now wrap up and prove the main result for the lattice Metropolis walk. 
\begin{proof}[Proof of \cref{thm:finite-metropolis}]
We first choose the lattice spacing $h$. Let $\{b_1,b_2,\dots\}$ be the given list of tilts, and $B_*:=\max_i\|b_i\|_2$. By~\eqref{eq:rho_Lambda_bound} and~\cref{prop:finite-affine-branches}, the lattice parameters $\varrho_\Lambda$, $\varrho_A$, and $R_A$ have upper bounds that are independent of the covariance estimate $A$:
\[
 \varrho_\Lambda\le h\sqrt D,\qquad
 \varrho_A=\frac{Ch\sqrt D}{\sqrt{\lambda_-}},\qquad
 R_A=\frac{CR_0}{\sqrt{\lambda_-}}.
\]
They also upper bound the relevant discretization errors defined in this section:
\begin{align*}
 \varepsilon_{\rm quad}=&~ C\left(\frac{(D+\|b\|_2R_0)\varrho_\Lambda}{r_0}+\|b\|_2\varrho_\Lambda\right)
 \leq Ch\sqrt D\left(\frac{D+B_*R_0}{r_0}+B_*\right)\tag{\cref{lem:full-grid-quadrature}}\\*
 r_{\rm mesh}=&~ CD^3\varrho_A \leq  \frac{CD^{7/2}h}{\sqrt{\lambda_-}}\tag{\eqref{eq:mesh-depth}}\\*
 \varepsilon_{\rm mesh} = &~ C(\sqrt D\,r_{\rm mesh}+\varepsilon_{\rm quad})\leq \frac{CD^4h}{\sqrt{\lambda_-}}+C\varepsilon_{\rm quad}\tag{\cref{lem:mesh-collar}}
\end{align*}
Take the block size $L=\lceil CD(1+\log(4/\varepsilon_{\Cov}))^2\rceil$ fixed in the theorem. The mixing and covariance estimation arguments require the following constraints:
\begin{itemize}
    \item By~\cref{lem:full-grid-quadrature}, $\varepsilon_{\rm quad}\leq 1$, $\varrho_\Lambda\leq r_0/4$.
    \item By~\cref{lem:finite-overlap}, $\varrho_\Lambda\le\frac{\log(4/3)}{2B_*}$.
    \item By~\cref{lem:cell-separator}, $\varepsilon_{\rm mesh}\le c\min\left\{s,\frac{s\psi}{D(1+\log(4/s))}\right\}$, where $\psi$ is a known lower bound for $\psi_D$, and $s=\eta/4$ since we only need to bound the mixing time of $2$-warm-starts by~\cref{thm:finite-mixing}.
    \item By~\cref{lem:finite-block-transfer,lem:log-affine-covariance-floor}, $\varepsilon_{\rm quad}\leq \varepsilon_{\Cov}/(2L)$ and $\varrho_\Lambda\leq \frac{cr_0}{D+B_*(R_0+r_0)}$.
    \item By~\cref{thm:finite-metropolis}, $\varepsilon_{\rm quad}\leq \zeta$.
\end{itemize}
Combining them together, we get that for a sufficiently small constant $c_h\in (0,1)$, we can take $h$ to be the largest dyadic number such that
\begin{align}\label{eq:h_bound}
    h\leq c_h\cdot \frac{\psi\eta\varepsilon_{\Cov}\zeta}{D\log(1/\eta)\log^2(1/\varepsilon_{\Cov})(D^4/\sqrt{\lambda_-}+\sqrt{D}(D+B_*R_0)/r_0)},
\end{align}
which will satisfy all the above constraints. 

The radius list $\{\delta_0,\dots, \delta_{J-1}\}$ is determined by $r_{\rm mesh}$ and $R_A$~\eqref{eq:number-digital-scales}, which can also be chosen independent of $A$. Under our choice of parameters, \cref{prop:finite-affine-branches} holds, and the finite-seed sampler in~\cref{lem:finite-direction-preparation} has a seed length independent of $A$. Therefore, the parameters for implementing the lattice Metropolis walk in~\cref{alg:finite-metropolis} can be determined without knowing $A$.

Then, we check every bullet in~\cref{thm:finite-metropolis}. 

For~\textnormal{(i)}, for every $A$ with $\lambda_-\Id\preceq A \preceq \lambda_+\Id$ and every $b_i$ in the list, the symmetric Metropolis (line~\ref{ln:metropolis_step}) guarantees that the stationary distribution is $\pi_{i,\Lambda}$; the holding branch (line~\ref{ln:holding_branch}) ensures the Markov kernel ${\cal P}_{i,\Lambda, A}$ is lazy and positive, and the connector proposal (line~\ref{ln:connector_proposal}) ensures that ${\cal P}_{i,\Lambda, A}$ is irreducible.

For~\textnormal{(ii)}, by~\cref{thm:finite-mixing}, if $A$ is a good estimate of the corresponding covariance, then the 2-warm-start mixing time of ${\cal P}_{i,\Lambda, A}$ is bounded by $\widetilde{O}(D^2/\psi^2)$.

For~\textnormal{(iii)}, \cref{lem:metropolis-block-encoding,lem:acceptance-arithmetic} implement the $1$-block-encoding of the discriminant.

For~\textnormal{(iv)}, by~\cref{lem:full-grid-quadrature} and $\varepsilon_{\rm quad}\leq \zeta$, each normalization constant $Z_{b_i}$ has relative error at most $\zeta$ from the lattice discretization.

For~\textnormal{(v)}, the choice of $L$, the bound $L\varepsilon_{\rm quad}\le\varepsilon_{\Cov}/2$, and the covariance floor ensure the hypotheses of~\cref{lem:finite-block-transfer} for every listed tilt. That lemma gives the stated failure probability. Since $L=O(D(1+\log(4/\varepsilon_{\Cov}))^2)$, the largest dyadic spacing satisfying~\eqref{eq:h_bound} has inverse polynomial in the parameters listed in the theorem, including $1/\varepsilon_{\Cov}$. 

\end{proof}

\section{Finite implementation and parameter choices}\label{app:implementation}
This appendix supplies the finite circuits used in \cref{alg:finite-metropolis,alg:volume}. We use standard numerical algorithms and prove only the additional statements needed here: rounding a ball sample on the fixed grid, computing the cone's conditional weights, and controlling the full block-encoding unitary. We then choose the circuit sizes, accuracies, and mesh in that order.

\paragraph{Numerical conventions.}
Finite rational computations can be made reversible by~\cite{Bennett73}. We use standard finite-precision circuits for elementary functions~\cite{BhaskarEtAl16,HaenerEtAl18} and standard synthesis of one-qubit gates~\cite{DawsonNielsen06}. The arithmetic precision and gate-synthesis accuracy are fixed in advance to satisfy the error bounds below. On the specified input ranges, the evaluation circuits have cost polynomial in the input bit length and the number of precision bits.

We will use the following fact about the implementation error of rotation gates.
\begin{fact}\label{fac:rotation_gate}
For $0\le p\le1$, let $R(p)$ be the following rotation gate:
\begin{align*}
    R(p)=\begin{pmatrix}\sqrt{1-p}&-\sqrt p\\ \sqrt p&\sqrt{1-p}\end{pmatrix}.
\end{align*}
If $p$ is estimated within additive error $\nu^2/8$ and rotation-gate synthesis error is at most $\nu/2$, then the implemented $R(p)$ has operator norm error at most $\nu$. 
\end{fact}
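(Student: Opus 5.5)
The plan is to bound the perturbation of $R(p)$ in two stages: first the effect of using an estimate $\widetilde p$ in place of $p$, then the gate-synthesis error, combined by the triangle inequality. Concretely, write $\widetilde R$ for the synthesized gate approximating $R(\widetilde p)$ with $\|\widetilde R-R(\widetilde p)\|_{\op}\le\nu/2$, where $|\widetilde p-p|\le\nu^2/8$ (clamping $\widetilde p$ to $[0,1]$ if needed, which only decreases the error). The decomposition is then
\[
 \|\widetilde R-R(p)\|_{\op}\le\|\widetilde R-R(\widetilde p)\|_{\op}+\|R(\widetilde p)-R(p)\|_{\op},
\]
so the whole claim reduces to showing $\|R(\widetilde p)-R(p)\|_{\op}\le\nu/2$.

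For that term, I will note that $R(p)$ is the rotation by angle $\theta(p)$ with $\cos\theta=\sqrt{1-p}$ and $\sin\theta=\sqrt p$. Two planar rotations differ by $\|R(\theta)-R(\theta')\|_{\op}=2|\sin((\theta-\theta')/2)|$. Here
\[
 4\sin^2\tfrac{\theta-\theta'}{2}=2-2\cos(\theta-\theta')=2-2\bigl(\sqrt{(1-p)(1-\widetilde p)}+\sqrt{p\widetilde p}\bigr)=(\sqrt p-\sqrt{\widetilde p})^2+(\sqrt{1-p}-\sqrt{1-\widetilde p})^2 .
\]
The key elementary inequality is $|\sqrt a-\sqrt b|\le\sqrt{|a-b|}$ for $a,b\ge0$. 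Applying it to both terms gives a squared norm of at most $2|p-\widetilde p|\le\nu^2/4$, so the norm is at most $\nu/2$. Adding the synthesis error yields $\nu$.

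The only place needing care is the square-root (non-Lipschitz) dependence near $p\in\{0,1\}$. This is exactly why the hypothesis is quadratic, $\nu^2/8$, and the inequality $|\sqrt a-\sqrt b|\le\sqrt{|a-b|}$ handles it uniformly. I do not expect any genuine obstacle beyond checking that this constant works out, which it does with room to spare.
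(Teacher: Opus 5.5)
Your proof is correct and matches the paper's. It uses the same triangle-inequality split into synthesis error plus parameter-perturbation error, with $\|R(\widetilde p)-R(p)\|_{\op}^2=(\sqrt p-\sqrt{\widetilde p})^2+(\sqrt{1-p}-\sqrt{1-\widetilde p})^2\le 2|p-\widetilde p|$ obtained from $(\sqrt a-\sqrt b)^2\le|a-b|$, and differs only cosmetically in deriving that identity from the angle formula $2|\sin((\theta-\theta')/2)|$ rather than computing $(R(p)-R(q))^\top(R(p)-R(q))$ directly.
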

\begin{proof}
Let $q\in[0,1]$ be such that $|p-q|\leq \nu^2/8$. Then, we have
\begin{equation}\label{eq:rotation-stability}
 \|R(p)-R(q)\|_{\op}^2
 =(\sqrt p-\sqrt q)^2+(\sqrt{1-p}-\sqrt{1-q})^2
 \le2|p-q|\leq \nu^2/4,
\end{equation}
where the first step follows from calculating $(R(p)-R(q))^\top (R(p)-R(q))$, and the second step uses $(\sqrt a-\sqrt b)^2\le|a-b|$. Hence, for the implemented gate $\widetilde{R}(q)$,
\begin{align*}
    \|\widetilde{R}(q)-R(p)\|_{\rm op}\leq \nu/2 + \nu/2 = \nu.
\end{align*}
\end{proof}
\subsection{Finite proposal preparation}\label{app:proposal-arithmetic}
The proposal requires a finite seed and a displacement on the fixed physical grid. Fix the rational matrices $W_A,T_A=W_A^{-1}$, and write $\mathcal R_h(u)=h\lfloor h^{-1}u+\tfrac12\one\rfloor$. For $U_\star$ uniform in $B(0,\delta)$, let $q_{A,\delta}$ be the law of $\mathcal R_h(T_AU_\star)$.

\begin{lemma}[Rounded ball sampling; cf.~{\cite[Section~5]{Jerrum06}}]\label{lem:finite-direction-preparation}
For rational $\delta>0$ and $0<\varepsilon_{\rm dir}<1/100$, there is a reversible circuit with a uniform binary seed that outputs a rational $U\in B(0,\delta)$ and $V=\sigma \mathcal R_h(T_AU)$ for $\sigma\sim \{-1,1\}$ such that $V$ and its law $\widehat{q}_{A,\delta}$ satisfy
\begin{equation}\label{eq:direction-tv}
 \dtv(\widehat{q}_{A,\delta},q_{A,\delta})\le\varepsilon_{\rm dir},\qquad
 \|W_AV\|_2\le\delta+\varrho_A\quad\text{on every seed},
\end{equation}
The seed length is
\[
 O\!\left(D\log\left(\frac {D(2+\delta \|T_A\|_{\rm op}/h)}{\varepsilon_{\rm dir}}\right)\right).
\]
In particular, if the condition of~\cref{prop:finite-affine-branches} holds, then it suffices to take a seed length 
\begin{align*}
    O\!\left(D\log\left(\frac{D(2+\delta \sqrt{\lambda_+}/h)}{\varepsilon_{\rm dir}}\right)\right).
\end{align*}
The additional computation is polynomial in these quantities and the rational input bit length, and uses no membership queries. 
\end{lemma}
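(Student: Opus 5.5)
We will build the sampler in three stages: a discretized uniform ball sample, the linear map and rounding, and the sign symmetrization. The point we will exploit is that total variation is contracted by the deterministic map $u\mapsto\mathcal R_h(T_Au)$ (data processing). It therefore suffices to produce a rational $U$ whose law is within $\varepsilon_{\rm dir}$ of $\mathrm{Unif}(B(0,\delta))$ in total variation. The sign $\sigma$ does not spoil this, because $q_{A,\delta}$ is already symmetric: ties have probability zero and $B(0,\delta)$ is centrally symmetric. Hence
\[
 \dtv(\widehat q_{A,\delta},q_{A,\delta})\le\tfrac12\bigl(\dtv(q^0,q)+\dtv(q^0(-\cdot),q(-\cdot))\bigr)\le\varepsilon_{\rm dir}.
\]
The deterministic bound is immediate once $U\in B(0,\delta)$ holds exactly on every seed. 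Indeed, $W_AV=\sigma(U+W_Ae)$ with $e=\mathcal R_h(T_AU)-T_AU\in Q_\Lambda$, so $\|W_AV\|_2\le\delta+\varrho_A$ by the definition of $\varrho_A$ as a bound on the radius of $W_AQ_\Lambda$.

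For the approximate ball sample we follow the grid-rejection idea of~\cite[Section~5]{Jerrum06}. Use a fine dyadic grid $g\Z^D$ and pick a uniformly random point of the cube $[-\delta,\delta]^D\cap g\Z^D$ from $D\log_2(2\delta/g)$ seed bits. We then want to accept only points with $\|u\|_2\le\delta-g\sqrt D$ and add a uniform sub-cell jitter, so that the output stays inside the ball and is exactly uniform on a union of cells sandwiched between $B(0,\delta-2g\sqrt D)$ and $B(0,\delta)$. The TV error to the uniform ball is then $O(1-(1-2g\sqrt D/\delta)^D)=O(D^{3/2}g/\delta)$.

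Rejection needs a bounded number of attempts to be implementable with a fixed seed. We therefore run $N=O(2^D\cdot\log(1/\varepsilon_{\rm dir}))$ parallel attempts and take the first accepted one, outputting $0$ on total failure. This costs a factor $2^{O(D)}$ in seed count, which is only $O(D)$ in seed \emph{length}, but the circuit size is then exponential. So we will instead generate the ball point as $U=\delta\,Y^{1/D}G/\|G\|_2$ with discretized Gaussians $G$ (Box--Muller from uniform bits) and a uniform radius variable $Y$. This uses $O(D)$ coordinates, each with $O(\log(D(2+\delta\|T_A\|_{\op}/h)/\varepsilon_{\rm dir}))$ bits. We will clip the radius to $\le\delta$ to guarantee membership on every seed, charging the truncation and discretization to the TV budget.

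The hard part is bounding TV \emph{after rounding}, since the discretized continuous law has no density, and data processing applies only once we couple the discretized $U$ with an exact $U_\star$. We will couple them with $\|U-U_\star\|_2\le\theta$ outside an event of probability $\varepsilon_{\rm dir}/2$. The outputs $\mathcal R_h(T_AU)$ and $\mathcal R_h(T_AU_\star)$ then differ only when $T_AU_\star$ lies within $\|T_A\|_{\op}\theta$ of a rounding hyperplane $h(\Z+\tfrac12)$ in some coordinate. Because $T_AU_\star$ has a bounded density, we will bound the probability of this slab by $D\cdot O(\|T_A\|_{\op}\theta\sqrt D/(\text{coordinate spread}))$ times the number of hyperplanes crossed, $O(2+\delta\|T_A\|_{\op}/h)$. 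Choosing $\theta$ inverse-polynomially in $D(2+\delta\|T_A\|_{\op}/h)/\varepsilon_{\rm dir}$ makes this at most $\varepsilon_{\rm dir}/2$, which produces exactly the stated logarithmic seed length. Under \cref{prop:finite-affine-branches} we then substitute $\|T_A\|_{\op}\le\tfrac{10}{9}\sqrt{\lambda_+}$. The key estimate to carry out is the slab-mass bound for the marginal of the image of a uniform ball; a crude bound via the density ratio $\le\vol(B)^{-1}$ against slab volume suffices.
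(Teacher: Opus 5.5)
Your final construction is the paper's: $U_\star=\delta S^{1/D}G/\|G\|_2$ with Box--Muller inputs, a coupling $\|U-U_\star\|_2\le\theta$ off an event of probability $\varepsilon_{\rm dir}/2$, a slab bound at the rounding hyperplanes, and the symmetry of $q_{A,\delta}$ to absorb the sign. Your treatment of the sign and of $\|W_AV\|_2\le\delta+\varrho_A$ is fine. Drop the opening reduction, however. A rational $U$ produced from a finite seed has a finitely supported law, so $\dtv(\mathrm{Law}(U),\mathrm{Unif}(B(0,\delta)))=1$, and data processing is usable only after coupling. The grid-rejection detour, whose sub-cell jitter would have to be continuous, has the same defect. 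Also, ``clipping the radius'' must be a rational operation that provably lands in $B(0,\delta)$, because a numerically normalized $G/\|G\|_2$ can have norm slightly above one. The paper shrinks by a factor $1-\theta/(2\delta)$, tests $\|U\|_2^2\le\delta^2$, and falls back to $U=0$.

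The substantive problem is the estimate you single out as key. You take the slab half-width to be $\|T_A\|_{\op}\theta$ in every coordinate. But $(T_AU_\star)_i$ has the law of $\|e_i^\top T_A\|_2\,(U_\star)_1$, whose density is $O(\sqrt D/(\delta\|e_i^\top T_A\|_2))$. Each hyperplane then costs $O(\sqrt D\,\theta\,\|T_A\|_{\op}/(\delta\|e_i^\top T_A\|_2))$. The ratio $\|T_A\|_{\op}/\|e_i^\top T_A\|_2$ is bounded only by $\|T_A\|_{\op}\|W_A\|_{\op}$, not by $D(2+\delta\|T_A\|_{\op}/h)/\varepsilon_{\rm dir}$. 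So $\theta$ cannot be chosen as you claim, and you pick up an extra $O(D\log(\|T_A\|_{\op}\|W_A\|_{\op}))$ seed bits. That cost is harmless for the volume algorithm, where $\lambda_+/\lambda_-$ is polynomially bounded, but it appears in neither seed-length bound of the statement.

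The fix, which is what the paper does, is the row-wise bound $|e_i^\top T_A(U-U_\star)|\le\|e_i^\top T_A\|_2\,\theta$. The row norm then cancels against the density: each hyperplane costs $O(\sqrt D\,\theta/\delta)$, there are $O(1+\delta\|e_i^\top T_A\|_2/h)$ of them, and summing over $i$ gives $O(D^{3/2}\theta(2+\delta\|T_A\|_{\op}/h)/\delta)$. Alternatively, keep your half-width and observe the following when $\|T_A\|_{\op}\theta\le h/4$. Every row with $\delta\|e_i^\top T_A\|_2<h/4$ rounds to $0$ for both $U$ and $U_\star$. On the remaining rows the density is $O(\sqrt D/h)$, which again removes the conditioning factor.
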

\begin{proof}
We use the Gaussian and radial sampling construction analyzed in~\cite[Section~5]{Jerrum06}:
\[
 U_\star=\delta \cdot S^{1/D}\cdot \frac G{\|G\|_2},\qquad
 G\sim N(0,\Id_D),\quad S\sim\operatorname{Unif}[0,1],
\]
with independent inputs. Gaussian pairs can be generated from uniform inputs by the Box--Muller transform~\cite{BoxMuller58}. 

For $0<\tau<\delta/4$, use fresh random seeds of length
\[
 q=O\!\left(\log\left(\frac {D\delta}{\varepsilon_{\rm dir}\tau}\right)\right)
\]
bits each to discretize independent $\operatorname{Unif}([0,1])$ inputs. The standard calculation shows that with probability at least $1-\varepsilon_{\rm dir}/2$, the map from uniform random numbers to $U_\star$ has Lipschitz constant $\delta\,\mathrm{poly}(D/\varepsilon_{\rm dir})$. On this event, the rational approximation $\widetilde U$ satisfies $\|\widetilde U-U_\star\|_2\le\tau/8$. Let $U=(1-\tau/(2\delta))\widetilde{U}$, and test $\|U\|_2^2\le\delta^2$. If the result fails, set $U=0$. On the same event, the test passes because
\[
 \|U\|_2\le\left(1-\frac{\tau}{2\delta}\right)
                  \left(\delta+\frac{\tau}{8}\right)<\delta.
\]
Moreover,
\begin{align*}
 \|U-U_\star\|_2
 &\le\left(1-\frac{\tau}{2\delta}\right)\|\widetilde U-U_\star\|_2
       +\frac{\tau}{2\delta}\|U_\star\|_2\\
 &\le\frac{\tau}{8}+\frac{\tau}{2}
       =\frac{5\tau}{8}<\tau.
\end{align*}
Thus,
\begin{equation}\label{eq:finite-ball-coupling}
 \Pr[\|U-U_\star\|_2>\tau]\le\varepsilon_{\rm dir}/2,
 \qquad \text{and}\qquad \|U\|_2\le\delta~~\text{on every random seed}.
\end{equation}
The total number of random bits is
\begin{align}\label{eq:seed_length_qD}
    O(qD) = O\!\left(D\log\left(\frac {D\delta}{\varepsilon_{\rm dir}\tau}\right)\right).
\end{align}

Then, we consider the rounding ${\cal R}_h$. Let 
\begin{align*}
    \ell_i(u)=(h^{-1}e_i^\top T_A)u,\qquad {\cal R}_h(T_Au)_i=h\operatorname{round}(\ell_i(u))\qquad \forall i\in [D]
\end{align*}
with $\|\ell_i\|_2=\|h^{-1}e_i^\top T_A\|_2\leq h^{-1}\|T_A\|_{\rm op}$. On the event $\|U-U_\star\|_2\leq \tau$, ${\cal R}_h(T_A U)\ne {\cal R}_h(T_A U_\star)$ only if some $\ell_i(U_\star)$ is within $\|\ell_i\|_2\tau$ of a half-integer. For any $i\in [D]$, by the rotational symmetry, $\ell_i(U_\star)\overset{d}{=}\|\ell_i\|_2(U_\star)_1$, whose density is
\begin{align*}
    f_i(t)=\frac{v_{D-1}}{v_D\,\delta\|\ell_i\|_2}\left(1-\frac{t^2}{\delta^2\|\ell_i\|_2^2}\right)^{(D-1)/2},\qquad |t|<\delta\|\ell_i\|_2.
\end{align*}
We have 
\begin{align*}
    \sup_t f_i(t)\leq \frac{O(\sqrt{D})}{\delta\|\ell_i\|_2}. 
\end{align*}
Since there are at most 
\begin{align*}
    1+2(\delta+\tau)\|\ell_i\|_2\leq O(1+\delta\|\ell_i\|_2)
\end{align*}
half-integers in $[-(\delta+\tau)\|\ell_i\|_2,(\delta+\tau)\|\ell_i\|_2]$, we have
\begin{equation}\label{eq:rounding-boundary-probability}
\begin{aligned}
 \Pr\left[\mathcal R_h(T_AU)\ne\mathcal R_h(T_AU_\star),\ \|U-U_\star\|_2\le\tau\right]
 \leq &~ \sum_{i=1}^DO\left(\frac{\sqrt{D}}{\delta\|\ell_i\|_2} \cdot 2\tau \|\ell_i\|_2\cdot    (1+\delta\|\ell_i\|_2)\right)\\
 \leq &~ O\left(\frac{D^{3/2}\tau}{\delta}\left(\delta \max_i\|\ell_i\|_2+2\right)\right)\\*
 \leq &~ O\left(\frac{D^{3/2}\tau}{\delta}\chi_A(\delta)\right),
\end{aligned}
\end{equation}
where the last step follows from $\|\ell_i\|_2\le\|T_A\|_{\op}/h$ and $\chi_A(\delta)= 2+\delta\|T_A\|_{\rm op}/h$. Hence, if we choose $\tau=c\varepsilon_{\rm dir} \delta / (D^{3/2}\chi_A(\delta))$ with a sufficiently small constant $c$, then
\begin{align*}
    \Pr\left[\mathcal R_h(T_AU)\ne\mathcal R_h(T_AU_\star),\ \|U-U_\star\|_2\le\tau\right]\leq \frac{\varepsilon_{\rm dir}}{2}.
\end{align*}
And by the union bound, we have
\begin{align*}
    \Pr[\mathcal R_h(T_AU)\ne\mathcal R_h(T_AU_\star)]\leq \Pr[\|U-U_\star\|_2>\tau] + \frac{\varepsilon_{\rm dir}}{2}\leq \varepsilon_{\rm dir}.
\end{align*}
Let $q_{A,\delta}^0$ be the law of $\mathcal R_h(T_AU)$. We get that
\begin{align*}
    \dtv(q_{A,\delta}^0,q_{A,\delta})\leq \Pr[\mathcal R_h(T_AU)\ne\mathcal R_h(T_AU_\star)]\leq \varepsilon_{\rm dir}.
\end{align*}
By the symmetry of $q_{A,\delta}$, adding a uniform sign to $\mathcal R_h(T_AU)$ will not increase the total variation distance. Therefore,
\begin{align*}
    \dtv(\widehat{q}_{A,\delta},q_{A,\delta})\leq \dtv(q_{A,\delta}^0,q_{A,\delta})\leq \varepsilon_{\rm dir}.
\end{align*}
Moreover, on every random seed, it holds that
\begin{align*}
    \|W_AV\|_2=&~ \|W_A {\cal R}_h(T_AU)\|_2\leq \|W_A T_A U\|_2+\|W_A\|_{\op}\|{\cal R}_h(T_AU)-T_AU\|_2\\
    \le &~ \|U\|_2+ \|W_A\|_{\op} \frac{h\sqrt{D}}{2} \leq \delta + \frac{11h\sqrt{D}}{20\sqrt{\lambda_-}}
    \le\delta+\varrho_A,
\end{align*}
where the third step follows from $T_A=W_A^{-1}$ and  ${\cal R}_h$ only shifting each coordinate by at most $h/2$, the fourth step follows from ~\eqref{eq:finite-ball-coupling} and~\eqref{eq:WA_TA_op_norm}, and the last step follows from the definition of $\varrho_{A}$.

Finally, by~\eqref{eq:seed_length_qD}, the seed length can be bounded by
\begin{align*}
    O\!\left(D\log\left(\frac {D\delta}{\varepsilon_{\rm dir}\tau}\right)\right)
    =O\!\left(D\log\left(\frac {D\chi_A(\delta)}{\varepsilon_{\rm dir}}\right)\right).
\end{align*}
Under the condition of~\cref{prop:finite-affine-branches}, by~\eqref{eq:WA_TA_op_norm}, we have $\|T_A\|_{\rm op}\leq O(\sqrt{\lambda_+})$. Hence, 
\begin{align*}
    O\!\left(D\log\left(\frac {D\chi_A(\delta)}{\varepsilon_{\rm dir}}\right)\right)
    =O\!\left(D\log\left(\frac {D(2+\delta\sqrt{\lambda_+}/h)}{\varepsilon_{\rm dir}}\right)\right).
\end{align*}

\end{proof}

\subsubsection{Construction of the affine preconditioning matrices}\label{app:whitening-arithmetic}
$W_A$ and $T_A$ in \cref{prop:finite-affine-branches} can be obtained by the standard inverse-square-root Newton iteration~\cite[Lemmas~2.1 and~2.3]{GuoHigham06}. Choose a positive dyadic $a$ with $\lambda_+\le a^2<4\lambda_+$, and compute in rational arithmetic
\[
 W_0=a^{-1}\Id,\qquad W_{j+1}=\tfrac12W_j(3\Id-AW_j^2).
\]
The iterates are positive definite and commute with $A$. The cited residual estimate gives
\[
 0\preceq\Id-AW_j^2,\qquad
 \|\Id-AW_j^2\|_{\op}\le(1-\lambda_-/a^2)^{2^j}.
\]
We can take $j=O(1+\log(\lambda_+/\lambda_-))$ iterations so that 
\[
 \|W_jA^{1/2}-\Id\|_{\op}
 \le\|AW_j^2-\Id\|_{\op}\le1/20,
\]
where the first step uses $|\sqrt u-1|\le|u-1|$ for $u\ge0$ on each eigenvalue. Set $W_A=W_j$ and compute $T_A=W_A^{-1}$ exactly. Rational bit lengths grow by at most a constant factor per iteration, so the cost is polynomial in the input bit length and $\lambda_+/\lambda_-$. In particular, the same iteration count works for any $A$ that satisfies $\lambda_-\Id\preceq A\preceq \lambda_+\Id$. 

\subsection{The Metropolis block-encoding}\label{app:acceptance-arithmetic}
The following lemma shows the numerical implementation of the Metropolis acceptance step in the block-encoding of the discriminant.

\begin{lemma}[Numerical Metropolis block-encoding]\label{lem:acceptance-arithmetic}
In the setting of \cref{lem:metropolis-block-encoding}, for every $0<{\varepsilon_{\rm be}}<1/2$, there are an exact unitary $U$ block-encoding $D_\pi\oplus0$ and an implementation $\widetilde U$ on the same registers such that
\[
 \|\widetilde U-U\|_{\op}\le{\varepsilon_{\rm be}}.
\]
The implementation uses $O(1)$ membership queries and computation polynomial in the displacement-circuit size, rational input bit length, and $\log(1/{\varepsilon_{\rm be}})$. The estimate is uniform over the allowed covariance controls and also holds for inverse and controlled queries.
\end{lemma}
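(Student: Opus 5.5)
The plan is to take the exact unitary $U=V_\Omega'\,U_T^\dagger F U_T\,V_\Omega$ from the proof of \cref{lem:metropolis-block-encoding} as the target, and to build $\widetilde U$ by replacing only the non-exact pieces with finite-precision versions. Here $V_\Omega'$ and $V_\Omega$ are the two flagged membership queries. The membership queries are exact by the oracle model. The seed preparation, displacement computation, and the involution $F$ are exact reversible rational circuits on finite registers: the seed is uniform binary, the scale index is $s\bmod J$, and $F$ uses modular addition and a sign flip. The seed preparation itself is \cref{lem:finite-direction-preparation}, which already yields an exact rational $v_s$ and only enters through the proposal law, not through $U$. So the only inexact gate inside $U_T$ is the coin rotation $R(a(x,s,\sigma))$ with
\[
 a(x,s,\sigma)=\one_\Omega(x+\sigma v_s)\min\{1,e^{-\sigma\langle b,v_s\rangle}\}.
\]

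First, I define the exact $U$ using the exact real acceptance value $a$. This makes the branch-independent definition of $D_\pi\oplus0$ hold by \cref{lem:metropolis-block-encoding}. Next, I implement the coin coherently in three steps. Compute the feasibility bit by one membership query on $x+\sigma v_s$, and compute $\langle b,v_s\rangle$ exactly in rationals. Evaluate $\min\{1,e^{-t}\}$ to additive error $\nu^2/8$ using the standard exponential circuits. Synthesize the controlled rotation to error $\nu/2$. Then uncompute the workspace. By \cref{fac:rotation_gate}, the sample- and seed-controlled rotation $\sum\ket{x,s,\sigma}\bra{x,s,\sigma}\otimes R(\cdot)$ has operator-norm error at most $\nu$, since it is block diagonal and each block errs by at most $\nu$. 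Uncomputation of the arithmetic workspace is exact, because the same exact reversible arithmetic is run backward.

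Since $U$ contains the rotation twice (once in $U_T$ and once in $U_T^\dagger$), the other factors being exact unitaries, telescoping gives $\|\widetilde U-U\|_{\op}\le 2\nu+\text{(gate-synthesis error of the remaining } O(1)\text{ Clifford+}T\text{ pieces)}$. Taking $\nu={\varepsilon_{\rm be}}/4$ gives the claim. The inverse has the same norm error. A controlled version has error at most that of the uncontrolled circuit, because $\ctrl(\widetilde U)-\ctrl(U)=\ket1\bra1\otimes(\widetilde U-U)$. The query count is two flag queries plus one feasibility query in each of $U_T$ and $U_T^\dagger$, so it is $O(1)$.

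The main point to check is uniformity over covariance controls. I would argue that the exponent $\langle b,v_s\rangle$ ranges over a bounded set that is independent of $A$, since $\|v_s\|$ is at most the encoded box diameter (connector) or the bound $\|T_A\|\,(\delta_{J-1}+\varrho_A)$ from \cref{lem:finite-direction-preparation} together with \cref{prop:finite-affine-branches}. This yields a single precision, polynomial in the bit lengths and $\log(1/{\varepsilon_{\rm be}})$, that works for every branch. When $A$ is coherently controlled, the computation is block diagonal in the control, so the operator norm error is the maximum over branches.
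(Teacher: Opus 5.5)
Your skeleton matches the paper's proof. Only the coin rotation is inexact, \cref{fac:rotation_gate} turns an additive error in the acceptance probability into an operator-norm error, $U_T$ and $U_T^\dagger$ each contribute once, and inverse and controlled queries inherit the bound.

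The gap is the step where you identify the implemented compute--rotate--uncompute gate with the block-diagonal operator $\sum\ket{x,s,\sigma}\!\bra{x,s,\sigma}\otimes R(\widetilde p)$ and compare it blockwise with $R(p)$. That identification holds only on the subspace $\mathcal H_0$ where the arithmetic workspace (the registers holding $t$, $\widetilde p$, and the angle) is zero. The lemma asserts an operator-norm bound for two unitaries on the same registers, and those registers include this workspace. Take an input $\ket{x,s,\sigma}\ket{w}$ with $w\neq0$. The forward arithmetic writes a $w$-dependent value, the rotation uses the angle encoded in that garbage, and uncomputation then restores $w$. Your remark that uncomputation is exact is therefore true but beside the point: the block acting on $\ket{x,s,\sigma}\ket{w}$ is a rotation by an essentially arbitrary angle. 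If the exact unitary is $R(p)\otimes\Id_{\rm work}$ on each label, as your description implies, then $\|\widetilde U-U\|_{\op}$ can be of order one.

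The missing idea is to use the freedom the statement gives you in choosing the exact $U$. The paper considers the circuit $\overline V$, which is the implementation with its fixed rotations performed exactly. It notes that $\mathcal H_0$ is invariant under $\overline V$ for either value of the acceptance qubit. The reason is that all arithmetic depends only on the unchanged labels $(x,s,\sigma,A)$, and the angle register is used only as a control. By unitarity, $\mathcal H_0^\perp$ is invariant as well. The paper then defines the exact $V$ to apply $R(p)$ on $\mathcal H_0$ and to coincide with $\overline V$ on $\mathcal H_0^\perp$. Consequently, $\|\overline V-V\|_{\op}$ on the full register equals the supremum over labels $(x,s,\sigma,A)$ of the distance between the rotation by the computed angle and $R(p)$. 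The gate-synthesis error of the fixed rotations holds for arbitrary workspace inputs. The exact $U$ is the construction of \cref{lem:metropolis-block-encoding} built from this $V$, and it still block-encodes $D_\pi\oplus0$ because $V$ acts correctly on zero workspace. Proving only a zero-workspace bound would be weaker than stated, and the telescoping in \cref{cor:finite-reflection-cost-main} would then need a separate argument that the workspace is clean at every call.

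Two smaller points. First, the implemented value must equal the exact $a$ on every label, including infeasible $x$ and overflowing proposals, because the operator norm ranges over all labels and $F$ adds modularly. A membership query on a wrapped sum could accept a move that the exact $a$ rejects, which is why the paper computes an overflow flag explicitly. Second, your bounded-range argument for the exponent is a valid substitute for the paper's cutoff, which sets $\widetilde p=0$ once $t\ge T_{\rm cut}\approx\log(32/\varepsilon_{\rm be}^2)$. The cutoff gives uniformity over $A$ directly, without any bound on $\|T_A\|_{\op}$ or $\|v_s\|_2$.
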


\begin{proof}
We first approximate the acceptance operation to operator-norm error ${\varepsilon_{\rm be}}/2$, then bound its effect on the block-encoding.

Compute feasibility, overflow, and $t=\max\{0,\sigma\langle b,v_s\rangle\}$ exactly. For an admissible proposal, the acceptance probability is $p=e^{-t}$; otherwise, set $p=0$. Choose a rational threshold ${T_{\rm cut}}\in[\log(32/{\varepsilon_{\rm be}}^2),\log(32/{\varepsilon_{\rm be}}^2)+1]$. For $t\ge {T_{\rm cut}}$, set $\widetilde p=0$, so
\[
 |\widetilde p-p|\le e^{-t}\le e^{-{T_{\rm cut}}}\le{\varepsilon_{\rm be}}^2/32.
\]
For $t<{T_{\rm cut}}$, evaluate $e^{-t}$ to absolute error ${\varepsilon_{\rm be}}^2/32$ and clamp the result to $[0,1]$. Inadmissible proposals have $\widetilde p=p=0$. Thus, uniformly over all encoded controls,
\[
 |\widetilde p-p|\le{\varepsilon_{\rm be}}^2/32,
 \qquad
 \|R(\widetilde p)-R(p)\|_{\op}
 \le\sqrt{2|\widetilde p-p|}
 \le{\varepsilon_{\rm be}}/4,
\]
where the second inequality follows from~\cref{fac:rotation_gate}. Exponential evaluation is needed only on an interval of length $O(\log(1/{\varepsilon_{\rm be}}))$. 

To implement the rotation, compute a binary angle $\widetilde\theta$ satisfying $|\widetilde\theta-\arcsin\sqrt{\widetilde p}|\le{\varepsilon_{\rm be}}/8$. Each angle bit controls a fixed rotation of the acceptance qubit; then uncompute all the probability and angle arithmetic. Let $\overline V$ denote this circuit with those fixed rotations performed exactly, and write $R_\theta$ for the real rotation through angle $\theta$. On zero arithmetic workspace its rotation error is at most
\[
 \|R_{\widetilde\theta}-R(p)\|_{\op}
 \le\|R_{\widetilde\theta}-R(\widetilde p)\|_{\op}
    +\|R(\widetilde p)-R(p)\|_{\op}
 \le{\varepsilon_{\rm be}}/8+{\varepsilon_{\rm be}}/4=3{\varepsilon_{\rm be}}/8.
\]
The subspace $\mathcal H_0$ of zero arithmetic workspace is invariant under $\overline V$ for either input value of the acceptance qubit: all arithmetic depends on unchanged control labels, and the angle register is used only as a control. Unitarity therefore also makes $\mathcal H_0^\perp$ invariant. Define $V$ to apply the exact rotation $R(p)$ on $\mathcal H_0$ and to agree with $\overline V$ on $\mathcal H_0^\perp$. Orthogonality of the control labels then gives the full-register bound
\[
 \|\overline V-V\|_{\op}
 =\sup_{x,s,\sigma,A}
   \|R_{\widetilde\theta_{x,s,\sigma,A}}-R(p_{x,s,\sigma,A})\|_{\op}
 \le3{\varepsilon_{\rm be}}/8.
\]
Synthesize the fixed controlled rotations with total operator error at most ${\varepsilon_{\rm be}}/8$, obtaining $\widetilde V$. This error bound holds on arbitrary workspace inputs, since it is a sum of gate errors. Hence
\[
 \|\widetilde V-V\|_{\op}
 \le\|\widetilde V-\overline V\|_{\op}
    +\|\overline V-V\|_{\op}
 \le{\varepsilon_{\rm be}}/8+3{\varepsilon_{\rm be}}/8={\varepsilon_{\rm be}}/2.
\]

Using $V$, \cref{lem:metropolis-block-encoding} supplies an exact $1$-block-encoding $U$ of $D_\pi\oplus0$. Let $\widetilde U$ be the same construction with $V$ replaced by $\widetilde V$, and let $U_T,\widetilde U_T$ be the corresponding seed-and-acceptance preparation unitaries. All other operations, including the membership-flag unitaries encoding $P_\Omega$, are exact. Therefore,
\begin{align*}
 \|\widetilde U-U\|_{\op}
 \leq \|\widetilde U_T^\dagger F\widetilde U_T-U_T^\dagger FU_T\|_{\op}
 \leq 2\|\widetilde U_T-U_T\|_{\op}
 \leq 2\|\widetilde V-V\|_{\op}
 \le{\varepsilon_{\rm be}}.
\end{align*}

The construction uses $O(1)$ membership queries. The remaining computation is polynomial in the displacement-circuit size, rational input bit length, and $\log(1/{\varepsilon_{\rm be}})$ by the finite-precision arithmetic and gate-synthesis routines stated above. 

Finally,
\[
 \|\widetilde U^\dagger-U^\dagger\|_{\op}
 =\|\widetilde U-U\|_{\op},
 \qquad
 \|\ctrl(\widetilde U)-\ctrl(U)\|_{\op}
 =\|\widetilde U-U\|_{\op},
\]
so inverse and controlled queries satisfy the same error bound.
\end{proof}

\subsection{Preparing the initial samples}\label{app:initialization-details}

\begin{lemma}[Initial qsample preparation]\label{lem:finite-cone-preparation}
Let $\nu_\Lambda$ and $Z_{C,\Lambda}$ be the finite cone law and its normalization constant defined in \eqref{eq:finite-cone-anchor}. For every $0<{\varepsilon_{\rm cone}}<1/2$, we can compute $\log Z_{C,\Lambda}$ to additive error ${\varepsilon_{\rm cone}}$ and implement a unitary $\widetilde U_C$ satisfying
\[
 \left\|\widetilde U_C\ket0_{\mathsf{ES}}
 -\ket0_{\mathsf E}\ket{\nu_\Lambda}_{\mathsf S}\right\|\le{\varepsilon_{\rm cone}}.
\]
The corresponding reflection
\[
 \widetilde R_C
 :=\widetilde U_C(2\ket0\!\bra0_{\mathsf{ES}}-\Id_{\mathsf{ES}})\widetilde U_C^\dagger
\]
satisfies, for every unit vector $\ket\psi_{\mathsf S}$,
\[
 \left\|\widetilde R_C(\ket0_{\mathsf E}\ket\psi_{\mathsf S})
 -\ket0_{\mathsf E}R_{\nu_\Lambda}\ket\psi_{\mathsf S}\right\|\le2{\varepsilon_{\rm cone}}.
\]
These constructions use no membership queries and have cost polynomial in $D$, $1+\kappa/h$, the rational input bit length, and $\log(1/{\varepsilon_{\rm cone}})$.
\end{lemma}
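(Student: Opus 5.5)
The plan is to exploit the product structure of the truncated round cone: the law $\nu_\Lambda(z)\propto e^{-a_0z_0}$ on $C\cap h\Z^D$ factors as a marginal on the height coordinate $z_0\in h\Z\cap[0,2\kappa]$ times a uniform law on the lattice points of the $d$-dimensional ball $B(0,z_0)\cap h\Z^d$. Concretely,
\[
 Z_{C,\Lambda}=h^D\sum_{t\in h\Z\cap[0,2\kappa]}e^{-a_0t}N(t),\qquad N(t)=\#\{z'\in h\Z^d:\|z'\|_2\le t\},
\]
so it suffices to (a) compute the counts $N(t)$, and more generally the conditional counts needed for coordinatewise sampling, exactly; (b) sum the resulting rational/exponential expression to relative accuracy $e^{\varepsilon_{\rm cone}}$; and (c) prepare $\ket{\nu_\Lambda}$ by the standard Grover--Rudolph chain of controlled rotations, one coordinate at a time. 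No membership queries are ever made, because $C$ is explicit.

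For the counts I would use dynamic programming over coordinates on squared integer radii: with $k=z'/h\in\Z^d$ and $s=\|k\|_2^2\le S_{\max}=\lfloor(2\kappa/h)^2\rfloor$, let $c_j(s)$ be the number of $(k_j,\dots,k_d)$ with $\sum_{i\ge j}k_i^2=s$. These satisfy $c_d(s)=\#\{k:k^2=s\}$ and $c_j(s)=\sum_{k}c_{j+1}(s-k^2)$, computable exactly with $O(d\,S_{\max}^{3/2})$ big-integer operations, i.e.\ polynomial in $D$ and $1+\kappa/h$. Then $N(t)=\sum_{s\le (t/h)^2}c_1(s)$. The logarithm of $Z_{C,\Lambda}$ is evaluated by computing each term $\log N(t)-a_0t$ to precision $\varepsilon_{\rm cone}/2$ and a log-sum-exp over at most $2\kappa/h+1$ terms, which keeps the total additive error below $\varepsilon_{\rm cone}$.

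For the state, I would first load the height: prepare $\sum_t\sqrt{w(t)}\ket{t}$ with $w(t)\propto e^{-a_0t}N(t)$ via a binary-tree of controlled rotations whose angles come from partial sums of $w$ (computed classically and reversibly from the table). Conditioned on $t$, I would then prepare the uniform superposition over $\{k:\|k\|_2^2\le R^2\}$ with $R=t/h$, sequentially in $k_1,\dots,k_d$: given the prefix's residual budget $r=R^2-\sum_{i<j}k_i^2$, the weight of choosing $k_j$ is the number of completions, $\sum_{s\le r-k_j^2}c_{j+1}(s)$. Each step is again a binary-tree rotation over $|k_j|\le R$. Each conditional probability is computed exactly as a ratio of integers and then rounded. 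By \cref{fac:rotation_gate}, with per-rotation accuracy $\nu$ the total state error is at most the number of rotation gates times $\nu$, polylog in the parameters times $D$. Choosing $\nu$ inverse-polynomially small in these counts gives $\|\widetilde U_C\ket0-\ket0\ket{\nu_\Lambda}\|\le\varepsilon_{\rm cone}$, with the workspace $\mathsf E$ uncomputed exactly because the arithmetic is reversible (Bennett). The reflection bound then follows from writing $\widetilde R_C-\Id_{\mathsf E}\otimes R_{\nu_\Lambda}$ as $2(\ket{u}\bra{u}-\ket{v}\bra{v})$ with $\ket u=\widetilde U_C\ket0$ and $\ket v=\ket0\ket{\nu_\Lambda}$. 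Its operator norm is at most $2\|\ket u-\ket v\|\le 2\varepsilon_{\rm cone}$; more precisely, $\|\,\ket u\bra u-\ket v\bra v\,\|_{\op}\le\|\ket u-\ket v\|$ for unit vectors.

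The main obstacle I expect is bookkeeping the precision so that everything stays polynomial. The integer counts can be as large as $(4\kappa/h)^d$, but they have only $O(d\log(\kappa/h))$ bits, so exact big-integer DP is fine. Step (c) needs care with near-zero conditional probabilities, but these cause only additive amplitude errors controlled by \cref{fac:rotation_gate}. If the $S_{\max}^{3/2}$ table size were a concern, I would note it is polynomial in $1+\kappa/h$, as allowed by the statement.
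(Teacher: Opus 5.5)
Your proposal is correct and follows essentially the paper's route: exact lattice-point counts by a dynamic program over coordinates and squared radii, Grover--Rudolph preparation of the height marginal followed by the coordinates with conditional weights given by completion counts that depend only on the residual squared budget, telescoped rotation errors, and the bound $\|\ket u\!\bra u-\ket v\!\bra v\|_{\op}\le\|\ket u-\ket v\|$ for the reflection. One point to tidy: $\widetilde R_C-\Id_{\mathsf E}\otimes R_{\nu_\Lambda}$ equals $2(\ket u\!\bra u-\ket v\!\bra v)$ only on inputs $\ket0_{\mathsf E}\ket\psi_{\mathsf S}$, which is all the statement needs; also, the exponentially small height weights $e^{-a_0t}$ should be handled by factoring out the smallest exponent in each partial sum, as the paper does, so that the ratios are computed to relative accuracy with polynomially many bits.
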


\begin{proof}
We first compute the conditional weights of the cone law, then use Grover--Rudolph preparation~\cite{GroverRudolph02} and bound the numerical error.

At height $kh$, a cone point has the form $(kh,hu)$ with $u\in\Z^d$ and $\|u\|_2\le k$. Put $k_{\max}=\lfloor2\kappa/h\rfloor$. For integers $s\ge0$, let $C_j(s)$ count the vectors in $\Z^j$ of squared norm at most $s$. Conditioning on the first coordinate gives
\begin{equation}\label{eq:cone-counts}
 C_0(s)=1,\qquad
 C_j(s)=\sum_{v=-\lfloor\sqrt s\rfloor}^{\lfloor\sqrt s\rfloor}C_{j-1}(s-v^2).
\end{equation}
The table
\[
 \{C_j(s):0\le j\le d,\ 0\le s\le k_{\max}^2\}
\]
has $O(d(1+k_{\max})^2)$ entries and requires $O(d(1+k_{\max})^3)$ integer additions. Each entry has $O(d\log(2+k_{\max}))$ bits, since it is at most $(2k_{\max}+1)^d$.

The height marginal is obtained from
\begin{equation}\label{eq:cone-normalizer-sum}
 Z_{C,\Lambda}
 =h^D\sum_{k=0}^{k_{\max}}C_d(k^2)e^{-a_0hk},
 \qquad
 \Pr[k]=\frac{h^DC_d(k^2)e^{-a_0hk}}{Z_{C,\Lambda}}.
\end{equation}
Conditional on $k$ and a feasible coordinate prefix $u_1,\ldots,u_j$, let $s=k^2-\sum_{r=1}^ju_r^2$. The next coordinate has distribution
\[
 \Pr[u_{j+1}=v\mid k,u_1,\ldots,u_j]
 =\frac{C_{d-j-1}(s-v^2)}{C_{d-j}(s)}
 \qquad\bigl(|v|\le\lfloor\sqrt s\rfloor\bigr),
\]
and probability zero otherwise. The denominator is the sum of the numerators by \eqref{eq:cone-counts}. For every feasible full vector $u$, the conditional probabilities telescope:
\begin{align*}
 \Pr[k,u]
 = &~ \Pr[k]\prod_{j=0}^{d-1}
 \Pr[u_{j+1}\mid k,u_1,\ldots,u_j]\\
 = &~ \frac{h^DC_d(k^2)e^{-a_0hk}}{Z_{C,\Lambda}}
       \frac{C_0(k^2-\|u\|_2^2)}{C_d(k^2)}\\
 = &~ \frac{h^De^{-a_0hk}}{Z_{C,\Lambda}}
 =\nu_\Lambda((kh,hu)),
\end{align*}
where the last step uses $C_0(s)=1$ for $s\ge0$.

These conditional weights also determine the binary-prefix probabilities required by Grover--Rudolph preparation. Height-prefix weights are sums of the terms in \eqref{eq:cone-normalizer-sum}; coordinate-prefix weights are sums of the corresponding completion counts. Prefixes with no feasible completion have zero weight and are assigned a fixed continuation. The counting table and its lookups can be implemented by ordinary reversible circuits of polynomial size.

We next explain how to evaluate the positive height sums. For a nonempty set of heights $\mathcal I$, let $k_0=\min\mathcal I$ and factor
\[
 \sum_{k\in\mathcal I}C_d(k^2)e^{-a_0hk}
 =e^{-a_0hk_0}
  \sum_{k\in\mathcal I}C_d(k^2)e^{-a_0h(k-k_0)}.
\]
The scaled sum lies between $1$ and $(k_{\max}+1)(2k_{\max}+1)^d$: its $k_0$ term is at least one, every exponential is at most one, and the integer coefficients satisfy the preceding count bound. Thus finite-precision evaluation gives any prescribed relative accuracy using polynomially many bits. 

We evaluate the logarithm of the normalization constant as
\[
 \log Z_{C,\Lambda}
 =D\log h+
 \log\left(\sum_{k=0}^{k_{\max}}C_d(k^2)e^{-a_0hk}\right).
\]
The sum is at least one because its $k=0$ term equals one. Relative error $\xi\le1/2$ in this sum changes its logarithm by at most $2\xi$. Choosing the sum and scalar-evaluation accuracies accordingly gives additive error at most ${\varepsilon_{\rm cone}}$ in $\log Z_{C,\Lambda}$.

For state preparation, the height and each spatial coordinate use $O(\log(2+k_{\max}))$ bits. Grover--Rudolph preparation uses one conditional-rotation stage per output bit, so the number of stages is
\[
 N_{\rm rot}=O(D\log(2+k_{\max})).
\]
At each stage, compute the conditional rotation from the preceding bits, apply it to the next output qubit, and uncompute its workspace. By \eqref{eq:rotation-stability}, choose the numerical precision uniformly over all prefixes so that each complete stage, including gate synthesis, has operator error at most ${\varepsilon_{\rm cone}}/N_{\rm rot}$. Exact unitary completions are chosen as in \cref{lem:acceptance-arithmetic}.

Let $U_C$ be the exact prefix circuit and $\widetilde U_C$ its implementation, with respective stages $U_j$ and $\widetilde U_j$. The probability calculation above gives $U_C\ket0_{\mathsf{ES}}=\ket0_{\mathsf E}\ket{\nu_\Lambda}_{\mathsf S}$, and
\begin{align*}
 \left\|\widetilde U_C\ket0_{\mathsf{ES}}
 -\ket0_{\mathsf E}\ket{\nu_\Lambda}_{\mathsf S}\right\|
 = \left\|(\widetilde U_C-U_C)\ket0_{\mathsf{ES}}\right\|
 \leq  \sum_{j=1}^{N_{\rm rot}}\|\widetilde U_j-U_j\|_{\op}
 \leq  N_{\rm rot}\frac{{\varepsilon_{\rm cone}}}{N_{\rm rot}}
 ={\varepsilon_{\rm cone}},
\end{align*}
where the first inequality follows by telescoping the unitary product. The arithmetic and synthesis require a number of precision bits polynomial in the input and count bit lengths and $\log(N_{\rm rot}/{\varepsilon_{\rm cone}})$.

Finally, write
\[
 \ket u_{\mathsf{ES}}=\widetilde U_C\ket0_{\mathsf{ES}},
 \qquad
 \ket v_{\mathsf{ES}}=\ket0_{\mathsf E}\ket{\nu_\Lambda}_{\mathsf S}.
\]
Since $\|\ket u-\ket v\|\le{\varepsilon_{\rm cone}}$,
\begin{align*}
 \|\widetilde R_C-(2\ket v\!\bra v-\Id_{\mathsf{ES}})\|_{\op}
 = 2\|\ket u\!\bra u-\ket v\!\bra v\|_{\op}
 = 2\sqrt{1-|\langle u|v\rangle|^2}
 \leq 2\|\ket u-\ket v\|
 \le2{\varepsilon_{\rm cone}}.
\end{align*}
On zero workspace,
\[
 (2\ket v\!\bra v-\Id_{\mathsf{ES}})
 (\ket0_{\mathsf E}\ket\psi_{\mathsf S})
 =\ket0_{\mathsf E}R_{\nu_\Lambda}\ket\psi_{\mathsf S}.
\]
This proves the reflection guarantee. Tensoring with an identity or adding an external control preserves the operator-norm bound.
\end{proof}

\subsection{Numerical evaluation of the rotations and output}\label{app:mark-arithmetic}

The algorithm must implement the sample-controlled gate $V_{Y_i^\pm}$:
\begin{align*}
V_{Y_i^{\pm}}(\ket{x}_{\mathsf{S}}\otimes\ket{0}_{\mathsf{Q}})
 =\ket{x}_{\mathsf{S}}\otimes\left(\sqrt{1-{Y_i^{\pm}}(x)}\ket{0}_{\mathsf{Q}}+\sqrt{{Y_i^{\pm}}(x)}\ket{1}_{\mathsf{Q}}\right)=\ket{x}_{\sf S}\otimes R(Y_i^\pm(x))\ket{0}_{\sf Q},
\end{align*}
and evaluate the final expression \eqref{eq:volume-output}. We specify their numerical accuracies below. For all encoded positions $z$, extend $X_i$ by
\[
 X_i(z)=\exp\!\left((a_i-a_{i+1})\min\{2\kappa,\max\{0,z_0\}\}\right).
\]
This agrees with \eqref{eq:assembly-ratio-definition} on the pencil.

\begin{lemma}[Numerical evaluation]\label{lem:marks-and-output}
Let $c_i$ be the median of the batch averages in \cref{lem:pilots}, computed from numerical values $\widetilde X_i(z)$ with $1/16$ relative error for every $z$. For $0<\nu,\varepsilon<1/2$, the following hold.
\begin{enumerate}[label=(\roman*)]
\item Each $V_{Y_i^\pm}$ has an implementation $\widetilde V_{Y_i}^\pm$ with operator norm error at most $\nu$. 

\item Given $c_0,\ldots,c_{m-1}$ and the phase-estimation output $\widehat{\mathsf L}$, a classical computation returns $\widehat V>0$ satisfying
\begin{equation}\label{eq:scalar-output-error}
 \begin{aligned}
 \log\widehat V=  d\log r+\log Z_{C,\Lambda}
       +\sum_{i=0}^{m-1}\log c_i
       -\widehat{\mathsf L}-\log I_m+e_{\rm num}
 \end{aligned}
\end{equation}
with $|e_{\rm num}|\leq \varepsilon/8$.
\end{enumerate}
\end{lemma}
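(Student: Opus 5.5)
For (i), I will reduce the implementation of $V_{Y_i^\pm}$ to \cref{fac:rotation_gate}. I will write $V_{Y}=\sum_x\ket{x}\!\bra{x}_{\mathsf S}\otimes R(Y(x))$, so the operator-norm error of the implemented gate is the supremum over encoded labels $x$ of the single-qubit rotation error. This holds because all arithmetic is controlled by the unchanged label and is uncomputed, exactly as in the invariant-subspace argument of \cref{lem:acceptance-arithmetic}. It therefore suffices to compute $Y_i^\pm(x)$ to additive error $\nu^2/8$ uniformly in $x$, and to synthesize the rotation to error $\nu/2$.

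To get that uniform accuracy, I will use the extended definition of $X_i$. The clamped exponent $t=(a_i-a_{i+1})\min\{2\kappa,\max\{0,z_0\}\}$ lies in $[0,2\kappa a_0]$, which is polynomially bounded. I will then rewrite the marks in a numerically stable form. For $Y^+=X/(c+X)$ I use $Y^+=1/(1+c e^{-t})$, and for $Y^-$ I use $1-Y^+$. These are Lipschitz (with constant at most $1/4$) in $\log c - t$, so computing $t$ and $\log c_i$ to additive accuracy $\nu^2/16$ with finite-precision exponential and logarithm routines gives the required bound. I also need $\log c_i$ to be finite and controlled. Since $c_i$ is a median of batch averages of $\widetilde X_i\in[\tfrac{15}{16},\tfrac{17}{16}e^{2\kappa a_0}]$, both $c_i$ and $\log c_i$ lie in explicit polynomially bounded ranges. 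After that I will compute the angle $\arcsin\sqrt{\widetilde Y}$ and apply bit-controlled fixed rotations, with synthesis error $\nu/2$, exactly as in \cref{lem:acceptance-arithmetic}. Inverses and controlled versions then inherit the same error.

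For (ii), I will compute each term of \eqref{eq:volume-output} separately to additive error $\varepsilon/(8\cdot 5)$ and sum them; $e_{\rm num}$ is the resulting total error.
\begin{itemize}
\item The term $d\log r$ comes from the rational $r$.
\item For $\log Z_{C,\Lambda}$ I invoke \cref{lem:finite-cone-preparation} with ${\varepsilon_{\rm cone}}=\varepsilon/40$.
\item For $\sum_i\log c_i$ I evaluate each term to accuracy $\varepsilon/(40m)$. This is possible because each $c_i$ is a positive rational in the bounded range above.
\item The output $\widehat{\mathsf L}$ is an exact rational.
\item For $\log I_m$ I use the closed form $I_m=e^{-a_m\kappa}(1-e^{-a_m\kappa})/a_m$. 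Since $a_m\kappa\in(0,1]$ is bounded away from $0$ by at least $(1-1/(16\lceil\sqrt d\rceil))$, the term $\log(1-e^{-a_m\kappa})$ is well conditioned and can be evaluated with polylogarithmic precision.
\end{itemize}
Finally I exponentiate to get $\widehat V>0$; the log-form statement \eqref{eq:scalar-output-error} only needs the logarithm computed.

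The main obstacle is keeping every scalar evaluation uniformly well conditioned, in particular keeping $Y_i^\pm$ away from catastrophic cancellation when $X_i$ is huge or $c_i$ is tiny. Working in the log domain through $\log c_i - t$ is what resolves this.
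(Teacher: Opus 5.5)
Your proposal is correct and follows essentially the same route as the paper. For (i), you bound $X_i\le e^{8D\kappa}$ and $c_i\in[15/16,\tfrac{17}{16}e^{8D\kappa}]$, write $Y_i^+$ as the logistic function of $\log X_i-\log c_i$ (Lipschitz constant $1/4$) to get additive accuracy $\nu^2/8$, and then apply \cref{fac:rotation_gate} with $\nu/2$ synthesis error. For (ii), you use \cref{lem:finite-cone-preparation} and the closed form of $\log I_m$ with $15/16<a_m\kappa\le 1$. The only loose ends are cosmetic: $c_i$ itself is exponentially large (only $\log c_i$ and its bit length are polynomially bounded), and the final exponentiation should be charged to the $\varepsilon/8$ budget, which the paper does using the bounded range $-16n\le\widehat{\mathsf L}<48n$.
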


\begin{proof}
We first bound the values used in the numerical computations. Since the cooling schedule~\eqref{eq:cooling-schedule} satisfies $0<a_i-a_{i+1}\le a_0=4D$, we have
\[
 1\le X_i(z)\le e^{8D\kappa}, \qquad 
 \frac{15}{16}X_i(z)\le\widetilde X_i(z)\le\frac{17}{16}X_i(z).
\]
Thus, the median-of-means estimator $c_i$ satisfies
\[
 \frac{15}{16}\le c_i\le\frac{17}{16}e^{8D\kappa},
 \qquad
 |\log c_i|\le8D\kappa+1.
\]

We next implement $V_{Y_i^\pm}$. Recall that
\[
 Y_i^+(z)=\frac{X_i(z)}{c_i+X_i(z)},
 \qquad
 Y_i^-(z)=\frac{c_i}{c_i+X_i(z)}.
\]
For a fixed computed value $c_i$, define
\[
 t=\log X_i(z)-\log c_i,
 \qquad
 g(t)=\frac1{1+e^{-t}}.
\]
Then $Y_i^+=g(t)$ and $Y_i^-=1-g(t)$. Moreover,
\[
 |g'(t)|\le\frac14,
 \qquad
 \min\{g(t),1-g(t)\}\le e^{-|t|}.
\]
Compute $\widetilde t$ with $|\widetilde t-t|\le\nu^2/16$, and approximate $g(\widetilde t)$ to absolute error $\nu^2/16$. Clamp the numerical result to $[0,1]$, denote it by $\widetilde Y_i^+$, and set $\widetilde Y_i^-=1-\widetilde Y_i^+$. Then, we have
\begin{align*}
 \left|\widetilde Y_i^+-Y_i^+\right|
 \leq \left|\widetilde Y_i^+-g(\widetilde t)\right|
          +\left|g(\widetilde t)-g(t)\right|
 \leq \frac{\nu^2}{16}+\frac14\left|\widetilde t-t\right|
 \leq \frac{5\nu^2}{64}\le\frac{\nu^2}{8}.
\end{align*}
The same estimate holds for $Y_i^-$. Thus, by~\cref{fac:rotation_gate}, we get that
\[
 \left\|R(\widetilde Y_i^\pm)-R(Y_i^\pm)\right\|_{\op}
 \le\sqrt{2|\widetilde Y_i^\pm-Y_i^\pm|}
 \le\frac\nu2.
\]
Allocate another $\nu/2$ to gate synthesis. This proves \textnormal{(i)}.

For~\textnormal{(ii)}, we take the computed $c_i$ and $\widehat{\mathsf L}$ and only bound the numerical error in evaluating this expression. The quantity $\log Z_{C,\Lambda}$ can be computed by \cref{lem:finite-cone-preparation}. Also,
\[
 \log I_m=-a_m\kappa+\log(1-e^{-a_m\kappa})-\log a_m,
 \qquad
 15/16<a_m\kappa\le1,
\]
Thus $1-e^{-a_m\kappa}$ is bounded away from zero. The readout definition in~\cref{sec:q_est_procedure}, $\widehat{\mathsf L}=64n(Y/M-1/4)$ with $0\le Y<M$, gives $-16n\le\widehat{\mathsf L}<48n$ on every outcome. Together with the preceding bounds on $c_i$, these estimates allow finite-precision arithmetic to evaluate the output with total logarithmic error at most $\varepsilon/8$, including the final exponentiation and rounding. This proves \eqref{eq:scalar-output-error}.
\end{proof}

\subsection{Choosing the parameters}\label{app:parameters}

We choose the parameters to satisfy the implementation-error bound in~\cref{lem:composition}. Recall that $D=d+1$ is the pencil dimension, $\kappa$ is the input aspect ratio, $\varepsilon$ is the relative volume accuracy, and $\psi$ is the supplied lower bound on $\psi_D$. Put
\[
 Q=\max\{4,D\kappa/(\varepsilon\psi)\},\qquad \beta=1/100.
\]
Let $\rho_{\rm impl}$ and $\rho_{\rm ref}$ be the final joint states before measurement. The reference computation uses perfect qsamples at the qsample conversions and the prescribed finite-estimation circuits with exact qsample-reflections. Our goal is to guarantee that
\[
 \dtr(\rho_{\rm impl},\rho_{\rm ref})
 \le e_{\rm init}+e_{\rm conv}+N\bigl(\xi+2\sqrt{{\varepsilon_{\Cov}}}\bigr)+e_{\rm gate}
 \le\beta.
\]
Here, $N$ bounds the number of qsample-reflections, $\xi$ bounds one implemented reflection's error when \eqref{eq:good-covariance} holds, and ${\varepsilon_{\Cov}}$ bounds the probability that this covariance estimation condition fails. The other three terms are the total errors from initialization, qsample conversions with exact reflections, and the remaining numerical operations. The targeted error for each part is listed in~\cref{tab:implementation-error-budget}.

\begin{table}[ht]
\centering
\small
\renewcommand{\arraystretch}{1.35}
\setlength{\tabcolsep}{4pt}
\begin{tabularx}{\linewidth}{@{}>{\raggedright\arraybackslash}X c >{\raggedright\arraybackslash}p{0.40\linewidth} c@{}}
\toprule
Operation & Count & Error per operation & Total \\
\midrule
Initial qsample preparation & $n_0$ & Trace distance $\beta/(4n_0)$ & $\beta/4$ \\
Single-copy conversion\newline with exact reflections & $T$ & Trace distance $\beta/(4T)$ & $\beta/4$ \\
Reflections about $\pi_i$ & $\le N$ & $\xi+2\sqrt{{\varepsilon_{\Cov}}}\le\beta/(4N)$ & $\beta/4$ \\
Other numerical operations & $\le N_{\rm gate}$ & Operator error $\beta/(4N_{\rm gate})$ & $\beta/4$ \\
\bottomrule
\end{tabularx}
\caption{Allocation of implementation error. The counts $n_0,T,N,N_{\rm gate}$ refer to the total numbers of operations in the corresponding rows.}
\label{tab:implementation-error-budget}
\end{table}

\Cref{fig:parameter-dependencies} shows the order of choosing the parameters. The qsample-reflection count $N$ determines the required error per call; the fourth-moment norm bound $M_4$ then determines the mixing accuracy. These requirements, together with the known geometry, determine the lattice space $\Omega_\Lambda=K_0\cap h\Z^D$.

\begin{figure}[ht]
\centering
\begingroup
\setlength{\unitlength}{1pt}
\footnotesize
\begin{picture}(460,135)
 \put(0,95){\framebox(100,37){\shortstack{Reflection count\\$N$}}}
 \put(117,95){\framebox(110,37){\shortstack{Reflection error $\xi$,\\Covariance failure ${\varepsilon_{\Cov}}$}}}
 \put(244,95){\framebox(100,37){\shortstack{Samples per block\\$L$}}}
 \put(0,43){\framebox(100,36){\shortstack{Fourth-moment\\norm $M_4$}}}
 \put(117,43){\framebox(110,36){\shortstack{Mixing accuracy $\eta$,\\QSVT filter error $\zeta_{\rm filter}$}}}
 \put(244,43){\framebox(100,36){\shortstack{Lattice spacing $h$\\   of $\Omega_\Lambda$}}}
 \put(361,43){\framebox(99,36){\shortstack{Register sizes\\and numerical\\precision}}}
 \put(0,0){\framebox(227,28){\shortstack{Known geometric parameters $\lambda_-,\lambda_+$,\\ volume estimation accuracy $\varepsilon$}}}
 \put(100,113.5){\vector(1,0){17}}
 \put(227,113.5){\vector(1,0){17}}
 \put(172,95){\vector(0,-1){16}}
 \put(100,61){\vector(1,0){17}}
 \put(227,61){\vector(1,0){17}}
 \put(294,95){\vector(0,-1){16}}
 \put(344,61){\vector(1,0){17}}
 \put(227,14){\line(1,0){67}}
 \put(294,14){\vector(0,1){29}}
\end{picture}
\endgroup
\caption{Order of parameter selection.}
\label{fig:parameter-dependencies}
\end{figure}

\paragraph{Reflection and sample counts.}
The cooling schedule \eqref{eq:cooling-schedule} has $m=O(\sqrt D\log Q)$ transitions. \cref{lem:pilots} uses $P=O(\log(m+1))$ samples per stage to compute $c_i$. The logarithmic sum \eqref{eq:volume-log-sum} has $n=2m+2$ contributions. Use \eqref{eq:log-clock-parameters} with accuracy $\varepsilon/8$ to choose the polynomial degree $k$ and phase-estimation register $\mathsf F$, so that $k,\log_2M=O(\log Q)$.

Choose a proposed upper bound on the total qsample-reflection count in qsample conversions and product estimation, including queries inside the Hamiltonian-simulation circuits:
\begin{equation}\label{eq:deterministic-reflection-bound}
 N=\left\lceil C_N\left(D^{3/2}+\frac{\sqrt D}{\varepsilon}\right)(1+\log Q)^4\right\rceil.
\end{equation}
To meet the targeted qsample-reflection error in~\cref{tab:implementation-error-budget}, set $\xi:=\beta/(8N)$ and choose the covariance failure probability and block size as
\begin{equation}\label{eq:precision-covariance}
 {\varepsilon_{\Cov}}:=\frac{\xi^2}{4}=\left(\frac{\beta}{16N}\right)^2,
 \qquad L:=\left\lceil CD\left(1+\log\frac4{{\varepsilon_{\Cov}}}\right)^2\right\rceil.
\end{equation}
Indeed, $N(\xi+2\sqrt{{\varepsilon_{\Cov}}})=2N\xi=\beta/4$. Here, $L$ is the number of samples in each block $\mathsf B_1,\mathsf B_2$. By~\cref{lem:covariance-confidence}, the continuous covariance estimate fails with probability at most ${\varepsilon_{\Cov}}/2$. We allocate the remaining ${\varepsilon_{\Cov}}/2$ to the total-variation error when transferring this guarantee to lattice samples, as in~\cref{lem:finite-block-transfer}.

As in the proof of~\cref{lem:query_complexity_main}, we have
\[
 n_0=2L+(P+2)m+2,\qquad
 T=2Lm+\frac{Pm(m-1)}2+m^2+m.
\]
Since adjacent qsamples have squared overlap at least $1/4$, \cref{lem:transport} uses $O(\log(4T/\beta))$ reflections per conversion.

For a fixed $C_N$, we have $\log N=O(\log Q)$, which gives that
\begin{align*}
    L=O(D(1+\log Q)^2),\qquad T=O(D^{3/2}(1+\log Q)^3).
\end{align*}
Thus, the total number of qsample-reflections used for qsample conversions is
\begin{equation}\label{eq:ref_count_1}
 O(T\log(4T/\beta))=O(D^{3/2}(1+\log Q)^4).
\end{equation}
For product estimation, \cref{lem:logarithmic-phase} bounds the number of qsample-reflections in one \textsc{Add} procedure (\cref{alg:add-log}) by
\begin{equation}\label{eq:refs_per_add}
 O\left(k\left[\alpha_k\sum_{j=0}^{\log_2M-1}|\tau_j|+(\log_2M)\log(100n\log_2M)\right]\right)
 =O\left(\frac{(1+\log Q)^3}{\varepsilon}\right),
\end{equation}
where the equality uses $\sum_j|\tau_j|=O(1/\varepsilon)$, $\alpha_k\le1+\log k$, and $k,\log_2M=O(\log Q)$. Multiplying by $n=2m+2=O(\sqrt D\log Q)$ gives the total number of qsample-reflections for the product estimation
\begin{equation}\label{eq:ref_count_2}
 O\left(\frac{\sqrt D}{\varepsilon}(1+\log Q)^4\right).
\end{equation}
The sum of \eqref{eq:ref_count_1} and \eqref{eq:ref_count_2} is at most $N$ for sufficiently large $C_N$: their constants depend on $C_N$ only through fixed powers of $1+\log C_N$, whereas \eqref{eq:deterministic-reflection-bound} is linear in $C_N$. This verifies the choice of $N$.

The fourth-moment norm argument in~\cref{lem:query_complexity_main} applies separately within one \textsc{Add} or one single-copy qsample conversion. Using \eqref{eq:refs_per_add}, which is the dominant term, it gives
\begin{equation}\label{eq:local-reference-fourth-bound}
 \|\ket\Psi_{\mathsf{SA}}\|_{4,\pi}\le M_4,
 \qquad M_4:=\frac{C_4(1+\log Q)^3}{\varepsilon},
\end{equation}
for every reference input to a reflection about $\pi$, with $C_4$ sufficiently large. 

\paragraph{Reflection accuracy and lattice space.}
We next ensure reflection error at most $\xi$ whenever \eqref{eq:good-covariance} holds. Suppose the numerical implementation has error at most $\xi/2$ as in~\cref{cor:finite-reflection-cost-main}. The remaining error bound from \cref{thm:effective-reflection} is $C\eta^{1/4}M_4+\zeta_{\rm filter}$, where $C$ is the constant in that theorem, $\eta$ is the classical mixing accuracy, and $\zeta_{\rm filter}$ is the filter error. Set
\begin{equation}\label{eq:precision-mixing}
 \eta:=\left(\frac{\xi}{4CM_4}\right)^4,
 \qquad \zeta_{\rm filter}:=\frac{\xi}{4}.
\end{equation}
The contributions from unresolved slow eigenvalues, polynomial filtering, and numerical implementation satisfy
\[
 C\eta^{1/4}M_4+\zeta_{\rm filter}+\frac\xi2
 =\frac\xi4+\frac\xi4+\frac\xi2=\xi,
 \qquad \log(1/\eta)=O(\log Q).
\]

To apply the mixing time theorem, we need lower and upper bounds for the spectrum of $A({\bf x})$ that work for every encoded sample tuple. The radii bounds for the convex body in~\eqref{eq:pencil-support-balls} and~\cref{lem:log-affine-covariance-floor} give
\[
 \lambda_{\min}(\Sigma_i)\ge\underline\lambda
 :=\frac{c(1/4)^2}{[D+2a_0(3\kappa+1/4)]^2},
 \qquad \Sigma_i=\Cov_{\pi_{a_i}}(X).
\]
The tilt bound $2a_0$ also covers the additional tilt parameters used in \cref{lem:cooling-moments}. Choose the largest dyadic $\lambda_-\le\underline\lambda/2$, and a fixed coordinate box containing the pencil and cone with enclosing radius $R_{\rm enc}=O(\kappa\sqrt D)$. Then
\[
 \underline\lambda/4<\lambda_-\le\underline\lambda/2,
 \qquad \lambda_+:=R_{\rm enc}^2+\lambda_-.
\]
As proved in \cref{sec:coherent-rounding}, these choices give $\lambda_-\Id\preceq A(\mathbf x)\preceq\lambda_+\Id$ for every tuple, and preserve the covariance comparison whenever \eqref{eq:good-covariance} holds.

Set
\begin{equation}\label{eq:precision-quadrature}
 \zeta:=\min\left\{\frac{\varepsilon}{16},
                       \frac{{\varepsilon_{\Cov}}}{2L}\right\}.
\end{equation}
Choose $h$ to be the largest dyadic spacing satisfying \eqref{eq:h_bound}, with the covariance failure budget $\varepsilon_{\Cov}$ from~\eqref{eq:precision-covariance}, the mixing accuracy $\eta$ from \eqref{eq:precision-mixing}, and normalization constant accuracy $\zeta$. Use geometric bounds valid for the pencil, the initial cone, and the terminal slab, and include all tilts required by \cref{lem:cooling-moments}. The proof of \cref{thm:finite-metropolis} then supplies the mixing bound and the quadrature and covariance-transfer conditions. In particular, $\varepsilon_{\rm quad}\le\zeta\le\varepsilon/16$ gives at most $\varepsilon/8$ discretization error in the logarithm of the volume, while \cref{lem:finite-block-transfer} gives \eqref{eq:good-covariance} with failure probability ${\varepsilon_{\Cov}}$. Since all geometric parameters and inverse accuracies are polynomial in $Q$, this choice satisfies
\[
 1/h\le\poly(Q),
 \qquad J=O(\log Q).
\]
\cref{cor:finite-reflection-cost-main} therefore implements the required reflection with error at most $\xi$ using $\widetilde O(D/\psi)$ membership queries.

\paragraph{Finite implementation and total error.}
Since only constant accuracy is needed for the displacement distribution, we can take $\varepsilon_{\rm dir}=1/1000$ in~\cref{lem:finite-direction-preparation}. We have
\[
 \chi_A(\delta_j)=2+\frac{\delta_j\|T_A\|_{\op}}h\le\poly(Q).
\]
The displacement seed and rescaled position registers use $O(D\log Q)$ qubits. Covariance representations and arithmetic workspace have polynomial size in $Q$ and the original input bit length. 

The circuits in \cref{lem:logarithmic-phase,lem:transport}, together with the counts established above, give a polynomial upper bound $N_{\rm gate}$ on the remaining numerical operations. Apply~\cref{lem:marks-and-output}~\textnormal{(i)} with $\nu=\beta/(4N_{\rm gate})$ for each $V_{Y_i^\pm}$. For each $\ket{\nu_\Lambda}$-reflection, apply \cref{lem:finite-cone-preparation} with preparation error ${\varepsilon_{\rm cone}}=\beta/(8N_{\rm gate})$, giving reflection error at most $2{\varepsilon_{\rm cone}}=\beta/(4N_{\rm gate})$. The remaining one-qubit gates can be synthesized to the same operator accuracy~\cite{DawsonNielsen06}. Thus, the targeted numerical implementation error in the last row of \cref{tab:implementation-error-budget} is achievable with polynomial computation and logarithmic additional bit complexity.

Combining \cref{lem:finite-cone-preparation,lem:endpoint-resources}, prepare each initial qsample to trace-distance error at most $\beta/(4n_0)$, including numerical implementation. Apply \cref{lem:transport} with conversion error $\beta/(4T)$ and exact reflection oracles. By \cref{lem:composition}, the resulting bounds are
\[
 e_{\rm init}\le n_0\frac{\beta}{4n_0}=\frac{\beta}{4},
 \qquad
 e_{\rm conv}\le T\frac{\beta}{4T}=\frac{\beta}{4},
\]
which establish the first two rows of \cref{tab:implementation-error-budget}.

Hence, \cref{lem:coherent-reflection-error,lem:composition} give
\begin{align*}
 \dtr(\rho_{\rm impl},\rho_{\rm ref})
 \leq &~ e_{\rm init}+e_{\rm conv}+N\bigl(\xi+2\sqrt{{\varepsilon_{\Cov}}}\bigr)+e_{\rm gate}\\
 \leq &~ \frac\beta4+\frac\beta4+\frac\beta4+\frac\beta4=\beta.
\end{align*}
Therefore, the finite-precision implementation decreases the success probability by at most $\beta=1/100$, with polynomial additional computation.

\phantomsection 
\addcontentsline{toc}{section}{References}
\bibliographystyle{alpha}
\bibliography{references}

@article{Simonovits03,
  title={How to compute the volume in high dimension?},
  author={Simonovits, Mikl{\'o}s},
  journal={Mathematical programming},
  volume={97},
  number={1},
  pages={337--374},
  year={2003},
  publisher={Springer}
}

@inproceedings{ApplegateKannan91,
  title={Sampling and integration of near log-concave functions},
  author={Applegate, David and Kannan, Ravi},
  booktitle={Proceedings of the twenty-third annual ACM symposium on Theory of computing},
  pages={156--163},
  year={1991}
}

@inproceedings{KookVempala25,
  title={Sampling and integration of logconcave functions by algorithmic diffusion},
  author={Kook, Yunbum and Vempala, Santosh S},
  booktitle={Proceedings of the 57th Annual ACM Symposium on Theory of Computing},
  pages={924--932},
  year={2025}
}

@article{BlumeKohout10,
  title={Optimal, reliable estimation of quantum states},
  author={Blume-Kohout, Robin},
  journal={New Journal of Physics},
  volume={12},
  number={4},
  pages={043034},
  year={2010}
}

@article{BrightwellWinkler91,
  title={Counting linear extensions},
  author={Brightwell, Graham and Winkler, Peter},
  journal={Order},
  volume={8},
  number={3},
  pages={225--242},
  year={1991},
  publisher={Springer}
}

@article{ShangEtAl13,
  title={Optimal error regions for quantum state estimation},
  author={Shang, Jiangwei and Ng, Hui Khoon and Sehrawat, Arun and Li, Xikun and Englert, Berthold-Georg},
  journal={New Journal of Physics},
  volume={15},
  number={12},
  pages={123026},
  year={2013},
  publisher={IOP Publishing}
}

@article{ShangEtAl15,
  title={Monte Carlo sampling from the quantum state space. I},
  author={Shang, Jiangwei and Seah, Yi-Lin and Ng, Hui Khoon and Nott, David John and Englert, Berthold-Georg},
  journal={New Journal of Physics},
  volume={17},
  number={4},
  pages={043017},
  year={2015},
  publisher={IOP Publishing}
}

@article{BertsimasVempala04,
  title={Solving convex programs by random walks},
  author={Bertsimas, Dimitris and Vempala, Santosh},
  journal={Journal of the ACM (JACM)},
  volume={51},
  number={4},
  pages={540--556},
  year={2004},
  publisher={ACM New York, NY, USA}
}

@article{DyerKannanMount97,
  title={Sampling contingency tables},
  author={Dyer, Martin and Kannan, Ravi and Mount, John},
  journal={Random Structures \& Algorithms},
  volume={10},
  number={4},
  pages={487--506},
  year={1997},
  publisher={Wiley Online Library}
}

@article{GelmanMeng98,
  title={Simulating normalizing constants: From importance sampling to bridge sampling to path sampling},
  author={Gelman, Andrew and Meng, Xiao-Li},
  journal={Statistical science},
  pages={163--185},
  year={1998},
  publisher={JSTOR}
}

@incollection{Genz92,
  title={Numerical computation of multivariate normal probabilities using bivariate conditioning},
  author={Genz, Alan and Trinh, Giang},
  booktitle={Monte Carlo and Quasi-Monte Carlo Methods: MCQMC, Leuven, Belgium, April 2014},
  pages={289--302},
  year={2016},
  publisher={Springer}
}

@article{CCHLW23,
  author  = {Shouvanik Chakrabarti and Andrew M. Childs and Shih-Han Hung and Tongyang Li and Chunhao Wang and Xiaodi Wu},
  title   = {Quantum Algorithm for Estimating Volumes of Convex Bodies},
  journal = {ACM Transactions on Quantum Computing},
  volume  = {4},
  number  = {3},
  pages   = {20:1--20:60},
  year    = {2023},
  doi     = {10.1145/3588579},
  eprint  = {1908.03903},
  archivePrefix = {arXiv}
}

@article{JiaEtAl26,
  title={Reducing isotropy and volume to KLS: Faster rounding and volume algorithms},
  author={Jia, He and Laddha, Aditi and Lee, Yin Tat and Vempala, Santosh},
  journal={Journal of the ACM},
  volume={73},
  number={2},
  pages={1--21},
  year={2026},
  publisher={ACM New York, NY}
}

@article{Bennett76,
  title={Efficient estimation of free energy differences from Monte Carlo data},
  author={Bennett, Charles H},
  journal={Journal of Computational Physics},
  volume={22},
  number={2},
  pages={245--268},
  year={1976},
  publisher={Elsevier}
}

@article{Childs26Matrix,
  title={Quantum lower bounds for convex optimization and real matrix-vector query problems},
  author={Childs, Andrew M},
  journal={arXiv preprint arXiv:2609.05679},
  year={2026}
}

@article{BourgadeEtAl07,
  title={The characteristic polynomial on compact groups with Haar measure: some equalities in law},
  author={Bourgade, Paul and Nikeghbali, Ashkan and Rouault, Alain},
  journal={Comptes Rendus. Math{\'e}matique},
  volume={345},
  number={4},
  pages={229--232},
  year={2007}
}

@article{BealsEtAl01,
  title={Quantum lower bounds by polynomials},
  author={Beals, Robert and Buhrman, Harry and Cleve, Richard and Mosca, Michele and De Wolf, Ronald},
  journal={Journal of the ACM (JACM)},
  volume={48},
  number={4},
  pages={778--797},
  year={2001},
  publisher={ACM New York, NY, USA}
}

@inproceedings{CornelissenHamoudi23,
  author    = {Arjan Cornelissen and Yassine Hamoudi},
  title     = {A Sublinear-Time Quantum Algorithm for Approximating Partition Functions},
  booktitle = {Proceedings of the 2023 Annual ACM--SIAM Symposium on Discrete Algorithms},
  pages     = {1245--1264},
  year      = {2023},
  doi       = {10.1137/1.9781611977554.ch46},
  eprint    = {2207.08643},
  archivePrefix = {arXiv}
}

@article{DFK91,
  author  = {Martin Dyer and Alan Frieze and Ravi Kannan},
  title   = {A Random Polynomial-Time Algorithm for Approximating the Volume of Convex Bodies},
  journal = {Journal of the ACM},
  volume  = {38},
  number  = {1},
  pages   = {1--17},
  year    = {1991},
  doi     = {10.1145/102782.102783}
}

@article{LovaszSimonovits93,
  author  = {L{\'a}szl{\'o} Lov{\'a}sz and Mikl{\'o}s Simonovits},
  title   = {Random Walks in a Convex Body and an Improved Volume Algorithm},
  journal = {Random Structures \& Algorithms},
  volume  = {4},
  number  = {4},
  pages   = {359--412},
  year    = {1993},
  doi     = {10.1002/rsa.3240040402}
}

@article{KLS97,
  author  = {Ravi Kannan and L{\'a}szl{\'o} Lov{\'a}sz and Mikl{\'o}s Simonovits},
  title   = {Random Walks and an {$O^*(n^5)$} Volume Algorithm for Convex Bodies},
  journal = {Random Structures \& Algorithms},
  volume  = {11},
  number  = {1},
  pages   = {1--50},
  year    = {1997},
  doi     = {10.1002/(SICI)1098-2418(199708)11:1<1::AID-RSA1>3.0.CO;2-X}
}

@article{Lovasz99,
  author  = {L{\'a}szl{\'o} Lov{\'a}sz},
  title   = {Hit-and-Run Mixes Fast},
  journal = {Mathematical Programming},
  volume  = {86},
  number  = {3},
  pages   = {443--461},
  year    = {1999},
  doi     = {10.1007/s101070050100}
}

@article{LovaszVempala06Corner,
  author  = {L{\'a}szl{\'o} Lov{\'a}sz and Santosh Vempala},
  title   = {Hit-and-Run from a Corner},
  journal = {SIAM Journal on Computing},
  volume  = {35},
  number  = {4},
  pages   = {985--1005},
  year    = {2006},
  doi     = {10.1137/S009753970544727X}
}

@inproceedings{LovaszVempala06Fast,
  author    = {L{\'a}szl{\'o} Lov{\'a}sz and Santosh Vempala},
  title     = {Fast Algorithms for Logconcave Functions: Sampling, Rounding, Integration and Optimization},
  booktitle = {47th Annual IEEE Symposium on Foundations of Computer Science},
  pages     = {57--68},
  year      = {2006},
  doi       = {10.1109/FOCS.2006.28}
}

@article{LovaszVempala07,
  author  = {L{\'a}szl{\'o} Lov{\'a}sz and Santosh Vempala},
  title   = {The Geometry of Logconcave Functions and Sampling Algorithms},
  journal = {Random Structures \& Algorithms},
  volume  = {30},
  number  = {3},
  pages   = {307--358},
  year    = {2007},
  doi     = {10.1002/rsa.20135}
}

@article{CousinsVempala18,
  author  = {Ben Cousins and Santosh Vempala},
  title   = {Gaussian Cooling and {$O^*(n^3)$} Algorithms for Volume and Gaussian Volume},
  journal = {SIAM Journal on Computing},
  volume  = {47},
  number  = {3},
  pages   = {1237--1273},
  year    = {2018},
  doi     = {10.1137/15M1054250}
}

@article{ChenEldan26,
  title={Hit-and-run mixing via localization schemes},
  author={Chen, Yuansi and Eldan, Ronen},
  journal={Discrete \& Computational Geometry},
  volume={75},
  number={3},
  pages={747--794},
  year={2026},
  publisher={Springer}
}

@article{Klartag23,
  author  = {Bo'az Klartag},
  title   = {Logarithmic Bounds for Isoperimetry and Slices of Convex Sets},
  journal = {Ars Inveniendi Analytica},
  volume  = {2023},
  number  = {4},
  pages   = {1--17},
  year    = {2023},
  doi     = {10.15781/jsjy-0b06},
  eprint  = {2303.14938},
  archivePrefix = {arXiv}
}

@article{KalaiVempala06,
  title={Simulated annealing for convex optimization},
  author={Kalai, Adam Tauman and Vempala, Santosh},
  journal={Mathematics of Operations Research},
  volume={31},
  number={2},
  pages={253--266},
  year={2006},
  publisher={INFORMS}
}

@article{KLS95,
  title={Isoperimetric problems for convex bodies and a localization lemma},
  author={Kannan, Ravi and Lov{\'a}sz, L{\'a}szl{\'o} and Simonovits, Mikl{\'o}s},
  journal={Discrete \& Computational Geometry},
  volume={13},
  number={3},
  pages={541--559},
  year={1995},
  publisher={Springer}
}

@article{kook2026spectral,
  title={Spectral Gaps of Hit-and-Run and Coordinate Hit-and-Run},
  author={Kook, Yunbum and Vempala, Santosh S},
  journal={arXiv preprint arXiv:2608.16878},
  year={2026}
}

@article{chen21,
  title={An almost constant lower bound of the isoperimetric coefficient in the KLS conjecture},
  author={Chen, Yuansi},
  journal={Geometric and Functional Analysis},
  volume={31},
  number={1},
  pages={34--61},
  year={2021},
  publisher={Springer}
}

@article{CornelissenApersGribling26,
    author = {Arjan Cornelissen and Simon Apers and Sander Gribling},
    title = {Improved quantum volume estimation with transducers and amortized quantum walks},
    journal = {arXiv preprint arXiv:2609.40217},
    year = {2026}
}

@article{jambulapati2022slightly,
  title={A slightly improved bound for the KLS constant},
  author={Jambulapati, Arun and Lee, Yin Tat and Vempala, Santosh S},
  journal={arXiv preprint arXiv:2208.11644},
  year={2022}
}

@article{klartag2022bourgain,
  title={Bourgain’s slicing problem and KLS isoperimetry up to polylog},
  author={Klartag, Bo’az and Lehec, Joseph},
  journal={Geometric and functional analysis},
  volume={32},
  number={5},
  pages={1134--1159},
  year={2022},
  publisher={Springer}
}

@article{vempala2026nearly,
  title={A Nearly Quadratic Lower Bound for Linear Optimization over Convex Bodies in the Membership Oracle Model},
  author={Vempala, Santosh S},
  journal={arXiv preprint arXiv:2609.30215},
  year={2026}
}

@article{rademacher2008dispersion,
  title={Dispersion of mass and the complexity of randomized geometric algorithms},
  author={Rademacher, Luis and Vempala, Santosh},
  journal={Advances in Mathematics},
  volume={219},
  number={3},
  pages={1037--1069},
  year={2008},
  publisher={Elsevier}
}

@InProceedings{CornelissenApersGribling25,
  author =	{Cornelissen, Arjan and Apers, Simon and Gribling, Sander},
  title =	{{How to Compute the Volume in Low Dimension?}},
  booktitle =	{52nd International Colloquium on Automata, Languages, and Programming (ICALP 2025)},
  pages =	{61:1--61:18},
  series =	{Leibniz International Proceedings in Informatics (LIPIcs)},
  ISBN =	{978-3-95977-372-0},
  ISSN =	{1868-8969},
  year =	{2025},
  volume =	{334},
  editor =	{Censor-Hillel, Keren and Grandoni, Fabrizio and Ouaknine, Jo\"{e}l and Puppis, Gabriele},
  publisher =	{Schloss Dagstuhl -- Leibniz-Zentrum f{\"u}r Informatik},
  address =	{Dagstuhl, Germany},
  URL =		{https://drops.dagstuhl.de/entities/document/10.4230/LIPIcs.ICALP.2025.61},
  URN =		{urn:nbn:de:0030-drops-234381},
  doi =		{10.4230/LIPIcs.ICALP.2025.61}
}

@inproceedings{LovaszSimonovits90,
  title={The mixing rate of Markov chains, an isoperimetric inequality, and computing the volume},
  author={Lov{\'a}sz, L{\'a}szl{\'o} and Simonovits, Mikl{\'o}s},
  booktitle={Proceedings [1990] 31st annual symposium on foundations of computer science},
  pages={346--354},
  year={1990},
  organization={IEEE}
}

@article{KookVempalaZhang26,
  title={In-and-Out: Algorithmic Diffusion for Sampling Convex Bodies},
  author={Kook, Yunbum and Vempala, Santosh S and Zhang, Matthew S},
  journal={Random Structures \& Algorithms},
  volume={68},
  number={3},
  pages={e70061},
  year={2026},
  publisher={Wiley Online Library}
}

@InProceedings{OzgulEtAl26Stochastic,
  author =	{Ozgul, Guneykan and Li, Xiantao and Mahdavi, Mehrdad and Wang, Chunhao},
  title =	{{Quantum Speedups for Sampling and Non-Convex Optimization with Stochastic Oracles}},
  booktitle =	{21st Conference on the Theory of Quantum Computation, Communication and Cryptography (TQC 2026)},
  pages =	{8:1--8:25},
  series =	{Leibniz International Proceedings in Informatics (LIPIcs)},
  ISBN =	{978-3-95977-439-0},
  ISSN =	{1868-8969},
  year =	{2026},
  volume =	{389},
  editor =	{Arnon, Rotem and Harrow, Aram W.},
  publisher =	{Schloss Dagstuhl -- Leibniz-Zentrum f{\"u}r Informatik},
  address =	{Dagstuhl, Germany},
  URL =		{https://drops.dagstuhl.de/entities/document/10.4230/LIPIcs.TQC.2026.8},
  URN =		{urn:nbn:de:0030-drops-273053},
  doi =		{10.4230/LIPIcs.TQC.2026.8}
}

@article{LiuWangJi26LocalSampling,
  title={Quantum Speedups for Log-Concave Sampling from Local Structure},
  author={Liu, Chenghua and Wang, Qisheng and Ji, Zhengfeng},
  journal={arXiv preprint arXiv:2609.20253},
  year={2026}
}

@article{SongZhang26KLS,
      title={{An $O(4^{\log^* n})$ Bound for the KLS Constant}}, 
      author={Zhao Song and Xinzhi Zhang},
      year={2026},
      journal={arXiv preprint arXiv:2610.01447},
      url={https://arxiv.org/abs/2610.01447}
}

@article{OlivucciEtAl26,
  title={Provable Quantum--Classical Separation for Continuous Gibbs Sampling},
  author={Olivucci, Enrico and Sobchuk, Mariia and Hoque, Sehmimul and Hnybida, Jeffrey and Kim, Kyungho W and Shayeghi, Ala and Ronagh, Pooya},
  journal={arXiv preprint arXiv:2608.24527},
  year={2026}
}

@InProceedings{MotamediRonagh24,
  title = 	 {{G}ibbs Sampling of Continuous Potentials on a Quantum Computer},
  author =       {Motamedi, Arsalan and Ronagh, Pooya},
  booktitle = 	 {Proceedings of the 41st International Conference on Machine Learning},
  pages = 	 {36322--36371},
  year = 	 {2024},
  editor = 	 {Salakhutdinov, Ruslan and Kolter, Zico and Heller, Katherine and Weller, Adrian and Oliver, Nuria and Scarlett, Jonathan and Berkenkamp, Felix},
  volume = 	 {235},
  series = 	 {Proceedings of Machine Learning Research},
  month = 	 {21--27 Jul},
  publisher =    {PMLR},
  url = 	 {https://proceedings.mlr.press/v235/motamedi24a.html}
}

@article{LengEtAl26,
  title={Operator-level quantum acceleration of non-logconcave sampling},
  author={Leng, Jiaqi and Ding, Zhiyan and Chen, Zherui and Lin, Lin},
  journal={Proceedings of the National Academy of Sciences},
  volume={123},
  number={8},
  pages={e2512789123},
  year={2026},
  publisher={National Academy of Sciences}
}

@InProceedings{OzgulEtAl24,
  title = 	 {Stochastic Quantum Sampling for Non-Logconcave Distributions and Estimating Partition Functions},
  author =       {Ozgul, Guneykan and Li, Xiantao and Mahdavi, Mehrdad and Wang, Chunhao},
  booktitle = 	 {Proceedings of the 41st International Conference on Machine Learning},
  pages = 	 {38953--38982},
  year = 	 {2024},
  editor = 	 {Salakhutdinov, Ruslan and Kolter, Zico and Heller, Katherine and Weller, Adrian and Oliver, Nuria and Scarlett, Jonathan and Berkenkamp, Felix},
  volume = 	 {235},
  series = 	 {Proceedings of Machine Learning Research},
  month = 	 {21--27 Jul},
  publisher =    {PMLR},
  url = 	 {https://proceedings.mlr.press/v235/ozgul24a.html}
}

@article{LiZhang22,
  title={Quantum speedups of optimizing approximately convex functions with applications to logarithmic regret stochastic convex bandits},
  author={Li, Tongyang and Zhang, Ruizhe},
  journal={Advances in Neural Information Processing Systems},
  volume={35},
  pages={3152--3164},
  year={2022}
}

@InProceedings{pmlr-v247-gatmiry24a,
  title = 	 {Sampling Polytopes with Riemannian HMC: Faster Mixing via the Lewis Weights Barrier},
  author =       {Gatmiry, Khashayar and Kelner, Jonathan and Vempala, Santosh S.},
  booktitle = 	 {Proceedings of Thirty Seventh Conference on Learning Theory},
  pages = 	 {1796--1881},
  year = 	 {2024},
  editor = 	 {Agrawal, Shipra and Roth, Aaron},
  volume = 	 {247},
  series = 	 {Proceedings of Machine Learning Research},
  month = 	 {30 Jun--03 Jul},
  publisher =    {PMLR},
  url = 	 {https://proceedings.mlr.press/v247/gatmiry24a.html}
}

@inproceedings{LeeVempala18,
  title={Convergence rate of Riemannian Hamiltonian Monte Carlo and faster polytope volume computation},
  author={Lee, Yin Tat and Vempala, Santosh S},
  booktitle={Proceedings of the 50th Annual ACM SIGACT Symposium on Theory of Computing},
  pages={1115--1121},
  year={2018}
}

@article{DyerFrieze91,
  title={Computing the volume of convex bodies: a case where randomness provably helps},
  author={Dyer, Martin and Frieze, Alan},
  journal={Probabilistic combinatorics and its applications},
  volume={44},
  number={123-170},
  pages={0754--68052},
  year={1991},
  publisher={American Mathematical Society Providence, RI}
}

@INPROCEEDINGS{ArunachalamEtAl22,
  author={Arunachalam, Srinivasan and Havlicek, Vojtech and Nannicini, Giacomo and Temme, Kristan and Wocjan, Pawel},
  booktitle={2021 IEEE International Conference on Quantum Computing and Engineering (QCE)}, 
  title={Simpler (Classical) and Faster (Quantum) Algorithms for Gibbs Partition Functions}, 
  year={2021},
  volume={},
  number={},
  pages={112-122},
  doi={10.1109/QCE52317.2021.00027}
}

@article{ChildsEtAl22,
  title={Quantum algorithms for sampling log-concave distributions and estimating normalizing constants},
  author={Childs, Andrew M and Li, Tongyang and Liu, Jin-Peng and Wang, Chunhao and Zhang, Ruizhe},
  journal={Advances in Neural Information Processing Systems},
  volume={35},
  pages={23205--23217},
  year={2022}
}

@inproceedings{HarrowWei20,
  author       = {Aram W. Harrow and
                  Annie Y. Wei},
  editor       = {Shuchi Chawla},
  title        = {Adaptive Quantum Simulated Annealing for Bayesian Inference and Estimating
                  Partition Functions},
  booktitle    = {Proceedings of the 2020 {ACM-SIAM} Symposium on Discrete Algorithms,
                  {SODA} 2020, Salt Lake City, UT, USA, January 5-8, 2020},
  pages        = {193--212},
  publisher    = {{SIAM}},
  year         = {2020},
  url          = {https://doi.org/10.1137/1.9781611975994.12},
  doi          = {10.1137/1.9781611975994.12}
}

@article{WCNA09,
  title={Quantum algorithm for approximating partition functions},
  author={Wocjan, Pawel and Chiang, Chen-Fu and Nagaj, Daniel and Abeyesinghe, Anura},
  journal={Physical Review A—Atomic, Molecular, and Optical Physics},
  volume={80},
  number={2},
  pages={022340},
  year={2009},
  publisher={APS}
}

@article{Richter07,
  title={Quantum speedup of classical mixing processes},
  author={Richter, Peter C},
  journal={Physical Review A—Atomic, Molecular, and Optical Physics},
  volume={76},
  number={4},
  pages={042306},
  year={2007},
  publisher={APS}
}

@inproceedings{aharonov2003adiabatic,
  title={Adiabatic quantum state generation and statistical zero knowledge},
  author={Aharonov, Dorit and Ta-Shma, Amnon},
  booktitle={Proceedings of the thirty-fifth annual ACM symposium on Theory of computing},
  pages={20--29},
  year={2003}
}

@article{LovaszVempala06Annealing,
  title={{Simulated annealing in convex bodies and an $O^*(n^4)$ volume algorithm}},
  author={Lov{\'a}sz, L{\'a}szl{\'o} and Vempala, Santosh},
  journal={Journal of Computer and System Sciences},
  volume={72},
  number={2},
  pages={392--417},
  year={2006},
  publisher={Elsevier}
}

@article{Letwin26,
  author  = {Brayden Letwin},
  title   = {The {KLS} Constant Is {$O(\log^{1/4} n)$}},
  journal = {arXiv preprint arXiv:2607.24164},
  year    = {2026},
  eprint  = {2607.24164},
  archivePrefix = {arXiv}
}

@article{Zhang26HitRun,
  title={Hit-and-Run Mixes as Fast as the Ball Walk},
  author={Zhang, Ruizhe},
  journal={arXiv preprint arXiv:2608.13487},
  year={2026}
}

@inproceedings{Szegedy04,
  author    = {Mario Szegedy},
  title     = {Quantum Speed-Up of Markov Chain Based Algorithms},
  booktitle = {45th Annual IEEE Symposium on Foundations of Computer Science},
  pages     = {32--41},
  year      = {2004},
  doi       = {10.1109/FOCS.2004.53}
}

@inproceedings{GSLW19,
  author    = {Andr{\'a}s Gily{\'e}n and Yuan Su and Guang Hao Low and Nathan Wiebe},
  title     = {Quantum Singular Value Transformation and Beyond: Exponential Improvements for Quantum Matrix Arithmetics},
  booktitle = {Proceedings of the 51st Annual ACM SIGACT Symposium on Theory of Computing},
  pages     = {193--204},
  year      = {2019},
  doi       = {10.1145/3313276.3316366},
  eprint    = {1806.01838},
  archivePrefix = {arXiv}
}

@article{YoderLowChuang14,
  author  = {Theodore J. Yoder and Guang Hao Low and Isaac L. Chuang},
  title   = {Fixed-Point Quantum Search with an Optimal Number of Queries},
  journal = {Physical Review Letters},
  volume  = {113},
  pages   = {210501},
  year    = {2014},
  doi     = {10.1103/PhysRevLett.113.210501},
  eprint  = {1409.3305},
  archivePrefix = {arXiv}
}

@article{WocjanAbeyesinghe08,
  author  = {Pawel Wocjan and Anura Abeyesinghe},
  title   = {Speedup via Quantum Sampling},
  journal = {Physical Review A},
  volume  = {78},
  pages   = {042336},
  year    = {2008},
  doi     = {10.1103/PhysRevA.78.042336},
  eprint  = {0804.4259},
  archivePrefix = {arXiv}
}

@article{GroverRudolph02,
  author  = {Lov K. Grover and Terry Rudolph},
  title   = {Creating Superpositions that Correspond to Efficiently Integrable Probability Distributions},
  journal = {arXiv preprint quant-ph/0208112},
  year    = {2002},
  eprint  = {quant-ph/0208112},
  archivePrefix = {arXiv}
}

@article{Jerrum06,
  title={On the approximation of one Markov chain by another},
  author={Jerrum, Mark},
  journal={Probability theory and related fields},
  volume={135},
  number={1},
  pages={1--14},
  year={2006},
  publisher={Springer}
}

@article{Montanaro15,
  title={Quantum speedup of Monte Carlo methods},
  author={Montanaro, Ashley},
  journal={Proceedings. Mathematical, Physical, and Engineering Sciences/The Royal Society},
  volume={471},
  number={2181},
  pages={20150301},
  year={2015}
}

@article{HaenerEtAl18,
  title={Optimizing quantum circuits for arithmetic},
  author={H{\"a}ner, Thomas and Roetteler, Martin and Svore, Krysta M},
  journal={arXiv preprint arXiv:1805.12445},
  year={2018}
}

@article{BhaskarEtAl16,
  author       = {Mihir K. Bhaskar and
                  Stuart Hadfield and
                  Anargyros Papageorgiou and
                  Iasonas Petras},
  title        = {Quantum algorithms and circuits for scientific computing},
  journal      = {Quantum Inf. Comput.},
  volume       = {16},
  number       = {3{\&}4},
  pages        = {197--236},
  year         = {2016},
  url          = {https://doi.org/10.26421/QIC16.3-4-2},
  doi          = {10.26421/QIC16.3-4-2}
}

@article{GuoHigham06,
  title={{A schur--newton method for the matrix pth root and its inverse}},
  author={Guo, Chun-Hua and Higham, Nicholas J},
  journal={SIAM Journal on Matrix Analysis and Applications},
  volume={28},
  number={3},
  pages={788--804},
  year={2006},
  publisher={SIAM}
}

@article{DawsonNielsen06,
  author       = {Christopher M. Dawson and
                  Michael A. Nielsen},
  title        = {The Solovay-Kitaev algorithm},
  journal      = {Quantum Inf. Comput.},
  volume       = {6},
  number       = {1},
  pages        = {81--95},
  year         = {2006},
  url          = {https://doi.org/10.26421/QIC6.1-6},
  doi          = {10.26421/QIC6.1-6}
}

@article{BoxMuller58,
  title={A note on the generation of random normal deviates},
  author={Box, George EP and Muller, Mervin E},
  journal={The annals of mathematical statistics},
  volume={29},
  number={2},
  pages={610--611},
  year={1958},
  publisher={Institute of Mathematical Statistics}
}

@article{Bennett73,
  title={Logical reversibility of computation},
  author={Bennett, Charles H},
  journal={IBM journal of Research and Development},
  volume={17},
  number={6},
  pages={525--532},
  year={1973},
  publisher={IBM}
}

@article{AdamczakEtAl10,
  title={Quantitative estimates of the convergence of the empirical covariance matrix in log-concave ensembles},
  author={Adamczak, Rados{\l}aw and Litvak, Alexander and Pajor, Alain and Tomczak-Jaegermann, Nicole},
  journal={Journal of the American Mathematical Society},
  volume={23},
  number={2},
  pages={535--561},
  year={2010}
}

@misc{Tao09Interpolation,
  author = {Terence Tao},
  title = {{245C, Notes 1: Interpolation of $L^p$ spaces}},
  year = {2009},
  howpublished = {Graduate real analysis lecture notes},
  url = {https://terrytao.wordpress.com/2009/03/30/245c-notes-1-interpolation-of-lp-spaces/}
}

@article{kitaev1995quantum,
  title={Quantum measurements and the Abelian stabilizer problem},
  author={Kitaev, A Yu},
  journal={arXiv preprint quant-ph/9511026},
  year={1995}
}

@article{MengWong96,
  title={Simulating ratios of normalizing constants via a simple identity: a theoretical exploration},
  author={Meng, Xiao-Li and Wong, Wing Hung},
  journal={Statistica Sinica},
  pages={831--860},
  year={1996},
  publisher={JSTOR}
}

@article{BaranyFuredi87,
  author       = {Imre B{\'{a}}r{\'{a}}ny and
                  Zolt{\'{a}}n F{\"{u}}redi},
  title        = {Computing the Volume is Difficult},
  journal      = {Discret. Comput. Geom.},
  volume       = {2},
  pages        = {319--326},
  year         = {1987},
  url          = {https://doi.org/10.1007/BF02187886},
  doi          = {10.1007/BF02187886}
}

@article{DyerFrieze88,
  title={On the complexity of computing the volume of a polyhedron},
  author={Dyer, Martin E. and Frieze, Alan M.},
  journal={SIAM Journal on Computing},
  volume={17},
  number={5},
  pages={967--974},
  year={1988},
  publisher={SIAM}
}

@article{Stanley86,
  title={Two poset polytopes},
  author={Stanley, Richard P},
  journal={Discrete \& Computational Geometry},
  volume={1},
  number={1},
  pages={9--23},
  year={1986},
  publisher={Springer}
}

@article{Prekopa73,
  title={On logarithmic concave measures and functions},
  author={Pr{\'e}kopa, Andr{\'a}s},
  journal={Acta Sci. Math.},
  volume={34},
  pages={335},
  year={1973}
}

@article{BrassardEtAl02,
  author  = {Gilles Brassard and Peter H{\o}yer and Michele Mosca and Alain Tapp},
  title   = {Quantum Amplitude Amplification and Estimation},
  journal = {Contemporary Mathematics},
  volume  = {305},
  pages   = {53--74},
  year    = {2002},
  doi     = {10.1090/conm/305/05215},
  eprint  = {quant-ph/0005055},
  archivePrefix = {arXiv}
}
\end{document}